\definecolor{ForestGreen}{rgb}{.13,.54,.13}
\definecolor{BrickRed}{rgb}{.80,.26,.33}
\newcommand{\fed}[1]{{\color{ForestGreen}{(\textbf{Fedor:} #1)}}}
\newcommand{\ed}[1]{{\color{BrickRed} {#1}}}
\newcommand{\fed}[1]{}
\newcommand{\ed}[1]{{{#1}}}
\newtheorem{theorem}{Theorem}
\newtheorem{lemma}{Lemma}
\newtheorem{corollary}{Corollary}
\newtheorem{proposition}{Proposition}
\newtheorem*{proposition*}{Proposition}
\newtheorem*{theorem*}{Theorem}
\newtheorem*{lemma*}{Lemma}
\newtheorem{problem}{Problem}
\theoremstyle{definition}
\newtheorem{definition}{Definition}
\newcommand{\R}{\mathbb{R}}
\newcommand{\N}{\mathbb{N}}
\newcommand{\lb}{\left\langle}
\newcommand{\rb}{\right\rangle}
\newcommand{\rev}{\mathrm{Rev}}
\newcommand{{\dd}}{\,\mathrm{d}}
\newcommand{\opt}{\mathrm{opt}}
\newcommand{\mes}{\mathrm{mes}}
\newcommand{\sing}{\mathrm{sing}}
\renewcommand{\div}{\mathrm{div}}
\newcommand{\B}{\mathrm{Beck}}
\newcommand{\lip}{\mathrm{Lip}}
\DeclareMathOperator*{\argmax}{arg\,max}
\DeclareMathOperator*{\argmin}{arg\,min}
\theoremstyle{definition}
\newtheorem{example}{Example}
\newtheorem{assumption}{Assumption}
\theoremstyle{remark}
\newtheorem{remark}{Remark}
\crefname{remark}{Remark}{Remarks}
\crefname{problem}{Problem}{Problems}
\crefname{example}{Example}{Examples}
\title[Beckmann's approach to multi-item multi-bidder auctions]{Beckmann's approach to multi-item multi-bidder auctions}
\author{Submission 8}
\begin{abstract}
\end{abstract}
\title{Beckmann's approach to multi-item multi-bidder auctions\footnote{We are grateful to Federico Echenique, Luciano Pomatto, Joseph Root, and Omer Tamuz for conversations that inspired this work. {We also thank Luciano for multiple suggestions improving the manuscript.} {The paper benefited from discussions with} 
		Kim Border, Benjamin Brooks, {Alkis Georgiadis-Harris,}
		Sergiu Hart, Jason Hartline, {Andreas Kleiner,} {Alexey Kushnir,} Alejandro Manelli, Robert McCann, Benny Moldovanu, and Philip J.~Reny 
		{and from comments by seminar participants at the University of Bonn,  INFORMS Workshop on Market Design 2022, the 33rd Stony Brook Game Theory Conference, and Yale University.} }}
\author{Alexander V. Kolesnikov \ed{(HSE University)}\thanks{{Kolesnikov acknowledges the support of  RSF Grant \textnumero  22-21-00566  https://rscf.ru/en/project/22-21-00566/. The article was prepared within the framework of the HSE University Basic Research Program.}}
\\
Fedor Sandomirskiy \ed{(Caltech)}\thanks{
Sandomirskiy thanks Linde Institute at Caltech and National Science Foundation (grant
CNS 1518941).}
\\ 
Aleh Tsyvinski \ed{(Yale)} 
\\ 
Alexander P. Zimin \ed{(MIT and HSE University) }
}
\date{}
\begin{document}

\ifdefined\EC
\begin{titlepage}
\maketitle
\end{titlepage}
\else
\setcounter{page}{0}
\maketitle
\thispagestyle{empty}
\begin{abstract}
We consider the problem of revenue-maximizing Bayesian auction design with several 
bidders \ed{having independent private values} over
 several 
items. 
We show that it can be reduced to the problem of continuous optimal transportation introduced by~\cite{beckmann1952continuous} \ed{where the optimal transportation flow generalizes the concept of ironed virtual valuations to the multi-item setting.}
We establish the strong duality between the two problems and the existence of solutions. 
\ed{The results rely on insights from majorization and optimal transportation theories and on the characterization of feasible interim mechanisms by~\cite{hart2015implementation}.} 
\end{abstract}
\fi


\newpage
\section{Introduction}


The current understanding of multi-bidder multi-item revenue-maximizing auctions  is far from being complete even in the basic setting of several bidders competing for several items and having i.i.d.~additive utilities over them. Only the case of one item and several bidders was analyzed completely \citep{myerson1981optimal}. A seemingly innocent problem with one bidder and several items already turns out to be notoriously difficult to analyze, optimal mechanisms are known only in a few particular cases and exhibit complicated structure \citep{armstrong1999multi,rochet2003economics,daskalakis2015multi}. \ed{If both the number of bidders and the number items exceed one, we get a benchmark problem combining the difficulty of mechanism design with multi-dimensional types and that with  multiple agents.  Essentially, nothing has been known about optimal auctions in this setting.}

The main contribution of our paper is to establish an unexpected  connection between the problem of auction design and an optimal transportation problem in the classic model of \cite{beckmann1952continuous}. 
In contrast to the prevalent  Monge-Kantorovich approach, Beckmann's paradigm of  ``continuous transportation'' captures the trajectories along which \ed{transportation occurs.}

\ed{Imagine a commodity that  is produced and consumed at different geographic locations and so the product has to be transported. Theory of optimal transportation aims to find the least costly way of doing that for given spacial distributions of production and consumption and given transportation costs. \cite{beckmann1952continuous} modelled the process of transportation as a continuous flow. Its intensity and direction at each point defines a vector and the problem is to find the vector field with minimal cost. 
 More formally,} let $\pi_p(x)$ and $\pi_c(x)$ be the density of production and consumption at a given \ed{point} 
 $x$ 
 \ed{in some Euclidean space} and $\rho=\rho(x)$ be a weight function. Let $c=c(x)$ be a vector field representing the direction and the intensity of the 
 \ed{flow. The flow is chosen to compensate supply-demand imbalances, i.e., the difference between the weighted inflow and outflow in a region has to be equal to the difference between supply and demand in it.} 
 This compensation boils down to the condition that the divergence   ${\div}[\rho\cdot c]$ \ed{(the sum of partial derivatives)} must be equal to $\pi_p-\pi_c$. Let $\Phi(c(x))$ be the local cost of transportation. For given $\pi_p,\pi_c$, $\rho$, and $\Phi$, Beckmann's problem is to find the flow $c$ 
  \ed{compensating imbalances and having the minimal total cost:}
\begin{equation}\label{eq_Beckmann}
\min_{c:\ {\div} [\rho\cdot c]+\pi_c- \pi_p=0}\int \Phi \left(c(x)\right)\rho(x) {\dd} x.
\end{equation}

We demonstrate that a dual problem to revenue maximization  takes the form of Beckmann's problem~\eqref{eq_Beckmann} with  a particular cost function $\Phi$ and marginals $\pi_p$ and $\pi_c$ satisfying a certain majorization constraint. \ed{Locations $x$ represent bidders' types and the field $c$ corresponds to ironed virtual valuations.} Similar generalizations of  Beckmann's problem are related to mean-field limits of the Wardrop equilibria for congested optimal transportation games 
\citep{santambrogio2015optimal,Carlier} but have not appeared in the context of auction design. 

We establish the strong duality, namely, the optimal revenue of the auctioneer equals the optimal value of the dual problem. 
\ed{The strong duality is especially useful if combined with the existence of solutions --- i.e., if the optima are attained --- as this combination enables complementary slackness conditions.} 
We demonstrate both the existence of an optimal auction and an optimal vector field $c$, which \ed{are one of} the most technically challenging parts of the paper. 

\ed{\paragraph{Applications and simulations.}
We illustrate a use of duality by recovering the result of \cite{jehiel2007mixed} that selling several items with independent values separately is never optimal provided that values are continuously distributed. This amounts to checking that complementary slackness conditions become incompatible whenever the allocation of each item depends on the values for this item only.

Complementary slackness suggests a guess-and-verify approach: first one guesses a solution to the primal problem, uses complementary slackness to find a dual solution, and this dual solution plays a role of a certificate verifying the initial guess. We illustrate this approach in the cases of one item and multiple bidders or one bidder and two items with i.i.d.  uniform values where the optimal mechanism 
was found by \cite{manellivincent}; see Appendix~\ref{app_examples}.

A prerequisite for the guess-and-verify approach is the existence of a simple explicit solution. To get insights into the structure of optimal auctions, we compute them numerically for several bidders having uniformly distributed values over two items. These simulations indicate the complexity of the optimal mechanism even in this benchmark setting. In particular, the optimal auction does not seem to be given by an elementary function.\footnote{\ed{Proposing  candidates for an optimal auction format remains a major open problem in the multi-item multi-bidder setting. Note that selling separately or selling the grand bundle are never optimal  \citep{jehiel2007mixed}.}}} \ed{We bound  the  revenue  loss  from  using  sub-optimal  designs  and  show how the conclusions change as the number of bidders grows.}

 \ed{To compute the optimal auctions, we develop a new numerical approximation scheme that allows one} to conduct simulations that were previously out of reach.\footnote{\ed{Even the advanced neural-network approach of~\citep{dutting2019optimal} does not produce the outcome detailed and reliable enough to make structural conclusions because of the curse of dimensionality.}} The approach relies on a combination of 
multi-to-single-agent reduction
of~\cite{cai2012algorithmic} and~\cite{alaei2019efficient} avoiding the curse of dimensionality at the cost of dealing with a non-linear feasibility constraint, majorization theory insights~\citep{kleiner2021extreme} to \ed{linearize} this constraint, \ed{duality to Beckmann's problem to guarantee that a solution of a discretized problem is close to that of the continuous one,
and cutting-edge numerical methods to handle optimization over convex functions speeding up the algorithm in practice.} As far as we know, algorithms based on multi-to-single-agent reduction have never been 
previously implemented.

\paragraph{Single bidder versus multiple bidders.}

To get more intuition about our approach and to highlight the specific features of the multi-bidder setting, we compare it  to the single-bidder benchmark of the monopolist's problem.

An important advance in understanding the monopolist's problem was made by~\cite{daskalakis2017strong} who 
 showed how to
reduce it to an  optimal transportation one. Instead of Beckmann's problem arising in the multi-bidder setting, the dual derived by~\cite{daskalakis2017strong} is the Monge–Kantorovich optimal transportation problem with a majorization constraint.  The Monge–Kantorovich problem has the following form:
\begin{equation}\label{eq_Kantorovich}
    \min_{\gamma:\ \gamma_1=\pi_p, \ \gamma_2=\pi_c}\int |x-y|{\dd}\gamma(x,y),
\end{equation}
where $\pi_p(x){\dd} x$ is the geographical distribution of production, $\pi_c(y){\dd} y$ is the distribution of consumption, and the goal is to find a transportation plan $\gamma$ such that the total transportation cost given by the integral in~\eqref{eq_Kantorovich}
 is minimal and supply meets demand, i.e., the marginal of $\gamma$ on the first coordinate is $\pi_p$ and on the second,~$\pi_c$. In contrast to Beckmann's problem, the transportation happens momentarily: only initial and final destinations are captured by the plan $\gamma$, not the trajectories connecting them.
 
We conclude that the revenue-maximization problem for a single-bidder has two differently looking optimal-transportation duals: the Monge–Kantorovich dual and  Beckmann's one. This indicates the connection between the two duals themselves, in particular, their values must be equal. It turns out that for a single bidder, the cost function $\Phi$ in Beckmann's problem can be simplified to  $\Phi(x)=|x|$. For this cost function,
the values of~\eqref{eq_Beckmann} and~\eqref{eq_Kantorovich} are known to coincide by the so-called Beckmann's duality  \citep[Section~4.2]{santambrogio2015optimal}. The presence of the two duals for the monopolist's problem is a repercussion of this duality.

For a single bidder, one can use any of the two duals. However, the link between revenue maximization and the Monge–Kantorovich problem turns out to be limited to the single-bidder case. By contrast,  the connection to Beckmann's problem generalizes to any number of bidders and items.

Let us highlight \ed{the key features of the approach allowing us to handle the multi-bidder case.} 
 The standard first step in the analysis of the monopolist's problem is replacing the non-tractable maximization over mechanism via a handy maximization over interim utility functions $u=u(x)$, where $x$ is the buyer's type  
\citep{rochet1998ironing}. The Rochet-Chon\'e representation is the starting point for the analysis of \cite{daskalakis2017strong}. The corresponding optimization problem  is of the form
$$\max_u\int \Big(\langle \nabla u(x), x\rangle -u(x)\Big) \rho(x){\dd} x,$$
where the maximum is taken over non-decreasing convex non-negative $1$-Lipschitz functions~$u$.

The Rochet-Chon\'e representation can be generalized to multi-bidder problems 
at the cost of getting an extra constraint capturing feasibility of the corresponding interim allocation rule. \ed{Most of the literature relies on a form of this constraint conjectured by~\cite{matthews1984implementability} and proved by~\cite{border1991implementation}.

\ed{The key role in our approach is played by a less known form of this feasibility  condition  discovered by~\cite{hart2015implementation} and extended to the multi-item setting in our paper. The condition takes a form of a majorization constraint on the distribution of $u$'s  gradient.}}
 The connection to majorization theory fuels  our analysis and simulations.
This theory has multiple recent applications in economic design; see, e.g., \citep{kleiner2021extreme,arieli2019optimal,candogan2021optimal,nikzad2022constrained, gershkov2021theory}. 

\ed{The non-local majorization constraint determines  the crucial difference between the resulting multi-bidder Rochet-Chon\'e representation and its single-bidder version. It does not allow us to get rid of the derivatives of $u$, which was crucial for the approach of~\cite{daskalakis2017strong}.} This obstacle explains why their approach does not generalize to the multi-bidder setting and why our dual problem  does not look similar to the Monge-Kantorovich one. 
\ed{The non-local constraint is a major complication;} it leads to involved functional classes needed to establish strong duality and the existence of a solution to the dual. 
\ed{We note that demonstrating strong duality with non-local constraints is new not just for the economic literature but also to the broader mathematical context.}

\subsection*{Related literature}
Linear programs and their duals are ubiquitous in microeconomics and economic design \citep{vohra2011mechanism, bichler2017market}. \ed{The modern literature is increasingly interested in infinite-dimensional settings (corresponding to non-atomic type spaces) as they highlight geometric properties of solutions such as differentiability, convexity, and links to majorization.}
 Apart from multi-item auctions discussed below, infinite-dimensional linear programs and their duals naturally arise in various contexts, e.g.,  informationally or distributionally  robust auction  design~\citep{bergemann2016informationally, koccyiugit2020distributionally,suzdaltsev2020optimal},
information economics \citep{kolotilin2018optimal, dworczak2019simple, dizdar2020simple,arieli2021feasible}.  
Infinite-dimensional programs often have the structure similar to the Monge-Kantorovich optimal transportation, for example, in the context of sorting on the labor market~\citep{boerma2021sorting}, matching with transferable utility and principal-agent problems~\citep{chiappori2010hedonic}, econometrics~\citep{galichon2021survey}, optimal taxation~\citep{steinerberger2019tax}, strategic learning and forecasting~\citep{gensbittel2015extensions, arieli2021transport, guo2021costly}.
Other economic applications  of optimal transport can be found in \citep{figallikimmccann,McCannZhang} and are surveyed by~\cite{galichon2016optimal} and \cite{Carlier}. For non-linear economic problems, a dual approach sharing some similarity with optimal transportation duality was proposed by \cite{noldeke2018implementation}.
A comprehensive presentation of the mathematical theory of transportation can be found in the books by~\cite{santambrogio2015optimal} and \cite{villani2009optimal} and in surveys
\cite{BoKo, McCannGuill}. 

The continuous model of transportation developed by~\cite{beckmann1952continuous}
 is one of the classical economic models of transport networks that had considerable early popularity. It has not been used much in the recent economic literature with the exception of spatial equilibrium models of \cite{fajgelbaum2020optimal} and   \cite{allen2014trade}. \ed{Beckmann's problem has anticipated the dynamic perspective on optimal transportation playing an important role in the modern theory; see the discussion in in Appendix~\ref{app_Beckmann}.
		 This perspective is central for cutting-edge machine learning techniques such as the Wasserstein gradient flows \citep{cuturi2019,kolouri2017}.  Beckmann's problem depends on the difference between production and consumption distributions but not on the distributions per se which makes it similar to the transshipment problem, a version of the Monge-Kantorovich problem where the distributions are not fixed but their difference is \citep{rachev2006mass}. Beckmann's problem can be seen as a dynamic version of the transshipment problem  \citep{carlier2005variational}.}

\ed{For infinite-dimensional problems,
 the central questions become whether the duality gap is zero or not (strong versus weak duality) and whether primal and dual solutions exist.} Both strong duality and the existence are needed for complementary slackness conditions to hold.  In auction design, these questions have only been studied in the single-bidder case. \cite{daskalakis2017strong} established the connection to optimal transport, demonstrated the strong duality, and showed the existence;
their proofs were then simplified by \cite{kleiner2019strong}. 
For several bidders, \cite{giannakopoulos2018duality}  partially relaxed the incentive-compatibility constraint and got a weakly dual problem sharing some similarity with the maximal flow one. In contrast to our paper, they did not discuss the issue of existence as non-zero duality gap diminishes the importance of this question. \ed{\cite{cai2019duality} considered a general problem of Bayesian mechanism design with finite number of types and derived a strongly dual problem resembling the maximal flow one. 
As the problem is finite-dimensional, the existence questions become mute and the strong duality is a consequence of the standard linear programming duality.  A similar duality approach was outlined by~\cite{myerson2002incentive} who, however, focused on  bargaining applications and did not discuss auctions. None of these papers relied on multi-to-single-agent reduction; this simplified the derivation of the duals at the cost of getting high-dimensional problems for $B\geq 2$ agents. 
} 

Even for a single bidder, optimal multi-item auctions can be complex and require non-linear pricing of a continuum of fractional bundles; explicit answers are known in a few particular cases such as uniformly or exponentially distributed values~\citep{daskalakis2017strong}. A primal approach of \cite{haghpanah2021pure} based on virtual surplus maximization provides an alternative to optimal transportation technique of \cite{daskalakis2017strong} and, in some cases, pins down an optimal mechanism, e.g., it shows when pure bundling is optimal in the single-bidder case; see also~\cite[Chapter 8]{hartline2013mechanism}.
Instead of looking for optimal mechanisms the literature has mainly focused on either showing that a simple mechanism can guarantee a certain fraction of the optimal revenue or asking how well one can approximate the optimal mechanism withing a certain parametric class; see representative papers~\citep{hart2019better, babaioff2020simple} and~\citep{hart2017approximate,babaioff2021menu}. The only explicitly solved multi-item auction with several bidders 
assumes that bidders' valuations are binary \citep{yao2017dominant}.

\section{Model}\label{sec_model}

We work in the standard setting of Bayesian auction design with quasilinear bidders having i.i.d. additive  utilities over items. 

There is a set $\mathcal{B}=\{1,2,\ldots,B\}$ of $B\geq 1$ bidders and a set $\mathcal{I}=\{1,2,\ldots,I\}$ of $I\geq 1$ items. We assume that the items are divisible and normalize the total amount of each item to one unit. As usual, indivisible items can be made divisible by interpreting fractional amounts as probability shares.

Bidders treat the items as perfect substitutes and, hence,  bidders' preferences are modelled by additive utility functions quasi-linear in money. The utility function of a bidder $b\in \mathcal{B}$ receiving a bundle $p_b\in \R_+^\mathcal{I}$ of items for a price $t_b$ takes the form
$$\lb p_b, x_b   \rb-t_b,$$
where $\lb \cdot, \cdot \rb$ is the standard dot product in $\R^\mathcal{I}$
and the vector $x_b\in \R_+^\mathcal{I}$ specifies $b$'s maximal willingness to pay for each of the items. The vector $x_b$ can be seen as bidder $b$'s type and constitutes the bidder's private information. Each bidder's type $x_b$ belongs to the set of types\footnote{\ed{This assumption is without loss of generality as any bounded set of types can be made a subset of $[0,1]^\mathcal{I}$ by rescaling.}} $X=[0,1]^\mathcal{I}$.

We assume that the fraction of bidders of different types in the population is described by a  density $\rho$ positive on  $X$ and zero beyond. The bidders are chosen from this population independently and, hence, the types $x_b\in X$, $b\in\mathcal{B}$, are i.i.d. draws with the distribution $\mu$ where ${\dd}\mu(x_b)=\rho(x_b){\dd} x_b$.
The auctioneer and bidders know $\rho$ and each bidder observes the realization of her own type. 



A mechanism which we also refer to as auction is given by a collection of bundles  $P=(P_b(x))_{b\in\mathcal{B}}$ and transfers $T=(T_b(x))_{b\in\mathcal{B}}$ for each profile of types $x=(x_b)_{b\in\mathcal{B}}$.
Formally, a mechanism $(P,T)$ is a measurable map  $X^{\mathcal{B}}\to \R_+^{\mathcal{I}\times\mathcal{B}}\times \R^{\mathcal{B}}$:
	$$(x_b)_{b\in\mathcal{B}}\  \to \  \Big(P_b\big((x_b)_{b\in\mathcal{B}}\big),\ T_b\big((x_b)_{b\in\mathcal{B}}\big)\Big)_{b\in\mathcal{B}}.$$
Here $P_b$ is the bundle received by a bidder $b\in\mathcal{B}$ and $T_b$ is the amount of money she pays to the auctioneer. A mechanism is {feasible} if for any profile of types~$(x_b)_{b\in\mathcal{B}}$
\begin{equation}
    \label{eq_feasible_allocation}
   \sum_{b\in\mathcal{B}} P_{b,i}\big((x_b)_{b\in\mathcal{B}}\big)\leq 1\quad\mbox{for all items $i\in \mathcal{I}$,} 
\end{equation}   
i.e., the auctioneer has only one unit of each item to sell and so a mechanism cannot allocate more than one unit.   

The auctioneer aims to design an auction maximizing the expected revenue $\sum_{b\in\mathcal{B}} T_b$. 
Bidders' types are their private information and a bidder may misreport her type if this brings her higher utility. Similarly, participation is voluntary and bidders may decide not to take part in the auction if they do not expect this to be profitable.
Hence, providing incentives for truthful behavior and participation becomes design constraints. To formalize them,  compute the expected allocation and transfer faced by a bidder $b$ of a given type $x_b$ assuming that others report their types truthfully:
\begin{align}
    \overline{P}_b(x_b)&=\int_{X^{{\mathcal{B}}\setminus\{b\}}} P_b\big((x_b)_{b\in\mathcal{B}}\big)\cdot \left(\prod_{d\in {\mathcal{B}}\setminus\{b\}}  \rho(x_d)\right) {\dd} x_1\cdots{\dd} x_{b-1}{\dd} x_{b+1}\cdots {\dd} x_B,\\
    \overline{T}_b(x_b)&=\int_{X^{{\mathcal{B}}\setminus\{b\}}} T_b\big((x_b)_{b\in\mathcal{B}}\big) \cdot \left(\prod_{d\in {\mathcal{B}}\setminus\{b\}}  \rho(x_d)\right) {\dd} x_1\cdots{\dd} x_{b-1}{\dd} x_{b+1}\cdots {{\dd}} x_B.
\end{align}
Such one-bidder marginals $(\overline{P}_b,\overline{T}_b)$ of the original mechanism $(P,T)$ are known as its reduced forms or interim mechanisms. The reduced mechanism for a bidder~$b$ captures how her expected utility depends on her type and her report, i.e., all the information relevant to her: if her type is $x_b$ and she reports to be of type $x_b'$, while other bidders remain truthful, $b$'s expected utility takes the form $$\big\langle\overline{P}_b(x_b'),\,x_b\big\rangle-\overline{T}_b(x_b').$$

A mechanism  is called
 {Bayesian incentive-compatible} if truth-telling is a Bayesian equilibrium, i.e., no bidder $b$ has an incentive to misreport her values if others report truthfully. 
		Formally,
		\begin{equation}\label{eq_SP}
		\big\langle\overline{P}_b(x_b),\,x_b\big\rangle-\overline{T}_b(x_b)\geq \langle\overline{P}_b(x_b'),\,x_b\rangle-\overline{T}_b(x_b')
	\end{equation}
	for all  $x_b,x_b'\in X$ and $b\in\mathcal{B}.$

A mechanism is called {individually rational}  if no bidder wants to abstain from participation, i.e., nobody gets a negative expected utility. Formally, 
		\begin{equation}\label{eq_IR}
	\big\langle\overline{P}_b(x_b),\,x_b\big\rangle-\overline{T}_b(x_b)\geq 0
	\end{equation}
	for all  $x_b\in X$ and $b\in\mathcal{B}.$

The auctioneer's design problem takes the following form.

\smallskip
\noindent{\textbf{Auctioneer's problem:}} \emph{maximize  the expected revenue
\begin{equation}\label{eq_revenue}
\int_{X^{\mathcal{B}}} \left(\sum_{b\in\mathcal{B}} T_b\big((x_b)_{b\in\mathcal{B}}\big)\right)\cdot \left(\prod_{b\in\mathcal{B}}\rho(x_b)\right){\dd} x_1\cdots {\dd} x_B
\end{equation}
over {individually-rational Bayesian incentive-compatible feasible} mechanisms}~$(P,T)$.
\medskip

In the case of a single bidder $(B=1)$, the auctioneer's problem becomes the multi-item monopolist's problem.  
Note that for $B=1$, the reduced mechanism coincides with the original one, i.e., $\overline{P}_1\equiv P_1$ and~$\overline{T}_1\equiv T_1$.
In what follows, we will use the monopolist's problem as a benchmark and, in particular, connect our characterization to the one obtained by \cite{daskalakis2017strong}.

\section{Multi-bidder version of Rochet-Chon\'e representation}\label{sec_Rochet}

A common starting point for the analysis of the monopolist's problem is its equivalent representation derived in \cite{rochet1998ironing}. 
We first recall their insight in the single-bidder setting and then describe  its extension to the general case of $B\geq 1$ bidders. 

\subsection{
\ed{Monopolist's problem}\label{subsect_monopolist}}  \ed{Consider a one-bidder mechanism  $(P,T)$. With each such mechanism,} we can associate the interim utility function $u(x)=\big\langle P(x),x\big \rangle-T(x)$, i.e., the expected utility obtained by a bidder of type $x$. Following \cite{rochet1998ironing},
the monopolist's problem can be 
rewritten as a maximization over the utility function $u$ under some constraints. Bayesian incentive compatibility and individual rationality boil down to $u$ being a convex non-negative function. The allocation probabilities $P(x)$ are given by the gradient $\nabla u(x)$. Hence, $\langle \nabla u(x), x \rangle$ is the utility that the bidder derives from the allocated items. As the total utility is $u(x)$, the difference $\langle \nabla u(x), x \rangle-u(x)$ is the payment that goes to the monopolist. Consequently, the monopolist's problem reduces to maximizing
 \begin{equation}\label{eq_Rochet_monopolist}
\int_{X} \Big(\langle \nabla u(x), x\rangle -u(x)\Big) \rho(x) {\dd} x, 
\end{equation}
over convex  $u:\ X\to\R_+$ such that $\nabla u(x)\in [0,1]^\mathcal{I}$. The last condition originates from the requirement of feasibility: for each item $i$, the allocated amount
\begin{equation}\label{eq_allocation_via_utility}
P_{i}(x)=\frac{\partial u}{\partial{x_{i}}}(x)
\end{equation}
has to be between $0$ and $1$.
\subsection{
\ed{Auctioneer's problem}} 
Consider now the auction-design problem with $B\geq 1$ bidders. We show that this  problem can be reduced to an optimization problem that is similar to the monopolist's problem but the feasibility constraint $\frac{\partial u}{\partial x_{i}}(x)\leq 1$ on the gradient's values is replaced by a non-local majorization condition on the distribution of the gradient.
\begin{definition}[Majorization\footnote{Majorization is \ed{also known under the name of second-order stochastic dominance. Both are not} to be confused with a closely related notion of dominance with respect to the convex order also known as the Blackwell order, which corresponds to taking any convex $\varphi$, not necessarily non-decreasing. For probability measures, convex dominance implies that $\nu$ and $\nu'$ have the same mean, while for majorization,  the majorizing measure can have a higher mean, i.e., $\int t{\dd}\nu(t)\geq \int t{\dd} \nu'(t)$.}]\label{def_majorization}
For a pair of measures $\nu$ and $\nu'$, we say that  $\nu$ majorizes $\nu'$ if $\int \varphi{\dd}\nu\geq \int \varphi{\dd}\nu'$ for any convex non-decreasing function $\varphi$. 
 A random variable $\xi$ majorizes $\xi'$ if the distribution of $\xi$ majorizes that of $\xi'$. We write $\nu\succeq_{} \nu'$ and $\xi\succeq_{} \xi'$.
 \end{definition}
 Informally, majorization means that $\nu$ can be obtained from $\nu'$ by combining mean-preserving spreads with moving mass to higher values.\

As we will see, the auctioneer's problem with $B$ bidders is equivalent to the following one.

\smallskip 
\noindent\textbf{Multi-bidder Rochet-Chon\'e problem:}
\emph{maximize 
\begin{equation}\label{eq_Rochet_Chone_extension}
B\cdot\int_{X} \Big(\langle \nabla u(x), x\rangle -u(x)\Big) \rho(x) {\dd} x
\end{equation}
over convex non-decreasing functions  $u:\ X\to\R_+$  with $u(0)=0$ and such that for all $i\in \mathcal{I}$
\begin{equation}\label{eq_Border}
\frac{\partial u}{\partial x_i} (\chi)\preceq \xi^{B-1},
\end{equation}
where $\chi\in X$ is distributed with the density $\rho$ and $\xi$ is uniformly distributed on~$[0,1]$.}

\smallskip
Let us clarify the meaning of the condition~\eqref{eq_Border}. Each component of the gradient\footnote{\ed{We do not assume that the function $u$ is smooth and, hence, the partial derivative $\frac{\partial u}{\partial x_{i}} (x)$ may not exist for some $x$. Despite this fact, the  optimization problem~\eqref{eq_Rochet_Chone_extension} is well-defined since the gradient of a convex function exists almost everywhere and integration with respect to an absolutely continuous measure is not sensitive to the behavior of the integrand on sets of zero Lebesgue measure; see Appendix~\ref{sect_convex_analysis} for basics of convex analysis.}} $\frac{\partial u}{\partial x_{i}} (\chi)$ is treated there as a random variable by assuming that 
the argument $\chi\in X$ is random and distributed with the density $\rho$ and the distribution of this random variable must be majorized by the distribution of $\xi^{B-1}$, where $\xi$ is uniform on $[0,1]$. An equivalent way to write this condition is to assume that for any non-decreasing convex $\varphi$
\begin{equation}\label{eq_Border_expanded}
\int_X \varphi\left(\frac{\partial u}{\partial x_{i}} (x)\right)\rho(x) {\dd} x \leq \int_0^1 \varphi\left(z^{B-1}\right){\dd} z.
\end{equation}
\begin{proposition}\label{prop_Rochet}
The optimal revenue in the auctioneer's problem~\eqref{eq_revenue} and the value of the multi-bidder Rochet-Chon\'e problem~\eqref{eq_Rochet_Chone_extension} coincide and the optima in both problems are attained.
\end{proposition}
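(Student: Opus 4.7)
Starting from any feasible IC IR mechanism $(P,T)$, the first move is symmetrization over $\mathfrak{S}_B$: since types are i.i.d., the revenue is invariant under bidder permutations and the feasible set is convex and permutation-invariant, so averaging $(P,T)$ over all permutations preserves revenue while inducing a common interim pair $(\overline{P},\overline{T})$ for every bidder. Applying the Rochet characterization to this symmetric reduced form, Bayesian IC is equivalent to $u(x):=\langle\overline{P}(x),x\rangle-\overline{T}(x)$ being convex with $\nabla u=\overline{P}$ almost everywhere; IR becomes $u\geq 0$; non-negativity of $\overline{P}$ makes $u$ non-decreasing; and passing from $u$ to $u-u(0)$ preserves all constraints while weakly increasing the objective, so without loss $u(0)=0$. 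A short substitution then rewrites the expected revenue as $B\int_X (\langle\nabla u,x\rangle-u)\rho\,dx$. What remains is ex-post feasibility $\sum_b P_{b,i}\leq 1$, which, because utilities are additive, decouples across items; for each item $i$ it asks that the symmetric interim allocation $\overline{P}_{\cdot,i}\colon X\to[0,1]$ be implementable by some ex-post rule.

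\textbf{Translating feasibility into majorization.} The crucial lemma is that implementability of a symmetric interim allocation $q\colon X\to[0,1]$ for $B$ i.i.d. bidders is equivalent to $q(\chi)\preceq\xi^{B-1}$; this extends \cite{hart2015implementation} from one-dimensional to multi-dimensional types. The forward direction proceeds by a rearrangement argument: among ex-post feasible symmetric rules with the same one-dimensional distribution of $q(\chi)$, the rule that allocates the item to the bidder with the highest value of $q$ is extremal, and its interim allocation has law exactly $\xi^{B-1}$, which yields the bound by the definition of majorization. The converse constructs an ex-post rule by using the non-increasing rearrangement of $q$ to emulate the top-value mechanism. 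Applied per item, this characterizes the feasibility of the reduced form by \eqref{eq_Border}, completing the reduction of the auctioneer's problem to \eqref{eq_Rochet_Chone_extension}; the inverse map from an admissible $u$ back to a feasible mechanism uses the same lemma item by item.

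\textbf{Attainment.} For existence, work in the set $\mathcal{U}$ of admissible $u$. Taking $\varphi(t)=(t-1)^+$ in \eqref{eq_Border_expanded} and using $\xi^{B-1}\in[0,1]$ forces $\partial_i u\leq 1$ almost everywhere, so every $u\in\mathcal{U}$ is $1$-Lipschitz in each coordinate with $u(0)=0$; thus $\mathcal{U}$ is uniformly bounded and equicontinuous on $X$. By Arzel\`a--Ascoli, any maximizing sequence admits a uniformly convergent subsequence whose limit $u_*$ inherits convexity, monotonicity, non-negativity, and $u_*(0)=0$. Pointwise convergence of convex functions on the interior implies $L^1$-convergence of their gradients on any compact subinterior, so both $\int u\rho\,dx$ and $\int\langle\nabla u,x\rangle\rho\,dx$ pass to the limit. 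Applying Fatou to \eqref{eq_Border_expanded} for each convex non-decreasing $\varphi$ preserves the majorization constraint in the limit, so $u_*\in\mathcal{U}$ is optimal; pulling this back through the equivalence above produces an optimal mechanism.

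\textbf{Main obstacle.} The hardest ingredient is the multi-item extension of the Hart-Reny characterization sketched above. The original result treats one-dimensional types, whereas here the interim allocation for item $i$ is a scalar function on $[0,1]^\mathcal{I}$, and the rearrangement argument has to be executed against the multi-dimensional measure $\rho\,dx$. Verifying that the ex-post rule built from the rearrangement is genuinely symmetric and measurable (rather than only being a rule on the one-dimensional distribution of $q$-values) requires a disintegration of $\rho$ along the level sets of $q$, and this is where most of the technical work lies; the other ingredients are essentially standard convex-analysis and compactness arguments adapted to the majorization constraint.
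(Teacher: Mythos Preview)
Your proposal follows essentially the same architecture as the paper's proof: symmetrize, pass to the interim utility $u$ via Rochet's characterization, translate ex-post feasibility into the majorization constraint via Hart--Reny applied item by item, and then prove attainment by Arzel\`a--Ascoli combined with almost-everywhere convergence of gradients of convex functions.

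The one point worth correcting is your identification of the ``main obstacle.'' You write that Hart--Reny ``treats one-dimensional types'' and that the hard part is extending it to a scalar function on $[0,1]^{\mathcal I}$ via disintegration along level sets of $q$. This is a misreading: the Hart--Reny characterization for a single item already applies to an arbitrary type space, because the criterion $q(\chi)\preceq\xi^{B-1}$ depends only on the one-dimensional law of $q(\chi)$, not on the dimension of $X$. No disintegration along level sets is needed; the paper simply invokes Hart--Reny verbatim with $X=[0,1]^{\mathcal I}$. The genuine (and mild) extension is from one item to several, and this is handled exactly as you say in your first paragraph---the ex-post feasibility constraint $\sum_b P_{b,i}\le 1$ decouples across items, so Hart--Reny is applied once per coordinate of $\overline P$. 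Your sketch of a direct rearrangement proof is therefore unnecessary, and the ``forward direction'' you outline is also not quite how the argument goes (the extremal rule does not have interim allocation $q$; rather, its interim allocation majorizes every implementable $q$).

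On attainment, your Fatou argument for preserving the majorization constraint is fine once you normalize $\varphi(0)=0$; the paper instead observes that bounded almost-everywhere convergent gradients give weak convergence of their laws, which preserves the inequality for every convex non-decreasing test function. For the objective, rather than invoking $L^1$-convergence of gradients on compact subinteriors (which would leave the boundary to handle), it is cleaner to use dominated convergence directly: the gradients are uniformly bounded by $1$ and converge almost everywhere, so both $\int u\,\rho$ and $\int\langle\nabla u,x\rangle\rho$ pass to the limit on all of $X$.
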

A proof of Proposition~\ref{prop_Rochet} is contained in Appendix~\ref{app_Rochet} and the key ideas are discussed below. 
  \ed{The proposition makes apparent the connection of the auctioneer's problem to  majorization theory.} \ed{The representation~\eqref{eq_Rochet_Chone_extension} is the starting point for the derivation of the dual in Section~\ref{sec_duality}. Combined with optimal-transportation insights,  it  leads to an algorithm  for computing optimal auctions  (Section~\ref{sec_algorithm}). The existence of optimal multi-item multi-bidder auctions has not been known and required new functional analytic arguments.}
\medskip

\ed{Proposition~\ref{prop_Rochet} allows one to treat auctions with a different number of bidders in a similar way.\footnote{Treating the number of bidders $B$ in~\eqref{eq_Rochet_Chone_extension} as a continuous parameter, one can even interpolate between auctions with different numbers of bidders.} However,} the single-bidder case is special. By plugging in  $\varphi(z)=\max\{0,\,z-1\}$ to~\eqref{eq_Border_expanded}, we see that majorization implies $\frac{\partial u}{\partial x_i} (x)\leq 1$ for any number of bidders. For one bidder, however, the reverse implication also holds as the right-hand side of~\eqref{eq_Border_expanded} is equal to $\varphi(1)$ and $\varphi$ is monotone. Consequently, the dominance condition on  the gradient's distribution boils down to the pointwise condition on the gradient's values and we obtain the classic Rochet-Chon\'e representation~\eqref{eq_Rochet_monopolist} used by~\cite{daskalakis2017strong}.
For $B>1$, the majorization constraint becomes non-local and restricts the distribution of the gradient rather than its pointwise values. As we will see in Section~\ref{sec_duality}, this non-locality is a complication compared to the single-bidder case. 
\medskip 

\ed{To obtain Proposition~\ref{prop_Rochet}, the maximization over individually-rational Bayesian incentive-compatible feasible mechanisms $(P,T)$ in the auctioneer's problem is replaced by the maximization over the corresponding reduced forms $(\overline{P}_b,\overline{T}_b)_{b\in\mathcal{B}}$. By a symmetrization argument, all these one-bidder mechanisms are the same without loss of generality. Thus the auctioneer's problem reduces to maximization of $B$ times the revenue of a Bayesian incentive-compatible  individually-rational one-bidder mechanism~$(\overline{P},\overline{T})$.
 However, not every single-agent mechanism is a reduced form of a feasible $B$-bidder mechanism $(P,T)$ and so we get an extra feasibility constraint
on $(\overline{P},\overline{T})$ originating from the feasibility constraint on $(P,T)$. 

The novelty is in how we handle this feasibility constraint. 
The first characterization of feasible reduced-form mechanisms was proved by~\cite{border1991implementation} but we 
rely on an extension of a less known alternative  characterization by
\cite{hart2015implementation} formulated in terms of majorization: 
a single-bidder mechanism $(\overline{P},\overline{T})$ is a reduced form of some feasible symmetric $B$-bidder mechanism $(P,T)$ if and only if,\footnote{\ed{The upper bound in~\eqref{eq_Hart} corresponds to the reduced form of a mechanism $(P,T)$ allocating each item $i$ to the bidder $b\in\mathcal{B}$ with the highest $x_{i}$. In other words, any reduced form is majorized by the reduced form of the efficient allocation rule.}}
for all items $i\in \mathcal{I}$,
\begin{equation}\label{eq_Hart}
\overline{P}_i(\chi)\preceq \xi^{B-1},
\end{equation}
where $\chi$ is distributed with the density $\rho$ and $\xi$ is uniformly distributed on $[0,1]$.
 
\cite{hart2015implementation} proved this result for $I=1$ item while a version of it derived by \cite{kleiner2021extreme} allows for multiple items but requires one-dimensional types. We need the result for the general setting with $I\geq 2$ items and show that the same dominance condition has to be applied to each of the components of $\overline{P}=(\overline{P}_{i})_{i\in \mathcal{I}}$. The intuition is that the  original feasibility constraint for $(P,T)$ restricts the allocation of each item separately and the constraint for $(\overline{P}, \overline{T})$ inherits this property 

Representing single-agent mechanisms by utility functions  as in the classic Rochet-Chon\'e formula, we obtain the equivalence between the auctioneer's problem  and~\eqref{eq_Rochet_Chone_extension}. This equivalence} allows us to construct a solution to one based on a solution to the other. Hence, to show that the optima are attained, it is enough to demonstrate that the optimum is attained in~\eqref{eq_Rochet_Chone_extension}. This follows from a compactness argument. The set of feasible $u$ is compact 
and the objective is continuous in the \ed{$\sup$-norm topology} of the space of continuous functions. Hence, the optimal $u$ exists since a continuous functional attains its maximal value on a compact set. 
A subtle point is the choice of topology. One might think that the \ed{$\sup$-norm topology}  is too weak to control the gradient and preserve the condition~\eqref{eq_Border} on the gradient's distribution. Indeed,  differentiability is too fine to be preserved by the \ed{$\sup$-norm topology}. However, thanks to the fact that feasible $u$ are convex, the local property of differentiability
can be replaced with a lower bound by an appropriate affine function (see the definition of subdifferential in Appendix~\ref{sect_convex_analysis}) which is respected by \ed{$\sup$-norm} limits.

\section{Duality}\label{sec_duality}
In Section~\ref{sec_Rochet}, we saw that the auctioneer's problem can be reduced to the multi-bidder Rochet-Chon\'e problem,  
which is a
convex program. In this section, we show that for any number of bidders, the dual to this program is a version of Beckmann's transportation problem \citep{beckmann1952continuous}.

\medskip

\ed{In Beckmann's problem,} we are given a cost function $\Phi$, densities of production $\pi_p(x)$ and consumption $\pi_c(x)$ of a certain commodity at every geographical location $x\in X$, where $X$ is a subset of an Euclidean space, and a weight-function $\rho$ on $X$. The goal is to find a transportation flow having the minimal cost and compensating \ed{supply-demand imbalance.}
 The direction and intensity of the flow are represented by a vector field $c=c(x)$. \ed{For a region $A\subset X$, the difference between the weighted outflow and inflow is given by $\int_{\partial A}\langle c(x),n(x)\rangle \rho(x){\dd} s(x)$, where $\partial A$ is the boundary of $A$, the vector $n(x)$ is the outward-pointing unit normal at $x\in \partial A$, and ${\dd} s(x)$ is the element of boundary volume. The compensation of imbalances means that this difference between must be equal to the difference  between supply and demand in this region. Recall that the divergence of a vector field $f=f(x)$ is defined by $\div[f](x)=\sum_i \frac{\partial}{\partial x_i} f(x)$. 
 The compensation
boils down} to the following identity:\footnote{The intuition is as follows. \ed{Consider a ``nice'' region $A\subset X$ (infinitesimally-small cubes are enough).
By the Gauss theorem, $\int_{\partial A}\langle c(x),n(x)\rangle \rho(x){\dd} s(x)=\int_A {\div}[\rho\cdot c]{\dd} x$.} We end up with the condition $\int_A {\div}[\rho\cdot c]{\dd} x=\int_A (\pi_p-\pi_c){\dd} x$ which holds for any $A$ and thus the integrands must be equal.} ${\div}[\rho\cdot c](x)+\pi(x)=0$, \ed{where $\pi=\pi_c-\pi_p$.} 
Beckmann's problem is to minimize the total weighted cost $\int_X \Phi (c) \rho(x){\dd} x$ over all such vector-fields.

In the application to the auctioneer's problem, the set of geographical locations $X$ will coincide with the set of types $X=[0,1]^{\mathcal{I}}$ and the weight $\rho$ will be the density of types' distribution. The supply-demand imbalance $\pi$ will be given by a signed measure which may have singularities. Accordingly, we need to allow the divergence to become a measure as well. To explain the intuition behind the formal definition, for a moment assume that $\rho$ is smooth and equals zero on the boundary of $X$. Then, using the Gauss theorem or just integrating by parts, we obtain that 
\begin{equation}\label{eq_divergence_by_parts_function}
\int_X \langle \nabla u(x),c(x) \rangle \rho(x) {\dd} x= -\int_X  u(x)\cdot \div[\rho\cdot c] {\dd} x
\end{equation}
for any smooth function~$u$  (there is no term corresponding to the contribution of the boundary of $X$ as we assumed that $\rho$ vanishes there).
This formula suggests the formal definition. For a vector field $c$ and weight $\rho$, the 
$\rho$-divergence ${\div}_{\rho}[c]$ is a measure on $X$ such that the integration-by-parts relation 
\begin{equation}\label{eq_rho_divergence_measures}
\int_X \langle \nabla u(x), c(x) \rangle \rho(x) {\dd} x= -\int_X  u(x) {\dd} \left({\div}_{\rho}[c]\right)(x)
\end{equation}
holds for any smooth $u$.  In general, the contribution of the boundary cannot be neglected and so ${\div}_{\rho}[c]$ may have boundary singularities even for smooth $c$ and\footnote{A similar use of measure-valued derivatives can be found in \citep{Ambrosio2000FunctionsOB}.}~$\rho$.
 
\medskip
\noindent\textbf{Beckmann's problem.} \emph{The set of geographical locations is $X=[0,1]^\mathcal{I}$. Spacial imbalance of production and consumption is given by a signed measure $\pi$ on $X$ \ed{such that $\pi(X)=0$, i.e., the total demand is equal to the total supply.} 
Given a convex cost function $\Phi\colon  \R^{\mathcal{I}}\to \R\cup\{+\infty\}$ and a density $\rho\colon X\to \R_+$, the goal is to minimize the cost $\int_X \Phi(c(x))\cdot\rho(x){\dd} x$ over continuously differentiable vector fields $c\colon X\to \R^{\mathcal{I}}$ such that $\div_\rho[c]+\pi=0$.
The value of Beckmann's problem is denoted by
\begin{equation}\label{eq_Beckmann_value}
 \B_\rho\left(\pi,\Phi\right)=\inf_{c\colon \,\div_\rho[c]+\pi=0}  \int_X \Phi(c(x))\cdot\rho(x){\dd} x. 
\end{equation}
If there are no smooth $c$ such that $\div_\rho[c]+\pi=0$, i.e., the minimization is over an empty set, we assume that $\B_\rho\left(\pi,\Phi\right)=+\infty$.
}

\medskip

We now connect Beckmann's problem to auctions. For this purpose, we make the imbalance $\pi$ a free parameter satisfying a majorization constraint. To describe this constraint, consider the revenue objective in the Rochet-Chon\'e problem~\eqref{eq_Rochet_Chone_extension} and get rid of derivatives via integration by parts
\begin{equation}\label{eq_transform_measure}
   \int_{X} \Big(\langle \nabla u(x), x\rangle -u(x)\Big) \rho(x) {\dd} x= -u(0)+\int_X u(x){\dd} m(x), 
\end{equation}
where $m$ is a signed measure 
such that this identity holds for any smooth~$u$. 
We consider the following  majorization constraint on\footnote{The definition of majorization (Definition~\ref{def_majorization}) is applicable to multidimensional signed measures. In particular,~\eqref{eq_pi_majorization} means that $\int u{\dd}\pi\geq \int u{\dd} m$ for any convex non-decreasing $u$ on $X$.}$^{,}$\footnote{\ed{A similar constraint appears in the single-bidder result by~\cite{daskalakis2017strong} who refer to $m$ as the transform measure.}} $\pi$: 
\begin{equation}\label{eq_pi_majorization}
\pi \succeq_{} m.
\end{equation}

To define the cost function $\Phi$, consider a collection $(\varphi_i)_{i\in \mathcal{I}}$  of non-decreasing convex functions on $\R_+$ with $\varphi_i(0)=0$. Let $\varphi_i^*$ be the Legendre transform of $\varphi_i$, i.e., $\varphi_i^*(y)=\sup_x \langle x,y\rangle-\varphi_i(x)$; see Appendix~\ref{sect_convex_analysis}. The cost function $\Phi$ is separable and takes the following form
\begin{equation}\label{eq_Phi}
\Phi(c)=\sum_{i\in\mathcal{I}}\varphi_i^*(|c_i|).
\end{equation}
We note that the higher is $\varphi_i$, the lower is $\varphi_i^*$ and so is the cost $\Phi$.
\begin{theorem}\label{th_vector_fields_inf}
    In the auctioneer's problem~\eqref{eq_revenue} with $|\mathcal{B}|=B\geq 1$ bidders, $|\mathcal{I}|=I\geq 1$ items, and bidders' types distributed on $X=[0,1]^{\mathcal{I}}$ with positive density $\rho$, the optimal revenue coincides with 
\begin{equation}\label{eq_vector_fields_inf}
B\cdot \inf_{
  \footnotesize{\begin{array}{c}
       (\varphi_{i})_{i\in\mathcal{I}},\\  
       \pi\succeq_{} m
  \end{array}}}\left[
  \B_\rho\Big(\pi,\,\Phi\Big)+ \sum_{i\in\mathcal{I}}\int_0^1\varphi_{i}\left(z^{B-1}\right){\dd} z\right],
 \end{equation}   
where $\Phi$ is given by~\eqref{eq_Phi} and $\varphi_{i}\colon \R_+\to \R_+\cup\{+\infty\}$ are non-decreasing convex functions with $\varphi_{i}(0)=0$ for each item $i\in\mathcal{I}$. 
\end{theorem}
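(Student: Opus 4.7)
My plan is to split the proof into weak and strong duality, starting from the multi-bidder Rochet-Chon\'e representation of Proposition~\ref{prop_Rochet}. That proposition identifies the optimal revenue with $B\cdot V$, where
\[ V = \sup_{u\in\mathcal{U}} \int_X u(x)\,{\dd} m(x), \]
with $\mathcal{U}$ the set of convex, non-decreasing $u:X\to\R_+$ satisfying $u(0)=0$ and the majorization $\partial u/\partial x_i(\chi) \preceq \xi^{B-1}$ for every $i\in\mathcal{I}$; the rewriting uses~\eqref{eq_transform_measure} together with $u(0)=0$ to discard the boundary term.

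For weak duality I would chain four inequalities. Fix any admissible $u\in\mathcal{U}$, $(\varphi_i)$, $\pi\succeq m$, and smooth $c$ with $\div_\rho[c]+\pi=0$. Then
\begin{align*}
\int_X u\,{\dd} m \;\leq\; \int_X u\,{\dd}\pi \;&=\; \int_X \langle \nabla u, c\rangle\,\rho\,{\dd} x \\
&\leq\; \int_X \Phi(c)\,\rho\,{\dd} x + \sum_i \int_X \varphi_i\!\left(\tfrac{\partial u}{\partial x_i}\right)\rho\,{\dd} x \\
&\leq\; \int_X \Phi(c)\,\rho\,{\dd} x + \sum_i \int_0^1 \varphi_i(z^{B-1})\,{\dd} z,
\end{align*}
using in turn (i) that $u$ is a non-decreasing convex test function against $\pi\succeq m$; (ii) the integration-by-parts identity~\eqref{eq_rho_divergence_measures} applied with $\pi=-\div_\rho[c]$; (iii) Young's inequality $c_i\cdot\partial u/\partial x_i \leq \varphi_i^*(|c_i|) + \varphi_i(\partial u/\partial x_i)$ summed over $i$; and (iv) the majorization constraint defining $\mathcal{U}$. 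Taking the infimum over $c$ yields $\B_\rho(\pi, \Phi)$, and then the infimum over $\pi$ and $(\varphi_i)$ gives the $\leq$ direction in~\eqref{eq_vector_fields_inf}.

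For strong duality I would run a Lagrangian-minimax argument. Encode the majorization constraints on $u$ through
\[ \mathcal{L}(u,(\varphi_i)) = \int_X u\,{\dd} m + \sum_i \left[\int_0^1 \varphi_i(z^{B-1})\,{\dd} z - \int_X \varphi_i\!\left(\tfrac{\partial u}{\partial x_i}\right)\rho\,{\dd} x\right], \]
so that $V = \sup_u \inf_{(\varphi_i)} \mathcal{L}(u,(\varphi_i))$ with $u$ ranging over the cone $\mathcal{U}_0$ of convex non-decreasing functions with $u(0)=0$. Next, linearize the convex term via $-\varphi_i(\partial u/\partial x_i) = \inf_{c_i}[\varphi_i^*(|c_i|) - c_i\,\partial u/\partial x_i]$, interchange $\sup_u$ with $\inf_{(\varphi_i),c}$, and use~\eqref{eq_rho_divergence_measures} to rewrite the $u$-dependent piece as $\sup_u \int_X u\,{\dd}(m+\div_\rho[c])$. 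By homogeneity this supremum equals $0$ when $\pi:=-\div_\rho[c]$ satisfies $\pi\succeq m$ (equivalent to $\int u\,{\dd}(\pi-m)\geq 0$ for every $u\in\mathcal{U}_0$) and $+\infty$ otherwise; minimizing $\int \Phi(c)\rho\,{\dd} x$ over $c$ with $\div_\rho[c]+\pi=0$ gives $\B_\rho(\pi,\Phi)$, and multiplying by $B$ recovers~\eqref{eq_vector_fields_inf}.

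The main obstacle is justifying the sup-inf interchange. The Lagrangian is linear in $(\varphi_i,c)$ and concave in $u$, but none of the relevant sets is manifestly compact and the Lagrangian can take infinite values. My plan is to retain the majorization constraint on $u$ throughout the minimax step, so that $u$ lies in the compact set $\mathcal{U}$ (compactness in the $\sup$-norm topology follows from $\partial u/\partial x_i\leq 1$ via Arzel\`a-Ascoli, as used in the proof of Proposition~\ref{prop_Rochet}), truncate $(\varphi_i,c)$ to families on which $\mathcal{L}$ is continuous, apply Sion's minimax theorem on the truncated problems, and finally remove the truncation by a monotone limit. A secondary technical point is that $\div_\rho[c]$ is in general a signed measure with possible boundary concentration when $\rho$ does not vanish on $\partial X$, so integration by parts must be interpreted in the measure-valued sense of~\eqref{eq_rho_divergence_measures} rather than pointwise.
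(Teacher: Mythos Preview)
Your weak-duality chain is correct and matches the paper's own exposition. The strong-duality argument, however, contains a real gap at precisely the point you flag as ``the main obstacle,'' and your proposed fix does not close it.

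The difficulty is this: to apply Sion you keep $u$ in the compact set $\mathcal{U}$ (convex, non-decreasing, $u(0)=0$, and satisfying the majorization constraint). But $\mathcal{U}$ is \emph{not a cone}---the majorization constraint caps $\partial u/\partial x_i$ pointwise by $1$---so after the swap the inner supremum
\[
\sup_{u\in\mathcal{U}}\Big[\int_X u\,{\dd} m - \int_X\langle c,\nabla u\rangle\rho\,{\dd} x\Big]
= \sup_{u\in\mathcal{U}}\int_X u\,{\dd}(m-\pi)
\]
is a bounded functional of $\pi$, not the $0/+\infty$ indicator of $\{\pi\succeq m\}$. Your sentence ``by homogeneity this supremum equals $0$ when $\pi\succeq m$ and $+\infty$ otherwise'' silently replaces $\mathcal{U}$ by the unbounded cone $\mathcal{U}_0$, which is exactly the set on which you cannot invoke Sion. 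The truncation-and-monotone-limit idea does not resolve this: the incompatibility is structural (compactness of $u$ versus conicity of $u$), not a matter of continuity.

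The paper handles this by splitting the minimax into two stages and inserting a nontrivial bridge between them. First, Sion is applied with $u\in\mathcal{U}_{\mathrm{Lip},1}$ (compact) and multipliers $\varphi_i$, yielding a partial dual (Theorem~\ref{duality-theorem-maxinf}). The inner problem is then $\max_{u\in\mathcal{U}_{\mathrm{Lip},1}}\Phi(u,\varphi_i)$. The key technical device---the \emph{a priori estimate} of Proposition~\ref{ae-exist-lip}---shows that for suitable $\varphi_i$ the maximizer over the full cone $\mathcal{U}^1$ is automatically Lipschitz with a controlled constant; this is what allows the paper to replace $\mathcal{U}_{\mathrm{Lip},1}$ by the cone $\mathcal{U}^1$ (Lemma~\ref{prop:approx_U_with_Q} and Proposition~\ref{cor:max=max_inf}). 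Only then does the second minimax (now with $u$ ranging over a cone and $c$ in a weak-$*$ compact $L^\infty$ ball, Proposition~\ref{thm:max_is_min_in_Q}) yield the indicator of $\{\pi\succeq m\}$ and hence the Beckmann form. Your outline is missing this estimate, and without it the passage from the compact set to the cone---the step on which the whole identification of the dual rests---is unjustified.
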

Theorem~\ref{th_vector_fields_inf} is a particular case of a more general duality result (Theorem~\ref{th_vector_fields_inf_appendix}) proved in Appendix~\ref{app_Dual}. 
The proof goes in two steps. First, we prove a partial duality result (Theorem~\ref{duality-theorem-maxinf}) internalizing the majorization constraint. It can be interpreted as the equivalence between the auctioneer's problem and the monopolist's problem with adversarial production costs.
We derive a novel a priori bound on the solutions of the latter problem (Proposition~\ref{ae-exist-lip}) with a clear economic interpretation:  the monopolist can guarantee a non-negative revenue not only ex-ante but ex-post.
Then, with this a priori bound, we deduce the complete duality. A byproduct of the proof is that one can assume that the vector field $c$ in Beckmann's problem from~\eqref{eq_vector_fields_inf} has non-negative components. 

\ed{For one item, the optimal vector field in the dual problem coincides with ironed virtual valuation function; see Section~\ref{sec_examples} for details.  In general, the field extends the concept of ironed virtual values to the multi-item case. The functions $\varphi_i$ are shadow prices for the feasibility constraint faced by the auctioneer: increasing the probability that an item is allocated to high types unavoidably decreases this probability for low types. In the context of the partial dual problem, $\varphi_i$ are interpreted as  production costs chosen by the adversary and faced by the monopolist.} 

\medskip
Let us see why the minimization problem~\eqref{eq_vector_fields_inf} is well-defined, i.e., why we minimize over a non-empty set. We need to demonstrate that there is always $\pi\succeq_{} m$ such that $\div_\rho[c]+\pi=0$ for some smooth vector field $c$, and so Beckmann's problem has a finite value. It turns out that we can always  take $\pi=-\div_\rho[x]$ and $c(x)=x$. Let us demonstrate that the majorization condition $\int_X u(x){\dd}\pi(x)\geq\int_X u(x){\dd} m(x)$ holds. We rewrite both sides by the definitions of the divergence and $m$ and get 
$$\int_X\langle x,\nabla u(x)\rangle\rho(x){\dd} x\geq u(0)+ \int_X \left(\langle x,\nabla u(x)\rangle -u(x)\right)\rho(x){\dd} x.$$
The dot-product terms cancel out and we end up with an equivalent inequality $\int_X u(x)\rho(x){\dd} x\geq u(0)$ that holds for any non-decreasing $u$. We conclude that the problem~\eqref{eq_vector_fields_inf} has a finite value. Moreover, we obtain that the  auctioneer's optimal revenue is bounded from above by
\begin{equation}
\label{rhozetB}
B \cdot \inf_{
  \footnotesize{\begin{array}{c}
       (\varphi_{i})_{i\in\mathcal{I}}
  \end{array}}}
 \sum_{i\in \mathcal{I}}\left( \int_X  \varphi_i^*(x_i) \rho(x){\dd} x+ \int_0^1 \varphi_{i}\left(z^{B-1}\right){\dd} z\right).
\end{equation}
In this upper bound, the minimization splits into a family of $I$ identical one-dimensional minimization problems, one for each item $i\in I$. They can be solved explicitly and the resulting bound corresponds to full surplus extraction; see Appendix~\ref{app_full_surplus}.


\medskip

\subsection{Weak duality and complementary slackness}
Strong duality results such as Theorem~\ref{th_vector_fields_inf} can be seen as a combination of two statements: that the value of the primal problem is at most the value of the dual (weak duality) and that the gap between the values is zero. While the weak duality is always an easy part of the proof, this part is insightful as it explains the form of the dual and leads to complementary slackness conditions.

 Let us see why the weak duality holds, i.e., why the optimal revenue is bounded from above by~\eqref{th_vector_fields_inf}. We know that the optimal revenue equals to $B\cdot \int_X \left(\langle x,\nabla u(x)\rangle -u(x)\right)\rho(x){\dd} x$ for some convex non-decreasing function $u$ with $u(0)=0$ and such that the constraint~\eqref{eq_Border_expanded} by \cite{hart2015implementation} is satisfied (Proposition~\ref{prop_Rochet}).
Hence, the optimal revenue does not exceed
\begin{multline}\label{eq_payoff_function}
B\cdot \Bigg[\int_X \left(\langle x,\nabla u(x)\rangle -u(x)\right)\rho(x){\dd} x\\
+ \sum_{i\in\mathcal{I}}\left(\int_0^1\varphi_{i}\left(z^{B-1}\right){\dd} z- \int_X \varphi_i\left(\frac{\partial u}{\partial x_{i}} (x)\right)\rho(x) {\dd} x\right)\Bigg]
\end{multline}
for any  non-decreasing convex functions $\varphi_i$ on $\R_+$ with $\varphi_i(0)=0$  (each term in the sum is non-negative by the constraint of \cite{hart2015implementation}). The first integral can be rewritten as follows 
\begin{multline}\label{eq_first_integral_bound}
 \int_X \left(\langle x,\nabla u(x)\rangle -u(x)\right)\rho(x){\dd} x=\int_X u(x){\dd} m(x) \\
 \leq\int_X u(x){\dd}\pi(x)=\int_X\langle\nabla u(x),c(x)\rangle \rho(x){\dd} x,   
\end{multline}
 where $m$ is the transform measure from~\eqref{eq_transform_measure}, $\pi$ is an arbitrary measure such that $\pi\succeq_{} m$ and $c$ is any vector field such that $\div_\rho[c]+\pi=0$. The first equality holds by the definition of the transform measure, the inequality holds thanks to convexity of $u$, and the last equality is by the definition of divergence~\eqref{eq_rho_divergence_measures}. The Fenchel inequality (inequality~\eqref{eq_Fenchel_inequality} in Appendix~\ref{sect_convex_analysis}) applied to $\psi_i(t)=\varphi_i(|t|)$ implies the following bound on the last integrand
\begin{equation}\label{eq_Fenchel_application}
\langle \nabla u(x),\, c(x)  \rangle\leq \sum_{i\in\mathcal{I}}\varphi_{i}^*\big(\big|c_{i}(x)\big|\big)+\sum_{i\in\mathcal{I}}\varphi_{i}\left(\frac{\partial u}{\partial x_{i}}(x)\right),
\end{equation}
where we used that $\psi_i^*(t)=\varphi_i^*(|t|)$ and non-negativity of $u$'s partial derivatives. 
Replacing the first summand in~\eqref{eq_payoff_function} by the resulting upper bound,  we see that the terms with  partial derivatives of $u$ cancel out and the revenue is bounded from above by
$$B\cdot \left[\int_X \left(\sum_{i\in\mathcal{I}}\varphi_{i}^*\big(|c_{i}(x)|\big)\right)\rho(x){\dd} x+ \sum_{i\in\mathcal{I}}\int_0^1\varphi_{i}\left(z^{B-1}\right){\dd} z\right]$$
for all convex $\varphi_i$ with $\varphi_i(0)=0$, all measures $\pi\succeq_{} m$, and smooth vector fields $c$ such that $\div_\rho[c]+\pi=0$. Taking infimum over all such $\varphi_i$, $\pi$, and $c$, we conclude that the optimal revenue cannot exceed the right-hand side of~\eqref{eq_vector_fields_inf} thus establishing the weak duality.
\medskip

Complementary slackness conditions are a byproduct of the above computation. Let $u^\opt$, $\varphi_i^\opt$, $\pi^\opt$, and $c^\opt$ be the optima in the primal Rochet-Chon\'e problem~\eqref{eq_Rochet_Chone_extension}, the dual problem~\eqref{eq_vector_fields_inf}, and  internal Beckmann's problem, respectively. We know that $u^\opt$ exists by Proposition~\ref{prop_Rochet} and the existence of the rest of the optima is discussed below. For now, we assume that all of them exist.
Under this assumption, the only way the value of the primal problem can be equal to the value of the dual~\eqref{eq_vector_fields_inf} is if each inequality in the derivation of the weak duality holds as equality at $u^\opt$, $\varphi_i^\opt$, $\pi^\opt$, and $c^\opt$. Namely, each term in the sum from~\eqref{eq_payoff_function} must be zero, and the inequality in~\eqref{eq_first_integral_bound} together with the Fenchel inequalities used to derive~\eqref{eq_Fenchel_application} must all be equalities.  
These observations, combined with the complementary slackness condition for the Fenchel inequality (see Appendix~\ref{sect_convex_analysis}), lead to the following corollary.  
\begin{corollary}[Complementary slackness]\label{cor_slackness}
Optimal $u^\opt$, functions $\varphi_i^\opt$, measure $\pi^\opt$, and vector field $c^\opt$ satisfy the following family of conditions:
\begin{align}
\int_{X}  \varphi_{i}^\opt\left(\frac{\partial u^\opt}{\partial x_{i}}(x) \right) \rho(x) {\dd} x&=\int_0^1\varphi_{i}^\opt\left(z^{B-1}\right){\dd} z
\label{eq_slackness_adversary} \\
\int_X u^\opt(x){\dd} m(x) &= \int_X u^\opt(x) {\dd} \pi^\opt(x) \label{eq_slackness}\\
c_{i}^\opt(x) &\in  \partial \varphi_{i}^\opt\left(\frac{\partial u^\opt}{\partial x_{i}}(x)\right)\label{eq_vector_field_slackness}
\end{align}
In the last condition, $\partial$ denotes the subdifferential~\eqref{eq_subdifferential} and the inclusion holds for $\rho$-almost all $x\in X$.
\end{corollary}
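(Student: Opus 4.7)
The plan is to leverage the weak duality derivation given immediately before the corollary: by Theorem~\ref{th_vector_fields_inf}, the primal and dual values coincide, so at the assumed optima $u^\opt, \varphi_i^\opt, \pi^\opt, c^\opt$ every inequality used in the chain from the revenue to the dual upper bound must be saturated. Each of the three slackness conditions corresponds one-for-one to tightness in one of the three families of inequalities employed (Hart--Reny majorization, the $\pi\succeq m$ step, and the pointwise Fenchel inequality).

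First I would extract~\eqref{eq_slackness_adversary}. Expression~\eqref{eq_payoff_function} incorporates the non-negative slacks $\int_0^1 \varphi_i^\opt(z^{B-1}){\dd}z - \int_X \varphi_i^\opt\bigl(\tfrac{\partial u^\opt}{\partial x_i}\bigr)\rho\,{\dd}x$ coming from the Hart--Reny constraint~\eqref{eq_Border_expanded}. Since the sum of these slacks over $i\in\mathcal{I}$ must vanish at the optimum while each summand is individually non-negative, each equals zero separately, which is precisely~\eqref{eq_slackness_adversary}. Condition~\eqref{eq_slackness} follows by the same reasoning applied to~\eqref{eq_first_integral_bound}: the only inequality appearing there is the majorization step $\int u^\opt{\dd}m \leq \int u^\opt{\dd}\pi^\opt$, and tightness at the optimum is exactly~\eqref{eq_slackness}.

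For the subdifferential inclusion~\eqref{eq_vector_field_slackness} I would start from the pointwise Fenchel inequality with $\psi_i(t)=\varphi_i^\opt(|t|)$ used to obtain~\eqref{eq_Fenchel_application}. After the cancellation of $\sum_i\int_X \varphi_i^\opt(\partial u^\opt/\partial x_i)\rho\,{\dd}x$ in~\eqref{eq_payoff_function}, saturation of the overall chain forces the integrated Fenchel gap $\sum_{i\in\mathcal{I}} \int_X \bigl[\varphi_i^\opt\bigl(\tfrac{\partial u^\opt}{\partial x_i}\bigr) + (\varphi_i^\opt)^*(|c_i^\opt|) - \tfrac{\partial u^\opt}{\partial x_i}\,c_i^\opt\bigr]\rho\,{\dd}x$ to equal zero. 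Each integrand is non-negative (Fenchel, using $\partial u^\opt/\partial x_i\geq 0$), so each must vanish for $\rho$-a.e.\ $x$ and every $i$. By the Fenchel equality case (Appendix~\ref{sect_convex_analysis}), this vanishing is equivalent to $c_i^\opt(x)\in \partial \psi_i\bigl(\tfrac{\partial u^\opt}{\partial x_i}(x)\bigr)$; invoking the remark after Theorem~\ref{th_vector_fields_inf} that one can take $c_i^\opt\geq 0$, this collapses on the non-negative half-line to $c_i^\opt(x)\in\partial \varphi_i^\opt\bigl(\tfrac{\partial u^\opt}{\partial x_i}(x)\bigr)$, giving~\eqref{eq_vector_field_slackness}.

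The only real subtlety is the last identification of subdifferentials near the boundary point zero of the domain of $\varphi_i^\opt$, where the even extension $\psi_i$ may have a symmetric subdifferential larger than $\partial\varphi_i^\opt(0)$; this is routine to handle once $c_i^\opt\geq 0$ has been secured, and away from $\tfrac{\partial u^\opt}{\partial x_i}=0$ the identification is immediate. Everything else is mechanical bookkeeping of the weak duality chain.
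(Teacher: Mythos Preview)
Your proposal is correct and follows essentially the same approach as the paper: the paper explicitly states that the corollary is ``a byproduct of the above computation'' and that ``each inequality in the derivation of the weak duality holds as equality'' at the optima, then invokes the Fenchel equality case from Appendix~\ref{sect_convex_analysis}. Your handling of the boundary case $\partial u^\opt/\partial x_i=0$ via the non-negativity of $c^\opt$ is a minor elaboration the paper leaves implicit.
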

\ed{Complementary slackness conditions have the following structural implications. Applying the Jensen inequality to~\eqref{eq_slackness_adversary} and taking into account convexity of $\varphi_i^\opt$, we see that the distribution of $\frac{\partial u^\opt}{\partial x_{i}}(\chi)$ with $\chi\sim \rho$ can differ from that of $\xi^{B-1}$ with $\xi\sim\mathrm{Uniform}([0,1])$ only over those regions where $\varphi_i^\opt$ is flat. Similarly,  condition~\eqref{eq_slackness} implies that $\pi^\opt$ can differ from the transform measure $m$ only where $u^\opt$ is flat. From the last condition, we obtain that $c_i^\opt$ is non-negative and non-decreasing.}

\medskip
\subsection{Existence} Whether the optima exist or not may seem a technical peculiarity. The importance of this question is justified by the complementary slackness conditions (Corollary~\ref{cor_slackness}) \ed{which hold only} if both primal and dual problems attain their optima. 

We know that the optimal value of the Rochet-Chon\'e problem~\eqref{eq_Rochet_Chone_extension} is attained at some $u^\opt$. It turns out that the family of optimal functions $\varphi_i^\opt$ in the dual problem~\eqref{eq_vector_fields_inf} also always exists and corresponds to an optimal strategy of an adversary in the auxiliary monopolist's problem with adversarial production costs discussed in Appendix~\ref{sec_adversary}.

We note that Beckmann's problem is prone to absence of an optimal smooth vector field $c^\opt$ even for standard cost functions such as $\Phi(c)=\|c\|$. A workaround is to allow for generalized vector fields by replacing a smooth vector field $c$ by a vector measure $\varsigma$. Then the optimal vector measure $\varsigma$ is known to exist provided that the supply-demand imbalance  $\pi$ is absolutely continuous and, moreover, $\varsigma$  itself turns out to be absolutely continuous
\citep[Theorem 4.16]{santambrogio2015optimal}.
\ed{In our setting, the transform measure $m$ typically has singularities on the boundary of $X$ inherited by $\pi\succeq_{} m$.}

To guarantee existence, we allow for generalized vector fields given by vector measures allowing for singular components. 
The divergence of a vector measure may not be a measure anymore and can only be defined in the space of generalized functions \citep{Ambrosio2000FunctionsOB}.  As $\pi=-\div_\rho[c]$, following this approach we would need to allow $\pi$ to become a generalized function as well. We avoid this complication by reformulating the constraint on the vector field bypassing $\pi$.
  
Consider  the set $\mathcal{C}^\mes$ of  non-negative vector measures
 $\varsigma=(\varsigma_{i})_{i\in\mathcal{I}}$ satisfying the following condition 
\begin{equation}\label{eq_condition_on_c_mes}
\int_X \bigl(\langle \nabla u(x),\, x \rangle - u(x) \bigr)\cdot \rho(x) {\dd} x \leq \sum_{i\in\mathcal{I}}\int_X \frac{\partial u}{\partial x_{i}}(x){\dd} \varsigma_{i}(x)
\end{equation}
for any smooth non-decreasing convex $u: X\to \R_+$ with $u(0)=0$. 
By the Lebesgue decomposition theorem, each  $\varsigma_{i}$ can be represented as the sum of the component that is absolutely continuous with respect to $\rho(x){\dd} x$ and the singular one. We get
\begin{equation}\label{eq_Lebesgue_decomposition}
{\dd}\varsigma_{i} = c_{i}(x)\cdot \rho(x){\dd} x + {\dd}\varsigma_{i}^{\sing}(x).    
\end{equation}
If the singular component is absent and $c=(c_i)_{i\in \mathcal{I}}$ is smooth, we can define 
$\pi=-\div_\rho[c]$ and see that the condition~\eqref{eq_condition_on_c_mes} is equivalent to the familiar majorization condition~\eqref{eq_pi_majorization} on $\pi$.

The following extension of Theorem~\ref{th_vector_fields_inf} guarantees that the optimum in the dual is attained. It is proved in Appendix~\ref{app_Dual}.
\begin{theorem}[Extended dual]\label{th_vector_fields_min}
The optimal revenue in the auctioneer's problem~\eqref{eq_revenue} coincides with 
\begin{equation}\label{eq_vector_fields_min}
B\cdot \min_{
  \footnotesize{\begin{array}{c}
       (\varphi_{i})_{i\in\mathcal{I}},\\  \varsigma \in \mathcal{C}^\mes 
  \end{array}}}
 \sum_{i\in \mathcal{I}}\left(\varsigma_{i}^\sing(X)+\int_X  \varphi^*_{i}\big(c_{i}(x)\big) \rho(x){\dd} x +\int_0^1\varphi_{i}\left(z^{B-1}\right){\dd} z\right)
 \end{equation} 
 and the minimum 
 is attained. Here $B$ is the number of bidders, $c_{i}$ and $\varsigma_{i}^\sing$ are given by~\eqref{eq_Lebesgue_decomposition}, and $\varphi_{i}$ are non-decreasing convex functions with $\varphi_{i}(0)=0$ for each item $i\in\mathcal{I}$. 
\end{theorem}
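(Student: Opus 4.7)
The plan is to deduce Theorem~\ref{th_vector_fields_min} from Theorem~\ref{th_vector_fields_inf} by viewing $\mathcal{C}^\mes$ as a suitable relaxation of the set of admissible vector fields in~\eqref{eq_vector_fields_inf} and then establishing attainment via the direct method in the space of vector measures.

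\textbf{Step 1 (values coincide).} For the bound $\min\eqref{eq_vector_fields_min} \le \inf\eqref{eq_vector_fields_inf}$, take any admissible $(\varphi_{i}, \pi, c)$ in~\eqref{eq_vector_fields_inf}; by the byproduct noted after Theorem~\ref{th_vector_fields_inf} we may take $c_i\ge 0$. Set ${\dd}\varsigma_i = c_i(x)\rho(x){\dd} x$, so $\varsigma_i^\sing=0$. Using the divergence identity~\eqref{eq_rho_divergence_measures} together with $\div_\rho[c]+\pi=0$, $\pi\succeq m$ and the definition~\eqref{eq_transform_measure} of $m$, one checks that $\varsigma$ satisfies~\eqref{eq_condition_on_c_mes}, and the two objectives agree. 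For the reverse inequality, repeat the weak-duality derivation leading to~\eqref{eq_payoff_function}--\eqref{eq_Fenchel_application}, but replace the divergence step
\[
\int_X \langle \nabla u, c\rangle \rho{\dd} x = -\int u\,{\dd}(\div_\rho[c]) = \int u\,{\dd}\pi\ge \int u\,{\dd} m
\]
with the defining inequality~\eqref{eq_condition_on_c_mes} of $\mathcal{C}^\mes$ applied termwise to the absolutely continuous and singular parts. Fenchel's inequality then yields an upper bound on the revenue by the expression inside the $\min$ in~\eqref{eq_vector_fields_min}, giving equality of values in view of Theorem~\ref{th_vector_fields_inf}.

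\textbf{Step 2 (existence of the minimiser).} The optimality of some $(\varphi_{i}^\opt)_{i\in\mathcal{I}}$ is furnished by the adversarial-monopolist analysis of Appendix~\ref{sec_adversary}; it therefore suffices to minimise over $\varsigma\in\mathcal{C}^\mes$ with the $\varphi_i$ fixed at $\varphi_i^\opt$. Take a minimising sequence $\varsigma^n$. Because the feasibility constraint~\eqref{eq_Border} forces the relevant partial derivatives into $[0,1]$, one may assume $\mathrm{dom}(\varphi_i^\opt)\subset[0,1]$, whence $(\varphi_i^{\opt,*})_\infty := \lim_{t\to\infty} \varphi_i^{\opt,*}(t)/t = 1$. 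Testing~\eqref{eq_condition_on_c_mes} with $u(x)=x_i$ and using boundedness of the objective gives a uniform bound on $\varsigma_i^n(X)$, so by weak-$*$ compactness of Radon measures on the compact set $X$ we may pass to a subsequence $\varsigma^n\to\varsigma$ in weak-$*$. Since each $x\mapsto \partial u/\partial x_i$ is continuous for smooth $u$, the constraint~\eqref{eq_condition_on_c_mes} passes to the limit, as does non-negativity; thus $\varsigma\in\mathcal{C}^\mes$.

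\textbf{Step 3 (lower semicontinuity).} The key ingredient is the Goffman--Serrin/Reshetnyak-type lower semicontinuity: for a convex, lower semicontinuous, non-negative function $f$ on $\R_+$, the functional
\[
\varsigma_i \ \longmapsto \ \int_X f\!\left(\frac{{\dd}\varsigma_i}{\rho\,{\dd} x}\right)\rho\,{\dd} x + f_\infty\cdot \varsigma_i^\sing(X)
\]
is weak-$*$ lower semicontinuous on non-negative Radon measures. Applying this with $f=\varphi_i^{\opt,*}$ and $f_\infty = 1$ (by the normalisation above), summing over $i$, and evaluating along the minimising subsequence shows that the limit $\varsigma$ attains the minimum. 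The three steps together yield Theorem~\ref{th_vector_fields_min}.

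The main obstacle is Step~3, and specifically the identification of the coefficient of $\varsigma_i^\sing(X)$ with the recession $(\varphi_i^{\opt,*})_\infty$: one needs the reduction $\mathrm{dom}(\varphi_i^\opt)\subset[0,1]$ (otherwise the natural singular penalisation would differ from the stated $\varsigma_i^\sing(X)$), and one must appeal to the standard but delicate theory of convex functionals of measures to obtain weak-$*$ lower semicontinuity on the non-reflexive space $\mathcal{M}(X)$.
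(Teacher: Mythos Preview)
Your Step~1 is essentially fine (modulo the routine extension of~\eqref{eq_condition_on_c_mes} from smooth $u$ to all of $\mathcal{U}_{\lip,1}$, which is the paper's Lemma~\ref{prop:c_def_works_for_u_lip}), and your Step~3 Reshetnyak argument is correct and in fact cleaner than what the paper does. The problem is Step~2.

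The sentence ``it therefore suffices to minimise over $\varsigma\in\mathcal{C}^\mes$ with the $\varphi_i$ fixed at $\varphi_i^\opt$'' hides an unproved claim. What you know from Step~1 and Theorem~\ref{duality-theorem-maxinf} is
\[
\text{revenue}=\inf_{(\varphi_i,\varsigma)}\text{obj}(\varphi_i,\varsigma)
\qquad\text{and}\qquad
\text{revenue}=\max_u\Phi(u,\varphi_i^\opt)+\textstyle\sum_i\int\varphi_i^\opt\,{\dd}\eta_i.
\]
The weak-duality computation also gives, for every fixed $(\varphi_i)$ and every $\varsigma\in\mathcal{C}^\mes$,
\[
\text{obj}(\varphi_i,\varsigma)\ \ge\ \max_u\Phi(u,\varphi_i)+\textstyle\sum_i\int\varphi_i\,{\dd}\eta_i,
\]
so $\inf_\varsigma\text{obj}(\varphi_i^\opt,\varsigma)\ge\text{revenue}$. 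What you \emph{need}, however, is the reverse inequality $\inf_\varsigma\text{obj}(\varphi_i^\opt,\varsigma)\le\text{revenue}$, and nothing in your argument supplies it. The embedding in Step~1 only shows $\inf_{(\varphi_i,\varsigma)}\le\text{revenue}$; it does not say the overall infimum can be approached while keeping $\varphi_i=\varphi_i^\opt$. Your direct method in Steps~2--3 therefore produces some $\varsigma^\opt$ with $\text{obj}(\varphi_i^\opt,\varsigma^\opt)=\inf_\varsigma\text{obj}(\varphi_i^\opt,\cdot)$, but you have no guarantee that this value equals the revenue. (A small separate point: testing~\eqref{eq_condition_on_c_mes} with $u(x)=x_i$ yields $0\le\varsigma_i(X)$, not an upper bound; the mass bound really comes from the linear lower bound $\varphi_i^{\opt,*}(t)\ge s\,t-\varphi_i^\opt(s)$ applied inside the objective, as the paper does.)

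The missing inequality is exactly the content of the paper's Proposition~\ref{thm:monopolist_strong_duality}, and its proof is where the work lies: one approximates $\varphi_i^\opt$ from below by $\varphi_i^{(n)}\in\mathcal{Q}$ (Lemma~\ref{prop:approx_U_with_Q}), invokes the already-established inner duality for the nice class $\mathcal{Q}$ (Proposition~\ref{thm:max_is_min_in_Q}) to get near-optimal $c_i^{(n)}$, and then passes to the limit in \emph{both} $\varphi_i^{(n)}$ and $\varsigma_i^{(n)}$ simultaneously, using Prokhorov plus the Koml\'os theorem together with Fatou. Your Reshetnyak functional handles the $\varsigma$-limit elegantly for a \emph{fixed} integrand, but it does not by itself cope with the varying integrands $(\varphi_i^{(n)})^*$; that is precisely why the paper's argument looks the way it does. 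If you want to salvage your route, you must either reprove Proposition~\ref{thm:monopolist_strong_duality} (in which case Steps~2--3 become redundant, since that proposition already delivers attainment), or run the direct method on a joint minimising sequence $(\varphi_i^n,\varsigma^n)$ and establish lower semicontinuity of $\varsigma\mapsto\int(\varphi_i^n)^*(c_i)\,{\dd}\mu+\varsigma_i^{\sing}(X)$ along a sequence with $\varphi_i^n\to\varphi_i^\opt$ --- which brings you back to the same difficulty the paper resolves.
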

Note that  the objectives in Theorems~\ref{th_vector_fields_inf} and~\ref{th_vector_fields_min} 
match one another except for the fact that some mass in the extended dual can be transferred from the vector field $c$ to the singular component of the vector measure. This additional flexibility turns out to be crucial for  the existence of the optimum. 

One may think that the 
appearance of singular measures is an artifact of a particular proof technique and that singularities do not appear at least in nice examples. This intuition turns out to be wrong and singular measures happen to reflect the essence of the problem. In Appendix~\ref{app_examples}, we solve the dual problem explicitly for two uniform items and $B=1$ bidder and see that, even in this simplest case, 
there are singularities on the boundary of the set of types $X$. 

Theorem~\ref{th_vector_fields_min} allows us to write down the complementary slackness conditions without making an extra assumption that the optima exist. 
\begin{corollary}[Extended complementary slackness]\label{cor_slackness_extended}
 Consider optimal $u^\opt,$ $(\varphi_{i}^\opt)_{i\in\mathcal{I}},$ and ${\varsigma}^\opt$ and decompose  ${\varsigma}^\opt$ into absolutely-continuous and singular components as in~\eqref{eq_Lebesgue_decomposition}. Then all the previously discussed complementary slackness conditions~\eqref{eq_slackness},~\eqref{eq_vector_field_slackness}, and~\eqref{eq_slackness_adversary} hold. Moreover, there is one more condition:
 \begin{equation}\label{eq_slackness_singular}
\frac{\partial u^\opt}{\partial x_{i}}(x)=1
 \end{equation}
for $\varsigma_{i}^{\opt,\,\sing}$-almost all $x$. In particular, $u^\opt$  has a partial derivative with respect to $x_{i}$ for $\varsigma_{i}^{\opt,\,\sing}$-almost all $x$.
\end{corollary}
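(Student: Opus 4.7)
The plan is to derive all four conditions by tracing through the weak-duality chain --- the argument used for Corollary~\ref{cor_slackness} but extended to account for the vector measure $\varsigma$ with its Lebesgue decomposition~\eqref{eq_Lebesgue_decomposition}. The key input is the strong duality of Theorem~\ref{th_vector_fields_min}: since the primal and dual values coincide and both optima are attained at $u^\opt$, $(\varphi_i^\opt)_{i\in\mathcal{I}}$, and $\varsigma^\opt$, every inequality in the weak-duality derivation must hold with equality at these optimizers.

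Carrying out the chain, I first augment the primal value $B\int_X(\langle x,\nabla u^\opt\rangle-u^\opt)\rho\,{\dd} x$ by the non-negative slack from~\eqref{eq_Border_expanded}, namely $B\sum_i\bigl(\int_0^1\varphi_i^\opt(z^{B-1})\,{\dd} z - \int_X\varphi_i^\opt\bigl(\tfrac{\partial u^\opt}{\partial x_i}\bigr)\rho\,{\dd} x\bigr)$; tightness of this slack yields~\eqref{eq_slackness_adversary}. I then rewrite $\int_X(\langle x,\nabla u^\opt\rangle-u^\opt)\rho\,{\dd} x$ via the transform measure $m$ from~\eqref{eq_transform_measure} as $\int_X u^\opt\,{\dd} m$ and invoke the defining inequality~\eqref{eq_condition_on_c_mes} of $\mathcal{C}^\mes$, which reads $\int_X u^\opt\,{\dd} m \leq \sum_i\int_X\tfrac{\partial u^\opt}{\partial x_i}\,{\dd}\varsigma_i^\opt$; equality here reproduces the extended form of~\eqref{eq_slackness}. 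Decomposing $\varsigma_i^\opt$ via~\eqref{eq_Lebesgue_decomposition}, the Fenchel inequality on the absolutely continuous piece gives~\eqref{eq_vector_field_slackness} in its equality case, while the uniform bound $\tfrac{\partial u^\opt}{\partial x_i}\leq 1$ on the singular piece gives $\int_X\tfrac{\partial u^\opt}{\partial x_i}\,{\dd}\varsigma_i^{\opt,\sing}\leq\varsigma_i^{\opt,\sing}(X)$. The equality case forced by strong duality produces the new identity~\eqref{eq_slackness_singular}.

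The pointwise bound $\tfrac{\partial u^\opt}{\partial x_i}\leq 1$ follows by plugging the convex non-decreasing function $\varphi(t)=(t-1)_+$ into~\eqref{eq_Border_expanded}, which forces $u^\opt$ to be $1$-Lipschitz in each coordinate, so $\partial u^\opt(x)\subset[0,1]^\mathcal{I}$ at every $x$. To make sense of $\int_X\tfrac{\partial u^\opt}{\partial x_i}\,{\dd}\varsigma_i^{\opt,\sing}$ where the classical derivative may fail to exist, I interpret $\tfrac{\partial u^\opt}{\partial x_i}$ as the $\varsigma_i^{\opt,\sing}$-a.e.\ limit obtained by symmetric mollification $u^\opt_\delta=u^\opt*\phi_\delta$; at a point where $u^\opt$ is not differentiable in $x_i$, this limit coincides with the midpoint of the $i$-th projection of the subdifferential $\partial u^\opt(x)$. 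Consequently, the equality $\tfrac{\partial u^\opt}{\partial x_i}=1$ $\varsigma_i^{\opt,\sing}$-a.e.\ forces this midpoint to equal $1$; since both one-sided derivatives lie in $[0,1]$, they must coincide at $1$. This proves both~\eqref{eq_slackness_singular} and the addendum that the classical partial derivative in $x_i$ exists $\varsigma_i^{\opt,\sing}$-a.e.

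The main obstacle is the technical justification of applying~\eqref{eq_condition_on_c_mes}---which is stated only for smooth test functions---to the non-smooth optimizer $u^\opt$, together with the corresponding interpretation of $\int_X\tfrac{\partial u^\opt}{\partial x_i}\,{\dd}\varsigma_i^{\opt,\sing}$ on the singular support. My approach is to apply~\eqref{eq_condition_on_c_mes} to the mollified $u^\opt_\delta$ and pass to the limit $\delta\to 0$, using the uniform $1$-Lipschitz bound on $\nabla u^\opt_\delta$ and dominated convergence to control the absolutely continuous part of $\varsigma^\opt$, and the structure of the subdifferential of $u^\opt$ to identify the limit on the singular part. Once this technical passage is secured, the tightness of the weak-duality chain imposed by strong duality delivers all four complementary slackness conditions.
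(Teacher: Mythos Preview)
Your overall strategy---tracing the weak-duality chain via mollification of $u^\opt$ and invoking Theorem~\ref{th_vector_fields_min} to force every inequality to equality---is correct and is exactly the paper's route; in particular, Lemma~\ref{prop:c_def_works_for_u_lip} in Appendix~\ref{app_Dual} carries out precisely the mollification step you describe (extending~\eqref{eq_condition_on_c_mes} from smooth $u$ to $u\in\mathcal{U}_{\lip,1}$ by bounding $\int u^{(n)}_{x_i}\,{\dd}\varsigma_i^{\sing}\le\varsigma_i^{\sing}(X)$). The derivation of conditions~\eqref{eq_slackness_adversary},~\eqref{eq_slackness},~\eqref{eq_vector_field_slackness} is sound.

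The gap is in your argument for~\eqref{eq_slackness_singular} and its ``in particular'' clause. Your claim that at a non-differentiability point $x_0$ the symmetric mollification $(u^\opt*\phi_\delta)_{x_i}(x_0)$ converges to the \emph{midpoint} of the $i$-th coordinate projection of $\partial u^\opt(x_0)$ is false in dimension $I\ge 2$: the mollification limit is a density-type average of $u^\opt_{x_i}$ over small balls around $x_0$ and depends on the angular geometry of the gradient's discontinuity set, not just on the two one-sided derivatives along $e_i$. For instance, with $u(x_1,x_2)=\max\bigl(x_1,\tfrac12 x_1+|x_2|\bigr)$ at the origin the left and right $x_1$-derivatives are $\tfrac12$ and $1$ (midpoint $\tfrac34$), yet the radially symmetric mollification limit is about $0.57$. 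So your deduction ``midpoint $=1$ with both endpoints in $[0,1]$ forces both to equal $1$'' does not go through. What the argument \emph{does} give---via outer semicontinuity of the subdifferential, $\limsup_\delta(u^\opt*\phi_\delta)_{x_i}(x_0)\le\max\{p_i:p\in\partial u^\opt(x_0)\}\le 1$, combined with the $L^1(\varsigma_i^{\opt,\sing})$-convergence to $1$ forced by strong duality---is that the \emph{right} partial derivative in $x_i$ equals $1$ $\varsigma_i^{\opt,\sing}$-a.e. Establishing that the \emph{left} one is also $1$, and hence that the classical partial derivative exists, requires an additional argument that you have not supplied.
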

\ed{The majorization constraint and convexity of $u^\opt$ imply that $\frac{\partial u^\opt}{\partial x_{i}}$ is non decreasing and $\frac{\partial u^\opt}{\partial x_{i}}\leq 1$. We conclude that the singular component of the vector field can only be supported on the northeast boundary of $X$.}

\section{Examples, applications, and simulations}\label{sec_examples}
\ed{We start by discussing the connection between Beckmann's dual problem and that by \cite{daskalakis2017strong} in the one-bidder case and demonstrate  that the latter can be deduced from the former. Then we consider the classic case of one-item multi-bidder auctions and show that the dual vector field in Beckmann's problem is given by ironed virtual valuations. Building on this insight, we obtain that, in the multi-item case with independent values, running separate auctions for each of the items is never optimal.  Finally, we explore the structure of optimal multi-item multi-bidder auctions via numerical simulations.}
\subsection{One bidder: relation to \texorpdfstring{\cite{daskalakis2017strong}}{[Daskalakis et al., 2017]}}

In Theorem~\ref{th_vector_fields_inf}, we saw that the dual to the auctioneer's problem is given by Beckmann's transportation problem for any number of bidders $B\geq 1$. For $B=1$ bidder, \cite{daskalakis2017strong} derived another dual taking a form of the Monge-Kantorovich optimal transportation problem~\eqref{eq_Kantorovich}. 
It is not surprising that the duals for $B\geq 2$ bidders and $B=1$ bidder do not share any similarity as the feasibility constraint for several bidders becomes non-local and so the approach of \cite{daskalakis2017strong} is not applicable.
Here we focus on the case of $B=1$ bidder, where both approaches can be used and so the lack of similarity between the two duals may seem surprising.


It turns out that the dual from Theorem~\ref{th_vector_fields_inf} can be simplified in the single-bidder case. Indeed,  $z^{B-1}\equiv 1$ for $B=1$ bidder and so the second integral reduces to $\int_0^1 \varphi_{i}\left(z^{B-1}\right){\dd} z=\varphi_i(1)$. We obtain that the value of the auctioneer's problem is equal to
$$
\inf_{
  \footnotesize{\begin{array}{c}
       (\varphi_{i})_{i\in\mathcal{I}},\\  
       \pi\succeq_{} m
  \end{array}}}\left[
  \B_\rho\Big(\pi,\,\Phi\Big)+ \sum_{i\in\mathcal{I}} \varphi_{i}(1)\right]$$
 with $\Phi(c)=\sum_i \varphi_i^*(|c_i|)$.
 This expression can be further simplified. The lower  the cost function $\Phi$ in Beckmann's problem is, the lower is its value. By increasing $\varphi_i$ pointwise, we decrease its conjugate $\varphi_i^*$. Hence, keeping $\varphi_i(1)$ fixed, the best choice given the requirements of convexity and $\varphi_{i}(0)=0$ is the linear function: 
 $ \varphi_{i}(t) = \varphi_{i}(1)\cdot t$ on $[0,1]$ and $\varphi_{i}(t) = + \infty$ for   $t >1$. Optimization over $\varphi_{i}(1)$ gives $\varphi_{i}(1) =0$ and thus the conjugate $\varphi^{*}_{i}(t) = t$ for all $t$. 
 
 We obtain Beckmann's problem with the cost function given by $l^1$-norm $\Phi(c)=\|c\|_1=\sum_{i\in\mathcal{I}}|c_i|$.  Importantly, this cost function is $1$-homogeneous. Beckmann's problem with a $1$-homogeneous $\Phi$ has a peculiar property: its value does not depend on the density $\rho$ provided that it is smooth and positive, i.e., $\B_\rho\Big(\pi,\,\Phi\Big)=\B_{1}\Big(\pi,\,\Phi\Big)$, where in the second problem the density is equal to~$1$. 
 This property holds, since for any feasible vector field $c$ in the second problem,  $c'=\rho\cdot  c$ is a feasible vector field in the first problem with the same value. 
 \begin{corollary}
 \label{1bidderbeckmann}
 For $B=1$ bidder whose type is distributed according to a smooth positive density $\rho$, the optimal revenue of the auctioneer~\eqref{eq_revenue}  is equal to
  \begin{equation}
 \inf_{
  \footnotesize{\begin{array}{c}
       \pi\succeq_{} m
  \end{array}}}
  \B_1\Big(\pi,\, \|\cdot\|_1\Big),
 \end{equation} 
 where the cost function is given by $l^1$-norm $\|c\|_1=\sum_{i\in\mathcal{I}}|c_i|$.
 \end{corollary}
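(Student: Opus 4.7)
The plan is to specialize the dual in Theorem~\ref{th_vector_fields_inf} to $B=1$ and then perform three successive simplifications sketched in the paragraphs preceding the corollary statement. First, substituting $B=1$ collapses the boundary term $\int_0^1 \varphi_i(z^{B-1})\,{\dd} z$ to $\varphi_i(1)$, so the dual reads
\[
\inf_{(\varphi_i)_i,\,\pi\succeq_{} m}\Bigl[\B_\rho(\pi,\Phi)+\sum_{i\in\mathcal{I}}\varphi_i(1)\Bigr],\qquad \Phi(c)=\sum_{i\in\mathcal{I}}\varphi_i^*(|c_i|).
\]

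Next I would decouple the optimization over each $\varphi_i$ into a choice of the value $a_i:=\varphi_i(1)\ge 0$ and a choice of shape. For $a_i$ held fixed, since the Legendre transform reverses pointwise order and $\B_\rho(\pi,\Phi)$ is monotone in $\Phi$, the best admissible $\varphi_i$ is the pointwise largest function consistent with convexity, monotonicity, $\varphi_i(0)=0$, and $\varphi_i(1)=a_i$: namely $\varphi_i(t)=a_it$ on $[0,1]$ extended to $+\infty$ beyond, whose conjugate is $\varphi_i^*(s)=(s-a_i)^+$. To optimize over $a_i\ge 0$, observe that for every admissible pair $(\pi,c)$ the map $a_i\mapsto \int_X(|c_i(x)|-a_i)^+\rho(x)\,{\dd} x+a_i$ is convex with right derivative at $a_i=0$ equal to $1-\int_{\{|c_i|>0\}}\rho(x)\,{\dd} x\ge 0$, since $\rho$ integrates to $1$. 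Hence the minimum is attained at $a_i=0$, giving $\varphi_i^*(s)=s$ and $\Phi(c)=\|c\|_1$.

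Finally, I would exploit the $1$-homogeneity of $\|\cdot\|_1$ to remove $\rho$ from Beckmann's problem. The substitution $c'=\rho\cdot c$ is a bijection on smooth vector fields that sends the constraint $\div_\rho[c]+\pi=0$ to $\div[c']+\pi=0$, and by $1$-homogeneity $\int_X\|c\|_1\rho\,{\dd} x=\int_X\|c'\|_1\,{\dd} x$, so $\B_\rho(\pi,\|\cdot\|_1)=\B_1(\pi,\|\cdot\|_1)$, completing the reduction.

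Given Theorem~\ref{th_vector_fields_inf}, no step is substantively difficult. The only mildly delicate points I anticipate are (i) interchanging $\inf_{\varphi_i}$ with the infimum over $(\pi,c)$ implicit in $\B_\rho(\pi,\Phi)$, which is legitimate because the monotonicity argument is pointwise in $(\pi,c)$ and the choice of $\varphi_i$ does not enter the majorization constraint $\pi\succeq_{} m$, and (ii) verifying admissibility of $c'=\rho c$ in the final step, which is immediate from smoothness and positivity of $\rho$.
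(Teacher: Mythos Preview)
Your proposal is correct and follows essentially the same three-step simplification as the paper: collapse the boundary term to $\varphi_i(1)$, optimize the shape of $\varphi_i$ to the linear-then-$+\infty$ function and then set $\varphi_i(1)=0$, and use $1$-homogeneity together with the substitution $c'=\rho c$ to pass from $\B_\rho$ to $\B_1$. Your explicit right-derivative argument for why $a_i=0$ is optimal is a nice addition where the paper simply asserts the conclusion.
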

Beckmann's problem with the Lebesgue reference measure and the 
cost function~$\|\cdot\|_1$ is an exception where the Beckmann's problem is known to be connected to the Monge-Kantorovich one.\footnote{More generally, there is a connection between Beckmann's problem and congested optimal transportation problems of Monge-Kantorovich type; see the discussion in  Appendix~\ref{app_Beckmann}.} The so-called Beckmann's duality states that, for any $\pi$,
$$\B_1\Big(\pi,\, \|\cdot\|_1\Big)=
\min_{\gamma:\ \gamma_1=\pi_c, \ \gamma_2=\pi_p}\int \|x-y\|_1{\dd}\gamma(x,y),$$
where $\pi_c$ and $\pi_p$ are the positive and the negative parts of $\pi$, respectively, and the minimum is taken over positive measures $\gamma$ on $X\times X$ with marginals $\pi_c$ and $\pi_p$ \cite[Section 4.2]{santambrogio2015optimal}. Combining this identity with Corollary~\ref{1bidderbeckmann}, we obtain the dual problem in the form of  \cite{daskalakis2017strong}.
 \begin{corollary}[\cite{daskalakis2017strong}]
 \label{cor_Daskalakis_from_Beckmann}
  For $B=1$ bidder whose type is distributed according to a smooth positive density $\rho$, the optimal revenue of the auctioneer~\eqref{eq_revenue}  is equal to
  \begin{equation}\label{eq_Daskalakis_result}
 \inf_{
  \footnotesize{\begin{array}{c}
       \pi\succeq_{} m 
  \end{array}}} \min_{\gamma:\ \gamma_1=\pi_c, \ \gamma_2=\pi_p}\int \|x-y\|_1{\dd}\gamma(x,y).
 \end{equation} 
 \end{corollary}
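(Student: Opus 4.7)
The corollary will follow by combining Corollary~\ref{1bidderbeckmann} with the classical Beckmann duality formula \cite[Section~4.2]{santambrogio2015optimal}, which, for the unweighted Beckmann problem with a $1$-homogeneous cost, equates the optimal value to the Monge--Kantorovich transport cost under the associated ground distance. The substitution is straightforward, but two preliminary points require care.

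The first step is to argue that, without loss of generality, the infimum in Corollary~\ref{1bidderbeckmann} can be restricted to signed measures $\pi$ with $\pi(X)=0$. Testing the defining identity~\eqref{eq_rho_divergence_measures} (with $\rho\equiv 1$) against $u\equiv 1$ gives $\mathrm{div}_1[c](X)=0$ for every smooth vector field $c$, so the Beckmann constraint $\mathrm{div}_1[c]+\pi=0$ forces $\pi(X)=0$ whenever a feasible $c$ exists; otherwise $\mathrm{B}_1(\pi,\|\cdot\|_1)=+\infty$ and such $\pi$ may be discarded. Conversely, testing the majorization $\pi\succeq m$ against the non-decreasing convex function $\varphi\equiv 1$, together with the identity $m(X)=0$ (obtained by plugging $u\equiv 1$ into~\eqref{eq_transform_measure}), yields $\pi(X)\geq 0$, which is consistent with the restriction.

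The second step is to invoke Beckmann's duality for the $\ell^1$ cost: for any signed measure $\pi$ on the compact convex set $X=[0,1]^{\mathcal{I}}$ with $\pi(X)=0$,
$$\mathrm{B}_1\bigl(\pi,\,\|\cdot\|_1\bigr)\;=\;\min_{\gamma:\,\gamma_1=\pi_c,\,\gamma_2=\pi_p}\int_{X\times X}\|x-y\|_1\,\mathrm{d}\gamma(x,y),$$
where $\pi=\pi_c-\pi_p$ is the Hahn--Jordan decomposition. The assumption $\pi(X)=0$ from Step~1 ensures $\pi_c(X)=\pi_p(X)$, so the coupling set is nonempty and the minimum is attained. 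Substituting this identity into the expression obtained in Corollary~\ref{1bidderbeckmann} and merging the two nested infima immediately yields~\eqref{eq_Daskalakis_result}.

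Since the work reduces to a single appeal to a cited theorem once Corollary~\ref{1bidderbeckmann} is in place, I do not expect any serious analytic obstacle; the only subtlety is verifying that the classical Beckmann duality formula applies in the generality of arbitrary Borel measures $\pi_c,\pi_p$ on $X$ (including those with singular parts on the boundary inherited from $m$), which is standard for norm costs on bounded convex domains.
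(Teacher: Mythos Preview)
Your proposal is correct and follows essentially the same route as the paper: combine Corollary~\ref{1bidderbeckmann} with the classical Beckmann duality \cite[Section~4.2]{santambrogio2015optimal} to convert $\B_1(\pi,\|\cdot\|_1)$ into the Monge--Kantorovich cost with $\ell^1$ distance. Your Step~1 verifying $\pi(X)=0$ is a careful addition that the paper leaves implicit (it is baked into the definition of Beckmann's problem, where $\pi(X)=0$ is assumed and infeasible $\pi$ are assigned value $+\infty$), but it does not change the argument.
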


\ed{
\subsection{One item: optimal vector fields as virtual valuations}\label{sec_one_item}
Consider one item and several bidders with values having a smooth strictly positive density $\rho$ on $[0,1]$. In this case, the vector field $c=c(x)$ from Theorem~\ref{th_vector_fields_inf} is a scalar related to virtual valuations \citep{myerson1981optimal}.

The condition that $\div_\rho[c]+\pi=0$ for some $\pi\succeq m$ in Beckmann's problem boils down to the requirement that 
\begin{equation}\label{eq_one_item_feasibility}
\int \big(x\cdot u'(x)-u(x)\big)\rho(x){\dd} x \leq \int u'(x) c(x) \rho(x) {\dd} x 
\end{equation}
for any convex monotone $u$ with $u(0)=0$.
Integrating by parts,  the left-hand side 
can be rewritten as $\int V(x)u'(x)\rho(x){\dd} x$, where $V(x)=x-\frac{1-F(x)}{\rho(x)}$ is 
the virtual valuation function and $F(t)=\int_0^t \rho(x){\dd} x$. As $u'$ is an arbitrary monotone function, we conclude that $c$ satisfies~\eqref{eq_one_item_feasibility} if and only if 
\begin{equation}\label{eq_virtual_lower_bound}
\int_t^1 V(x)\rho(x){\dd} x \leq \int_t^1 c(x)\rho(x){\dd} x\quad \mbox{for any $t$}.
\end{equation}
It turns out that this lower bound determines optimal field $c^\opt$ if we take into account the restrictions imposed by the complementary slackness condition~\eqref{eq_vector_field_slackness}, namely, $c^\opt$ is non-negative and non-decreasing.  Under the standard assumption that $V(x)$ is non-decreasing, we get
$$c^\opt(x)=\max\{0,\,V(x)\}.$$ 
More generally, we obtain the following result proved in Appendix~\ref{app_examples}.
\begin{proposition}\label{prop_ironed}
For one item and several bidders, the optimal vector field  $c^\opt(x)$ is equal to the ironed virtual valuation function.
\end{proposition}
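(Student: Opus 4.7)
My plan is to recognize the one-item instance of the primal Rochet-Chon\'e problem as Myerson's classical allocation problem and then extract $c^\opt$ from the complementary slackness conditions of Corollary~\ref{cor_slackness}.

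First, I would integrate by parts in~\eqref{eq_Rochet_Chone_extension}, using $u(0)=0$ and the identity $\int_0^1 u(x)\rho(x)\,\dd x=\int_0^1 u'(x)(1-F(x))\,\dd x$, to rewrite the primal objective as $B\int_0^1 V(x)p(x)\rho(x)\,\dd x$ with $p:=u'$. The majorization constraint $u'(\chi)\preceq\xi^{B-1}$ from Proposition~\ref{prop_Rochet} is the standard feasibility condition of \cite{border1991implementation} and \cite{hart2015implementation} for symmetric single-item i.i.d.\ auctions, so the primal is exactly Myerson's problem. Its optimum $p^\opt$ is the classical ironed solution: $p^\opt(x)=F(x)^{B-1}$ on a non-ironed region where $\bar V>0$, $p^\opt$ is constant on each ironed interval, and $p^\opt=0$ where $\bar V<0$.

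Second, I would extract the dual optimum from complementary slackness. The condition $c^\opt(x)\in\partial\varphi^\opt(p^\opt(x))$ in~\eqref{eq_vector_field_slackness}, together with $\varphi^\opt$ being convex, non-decreasing, and satisfying $\varphi^\opt(0)=0$, makes $c^\opt$ non-negative and non-decreasing. Saturation of the Fenchel chain~\eqref{eq_first_integral_bound}--\eqref{eq_Fenchel_application} at the optimum yields $\int_0^1 p^\opt(c^\opt-V)\rho\,\dd x=0$. Setting $g(t):=\int_t^1(c^\opt-V)\rho\,\dd y$, which is non-negative by~\eqref{eq_virtual_lower_bound} and vanishes at $t=1$, and integrating by parts, this equation becomes $p^\opt(0)g(0)+\int_0^1 g\,\dd p^\opt=0$. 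Since each summand is non-negative, both vanish; in particular $g\equiv 0$ on the support of $\dd p^\opt$, and differentiating on non-ironed regions yields $c^\opt=V=\bar V$ there.

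Third, on an ironed interval $[a,b]$ the identities $g(a)=g(b)=0$ give $\int_a^b c^\opt\rho\,\dd x=\int_a^b V\rho\,\dd x=\bar V|_{[a,b]}\cdot(F(b)-F(a))$, the last equality coming from the definition of the convex envelope of $H(q)=\int_0^q V(F^{-1}(s))\,\dd s$. Since $c^\opt$ is non-decreasing with boundary values equal to $\bar V|_{[a,b]}$ (matching the non-ironed side by continuity of $\bar V$), this forces $c^\opt\equiv\bar V$ on $[a,b]$. Where $\bar V<0$, non-negativity and monotonicity of $c^\opt$ together with $c^\opt=0$ at the first zero of $\bar V$ force $c^\opt\equiv 0$. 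Combining, $c^\opt=\max\{0,\bar V\}$, the ironed virtual valuation. The main obstacle is the rigorous treatment of the ironed intervals: complementary slackness is pointwise silent inside them, so the identification of $c^\opt$ on each $[a,b]$ has to be pieced together from the tail-integral equalities at the endpoints plus global monotonicity, invoking the convex-envelope structure of $\bar V$.
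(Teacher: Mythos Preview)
Your approach is correct and parallels the paper's guess-and-verify sketch in Section~5.2, which also identifies $c^\opt$ through the complementary slackness conditions of Corollary~\ref{cor_slackness}; you start from the known Myerson primal $p^\opt$ and back out $c^\opt$, whereas the paper guesses $c^\opt$ first and then constructs $u^\opt,\varphi^\opt$. The paper's full proof (Appendix~D), however, takes a more direct route that bypasses the case analysis on ironed intervals and the below-reserve region. After restricting to non-decreasing $\varsigma^a$ via complementary slackness, the paper changes variables to $G_\varsigma(t)=\int_t^1 \varsigma^a(F^{-1}(s))\,\dd s+\varsigma^{\mathrm{sing}}(\{1\})$, shows that feasibility is equivalent to $G_\varsigma\ge \overline{G_V}$ (the smallest non-negative non-increasing concave function above $G_V(t)=\int_{F^{-1}(t)}^1 V\rho$), and proves that the dual objective reduces to $\int_0^1 G_\varsigma(t)\,\dd t^{B-1}$ after optimizing over $\varphi$. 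The minimum is then immediately $G_\varsigma=\overline{G_V}$, yielding $c^\opt=\overline V$ in one stroke and simultaneously establishing uniqueness and the absence of singular components.

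One point in your argument deserves tightening: the claim that $c^\opt$ is non-decreasing does not follow purely from $c^\opt(x)\in\partial\varphi^\opt(p^\opt(x))$ on an ironed interval $[a,b]$, since $p^\opt$ is constant there and $\partial\varphi^\opt(p_0)$ could a priori be a nontrivial interval. What actually pins down $c^\opt\equiv\bar V|_{[a,b]}$ is the squeeze from the adjacent non-ironed sides: since $c^\opt(a^-)=V(a)=\bar V|_{[a,b]}$ lies in $\partial\varphi^\opt(p^\opt(a^-))$ with $p^\opt(a^-)<p_0$, and similarly $c^\opt(b^+)=V(b)=\bar V|_{[a,b]}$ with $p^\opt(b^+)>p_0$, monotonicity of the subdifferential forces $\partial\varphi^\opt(p_0)=\{\bar V|_{[a,b]}\}$, a singleton. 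The same squeeze at the reserve (using that $\varphi^\opt$ is flat on $[0,F(x_0)^{B-1}]$, so $\partial\varphi^\opt(0)\cap[0,\infty)=\{0\}$) handles the below-reserve region without invoking global monotonicity.
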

Let us see why the result holds in the case of non-decreasing virtual valuations. For this purpose, we first guess  $u^\opt$ and $\varphi^\opt$ using the complementary slackness conditions. Assuming that~\eqref{eq_slackness_adversary} is binding for high types, we conclude that $(u^\opt)'(x)=\left(F(x)\right)^{n-1}$ for $x\geq x_0$ and some $x_0<1$. After integrating by parts, condition~\eqref{eq_slackness} becomes 
$$\int V(x) (u^\opt)'(x)  \rho(x){\dd} x= \int c^\opt(x) (u^\opt)'(x)\rho(x){\dd} x,$$
where we took into account that $(c^\opt)'(x)+\pi^\opt(x)=0$. This suggest defining $x_0$ by $V(x_0)=0$. Thus $$u^\opt(t)=\left\{\begin{array}{cc}\int_{x_0}^t \left(F(x)\right)^{n-1}{\dd} x,  & t\geq x_0\\ 0, &\mbox{otherwise}\end{array}\right.$$ 
Now, we can define $\varphi^\opt$ using~\eqref{eq_vector_field_slackness}: 
$$\varphi^\opt(t)=\left\{\begin{array}{cc}\int_{\left(F(x_0)\right)^{n-1}}^t  V\Big(\big(F^{n-1}\big)^{-1}(x)\Big){\dd} x, & t\geq \left(F(x_0)\right)^{n-1}\\
0, &\mbox{otherwise}\end{array}\right.$$
The functions $u^\opt$ and $\varphi^\opt$ are convex and the triplet $(u^\opt,c^\opt,\varphi^\opt)$ satisfies the complementary slackness conditions by the construction. This verifies all the guesses made along the way and implies optimality of $u^\opt,c^\opt,$ and $\varphi^\opt$ in the primal and dual problems. 

A straightforward modification of this guess-and-verify approach is applicable to non-monotone $V$ and implies Proposition~\ref{prop_ironed}. By such an explicit construction, we obtain that for one item, the optimum in the dual problem is attained even if singular vector fields from Theorem~\ref{th_vector_fields_min} are not allowed.

In Appendix~\ref{app_examples}, we prove a stronger statement: even if the vector field is allowed to have singularities, the optimum is attained on the classical solution described above. This general result is needed for the application that we discuss next.

\subsection{Several bidders with independent items: suboptimality of selling separately}

Consider $B\geq 2$ bidders whose values over $I\geq 2$ items are distributed with density $\rho(x)=\rho_1(x_1)\cdot\ldots\cdot \rho_I(x_I)$, i.e., values for different items are independent. For simplicity, assume that each $\rho_i$ is continuous and strictly positive on $[0,1]$.

A natural idea is to sell independent items separately by running optimal Myersonian auction for each of them. \cite{jehiel2007mixed} showed that this idea never leads to optimal revenue. We demonstrate how this observation becomes a simple corollary of duality and the just solved case of one item. Indeed, if selling separately was optimal, then the optimal vector field $c=c^\opt(x_1,\ldots, x_I)$ would have the following structure:
$$c(x_1,\ldots, x_I)=\Big(c_1^\opt(x_1),\ldots, c_I^\opt(x_I)\Big),$$
where $c_i^\opt(x_i)$ corresponds to the optimal one-item mechanism for distribution~$\rho_i(x_i)$.

To demonstrate that such $c$ cannot be optimal in the dual problem, it is enough to show that it is infeasible, namely, $\pi$ defined by $\div_\rho[c]+\pi=0$ does not dominate the transform measure $m$ from~\eqref{eq_transform_measure}. Equivalently, we need to demonstrate that there is a convex monotone function $u=u(x_1,
\ldots, x_i)$ with $u(0)=0$ such that
\begin{equation}\label{eq_c_feasibility_selling_separately}
\int_X \bigl(\langle \nabla u(x),\, x \rangle - u(x) \bigr)\cdot \rho(x) {\dd} x \leq \int_X \langle \nabla u(x),\, c(x)\rangle \rho(x){\dd} x.
\end{equation}
From Section~\ref{sec_one_item}, we know that $c_i^\opt$ are given by ironed virtual valuation functions. Since the highest types are never ironed, there is a constant $a<1$ such that, on the interval  $[a,1]$, the components $c_i^\opt$ coincide with the corresponding virtual value functions $V_i$.

Consider a convex monotone function $u$ that is non-zero in the region $\min_i x_i\geq a$ only. For example, we can take $u(x)=\max\big \{0,\, \sum_i x_i- I\cdot a\big\}$. Thus
\begin{align}
\int_X \frac{\partial u}{\partial x_i}(x) c_i^\opt(x_i) \rho(x){\dd} x&=
\int_X \frac{\partial u}{\partial x_i}(x) V_i(x_i) \rho(x){\dd} x=\\
&=\int_X \left(\frac{\partial}{\partial x_i} u(x)\cdot x_i -u(x)\right) \rho(x){\dd} x,
\end{align}
where the second identity is obtained via integration by parts. Thus
$$\int_X \langle \nabla u(x),\, c(x)\rangle \rho(x){\dd} x= \int_X \bigl(\langle \nabla u(x),\, x \rangle - I\cdot u(x) \bigr)\cdot \rho(x) {\dd} x.$$
Since $I\geq 2 $, this expression is less than the left-hand side of~\eqref{eq_c_feasibility_selling_separately}. We conclude that $c$ is infeasible in the dual problem and thus selling separately cannot be optimal for $I\geq 2$ items. The detailed argument can be found in Appendix~\ref{app_examples}.
}

\subsection{Several bidders and several items: optimal auctions via simulations}\label{sec_algorithm}

\ed{The guess-and-verify approach illustrated in the one-item case and also applicable in the one-bidder case (see Appendix~\ref{app_examples}) can also be applied to the multi-bidder multi-item setting. The starting point for this approach is an explicit guess about the optimal mechanism. 
In this section, we explore the case of $B\geq 2$ bidders with i.i.d. uniformly distributed values over $I=2$ items using numerical simulations. The algorithmic insights are discussed below. The simulations indicate a complicated structure of the optimal mechanism and suggest that the optimal auction may not admit a closed-form solution even in this benchmark setting.


For two bidders,} the solution to the primal problem of revenue maximization is shown in Figure~\ref{fig_optimal_primal}. This figure depicts the probability to receive the first item as a function of bidder's values $(x_1,x_2)$, i.e.,  the optimal reduced allocation rule  $\overline{P}_1^\opt(x_1,x_2)=\frac{\partial}{\partial x_1} u^\opt (x_1,x_2)$. The probability for the second item can be obtained by symmetry: $\overline{P}^\opt_2(x_1,x_2)=\overline{P}^\opt_1(x_2,x_1)$.
The discontinuity that we see in Figure~\ref{fig_optimal_primal} correspond to the multi-dimensional reserve price: the minimal $x_1$ to receive a non-zero portion of the first item non-linearly depends on $x_2$ unless $x_2$ is high enough.
\begin{figure}
    \centering
    \includegraphics[width=5cm]{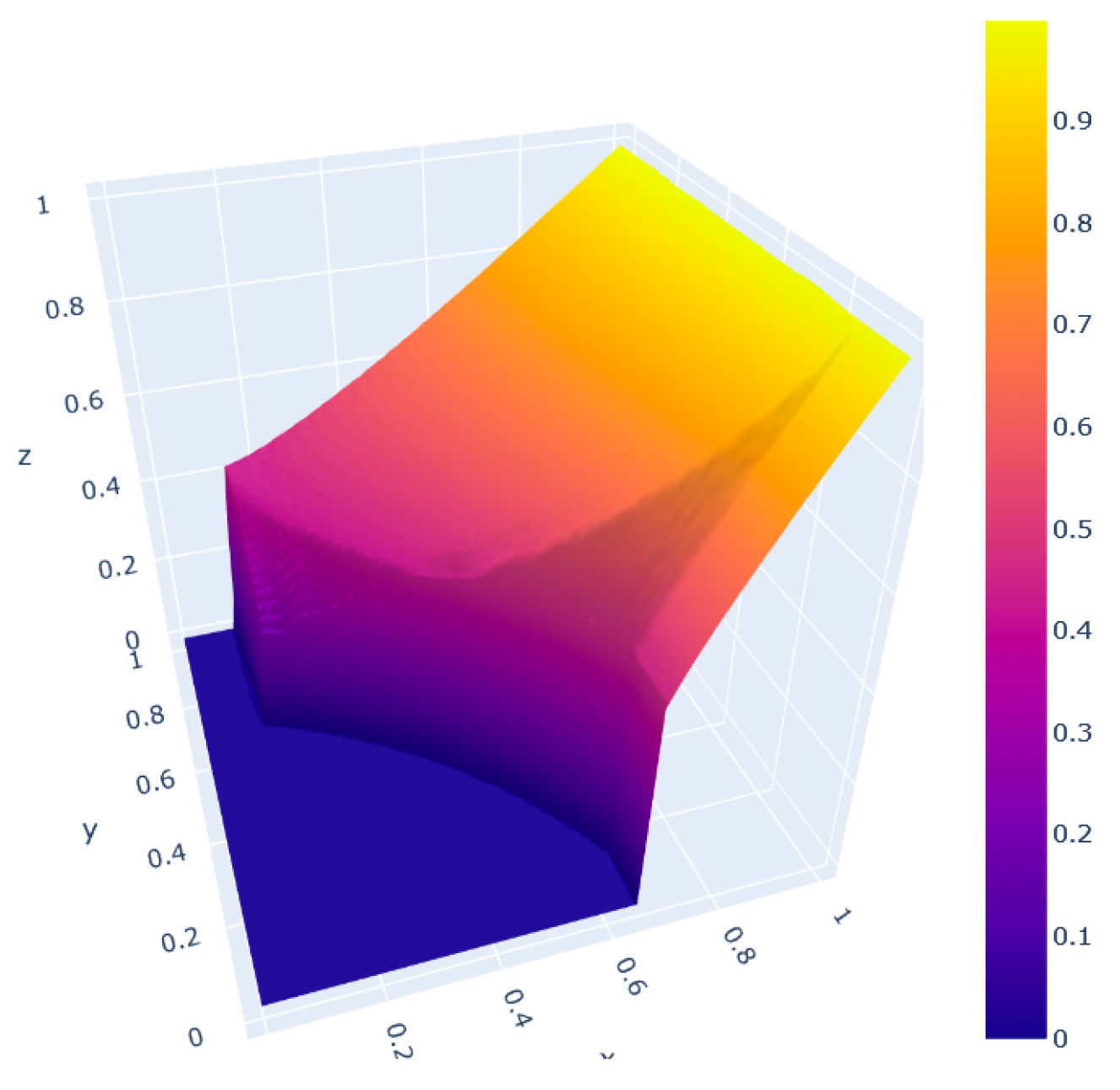}\hskip 2 cm
    \includegraphics[width=5cm]{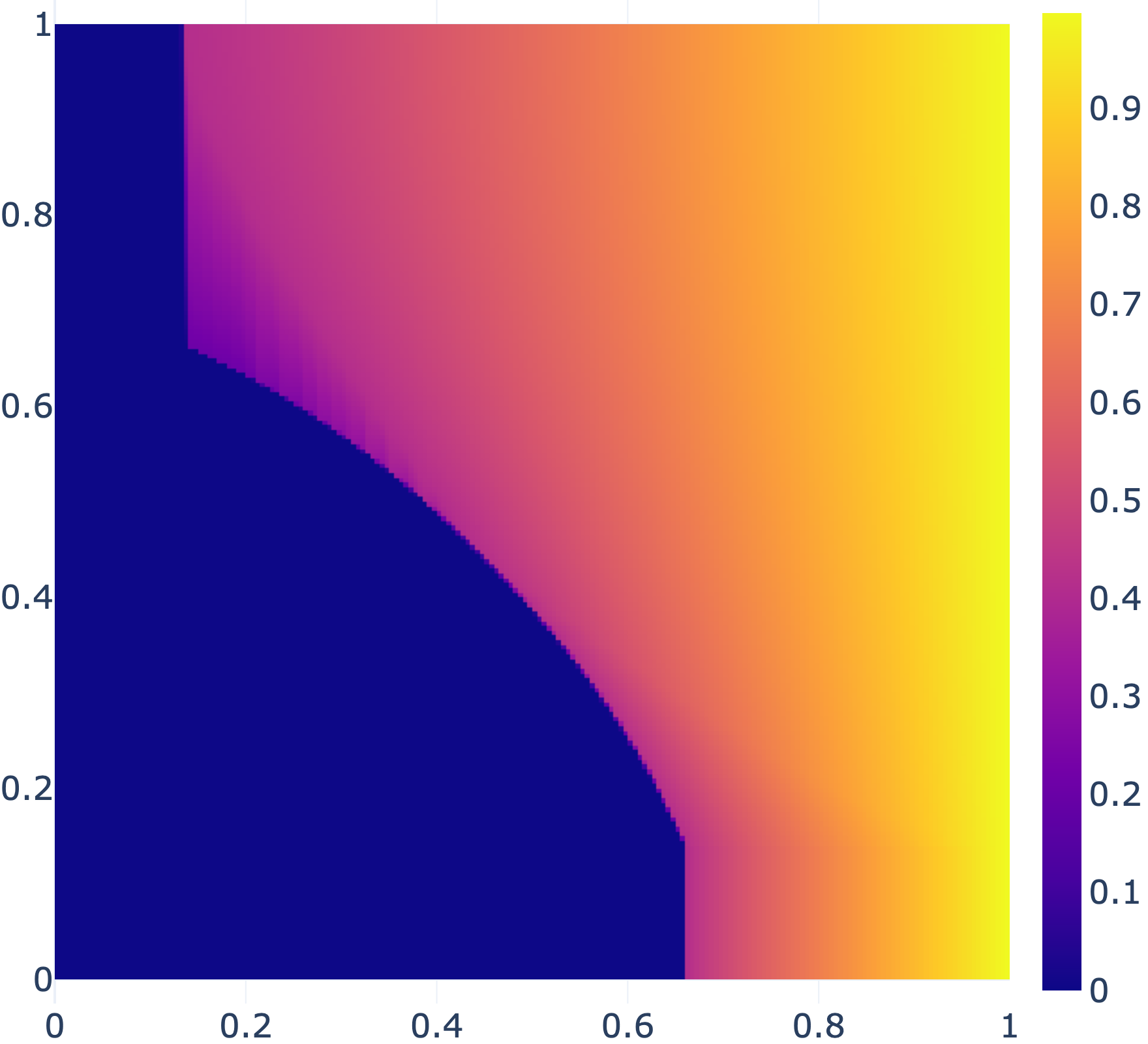}
    \caption{The probability to receive the first item as a function of bidder's values $(x_1,x_2)$ in the optimal  $2$-bidder $2$-item auction with i.i.d.  values uniform on $[0,1]$. }
    \label{fig_optimal_primal}
\end{figure}
The solution to the dual problem is shown in Figure~\ref{fig_dual}. The contour plot demonstrates the first component $c_1^\opt$ of the optimal vector field~$c^\opt$; the second component can be obtained by $c_2^\opt(x_1,x_2)=c_1^\opt(x_2,x_1)$.  By the complementary slackness condition~\eqref{eq_vector_field_slackness}, we have $c_{i}^\opt(x) \in  \partial \varphi_{i}^\opt\left(\frac{\partial u^\opt}{\partial x_{i}}(x)\right)$ and so one could expect that  the vector field inherits the discontinuity of $\partial u$. The optimal vector field turns out to be continuous because the optimal $\varphi_i$ are zero in the  discontinuity region. 
 \begin{figure}
    \centering
    \includegraphics[width=6.5cm]{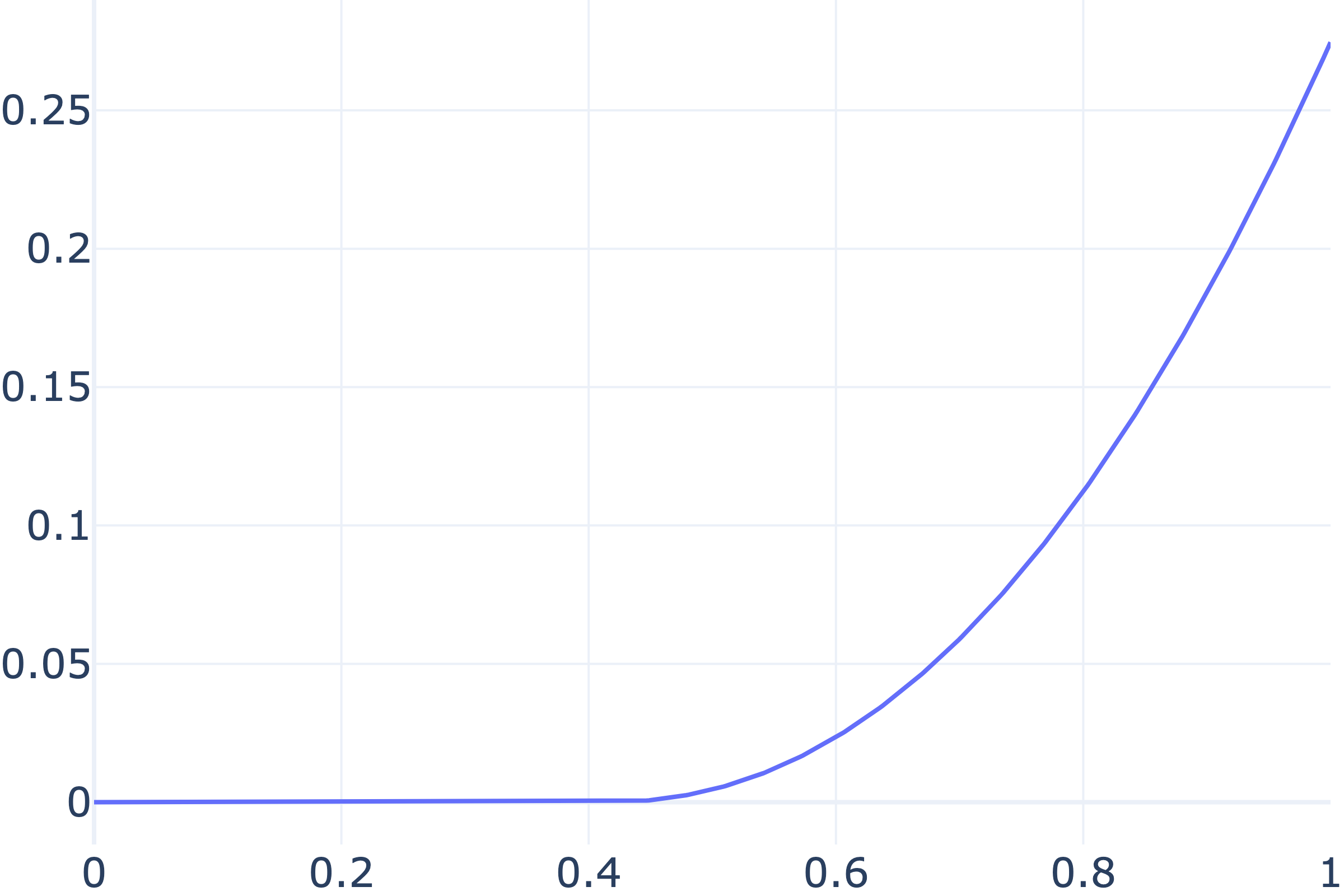}\hskip 1 cm
    \includegraphics[width=5cm]{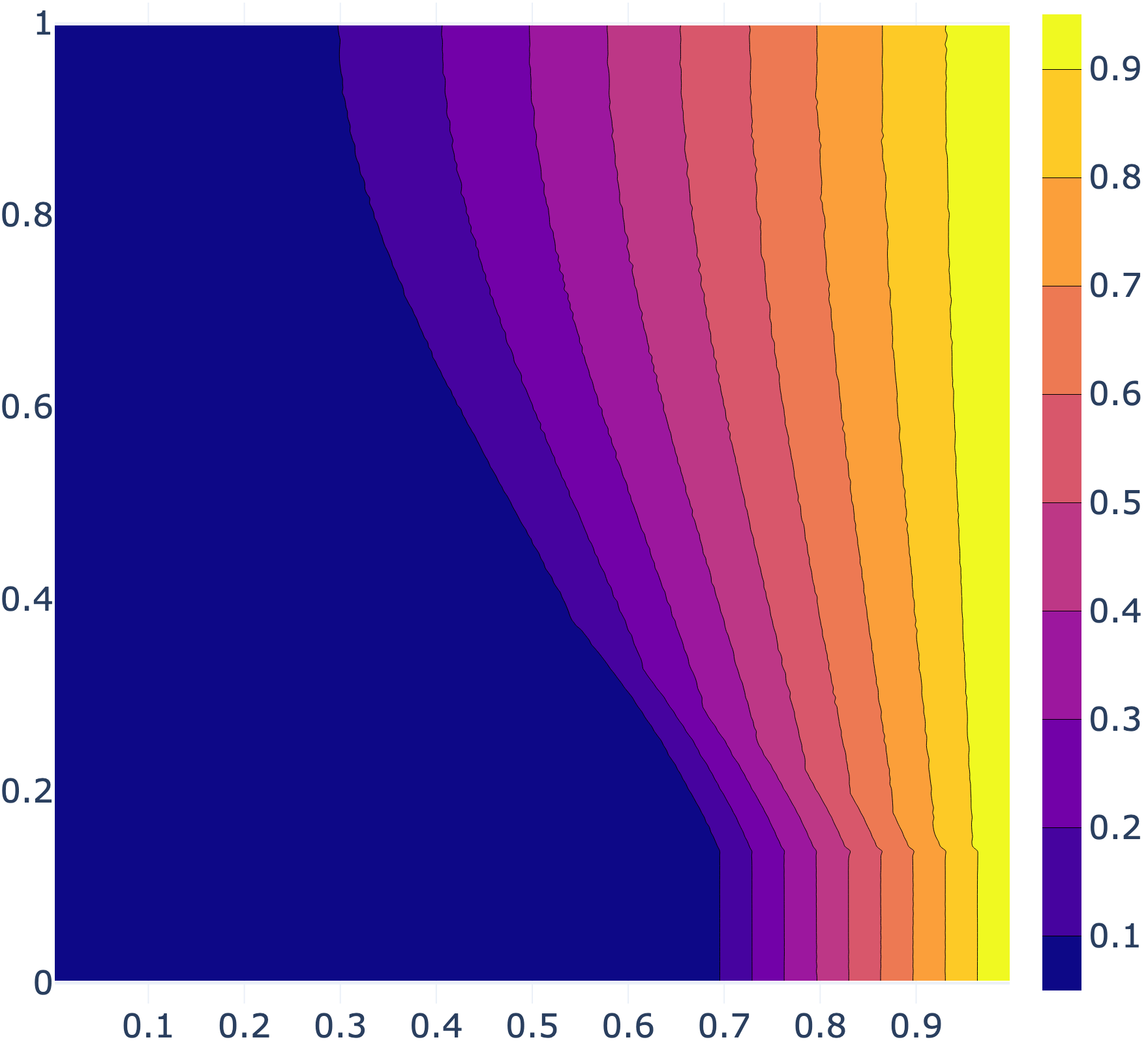}
    \caption{The optimal solution to the dual problem: functions $\varphi_1=\varphi_2$ (left) and a contour plot  of the first component of the  vector field $c=(c_1,c_2)$ from Beckmann's problem (right). }
    \label{fig_dual}
\end{figure}
\ed{None of $\overline{P}_i^\opt, c_i$ and $\varphi_i$
seem to be given by elementary functions in any of the regions: even those parts that look linear or quadratic are, in fact, not.}

 \ed{For $B\geq 2$ bidders, we computed how the optimal revenue depends on~$B$.} Figure~\ref{fig_revenue} depicts this dependence. Naturally, the optimal revenue is bounded from below by the revenue obtained from selling the items separately using Myerson's optimal auction and, from above, by the revenue that the auctioneer would get if she could extract the full surplus.\footnote{Revenue of Myerson's  auction run for each item  separately is  $2B\int_{0.5}^1 (2x-1)x^{B-1}{\dd} x$ while the full surplus is  $2\left(1-\frac{1}{B+1}\right)$.} We see that the advantage from using the optimal auction is substantial for small number of bidders and it is maximal for $B=2$ where the optimal mechanism increases the revenue by $5.7\%$. For large number of bidders, the use of optimal auction is not justified as selling the items separately leads to almost full surplus extraction.\footnote{\ed{Selling separately via optimal posted price mechanisms as well as using the optimal posted price mechanism for the grand bundle} extract $1-O\left(\frac{1}{B}\right)$ fraction of the full surplus, as $B\to\infty$.}
 \begin{figure}
    \centering
    \includegraphics[width=8cm]{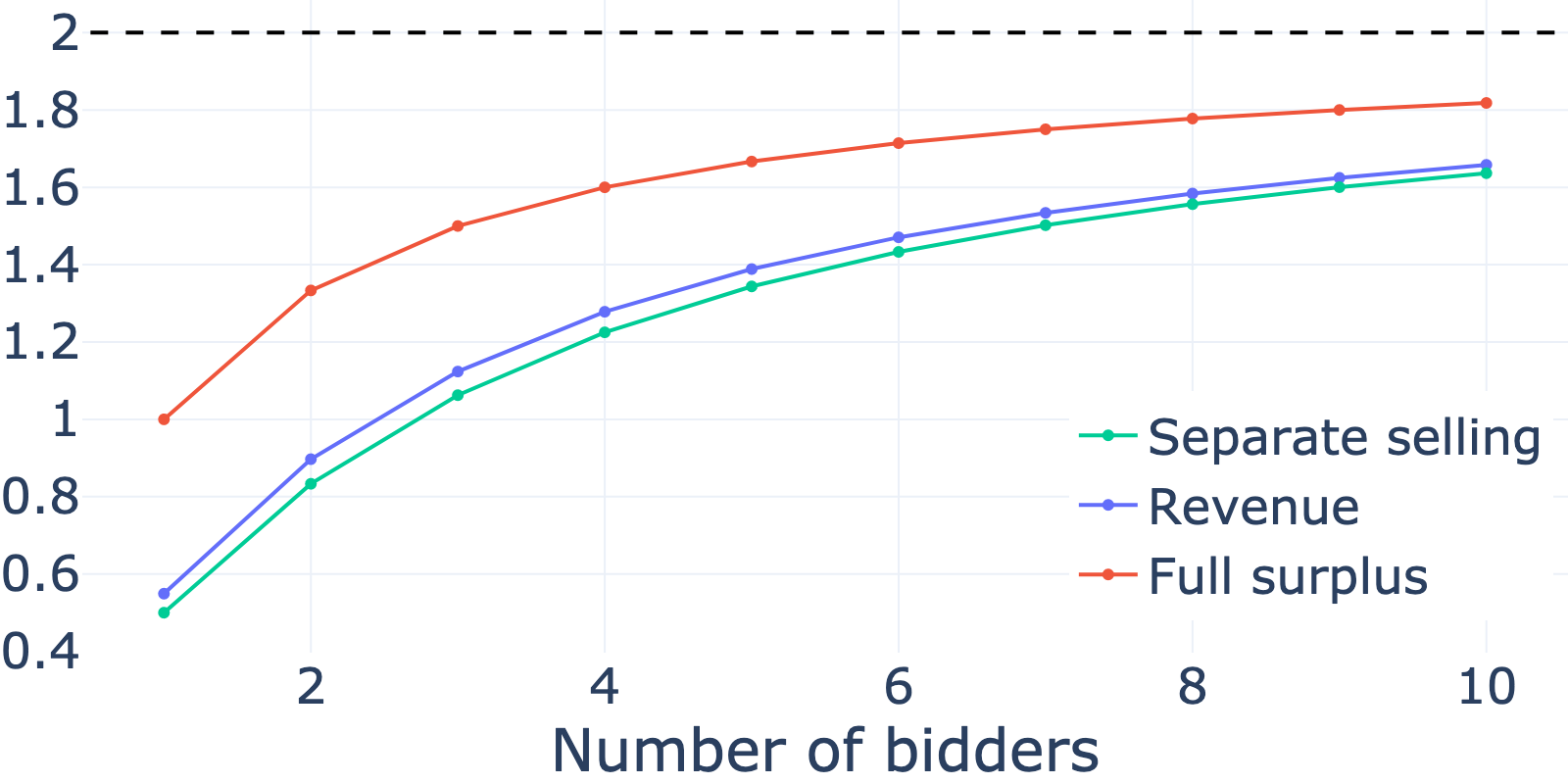}
    \caption{Revenue as a function of the number of bidders $B$ for two items with i.i.d. values uniform on $[0,1]$. Graphs from bottom to top: selling separately (light-green), selling optimally (blue), full surplus extraction (red), limit for $B\to\infty$ (the dashed line).}
    \label{fig_revenue}
\end{figure}

\subsubsection{Algorithm} Here we describe high-level ideas behind the algorithm. The detailed description and proofs can be found in Appendix~\ref{app_numerical}. 
As discussed in the introduction, finding a solution numerically is far from being straightforward: although the auctioneer's problem is a linear program in a functional space, any reasonable discretization of it cannot be handled by modern LP solvers because of the curse of dimensionality. Indeed, if an agent can have $n$ different values for each of $I$ items, then the mechanism should specify an allocation and transfers for each of $\big(n^I\big)^B$ profiles of types which becomes computationally intractable already for two items, $n=100$, and $B=2$ agents or for $n=10$, and $B=4$ agents.

We escape the curse of dimensionality by dealing with the Rochet-Chon\'e problem~\eqref{eq_Rochet_Chone_extension} which is equivalent to the auctioneer's problem by Proposition~\ref{prop_Rochet}. For $n$ points in the discretization, the dimension of the Rochet-Chon\'e problem is constant in the number of bidders $B$. This observation lies at the heart of algorithmic multi-to-single-agent reduction  proposed (but not implemented) by~\cite{cai2012algorithmic} and~\cite{alaei2019efficient}.\footnote{\ed{Comparing Figure~\ref{fig_optimal_primal} to plots obtained  by~\cite{dutting2019optimal} who did not rely multi-to-single-agent reduction, we see that even their advanced neural-network approach cannot overcome the curse of dimensionality and, as a result, is prone to smoothing artefacts.}} 
The reduction in the dimensionality comes at the cost of complexity of the feasibility constraint:
 the classic form of this constraint by~\cite{border1991implementation}
 leads to exponentially many inequalities and the two papers propose distinct ad~hoc constructions reducing this number to polynomial. 
 
 We rely on feasibility constraint in the  majorization form~\eqref{eq_Border} derived by \cite{hart2015implementation}. It is convex but non-linear. A natural linearization is suggested by the relation between majorization and martingales well-known to economists working on information design \citep{blackwell1951comp}.
 We use this relation in the following form:
 a measure  $\nu$ on $[0,1]$ majorizes $\nu'$ if and only if there is a distribution $\gamma$ on $[0,1]^2$ with marginals $\nu$ on $y$ and $\nu'$ on $x$ and such that $\int y{\dd}\gamma(y\mid x)\geq x$ for $\gamma$-almost all $x$, where $\gamma(y\mid x)$ denotes the conditional measure on $y$ given $x$ \cite[Theorem 4.A.5]{shaked2007stochastic}.\footnote{Equivalently, there is a supermartingale $(\xi,\eta)$ such that $\xi$ is distributed according to $\nu'$ and $\eta$, according to $\nu$.} 
 
 Considering $(u,\gamma)$ as unknowns, we obtain a linear optimization problem equivalent to~\eqref{eq_Rochet_Chone_extension}. Discretization of this problem leads to a number of constraints polynomial in~$n$. In Appendix~\ref{app_numerical}, \ed{relying on duality,}  we demonstrate that the values of the discretized problems are guaranteed to converge to the true value as the discretization becomes finer and finer.     
 Our approach is inspired by~\cite{ekeland2010algorithm} and, to the best of our knowledge, we are the first to obtain such approximation guarantees in multi-item auction design. To speed up the computation in practice, we adapt insights from~\cite{oberman2013numerical} to handle the incentive-compatibility constraint; see Appendix~\ref{app_numerical} for details.
 
\ed{The algorithm was implemented in Python using the LP solver from Gurobi library.
 Simulations were run on Amazon EC2 instance m6i.16xlarge  with 64 vCPUs with 3rd generation Intel Xeon Scalable cores and 256 GB of memory. For $n=200\times 200$ points in discretization, $I=2$ items and $B=2$ bidders, the computation required 83 minutes of real time and 20 hours of user time.}

\ifdefined\EC
\bibliographystyle{ACM-Reference-Format}
\bibliography{auctions}
\else
\bibliography{main}
\fi

\appendix

\section{Convex analysis basics}\label{sect_convex_analysis} Throughout the paper, we consider convex functions on $[0,1]$,  $X$, $\R$, or $\R^\mathcal{I}$ taking values in $\R\cup\{+\infty\}$. Here we briefly remind the reader some important facts and definitions.

The {subdifferential} of a convex function $f$ is defined by
\begin{equation}\label{eq_subdifferential}
\partial f(x) = \{\tau: f(y) \ge f(x) + \langle \tau,\, y - x \rangle, \ \ \ \forall y\}.
\end{equation}
Partial derivatives of $f(x)$ at $x\in \R^d$ (if exist) are denoted by 
$$f_{x_i}(x)=\frac{\partial f}{\partial x_i}(x).$$
The gradient $\nabla f(x)$ is the vector of partial derivatives
$$\nabla f(x)=\left(f_{x_i}(x)\right)_{i=1,\ldots,d}.$$
If $f$ is differentiable at $x$, then the subdifferential $\partial f(x)$ consists of just one element: $\partial f(x)=\{\nabla f(x)\}$. By the Alexandrov theorem, a convex function is twice differentiable except for a set of zero Lebesgue measure; see \cite[Theorem 14.1]{villani2009optimal}. In particular, the gradient $\nabla f(x)$ is defined almost everywhere and hence the integrals of the gradient with respect to an absolutely continuous measure are well-defined even for non-smooth $f$.

 The {Legendre transform} also known as Fenchel's conjugate of a convex function~$f$ is a convex function given by
 \begin{equation*}\label{eq_Fenchel_conjugate}
f^*(y) = \sup_x \Bigl(\langle x, y \rangle - f(x) \Bigr).
\end{equation*}
We will widely use the Fenchel inequality
\begin{equation}\label{eq_Fenchel_inequality}
f(x) + f^*(y) \ge \langle x, y \rangle
\end{equation}
and the corresponding ``complementary slackness'' condition  taking the following form: $f(x) + f^*(y) = \langle x, y \rangle
$ if and only if $y \in \partial f(x)$ and $x \in \partial f^*(y)$.

\section{Proof of Proposition~\ref{prop_Rochet}  (Rochet-Chon\'e representation of  auctioneer's problem)}\label{app_Rochet}

Recall that  the auctioneer's problem~\eqref{eq_revenue} is to maximize the revenue
\begin{equation}\label{eq_revenue_appendix}
\int_{X^{\mathcal{B}}} \left(\sum_{b\in\mathcal{B}} T_b\big((x_b)_{b\in\mathcal{B}}\big)\right)\cdot \left(\prod_{b\in\mathcal{B}}\rho(x_b)\right){\dd} x_1\cdots {\dd} x_B
\end{equation}
over individually-rational Bayesian incentive-compatible feasible mechanisms. The multi-bidder Rochet-Chon\'e problem~\eqref{eq_Rochet_Chone_extension} is to maximize 
\begin{equation}\label{eq_Rochet_appendix}
B\cdot\int_{X} \Big(\langle \nabla u(x), x\rangle -u(x)\Big) \rho(x) {\dd} x
\end{equation}
over convex non-decreasing functions  $u:\ X\to\R_+$  with $u(0)=0$ and such that, for all $i\in\mathcal{I}$,
\begin{equation}\label{eq_Border_appendix}
\frac{\partial u}{\partial x_{i}} (\chi)\preceq \xi^{B-1},
\end{equation}
where $\chi\in X$ is distributed with the density $\rho$ and $\xi$ is uniformly distributed on $[0,1]$. 
Our goal is to prove that the values of the two optimization problems coincide and both maxima are attained. The proof relies on a sequence of lemmas.

It will be convenient to work with a version of the Rochet-Chon\'e problem where the constraint $u(0)=0$ is relaxed (the requirements that $u$ is non-decreasing and takes only non-negative values remain).
\begin{lemma}\label{lm_relaxing_u0}
The constraint $u(0)=0$ in the Rochet-Chon\'e problem~\eqref{eq_Rochet_appendix} can be relaxed without affecting the value and whether the optimum is attained or not.
\end{lemma}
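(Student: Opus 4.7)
The inequality ``relaxed value $\geq$ constrained value'' is immediate because dropping a constraint can only enlarge the feasible set. The content of the lemma is the reverse inequality together with the claim about attainment. The plan is to exhibit a simple ``vertical shift'' that maps any feasible $u$ of the relaxed problem to a feasible $u$ of the constrained problem with at least as large an objective value.

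Concretely, take any $u$ that is convex, non-decreasing, non-negative, and satisfies the majorization constraint~\eqref{eq_Border_appendix}, but possibly with $u(0)>0$. Define $\tilde u(x) := u(x)-u(0)$. Since $u$ is non-decreasing, $u(0)=\min_X u \ge 0$, so $\tilde u\ge 0$, and of course $\tilde u(0)=0$. The function $\tilde u$ is still convex and non-decreasing (adding a constant preserves both), and $\nabla \tilde u = \nabla u$ almost everywhere, so the majorization constraint~\eqref{eq_Border_appendix} is unaffected. Hence $\tilde u$ is feasible for the constrained problem.

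Next I would compare the objectives. Using $\int_X \rho(x)\,{\dd} x = 1$ and $\nabla \tilde u=\nabla u$, a direct computation gives
\begin{equation*}
\int_X\!\Big(\langle \nabla \tilde u(x),x\rangle - \tilde u(x)\Big)\rho(x)\,{\dd} x
= \int_X\!\Big(\langle \nabla u(x),x\rangle - u(x)\Big)\rho(x)\,{\dd} x + u(0).
\end{equation*}
Since $u(0)\ge 0$, the shifted function $\tilde u$ has an objective value no smaller than that of $u$. Taking the supremum over all $u$ feasible for the relaxed problem therefore gives the reverse inequality, so the two values coincide.

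The claim about attainment is then immediate: if the optimum of the constrained problem is attained, the same $u$ attains the relaxed optimum; conversely, if the relaxed optimum is attained at some $u^\star$, then $\tilde u^\star := u^\star - u^\star(0)$ is feasible for the constrained problem and achieves an objective value at least as large, hence equal to, the common optimal value. I do not expect any technical obstacle here; this lemma is essentially a normalization statement that exploits the invariance of both the gradient and the majorization constraint under additive constants, together with non-negativity of $u(0)$ forced by monotonicity and non-negativity of $u$.
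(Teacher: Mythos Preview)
Your proposal is correct and follows exactly the same approach as the paper: shift $u$ by $-u(0)$ and observe that the gradient (hence the majorization constraint) is unchanged while the objective can only increase since $u(0)\ge 0$. The paper's proof is simply a one-line version of your argument.
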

\begin{proof}
It is enough to show that for any feasible $u$ with $u(0)>0$, there is a feasible $\tilde{u}$ with $\tilde{u}(0)=0$ and the same or higher value of the objective. Defining $\tilde{u}(x)=u(x)-u(0)$ completes the proof.
\end{proof}
Let us demonstrate that, for any feasible solution to~\eqref{eq_revenue_appendix},  there is a feasible solution to~\eqref{eq_Rochet_appendix} with the relaxed constraint $u(0)=0$ and vice versa. This will imply that the two problems have the same values and, moreover, the optima are attained or not attained simultaneously.
\begin{lemma}\label{lm_constructing_u_from_PT}
For any  individually-rational Bayesian incentive-compatible feasible mechanism $(P,T)$ from~\eqref{eq_revenue_appendix}, there exists a function $u$ satisfying all the constraints of the Rochet-Chon\'e problem~\eqref{eq_Rochet_appendix} except for, possibly, $u(0)=0$ and such that the revenue of $(P,T)$ is equal to the value of~\eqref{eq_Rochet_appendix} at $u$.
\end{lemma}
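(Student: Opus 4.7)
The plan has three stages: symmetrize, build $u$ from the symmetrized reduced form, and verify the four required properties (non-negativity, convexity, monotonicity, and the majorization bound on partials) together with the revenue identity.

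First I would symmetrize. Since the types are i.i.d.\ with density $\rho$, the average of $(P,T)$ over all permutations of the bidder labels yields a new mechanism $(\tilde P,\tilde T)$ that is still feasible, BIC and IR, has the same expected revenue as $(P,T)$, and whose interim allocations and transfers are identical across bidders: $\overline{\tilde P}_b\equiv \overline{P}$ and $\overline{\tilde T}_b\equiv \overline{T}$ for a common pair $(\overline P,\overline T)\colon X\to\R_+^{\mathcal I}\times\R$. This lets me work with a single-bidder reduced form throughout.

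Second, I define the candidate Rochet--Chon\'e potential by the interim utility
\begin{equation*}
u(x)\;=\;\langle \overline P(x),\,x\rangle-\overline T(x),\qquad x\in X.
\end{equation*}
BIC for the symmetrized mechanism reads $u(x)\geq \langle \overline P(x'),x\rangle-\overline T(x')$ for all $x,x'\in X$, which expresses $u$ as the supremum of the affine functions $x\mapsto \langle \overline P(x'),x\rangle-\overline T(x')$; hence $u$ is convex. By the envelope theorem (Rochet's characterization in the multi-dimensional case), $u$ is differentiable $\rho$-almost everywhere with $\nabla u(x)=\overline P(x)$; since $\overline P\geq 0$, the function $u$ is non-decreasing in each coordinate. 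IR gives $u\geq 0$. This handles all ``local'' constraints of the Rochet--Chon\'e problem.

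Third, I need the non-local majorization constraint~\eqref{eq_Border_appendix} on each partial $\partial u/\partial x_i$. Here I invoke the multi-item extension of the Hart--Reny characterization stated in the excerpt around~\eqref{eq_Hart}: because $(\tilde P,\tilde T)$ is a feasible symmetric $B$-bidder mechanism, its reduced allocation satisfies $\overline P_i(\chi)\preceq \xi^{B-1}$ componentwise, with $\chi\sim\rho$ and $\xi\sim\mathrm{Uniform}[0,1]$. Since $\nabla u=\overline P$ $\rho$-a.e., this is exactly the required constraint on $\partial u/\partial x_i$. Finally, the revenue identity comes from
\begin{equation*}
\E\Big[\sum_{b\in\mathcal B}\tilde T_b\Big]=B\int_X \overline T(x)\rho(x)\,\mathrm{d}x=B\int_X\bigl(\langle \overline P(x),x\rangle-u(x)\bigr)\rho(x)\,\mathrm{d}x=B\int_X\bigl(\langle\nabla u(x),x\rangle-u(x)\bigr)\rho(x)\,\mathrm{d}x,
\end{equation*}
which matches the Rochet--Chon\'e objective~\eqref{eq_Rochet_appendix} at $u$, while equality of revenues between $(P,T)$ and $(\tilde P,\tilde T)$ is preserved by symmetrization. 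The constraint $u(0)=0$ need not hold (bidders of type $0$ may still enjoy positive interim utility), which is exactly the flexibility granted by Lemma~\ref{lm_relaxing_u0}.

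The substantive step is the invocation of the multi-item majorization characterization; the rest is a standard envelope/symmetrization argument. Since the excerpt provides the extended characterization separately, the present lemma is essentially a packaging of these ingredients.
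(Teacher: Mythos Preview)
Your proposal is correct and follows essentially the same route as the paper: symmetrize, define $u$ as the interim utility of the symmetrized reduced form, establish convexity via the supremum-of-affines representation and $\nabla u=\overline P$ a.e.\ via the subdifferential/envelope argument, then apply the Hart--Reny majorization characterization item by item. The paper makes slightly more explicit that the majorization step treats each component $P^{\mathrm{sym},i}$ as a feasible one-item allocation rule and invokes the one-item Hart--Reny result directly, whereas you cite the multi-item extension as a black box, but this is only a packaging difference.
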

\begin{proof}
 Consider the symmetrization of the mechanism $(P,T)$ over all permutations of bidders:
 \begin{align*}
    P_{b,i}^{\mathrm{sym}}\left((x_k)_{k\in\mathcal{B}}\right)&=\frac{1}{|N|!}\sum_{\sigma\in S_\mathcal{B}} P_{\sigma(b),i}\left((x_{\sigma(b)})_{b\in\mathcal{B}}\right)\\ T_{i}^{\mathrm{sym}}\left((x_k)_{k\in\mathcal{B}}\right)&=\frac{1}{|N|!}\sum_{\sigma\in S_\mathcal{B}} T_{\sigma(b)}\left((x_{\sigma(b)})_{b\in\mathcal{B}}\right), 
 \end{align*}
where $S_\mathcal{B}$ denotes the set of all permutations $\sigma$ of the set of bidders $\mathcal{B}$. The symmetrization $(P^{\mathrm{sym}},T^{\mathrm{sym}})$ results in the same revenue and inherits all the properties of $(P,T)$. By symmetry, all the bidders contribute equally to the revenue and so the revenue can be rewritten as 
$$B\cdot \int_{X^{\mathcal{B}}}  T_1^{\mathrm{sym}}\big((x_b)_{b\in\mathcal{B}}\big)\cdot \left(\prod_{b\in\mathcal{B}}\rho(x_b)\right){\dd} x_1\cdots {\dd} x_B= B\cdot \int_{X}  \overline{T_1^{\mathrm{sym}}}\big(x\big)\cdot \rho(x){\dd} x,$$
where $\big(\overline{P}_b^{\mathrm{sym}},\, \overline{T}_b^{\mathrm{sym}}\big)$ denotes bidder $b$'s reduced mechanism  (reduced mechanisms are the same for all the bidders by symmetry).
 Define $u(x)$ as the average utility of a bidder of type $x\in X$ in $(P^{\mathrm{sym}},T^{\mathrm{sym}})$:
$$u(x)= \Big\langle\overline{P_1^{\mathrm{sym}}}(x),\,x\Big\rangle-\overline{T_1^{\mathrm{sym}}}(x).$$
By the definition of incentive compatibility,
$$\big\langle\overline{P_b^{\mathrm{sym}}}(y),\,y\big\rangle-\overline{T_b^{\mathrm{sym}}}(y)\geq \langle\overline{P_b^{\mathrm{sym}}}(x),\,y\rangle-\overline{T_b^{\mathrm{sym}}}(x).$$
Thus
\begin{equation}\label{eq_linear_lower_bound}
u(y)\geq u(x)+\big\langle\overline{P_1^{\mathrm{sym}}}(x),\  y-x\big\rangle.
\end{equation}
We conclude that $u(y)=\max_{x\in X} \left( u(x)+\big\langle\overline{P_1^{\mathrm{sym}}}(x),\ y-x\big\rangle\right)$ and, hence, $u$ is a convex function as the pointwise maximum of a family of affine functions. Comparing~\eqref{eq_linear_lower_bound} to the definition of the subdifferential of a convex function~\eqref{eq_subdifferential}, we see that $\overline{P_1^{\mathrm{sym}}}(x)$ belongs to the subdifferential $\partial u(x)$. For Lebesgue-almost all $x$, the  gradient $\nabla f(x)$ of a convex function $f$ is well-defined and the subdifferential $\partial f(x)$ coincides with the singleton $\{\nabla f(x)\}$. Therefore,
\begin{equation}\label{eq_u_and_gradient-appendix}
\overline{P_1^{\mathrm{sym}}}(x)=\nabla u(x)
\end{equation}
for almost all $x$. By the definition of $u$, we can express $\overline{T_1^{\mathrm{sym}}}(x)$ as follows:
$$\overline{T_1^{\mathrm{sym}}}(x)=\Big\langle\overline{P_1^{\mathrm{sym}}}(x),\,x\Big\rangle-u(x)=\Big\langle\nabla u(x),\,x\Big\rangle-u(x),$$
where the second equality holds almost everywhere. Thus
$$B\cdot \int_{X}  \overline{T_1^{\mathrm{sym}}}\big(x\big)\cdot \rho(x){\dd} x=B\cdot\int_{X} \Big(\langle \nabla u(x), x\rangle -u(x)\Big) \rho(x) {\dd} x,$$
i.e., $u$ gives the same value to~\eqref{eq_Rochet_appendix} as $(P,T)$ to~\eqref{eq_revenue_appendix}. We already know that $u$ is convex. It remains  to check that $u$ is non-negative, monotone, and that it satisfies the majorization constraint~\eqref{eq_Border_appendix}. Non-negativity is immediate since, by the definition, $u\geq 0$ is equivalent to individual rationality of $(P^{\mathrm{sym}},T^{\mathrm{sym}})$. By~\eqref{eq_u_and_gradient-appendix}, $u$ is a convex function with the gradient  having non-negative components almost everywhere. Hence, $u$ is non-decreasing.

To check~\eqref{eq_Border_appendix}, note that $P^{\mathrm{sym}}$ can be seen as a family of ${I}$ allocation rules $P^{\mathrm{sym},\,i}=(P_{b,i})_{b\in\mathcal{B}}$, one for each item $i\in\mathcal{I}$. The reduced allocation $\overline{P_b^{\mathrm{sym},\,i}}\colon X\to \R_+$ for this one-item rule is equal to the corresponding component of $\overline{P_b^{\mathrm{sym}}}$.

\cite{hart2015implementation} showed that a function $f\colon  X\to [0,1]$ coincides with a reduced form $\overline{Q}_b$ of some bidder-symmetric feasible one-item mechanism $(Q,S)$ if and only if $f(\chi)\preceq_{} \xi^{B-1},$ where $\chi\in X$ is distributed with the density $\rho$ and $\xi$ is uniformly distributed on $[0,1]$. 

Applying this characterization, we conclude that  $$\overline{P_b^{\mathrm{sym},\,i}}(\chi)\preceq_{} \xi^{B-1}.$$
 Since $\overline{P_b^{\mathrm{sym},\,i}}$ is equal to $\overline{P_{b,i}^{\mathrm{sym}}}$ and the latter coincides with $\frac{\partial u}{\partial x_{i}}$ by~\eqref{eq_u_and_gradient-appendix}, we obtain the desired condition~\eqref{eq_Border_appendix}. To summarize, for any $(P,T)$, we constructed $u$ giving the same value to the  Rochet-Chon\'e problem and satisfying all its constraints (without $u(0)=0$ which was shown to be redundant).
 \end{proof}
 Now we show how to construct $(P,T)$ starting from~$u$.
 \begin{lemma}\label{lm_constructing_PT_from_u}
 For any $u$ satisfying the constraints of the Rochet-Chon\'e problem~\eqref{eq_Rochet_appendix} except for, possibly, $u(0)=0$, there exists an individually-rational Bayesian incentive-compatible feasible mechanism $(P,T)$ such that its revenue~\eqref{eq_revenue_appendix} is equal to the value of~\eqref{eq_Rochet_appendix} at $u$.
\end{lemma}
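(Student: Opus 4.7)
The plan is to run the construction of Lemma~\ref{lm_constructing_u_from_PT} in reverse. Starting from a feasible $u$ in the Rochet-Chon\'e problem, I would first define a candidate one-bidder (reduced) mechanism by setting
\[
\overline{P}(x)=\nabla u(x),\qquad \overline{T}(x)=\langle \nabla u(x),x\rangle - u(x),
\]
which is well-defined for $\rho$-almost every $x$ by Alexandrov's theorem (Appendix~\ref{sect_convex_analysis}). By Lemma~\ref{lm_relaxing_u0} I may assume $u(0)=0$.

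The next step is to verify that $(\overline{P},\overline{T})$ is BIC and IR at the reduced-form level. The interim utility of a type-$x$ bidder reporting truthfully equals $u(x)$, whereas the utility from reporting $y$ equals $u(y)+\langle \nabla u(y),x-y\rangle$. Convexity of $u$ gives $u(x)\geq u(y)+\langle \nabla u(y),x-y\rangle$, which is precisely BIC in the form~\eqref{eq_SP}. For IR, the interim utility equals $u(x)\geq u(0)=0$ by monotonicity and normalization. Also, because $u$ is non-decreasing and convex, $\nabla u(x)$ has non-negative components and each satisfies $\tfrac{\partial u}{\partial x_i}(x)\leq 1$ by~\eqref{eq_Border_appendix}, so $\overline{P}(x)\in[0,1]^{\mathcal{I}}$.

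The key step is to upgrade $(\overline{P},\overline{T})$ to a genuine feasible $B$-bidder mechanism $(P,T)$. For each item $i\in\mathcal{I}$, the scalar function $\overline{P}_i(x)=\tfrac{\partial u}{\partial x_i}(x)$ takes values in $[0,1]$ and satisfies $\overline{P}_i(\chi)\preceq \xi^{B-1}$ by~\eqref{eq_Border_appendix}. The multi-item extension of the \cite{hart2015implementation} characterization announced in Section~\ref{sec_Rochet} then furnishes, for each $i$ separately, a feasible bidder-symmetric one-item $B$-bidder allocation rule $Q^{(i)}=(Q^{(i)}_b)_{b\in\mathcal{B}}$ whose common reduced form equals $\overline{P}_i$. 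I would combine them coordinate-by-coordinate by setting $P_{b,i}(x_1,\dots,x_B)=Q^{(i)}_b(x_1,\dots,x_B)$; feasibility~\eqref{eq_feasible_allocation} holds item-by-item because each $Q^{(i)}$ is separately feasible. For transfers I would take $T_b(x_1,\dots,x_B)=\overline{T}(x_b)$, which depends only on bidder $b$'s own report, so its reduced form is exactly $\overline{T}$.

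Finally, since the reduced mechanism of the constructed $(P,T)$ coincides with $(\overline{P},\overline{T})$ for every bidder $b$ (by symmetry of $Q^{(i)}$ and the choice of $T_b$), BIC and IR of $(P,T)$ follow from the reduced-form properties already verified. The revenue is
\[
\sum_{b\in\mathcal{B}}\mathbb{E}[T_b]=B\cdot\int_X\overline{T}(x)\rho(x){\dd}x=B\cdot\int_X\big(\langle \nabla u(x),x\rangle-u(x)\big)\rho(x){\dd}x,
\]
which matches the value of~\eqref{eq_Rochet_appendix} at $u$. The main obstacle is the appeal to the multi-item Hart-Reny characterization used in Step~3: to justify it one has to argue that the dominance condition $\overline{P}_i(\chi)\preceq \xi^{B-1}$ can be applied \emph{separately} to each item, which is natural because the feasibility constraint~\eqref{eq_feasible_allocation} on $(P,T)$ is itself separable across items, reducing the problem to the scalar single-item result of \cite{hart2015implementation} applied $|\mathcal{I}|$ times.
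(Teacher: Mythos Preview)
Your proposal is correct and follows essentially the same route as the paper: define the reduced allocation via the gradient of $u$, invoke the Hart--Reny characterization item-by-item to lift it to a feasible symmetric $B$-bidder allocation, set transfers $T_b$ to depend only on $x_b$, and verify BIC/IR through the subdifferential inequality and non-negativity of $u$.

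Two minor points. First, the appeal to Lemma~\ref{lm_relaxing_u0} is unnecessary and slightly misapplied: that lemma concerns the \emph{value} of the optimization problem, not the present construction, and replacing $u$ by $u-u(0)$ would change the target revenue. Individual rationality already follows directly from the assumed non-negativity of $u$, without normalizing $u(0)$. Second, you define $\overline{P}(x)=\nabla u(x)$ only $\rho$-a.e., whereas BIC in~\eqref{eq_SP} is required for all pairs $(x_b,x_b')$; the paper handles this by selecting, at every $x$, an element of the subdifferential $\partial u(x)$ when $\nabla u(x)$ is undefined, which preserves the inequality $u(x)\geq u(y)+\langle f(y),x-y\rangle$ globally. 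With these two cosmetic fixes your argument coincides with the paper's.
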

 \begin{proof}
The proof reverses the construction used to prove Lemma~\ref{lm_constructing_u_from_PT}.
Consider a function  $f^{i}$ equal to the component of $u$'s gradient corresponding to an item $i\in\mathcal{I}$, i.e.,   $f^{i}=\frac{\partial u}{\partial x_{i}}$. We assume that $f^{i}$ is defined for all $x\in X$: whenever the gradient is not well-defined, we select  $f^{i}$  arbitrarily so that the vector $f=(f^{i}(x))_{i\in\mathcal{I}}$ belongs to the subdifferential $\partial u(x)$.
The function $f^{i}$  is non-negative as $u$ is monotone and $f(\chi)$  is majorized by $\xi^{B-1}$ since  $\frac{\partial u}{\partial x_{i}}(\chi)$ is. Thus, by the theorem of \cite{hart2015implementation}, there exists a feasible one-item allocation $P^{i} \colon X^{\mathcal{B}}\to \R_+^{\mathcal{B}}$  such that $\overline{P_b^{i}}=f^{i}(x)$ for any bidder~$b$. 

Define the mechanism $(P,T)$ as follows.  The items are allocated by applying $P^{i}$ to each $i\in\mathcal{I}$, i.e., $P_{b,i}=P_b^{i}$. The transfers $T$ are given  by $$T_b((x_b)_{b\in\mathcal{B}})=\big\langle f(x_b), x_b\big\rangle -u(x_b).$$
Thus $(P,T)$ is feasible and the reduced mechanisms satisfy
\begin{equation}\label{eq_PT_from_u}
\overline{P}_b(x)=f(x)\quad \mbox{and}\quad  \overline{T}_b(x)=\big\langle f(x), x\big\rangle -u(x)
\end{equation}
for any bidder $b$. As $f=\nabla u$ almost everywhere, the second identity in~\eqref{eq_PT_from_u} implies that the revenue of $(P,T)$ coincides with the value of~\eqref{eq_Rochet_appendix} at $u$. It remains to check that $(P,T)$ is individually rational and Bayesian incentive-compatible. Individual rationality reads as  $\langle\overline{P}_b(x),x\rangle-\overline{T}_b(x)\geq 0$. By~\eqref{eq_PT_from_u}, the left-hand side equals $u(x)$ and so individual rationality follows from non-negativity of $u$. To show incentive-compatibility, recall that $f(x)$ is an element of the subdifferential of $u$ and so
$$u(x')\geq u(x)+\big\langle f(x),\, x'-x\big\rangle.$$
By~\eqref{eq_PT_from_u}, this inequality rewrites as
$$\big\langle\overline{P}_b(x_b),\,x_b\big\rangle-\overline{T}_b(x_b)\geq \langle\overline{P}_b(x_b'),\,x_b\rangle-\overline{T}_b(x_b'),$$
which is exactly the condition of incentive-compatibility for $(P,T)$. Thus $(P,T)$ is an individually-rational Bayesian incentive-compatible feasible mechanism with revenue equal to the  value of  the Rochet-Chon\'e objective at $u$.
\end{proof}
The above lemmas imply that the values of problems~\eqref{eq_revenue_appendix} and~\eqref{eq_Rochet_appendix} coincide. To prove that the optima are attained we need the following pair of lemmas.

Let $\mathcal{U}_{{\lip}, 1}$ be the set of all convex non-decreasing functions $u\colon X\to \R_+$ with $u(0)=0$ satisfying $1$-Lipschitz condition $|u(x)-u(y)|\leq \sum_{i\in\mathcal{I}}|x_{i}-y_{i}'|$ and endowed with the topology of the set of continuous functions.
\begin{lemma}\label{lm_Rochet_continuity}
The Rochet-Chon\'e objective
\begin{equation}\label{eq_Rochet_objective_appendix}
B\cdot\int_{X} \Big(\langle \nabla u(x), x\rangle -u(x)\Big) \rho(x) {\dd} x
\end{equation}
is a continuous functional over the set $\mathcal{U}_{{\lip}, 1}$.
\end{lemma}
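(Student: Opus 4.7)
The plan is to show sequential continuity: take any sequence $u_n\to u$ in $\mathcal{U}_{\lip,1}$ (uniform convergence on $X$) and verify that the Rochet-Chon\'e objective at $u_n$ converges to its value at $u$. Split the objective into two summands,
\[
B\int_X \langle \nabla u(x), x\rangle\rho(x){\dd} x \quad\text{and}\quad -B\int_X u(x)\rho(x){\dd} x,
\]
and treat them separately. The second summand is obviously continuous: since $X=[0,1]^\mathcal{I}$ is compact and $\rho$ is an integrable density, uniform convergence $u_n\to u$ yields $\int_X u_n\rho\,{\dd} x\to \int_X u\,\rho\,{\dd} x$ immediately.

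The nontrivial part is the first summand. Here I would invoke the classical fact from convex analysis (Rockafellar, \emph{Convex Analysis}, Theorem 25.7; see also~\cite[Theorem 14.1]{villani2009optimal}) that whenever a sequence of convex functions $u_n$ converges pointwise to a convex function $u$, the gradients $\nabla u_n(x)$ converge to $\nabla u(x)$ at every point $x$ in the interior of $X$ at which $u$ is differentiable. Since every $u\in \mathcal{U}_{\lip,1}$ is Lipschitz and convex, it is differentiable Lebesgue-almost everywhere by Alexandrov's theorem, and because $\rho(x){\dd} x$ is absolutely continuous with respect to Lebesgue measure, the set of non-differentiability points is $\rho$-null. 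Hence $\nabla u_n(x)\to \nabla u(x)$ for $\rho$-almost every $x\in X$, and therefore $\langle \nabla u_n(x),x\rangle\to \langle \nabla u(x),x\rangle$ almost everywhere.

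To pass the limit under the integral sign, apply the dominated convergence theorem. The Lipschitz bound $|u_n(x)-u_n(y)|\le \sum_{i}|x_i-y_i|$ translates, for Lebesgue-almost every $x$, into the pointwise bound $0\le \frac{\partial u_n}{\partial x_i}(x)\le 1$ (recall that $u_n$ is also non-decreasing). Combined with $x\in[0,1]^\mathcal{I}$ this gives the uniform envelope
\[
\bigl|\langle \nabla u_n(x),\,x\rangle\bigr|\le \sum_{i\in\mathcal{I}} x_i\le I,
\]
and the constant $I$ is obviously $\rho(x){\dd} x$-integrable. Dominated convergence therefore yields $\int_X \langle \nabla u_n,x\rangle\rho\,{\dd} x\to \int_X \langle \nabla u,x\rangle\rho\,{\dd} x$, and the lemma follows by combining the two limits.

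The main obstacle is the gradient-convergence step: the sup-norm topology does not control derivatives of general Lipschitz functions, so we genuinely rely on convexity. Once the almost-everywhere convergence of $\nabla u_n$ is established via convex analysis, the rest is a routine application of dominated convergence with the envelope provided by the $1$-Lipschitz and monotonicity constraints baked into $\mathcal{U}_{\lip,1}$.
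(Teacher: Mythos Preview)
Your proof is correct and follows essentially the same approach as the paper: both establish almost-everywhere convergence of gradients via convexity, then apply dominated convergence with the envelope $I$ coming from the $1$-Lipschitz bound. The only cosmetic difference is that you cite Rockafellar's Theorem~25.7 for the gradient-convergence step, whereas the paper proves it directly from the definition of the subdifferential.
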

\begin{proof}
Let $u^{(n)} \to u$ be a uniformly convergent sequence of functions from $\mathcal{U}_{{\lip}, 1}$. Any limiting point $y$ of any sequence $\{y^{(n)}\}$ such that $y^{(n)} \in \partial u^{(n)}(x)$, belongs to $\partial u(x)$. Indeed, for every $z$ one has
$$
u^{(n)}(z) \ge u^{(n)}(x) + \langle y^{(n)}, z - x \rangle
$$ 
by definition of the subdifferential. From the convergence $u^{(n)} \to u$ and $y^{(n)} \to y$ one gets
$$
u(z) \ge u(x) + \langle y, z - x \rangle
$$ 
for all $z$, hence, $y \in \partial u(x)$.
Since the subdifferential $\partial u^{(n)}(x)$ coincides with the gradient $\{ \nabla u^{(n)}(x)\}$ for all $n$ and almost all $x$, we get that
$\nabla u^{(n)}(x)$ converges to $\nabla u(x)$ almost everywhere. 

Thus the convergence of $u^{(n)} \to u$ in the topology of continuous functions implies the convergence of integrands in~\eqref{eq_Rochet_objective_appendix} almost everywhere. To deduce the continuity of the functional, we need to show that taking the limit commutes with the integration. This follows from the Lebesgue dominated convergence theorem. To apply this theorem, it remains to show that, in addition to convergence almost everywhere, the sequence of integrands is bounded. Since $u^{(n)}$ is a convergent sequence of continuous functions,  $\sup_{n\in \N ,x\in X}|u^{(n)}(x)|<\infty$  and $\langle \nabla u^{(n)}(x), x\rangle$ is bounded by ${I}$ thanks to the $1$-Lipschitz property. We obtain boundedness of the sequence of integrands and conclude that the Rochet-Chon\'e objective is continuous.
\end{proof}
The next lemma shows that the feasible set in the Rochet-Chon\'e problem is a compact subset of $\mathcal{U}_{{\lip}, 1}$.
\begin{lemma}\label{lm_compactness}
The set of convex non-decreasing functions  $u:\ X\to\R_+$ with $u(0)=0$ satisfying the majorization condition~\eqref{eq_Border_appendix} is a compact subset of $\mathcal{U}_{{\lip}, 1}$.
\end{lemma}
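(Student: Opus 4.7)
The plan is to verify compactness in three steps: first show the feasible set sits inside $\mathcal{U}_{\lip,1}$, then show it is relatively compact via Arzel\`a--Ascoli, and finally show it is closed under uniform convergence.

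\textbf{Step 1 (inclusion in $\mathcal{U}_{\lip,1}$).} I would first extract a pointwise bound on partial derivatives from the non-local majorization constraint~\eqref{eq_Border_appendix}. Apply it with the convex non-decreasing test function $\varphi(z)=\max\{0,z-1\}$. Since $\xi^{B-1}\leq 1$ almost surely, the right-hand side of~\eqref{eq_Border_expanded} vanishes, so
\[
\int_X \max\Bigl\{0,\tfrac{\partial u}{\partial x_{i}}(x)-1\Bigr\}\rho(x){\dd} x = 0.
\]
Since $\rho>0$ on $X$, this forces $\tfrac{\partial u}{\partial x_{i}}\leq 1$ almost everywhere. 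Together with monotonicity (the partials are non-negative) and convexity, this yields the $1$-Lipschitz bound $|u(x)-u(y)|\leq \sum_{i\in\mathcal{I}}|x_{i}-y_{i}|$, so $u\in \mathcal{U}_{\lip,1}$.

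\textbf{Step 2 (relative compactness).} The family is equicontinuous by the uniform $1$-Lipschitz bound just established, and pointwise bounded since $u(0)=0$ and Lipschitzness force $0\leq u(x)\leq I$ on $X=[0,1]^\mathcal{I}$. By Arzel\`a--Ascoli the family is relatively compact in the $\sup$-norm topology of $C(X)$.

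\textbf{Step 3 (closedness).} Suppose $u^{(n)}\to u$ uniformly, with each $u^{(n)}$ feasible. Convexity, monotonicity, non-negativity, the constraint $u(0)=0$, and the pointwise bound $\tfrac{\partial u}{\partial x_{i}}\leq 1$ are all preserved by uniform limits (convexity and monotonicity by pointwise inequalities on three points; the gradient bound via the subdifferential characterization used in the proof of Lemma~\ref{lm_Rochet_continuity}). The remaining task is to show that the majorization condition~\eqref{eq_Border_appendix} passes to the limit. For this, fix any convex non-decreasing $\varphi\colon \R_+\to \R_+$; since each $\tfrac{\partial u^{(n)}}{\partial x_{i}}\in[0,1]$ almost everywhere, only the restriction of $\varphi$ to $[0,1]$ is relevant, and $\varphi$ is bounded there. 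As recorded in the proof of Lemma~\ref{lm_Rochet_continuity}, $\nabla u^{(n)}(x)\to \nabla u(x)$ for almost every $x$, so by dominated convergence
\[
\int_X \varphi\!\left(\tfrac{\partial u}{\partial x_{i}}(x)\right)\rho(x){\dd} x = \lim_{n\to\infty}\int_X \varphi\!\left(\tfrac{\partial u^{(n)}}{\partial x_{i}}(x)\right)\rho(x){\dd} x\leq \int_0^1 \varphi(z^{B-1}){\dd} z,
\]
establishing~\eqref{eq_Border_expanded} for the limit $u$. Thus the feasible set is closed in $\mathcal{U}_{\lip,1}$, and combined with Step~2 it is compact.

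\textbf{Main obstacle.} The only genuinely non-routine point is the transition from the non-local majorization constraint to the pointwise Lipschitz bound in Step~1 and, symmetrically, the preservation of majorization under uniform limits in Step~3. Both hinge on the observation that the majorizing random variable $\xi^{B-1}$ takes values in $[0,1]$, which both confines the relevant test functions $\varphi$ to the interval $[0,1]$ and supplies the uniform bound enabling dominated convergence.
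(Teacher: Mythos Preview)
Your proposal is correct and follows essentially the same approach as the paper: extract the pointwise gradient bound from the majorization constraint to land in $\mathcal{U}_{\lip,1}$, invoke Arzel\`a--Ascoli for relative compactness, and verify closedness via almost-everywhere convergence of gradients. The only cosmetic difference is that in Step~3 the paper phrases the passage to the limit as ``the distributions of the gradients converge weakly, hence majorization is preserved,'' whereas you spell out the dominated-convergence argument test function by test function; these are equivalent.
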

\begin{proof}
Since the upper bound in~\eqref{eq_Border_appendix} is a random variable taking values in $[0,1]$, we see that the gradient of a function $u$ from the statement of the lemma takes values in $[0,1]^{\mathcal{I}}$ and thus such $u$ belongs to $\mathcal{U}_{{\lip}, 1}$. 

To prove the compactness of the set of such $u$, note that the set $\mathcal{U}_{{\lip}, 1}$ is a
set of uniformly bounded uniformly equicontinuous functions. Hence, any sequence of functions from $\mathcal{U}_{{\lip}, 1}$ contains a convergent subsequence. Thus, to prove compactness of a subset of $\mathcal{U}_{{\lip}, 1}$, it is enough to check closedness of this subset.
If $u^{(n)}\in \mathcal{U}_{{\lip}, 1}$ is a sequence of functions converging uniformly to some $u$, we know that 
their gradients $\nabla u^{(n)}$ converge to  $\nabla u$ almost surely (see the proof of Lemma~\ref{lm_Rochet_continuity}). As the gradients are bounded, their distributions converge weakly. Therefore, if $u^{(n)}$ satisfy the majorization condition~\eqref{eq_Border_appendix}, it is also satisfied by the limit $u$. We obtain closedness and thus compactness.
\end{proof}
Now the proof of Proposition~\ref{prop_Rochet} is almost immediate.
\begin{proof}[Proof of Proposition~\ref{prop_Rochet}]
By Lemma~\ref{lm_relaxing_u0}, the value of the Rochet-Chon\'e problem does not change if we relax the constraint $u(0)=0$. Lemma~\ref{lm_constructing_u_from_PT} implies that the value of the Rochet-Chon\'e problem with the relaxed constraint is at least the value of the auctioneer's problem, while Lemma~\ref{lm_constructing_PT_from_u} gives the opposite inequality. Thus the values of the Rochet-Chon\'e and the auctioneer's problems are equal.

By Lemmas~\ref{lm_Rochet_continuity} and~\ref{lm_compactness}, the Rochet-Chon\'e problem can be seen as maximization of a continuous functional over a compact set. Therefore, this problem attains its optimum, i.e., the optimal $u$ exists. By Lemma~\ref{lm_constructing_PT_from_u}, we can find a mechanism $(P,T)$ such that the auctioneer's revenue is the same as the value of the Rochet-Chon\'e objective. Thus the optimum in the auctioneer's problem is also attained, i.e., the optimal auction exists as well.
\end{proof}

\section{Duality and proofs}\label{app_Dual}


In this section, we prove Theorems~\ref{th_vector_fields_inf} and Theorem~\ref{th_vector_fields_min} establishing the strong dual to the auctioneer's problem. The proof is split into two big parts. First, we derive a partial dual problem internalizing the feasibility constraint~\eqref{eq_Border_appendix} of \cite{hart2015implementation}. This problem is interpreted as a problem of a monopolist facing adversarial production costs; a result which may be of independent interest. In terms of this problem, we formulate a novel a priori bound on solutions, our main technical tool. Next, relying on this tool, we proceed with proving the theorems.

By Proposition~\ref{prop_Rochet}, we know that the auctioneer's problem is equivalent to the multi-bidder Rochet-Chon\'e problem where the distribution of $u$'s gradient is majorized by a particular distribution depending on the number of bidders. As our arguments do not depend on the exact form of the dominating distribution, in this section we allow for general majorizing distributions and, consequently, the results of this section extend Theorems~\ref{th_vector_fields_inf} and Theorem~\ref{th_vector_fields_min} to general majorization.

Let us describe the generalized Rochet-Chon\'e problem and introduce some useful notation along the way. Recall that $\mathcal{I}$ is the  set of $I\geq 1$ items and $X=[0,1]^\mathcal{I}$ is the set of bidders' types endowed with a density $\rho$. We will denote the corresponding distribution by $\mu$ so that
$${\dd}\mu(x)=\rho(x){\dd} x$$
and assume that $\rho$ is strictly positive on $X$.

For a convex function $u$ on $X$, its gradient is well-defined for almost all $x$; see Appendix~\ref{sect_convex_analysis}. For the gradient's component $\frac{\partial}{\partial x_i} u(x)$ we will sometimes use compact notation $u_{x_i}(x)$. 
We denote by $\nu_i$ the distribution of the gradient's $i$'th component $u_{x_i}(\chi)$ assuming that $\chi$ has distribution $\mu$. 

\smallskip 
\noindent\textbf{Rochet-Chon\'e problem with general majorization:}
\emph{given an absolutely continuous probability measure $\mu$ on $X$ and a collection of probability measures $(\eta_i)_{i\in \mathcal{I}}$ on $\R_+$,  maximize 
\begin{equation}\label{eq_Rochet_Chone_general_majorization}
\int_{X} \Big(\langle \nabla u(x), x\rangle -u(x)\Big) {\dd} \mu(x)
\end{equation}
over convex non-decreasing functions  $u\colon X\to\R_+$  with $u(0)=0$ and such that for all $i\in \mathcal{I}$
\begin{equation}\label{eq_Border_general_majorization}
\nu_i\preceq \eta_i,
\end{equation}
where $\nu_i$  is the distribution of $ u_{x_i}$.}

\smallskip

If all $\eta_i$ are the same and coincide with the distribution of $\xi^{B-1}$ with $\xi$ uniform on $[0,1]$, then the problem~\eqref{eq_Rochet_Chone_general_majorization} coincides with the multibidder Rochet-Chon\'e problem~\eqref{eq_Rochet_Chone_extension} up to a factor $B$ in the objective. By Proposition~\ref{prop_Rochet}, for such choice of $\eta_i$, the value of~\eqref{eq_Rochet_Chone_general_majorization} is equal to $\frac{1}{B}$ of the optimal revenue in the auctioneer's 
problem with $B$ bidders.

\subsection{Auctioneer's problem as monopolist's problem with adversarial production costs}\label{sec_adversary}
Consider a monopolist selling $I\geq 1$ items $i\in \mathcal{I}$ to one buyer whose type $x$ is distributed according to some measure $\mu$ on $X=[0,1]^\mathcal{I}$ with density $\rho$. In contrast to the single-bidder setting considered in Sections~\ref{sec_model} and~\ref{sec_Rochet}, these items have not yet been produced and so deciding on the amount to produce is a part of the monopolist's problem. We assume that the production costs are separable across items and, for each item $i\in \mathcal{I}$, are given by a convex non-decreasing function $\varphi_i$. The presence of the production costs $\sum_{i\in \mathcal{I}} \varphi_i\big(P_i\big)$ replace the feasibility constraint $P_i\leq 1$ of the monopolist's problem considered  in Section~\ref{subsect_monopolist}. That model corresponds to a particular case of $\varphi_i$ equal to $0$ on $[0,1]$ and $+\infty$ outside.  

\smallskip 
\noindent\textbf{Monopolist's problem with production costs.} \emph{For each item $i\in \mathcal{I}$, convex non-decreasing production costs $\varphi_{i}:\ [0,\infty)\to \R_+\cup\{+\infty\}$ with $\varphi_{i}(0)=0$ are given. The monopolist aims to maximize the total revenue consisting of the buyer's payment minus the production costs
\begin{equation}\label{eq_nonlinear_monopolist}
\Phi\big(u,(\varphi_i)_{i\in I} \big) = \int_{X} \Big(\langle \nabla u(x), x\rangle -u(x)\Big) {\dd} \mu(x)-\sum_{i\in \mathcal{I}}\int_{X}  \varphi_{i}\left(\frac{\partial u}{\partial x_{i}} (x)\right) {\dd} \mu(x)
\end{equation}
over convex non-decreasing functions $u\colon X\to\R_+$ with\footnote{One can show that this problem is equivalent to maximization of $\int_X(T(x)-\sum_i\varphi_i(P_i)){\dd} \mu(x)$  over  individually-rational Bayesian incentive-compatible mechanisms $(P,T):\ X\to \R_+^\mathcal{I}\times \R$ (the argument repeats the proof of Proposition~\ref{prop_Rochet}). We do not rely on this equivalence.} $u(0)=0$.}
\smallskip 

Let $\rev^\opt\big[(\varphi_i)_{i\in \mathcal{I}}\big]$ be the value of the problem~\eqref{eq_nonlinear_monopolist}, i.e., the maximal revenue the monopolist can achieve. Since the zero mechanism corresponding to $u\equiv 0$ is  feasible, the maximal revenue is non-negative, however, it may be infinite, e.g., if the costs are zero and so the monopolist has an incentive to increase production infinitely.

Consider an adversary who aims to minimize the monopolist's revenue by selecting the production costs but is 
penalized for choosing high costs. The adversary's objective is to minimize
\begin{equation}\label{eq_adversary}
\rev^\opt\big[(\varphi_i)_{i\in \mathcal{I}}\big]+\sum_{i\in \mathcal{I}}\int\varphi_i(z){\dd}\eta_i(z)
\end{equation}
for some given measures $\eta_i$.
\begin{theorem}\label{duality-theorem-maxinf}
Let $\eta_i$ be probability measures on $[0,1]$ such that $\eta_i([t, 1]) > 0$ for any~$t<1$.
Then the following assertions hold:
\begin{itemize}
        \item The value of the Rochet-Chon\'e problem with general majorization~\eqref{eq_Rochet_Chone_general_majorization}
       coincides with the optimal value achieved by the adversary in the minimization problem~\eqref{eq_adversary}. 
       
       In particular, if all $\eta_i$ are equal to the the distribution of $\xi^{B-1}$ with $\xi$ uniformly distributed on $[0,1]$, the value of~\eqref{eq_adversary} coincides with $\frac{1}{B}$ fraction of the auctioneer's optimal revenue~\eqref{eq_revenue} for $B$ bidders. 
       \item The optimum in~\eqref{eq_adversary} is attained, i.e., the adversary has an optimal strategy given by lower semicontinuous functions $(\varphi_{i}^\opt)_{i\in \mathcal{I}}$.
\end{itemize}
\end{theorem}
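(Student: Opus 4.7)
The plan is to treat the theorem as a strong duality / minimax equality for a Lagrangian that encodes the majorization constraint, with compactness supplied by the new a priori Lipschitz bound announced just above the statement (referenced as Proposition~\ref{ae-exist-lip}). Consider the Lagrangian
\begin{equation*}
L\bigl(u,(\varphi_i)\bigr)\;=\;\Phi\bigl(u,(\varphi_i)\bigr)+\sum_{i\in\mathcal{I}}\int_{[0,1]}\varphi_i\,{\dd}\eta_i,
\end{equation*}
with $u$ convex non-decreasing on $X$, $u(0)=0$, and $\varphi_i$ convex non-decreasing on $[0,\infty)$ with $\varphi_i(0)=0$. The Fenchel-type test-function characterization of majorization shows that $\inf_{(\varphi_i)} L(u,\cdot)$ equals the Rochet-Chon\'e objective~\eqref{eq_Rochet_Chone_general_majorization} when $u$ is feasible (i.e.\ $\nu_i\preceq\eta_i$ for all $i$), and $-\infty$ otherwise. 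Hence $\sup_u\inf_{\varphi}L$ is the Rochet-Chon\'e value while $\inf_{\varphi}\sup_u L$ is the adversary's value. Together with Proposition~\ref{prop_Rochet} specialized to $\eta_i=\mathrm{law}(\xi^{B-1})$, this will give both parts of the first bullet.

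Weak duality ($\sup\inf\le\inf\sup$) is immediate: for feasible $u$ and any $(\varphi_i)$, majorization yields $\int_X\varphi_i(u_{x_i}){\dd}\mu\le\int\varphi_i{\dd}\eta_i$, whence
\begin{equation*}
\int_X(\langle\nabla u,x\rangle-u)\,{\dd}\mu\;\le\;\rev^\opt[(\varphi_i)]+\sum_{i\in\mathcal{I}}\int\varphi_i\,{\dd}\eta_i;
\end{equation*}
sup-infing on either side gives the inequality. For the reverse direction I would invoke Sion's minimax theorem after restricting the $u$-side to a compact convex family. Proposition~\ref{ae-exist-lip} implies that, for every $(\varphi_i)$, the supremum $\sup_u L(u,\varphi)=\rev^\opt[(\varphi_i)]+\sum\int\varphi_i{\dd}\eta_i$ is already realized (or approached) within $1$-Lipschitz convex non-decreasing $u$ with $u(0)=0$, a sup-norm compact family by Lemmas~\ref{lm_Rochet_continuity}--\ref{lm_compactness}. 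On this family $u\mapsto L(u,\varphi)$ is continuous and concave (continuity of the first summand by Lemma~\ref{lm_Rochet_continuity}; dominated convergence for $\int\varphi_i(u_{x_i}){\dd}\mu$; convexity of $\varphi_i$), while $\varphi\mapsto L(u,\varphi)$ is affine. Sion's theorem gives $\inf_\varphi\sup_{u\text{ Lip}}L=\sup_{u\text{ Lip}}\inf_\varphi L$; the right-hand side is bounded above by the primal value, while the left-hand side equals $\inf_\varphi\sup_u L$ by the a priori bound, closing the duality.

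For attainment, take a minimizing sequence $(\varphi_i^{(n)})$. Coercivity of the adversary's objective, enforced by the assumption $\eta_i([t,1])>0$ for every $t<1$, precludes $\varphi_i^{(n)}(1)\to\infty$, so by convex monotonicity the sequence is pointwise-bounded on $[0,1]$. Helly's selection theorem extracts a subsequence converging pointwise to a convex non-decreasing limit $\varphi_i^\opt$; passing to the lower-semicontinuous envelope preserves $\varphi_i^\opt(0)=0$. Lower semicontinuity of $\rev^\opt[\,\cdot\,]$ along bounded pointwise limits (it is a supremum of affine functionals in $(\varphi_i)$, and the inner sup over the Lipschitz class is continuous in $(\varphi_i)$ by dominated convergence), together with continuity of $\varphi_i\mapsto\int\varphi_i{\dd}\eta_i$ by bounded convergence, shows $(\varphi_i^\opt)$ achieves the infimum.

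The hardest step is the minimax exchange: the full set of convex non-decreasing $u$ with $u(0)=0$ is not compact, and without compactness no classical minimax theorem applies directly. The ex-post non-negativity estimate is precisely the tool that collapses the $u$-side to a sup over a compact Lipschitz family uniformly in $(\varphi_i)$, unlocking Sion's theorem and making the rest of the proof routine.
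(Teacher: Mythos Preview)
Your high-level plan (Lagrangian, weak duality, minimax exchange, Helly for attainment) matches the paper's, but there are two genuine gaps.

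\textbf{Compactness for the minimax.} You claim Proposition~\ref{ae-exist-lip} forces the inner maximizer $u$ to be $1$-Lipschitz \emph{uniformly} over all $(\varphi_i)$. It does not: the a~priori bound (via Proposition~\ref{prop_Lipschitz_bound}) yields an $L$-Lipschitz constant depending on where $\varphi_i(t)>t$, and for slowly growing $\varphi_i$ (say $\varphi_i(t)=t^2/K$ with $K$ large) the optimizer need not be $1$-Lipschitz at all. The paper instead first restricts the adversary to $\varphi_i\in\mathcal{U}_{\eta_i,+\infty}$, i.e.\ $\varphi_i\equiv+\infty$ on $(1,\infty)$. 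This costs the adversary nothing since $\eta_i$ is supported on $[0,1]$ and raising $\varphi_i$ off $[0,1]$ only decreases $\rev^\opt$. With that restriction the inner $\max_u$ is automatically over $\mathcal{U}_{\lip,1}$ (any $u$ with $u_{x_i}>1$ on a positive-measure set gives $\Phi(u,\varphi_i)=-\infty$), and the minimax theorem applies with the compact side being $\mathcal{U}_{\lip,1}$. The a~priori estimate plays no role in Theorem~\ref{duality-theorem-maxinf}; it enters later, in the proof of complete duality (Proposition~\ref{cor:max=max_inf} and Lemma~\ref{prop:approx_U_with_Q}), precisely because there the $\varphi_i\in\mathcal{Q}$ are finite everywhere and one genuinely needs an independent source of compactness for $u$. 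Also note that on $\mathcal{U}_{\lip,1}$ the map $u\mapsto L(u,\varphi)$ is only upper semicontinuous (Lemma~\ref{uppercontin}), not continuous, once $\varphi_i$ may equal $+\infty$ at $1$; your ``dominated convergence for $\int\varphi_i(u_{x_i})\,{\dd}\mu$'' fails there, but upper semicontinuity is all the minimax theorem needs.

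\textbf{Attainment.} The hypothesis $\eta_i([t,1])>0$ for $t<1$ gives $\varphi_i^{(n)}(t)\le 2M/\eta_i([t,1])$ only for $t<1$; this blows up as $t\to 1$, and $\varphi_i^{(n)}(1)$ can be $+\infty$. So neither ``bounded convergence for $\int\varphi_i\,{\dd}\eta_i$'' nor ``dominated convergence inside the sup'' is available on all of $[0,1]$. The paper's Step~4 fixes this with an $\varepsilon$-shrinkage: for fixed $u\in\mathcal{U}_{\lip,1}$, pass to $u^\varepsilon=(1-\varepsilon)u$ so that $u^\varepsilon_{x_i}\le 1-\varepsilon$, apply dominated convergence on $[0,1-\varepsilon]$ where the $\varphi_i^{(n)}$ are uniformly bounded, then send $\varepsilon\to0$ via monotone convergence. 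This extra step is not cosmetic; without it the limit $\varphi_i^\opt$ is not shown to achieve the infimum.
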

Let us formulate the result paying attention to functional classes to which $u$ and $\varphi_i$ belong. 

Denote by $ \mathcal{U}_{{\lip}, K}$  the set of non-decreasing convex functions $u$ on $X$ that have $u(0)=0$ and are $K$-Lipschitz  in the $l^{1}$-norm, i.e. $|u(x)-u(x')|\leq K\sum_i |x_i-x_i'|$. Note that monotonicity and $K$-Lipschitz properties together are equivalent to the following inequality on partial derivatives
 $$
 0 \le u_{x_i} \le K, \ \ \forall i \in \mathcal{I},
 $$
that must hold almost everywhere in $X$. For a probability measure $\eta$ on $[0,1]$, denote by 
$\mathcal{U}_{\eta,+\infty}$ the set of convex non-decreasing lower semicontinuous functions  $\varphi\colon \R_+\to \R_+\cup\{+\infty\}$ such that  $\varphi(0)=0$, the integral $\int_0^1 \varphi(z){\dd}\eta(z)<\infty$, and  $\varphi(z)=+\infty$ for $z>1$.

Formally, we prove the following identity
\begin{multline}\label{eq:max=min_max}
 \max_{u \in \mathcal{U}_{{\lip}, 1}, \nu_i \preceq_{} \eta_i  }  \int \left( \langle x, \nabla u \rangle - u(x) \right) d \mu= 
 \\
 =
  \min_{\varphi_i \in \mathcal{U}_{\eta_i, +\infty}}
 \max_{u \in \mathcal{U}_{{\lip}, 1}} \left[\Phi(u, \varphi_i) + \sum_{i\in\mathcal{I}}\int_0^1\varphi_i(x) \,{\dd}\eta_i(x)\right].
\end{multline}

\medskip 

The proof of Theorem~\ref{duality-theorem-maxinf} is contained in the next subsection. The high-level idea is to apply a functional minimax theorem to the Lagrangian internalizing the majorization constraint~\eqref{eq_Border_general_majorization}. 
Indeed, interpret $\Phi\Big(u,\,\big(\varphi_{i}\big)_{i\in\mathcal{I}}\Big)+\sum_{i\in \mathcal{I}}\int\varphi_i(z){\dd}\eta_i(z)$ as the payoff function in a zero-sum game. The maximizer selects $u$, while the minimizer picks $(\varphi_{i})_{i\in\mathcal{I}}$. 
The minimizer can infinitely penalize the maximizer for a violation of the majorization constraint~\eqref{eq_Border_general_majorization}. On the other hand, if the constraint is not violated, the best the minimizer can do is to select $\varphi_{i}\equiv 0$ on $[0,1]$ for all $i$ making the payoff equal to the objective of the Rochet-Chon\'e problem with general majorization~\eqref{eq_Rochet_Chone_general_majorization}. We conclude that the $\max\inf$-value of the game coincides with the value of the Rochet-Chon\'e problem~\eqref{eq_Rochet_Chone_general_majorization}. Similarly, one can show that $\inf\max$-value is the optimal value of the adversary's problem~\eqref{eq_adversary}. Next we apply the following functional minimax theorem 
which can be found in \cite[Theorem 2.4.1]{AdamsHedberg}.
 \begin{theorem}
 \label{thm:minimax}
 Let  $X, Y$  be convex subsets of linear topological spaces. We assume, in addition, that $X$ is a compact Hausdorff space. Let $f \colon X \times Y \to (-\infty,+\infty]$ 
 be a function  that is lower semicontinuous in $x$ for every  $y \in Y$, convex in $x$, and concave in  $y$. Then
 $$
 \min_{x \in X} \sup_{y \in Y} f(x,y) = \sup_{y \in Y} \min_{x \in X} f(x,y).
 $$
 \end{theorem}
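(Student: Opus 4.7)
The plan is to reduce the non-trivial direction of the equality to a finite-dimensional Hahn--Banach separation via a finite-intersection argument. The easy inequality $\sup_{y}\min_{x} f(x,y)\le \min_{x}\sup_{y} f(x,y)$ is immediate. Both minima are well defined: $f(\cdot,y)$ is lsc on the compact $X$ and therefore attains its minimum, while $x\mapsto \sup_y f(x,y)$ is lsc as a supremum of lsc functions and so also attains its minimum on~$X$.

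For the reverse inequality, set $\alpha=\sup_{y\in Y}\min_{x\in X} f(x,y)$. If $\alpha=+\infty$ there is nothing to prove, so assume $\alpha<+\infty$ and fix any $\beta>\alpha$. It suffices to produce an $x_0\in X$ with $f(x_0,y)\le\beta$ for every $y\in Y$, as letting $\beta\downarrow\alpha$ then yields the theorem. Introduce the sublevel sets $A_y=\{x\in X:f(x,y)\le\beta\}$; by the hypotheses each $A_y$ is a closed convex subset of the compact $X$, and $A_y$ is non-empty because $\min_x f(x,y)\le\alpha<\beta$. By the finite intersection property, it is enough to show that $A_{y_1}\cap\cdots\cap A_{y_n}\ne\emptyset$ for every finite collection $y_1,\dots,y_n\in Y$, equivalently that $\min_{x\in X}\max_{1\le i\le n} f(x,y_i)\le\beta$.

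For this finite step I consider the convex upward-closed set
\[
C=\bigl\{v\in\R^n : \exists\,x\in X \text{ with } v_i\ge f(x,y_i)\text{ for all }i\bigr\}\subset\R^n,
\]
whose convexity follows from convexity of each $f(\cdot,y_i)$. Suppose towards contradiction that $\min_x\max_i f(x,y_i)>\alpha$ and pick $t$ with $\alpha<t<\min_x\max_i f(x,y_i)$. I first check that $t\mathbf{1}\notin\overline{C}$: any approximating sequence $v^{(k)}\to t\mathbf{1}$ in $C$ would come with $x^{(k)}\in X$ satisfying $f(x^{(k)},y_i)\le v^{(k)}_i$; a cluster point $x^\ast\in X$ (which exists by compactness) would then satisfy $f(x^\ast,y_i)\le t$ for all $i$ by lower semicontinuity, contradicting the choice of $t$. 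Hahn--Banach separation in $\R^n$ now yields $\lambda\ne 0$ and $c\in\R$ with $\lambda\cdot v>c>\lambda\cdot t\mathbf{1}$ for all $v\in C$. Upward-closedness of $C$ forces $\lambda_i\ge 0$ for all $i$ (otherwise $\lambda\cdot v$ could be driven to $-\infty$), so after rescaling $\lambda\in\Delta^n$. Taking $v=(f(x,y_1),\dots,f(x,y_n))\in C$ for each $x\in X$ yields $\sum_i\lambda_i f(x,y_i)>t$. Concavity of $f$ in $y$ then implies
\[
\min_{x\in X} f\Bigl(x,\sum_i\lambda_i y_i\Bigr)\ge \min_{x\in X}\sum_i\lambda_i f(x,y_i)\ge t>\alpha,
\]
which contradicts $\alpha=\sup_{y}\min_{x}f(x,y)$. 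Hence $\min_x\max_i f(x,y_i)\le\alpha<\beta$, completing the finite case and the proof.

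The main obstacle is the finite-dimensional separation step, in particular verifying $t\mathbf{1}\notin\overline{C}$: this is where compactness of $X$ and the lsc hypothesis on $f(\cdot,y)$ are jointly used, and it is also the only place where the possible $+\infty$ values of $f$ require a moment of care (they cause no actual difficulty, since those $x$ at which $f(x,y_i)=+\infty$ simply contribute no point to~$C$). Everything else is routine: convexity gives the shape of $C$, upward-closedness forces the separating functional to be a non-negative vector and hence after rescaling an element of $\Delta^n$, and concavity in $y$ converts the convex combination $\sum_i\lambda_i f(\cdot,y_i)$ into a single valuation $f(\cdot,\sum_i\lambda_i y_i)$ that produces the contradiction with the definition of~$\alpha$.
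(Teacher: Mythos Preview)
The paper does not prove this result; it is quoted from \cite[Theorem 2.4.1]{AdamsHedberg} and used as a black box. Your argument is a self-contained proof along classical lines (finite-intersection property on the compact $X$ plus a finite-dimensional separation, as in standard proofs of Ky~Fan / Sion-type minimax theorems), and the overall strategy is sound.

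There is, however, one genuine gap in the handling of $+\infty$ values. After separation you obtain $\lambda\in\Delta^n$ with $\lambda\cdot v>c>t$ for all $v\in C$, and then assert $\sum_i\lambda_i f(x,y_i)>t$ for \emph{every} $x\in X$. This follows when $(f(x,y_i))_i\in\R^n$ (so that it lies in $C$), and also when $f(x,y_{i_0})=+\infty$ with $\lambda_{i_0}>0$; but if $f(x,y_{i_0})=+\infty$ while $\lambda_{i_0}=0$, neither argument applies, and your remark that such $x$ ``simply contribute no point to $C$'' is precisely why the separation inequality does not reach them. You still need the bound at those $x$ to conclude $\min_{x}f(x,\bar y)>\alpha$. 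The patch is short: since $f>-\infty$ and $f(\cdot,y_i)$ is lsc on the compact $X$, each $m_i:=\min_{x}f(x,y_i)$ is finite, hence $v_i\ge m_i$ for all $v\in C$; replacing $\lambda$ by $\lambda^\varepsilon=(1-\varepsilon)\lambda+\varepsilon\, n^{-1}\mathbf{1}$ preserves the separation for small $\varepsilon$ (as $\inf_{v\in C}\lambda^\varepsilon\cdot v\ge(1-\varepsilon)c+\varepsilon\, n^{-1}\sum_i m_i\to c>t$) while making every component strictly positive, after which the problematic case no longer occurs and your concavity step goes through verbatim with $\bar y^\varepsilon=\sum_i\lambda^\varepsilon_i y_i$.
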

By this theorem, we conclude that the $\max\inf$ and $\inf\max$ values coincide. This gives us the first item of Theorem~\ref{duality-theorem-maxinf}. We note that, in contrast to typical game-theoretic derivations of dual problems, the payoff function $\Phi$ is not affine in the strategy $u$ of the maximizer. However, $\Phi$ is convex in $u$ which is enough for Theorem~\ref{thm:minimax}.

This gives the result with infimum over $\varphi_i$ instead of minimum. Proving that the minimum is attained is the most difficult part of the proof as the set of minimizer's strategies is not compact and so we cannot use the standard compactness arguments.

\subsection{Proof of Theorem~\ref{duality-theorem-maxinf}}\label{subsec_partial_dual_proof}

In addition to $\mathcal{U}_{{\lip}, K}$  and 
$\mathcal{U}_{\eta,+\infty}$ defined above, we will need the following functional spaces:
\begin{itemize}
  \item
 $\mathcal{U}^p$, $p \ge 1$, is the set of non-decreasing convex functions $u\colon X\to \R_+$ with $u(0)=0$ and such that $\int_X|\nabla u|^p{\dd} \mu <\infty$, i.e., the gradient of $u$ belongs to~$L^p(\mu)$.
 \item 
 $\mathcal{U}_{\mathbb{R}_+}$ 
 is the set of convex non-decreasing lower semicontinuous functions  $\varphi\colon  \R_+ \to \R_+\cup\{+\infty\}$ with  $\varphi(0)=0$ such that there exists $t_0 \in \R_+$
 with 
 $$
 \varphi(t_0) > t_0.
 $$
Note that lower semicontinuity withing this class simply means that
 $$
 \lim_{s \to t -} \varphi(s) = \varphi(t),
 $$
 where $t = \inf\{s: \varphi(s) = +\infty\}$.
  \item $\mathcal{U}_{[0,1]}^{+\infty}$ is the set of all functions $\varphi\in \mathcal{U}_{\mathbb{R}_+}$ such that $\varphi_i(t)$ is finite for $t<1$ and equal to~$+\infty$ for $t>1$.
\end{itemize}

The following two simple lemmas provide compactness and continuity properties needed for the proof of Theorem~\ref{duality-theorem-maxinf}.

\begin{lemma} For any given $K \ge 0$,
the set $\mathcal{U}_{{\lip}, K}$ is compact in the uniform convergence topology.
\end{lemma}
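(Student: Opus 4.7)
The plan is to apply the Arzelà–Ascoli theorem. The set $\mathcal{U}_{\lip, K}$ sits inside $C(X)$ with $X = [0,1]^{\mathcal{I}}$ compact, so I need to verify three things: (i) uniform boundedness, (ii) uniform equicontinuity, and (iii) closedness of the set under the topology of uniform convergence.

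For uniform boundedness, every $u \in \mathcal{U}_{\lip, K}$ satisfies $u(0) = 0$ and $|u(x) - u(y)| \le K \sum_{i} |x_i - y_i|$, so $0 \le u(x) \le K \sum_{i} x_i \le K \cdot I$ on $X$, giving a uniform $C^0$-bound. Uniform equicontinuity is built into the definition: a single modulus of continuity $\omega(\delta) = K \delta$ (in the $l^1$-metric) works for every $u$ in the family. By Arzelà–Ascoli, the family is relatively compact in the sup-norm topology, i.e., every sequence $\{u^{(n)}\} \subset \mathcal{U}_{\lip, K}$ has a subsequence converging uniformly to some continuous $u$.

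It remains to check closedness of $\mathcal{U}_{\lip, K}$ under uniform limits, which is the only step requiring a small verification but is essentially routine. Convexity, monotonicity, non-negativity, and $K$-Lipschitzness all pass to uniform (even pointwise) limits: for instance, if $u^{(n)} \to u$ uniformly and each $u^{(n)}$ satisfies $u^{(n)}(\lambda x + (1-\lambda) y) \le \lambda u^{(n)}(x) + (1-\lambda) u^{(n)}(y)$, then taking limits preserves the inequality, and analogously for the Lipschitz and monotonicity bounds. The boundary condition $u(0) = 0$ is preserved by pointwise convergence at a single point. Combining relative compactness with closedness yields compactness of $\mathcal{U}_{\lip, K}$.

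No step should present a real obstacle: the main ingredients are the explicit $C^0$-bound from $u(0) = 0$ plus the Lipschitz hypothesis, and the stability of the defining conditions under uniform limits. This is the standard compactness argument already invoked implicitly in the proof of Proposition~\ref{prop_Rochet} (see Lemma~\ref{lm_compactness}); I would simply make it explicit here since the present lemma does not impose the majorization constraint and so only the ambient Arzelà–Ascoli argument is needed.
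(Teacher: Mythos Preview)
The proposal is correct and takes essentially the same approach as the paper: Arzel\`a--Ascoli for relative compactness (using the uniform Lipschitz bound and $u(0)=0$), plus preservation of convexity, monotonicity, and the Lipschitz bound under uniform limits for closedness. The paper's own proof is a one-line invocation of exactly these ingredients, so your version is just a more explicit rendering of the same argument.
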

\begin{proof}
The compactness follows from the Arzel\'a--Ascoli theorem and the obvious fact that convexity and monotonicity are preserved under uniform convergence.
\end{proof}

The following lemma extends the continuity of the objective obtained in Lemma~\ref{lm_Rochet_continuity} in the presence of functions $\varphi_i$.
\begin{lemma}
\label{uppercontin}
For any tuple of $\varphi_i \in \mathcal{U}_{\mathbb{R}_+}$
the functional
$\Phi(u, \varphi_i)$
is upper semicontinuous in $u$ in the uniform convergence topology on   $\mathcal{U}_{{\lip}, K}$ for every $K>0$. 

If, in addition,  $\varphi_i$ do not take value $+\infty$ and are continuous, then 
$\Phi(u, \varphi_i) $
is continuous in $u$ in the uniform convergence topology on   $\mathcal{U}_{{\lip}, K}$ for every $K>0$.
\end{lemma}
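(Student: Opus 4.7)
The plan is to decompose $\Phi(u,\varphi_i)=F(u)-G(u)$, where $F(u)=\int_X\bigl(\langle \nabla u(x),x\rangle-u(x)\bigr){\dd}\mu(x)$ is the part already handled (in spirit) by Lemma~\ref{lm_Rochet_continuity}, and $G(u)=\sum_{i\in\mathcal{I}}\int_X\varphi_i\bigl(u_{x_i}(x)\bigr){\dd}\mu(x)$ is the penalty term. The first step is to observe that the proof of Lemma~\ref{lm_Rochet_continuity} carries over verbatim from $\mathcal{U}_{\lip,1}$ to $\mathcal{U}_{\lip,K}$ (only the uniform bound on $\langle\nabla u^{(n)},x\rangle$ changes from $I$ to $KI$), so $F$ is continuous on $\mathcal{U}_{\lip,K}$ in the uniform topology. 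Hence showing upper semicontinuity of $\Phi$ reduces to showing lower semicontinuity of $G$, and showing continuity of $\Phi$ reduces to showing continuity of $G$.

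The second step is the key input from convex analysis already isolated in the proof of Lemma~\ref{lm_Rochet_continuity}: if $u^{(n)}\to u$ uniformly with $u^{(n)},u\in\mathcal{U}_{\lip,K}$, then every accumulation point of $\nabla u^{(n)}(x)$ lies in $\partial u(x)$, and since $\partial u(x)=\{\nabla u(x)\}$ for almost every $x\in X$, we obtain $\nabla u^{(n)}(x)\to\nabla u(x)$ pointwise $\mu$-almost everywhere. In particular $u^{(n)}_{x_i}(x)\to u_{x_i}(x)$ almost everywhere for each $i\in\mathcal{I}$.

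For the upper semicontinuity part, I would now apply Fatou's lemma. Since $\varphi_i\in\mathcal{U}_{\R_+}$ is non-negative (as $\varphi_i(0)=0$ and $\varphi_i$ is non-decreasing) and lower semicontinuous, the almost-everywhere convergence of gradients yields $\liminf_n \varphi_i(u^{(n)}_{x_i}(x))\geq \varphi_i(u_{x_i}(x))$ for $\mu$-a.e.\ $x$, and Fatou gives
\[
\liminf_{n\to\infty}\int_X\varphi_i\bigl(u^{(n)}_{x_i}\bigr){\dd}\mu\ \geq\ \int_X\varphi_i\bigl(u_{x_i}\bigr){\dd}\mu,
\]
so $G$ is lower semicontinuous and therefore $\Phi=F-G$ is upper semicontinuous in $u$.

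For the continuity statement under the additional assumption that each $\varphi_i$ is finite-valued and continuous on $\R_+$, I would replace Fatou by the dominated convergence theorem. Every $u\in\mathcal{U}_{\lip,K}$ satisfies $0\leq u_{x_i}\leq K$ almost everywhere, so $0\leq \varphi_i(u^{(n)}_{x_i}(x))\leq \varphi_i(K)<\infty$ uniformly in $n$ and $x$, and the constant $\varphi_i(K)$ is $\mu$-integrable because $\mu$ is a probability measure. Continuity of $\varphi_i$ upgrades the a.e.\ convergence of $u^{(n)}_{x_i}$ to a.e.\ convergence of $\varphi_i(u^{(n)}_{x_i})$, and dominated convergence yields convergence of the integrals and hence continuity of $G$ and of $\Phi$. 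The only delicate point in the whole argument is the almost-everywhere convergence of the gradients, and that was already established in the proof of Lemma~\ref{lm_Rochet_continuity}; everything else is a routine application of standard integration theorems.
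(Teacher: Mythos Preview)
Your proof is correct and follows essentially the same approach as the paper: almost-everywhere convergence of gradients from Lemma~\ref{lm_Rochet_continuity}, then Fatou for upper semicontinuity and dominated convergence for continuity. The only organizational difference is that you split $\Phi=F-G$ and apply (ordinary) Fatou to the non-negative penalty $G$, whereas the paper applies the reverse Fatou lemma to the full integrand using the uniform upper bound $\langle x,\nabla u^{(n)}\rangle - u^{(n)} - \sum_i\varphi_i(u^{(n)}_{x_i})\le IK$; both routes are equivalent here.
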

\begin{proof}
As it was demonstrated in the proof of Lemma~\ref{lm_Rochet_continuity}, 
if $u^{(n)} \to u$ is a uniformly convergent sequence of Lipschitz convex non-decreasing functions on ${X}$, then the gradients 
$\nabla u^{(n)}$ converge to   $\nabla u(x)$ almost everywhere. Thus, by the Fatou lemma (Theorem 11.20 in \cite{aliprantis2006infinite}), it is sufficient to check that
$$
\overline{\lim_{n}}  \left[\langle x, \nabla u^{(n)}(x) \rangle - u^{(n)}(x)  - \sum_{i\in\mathcal{I}} \varphi_{i}(u^{(n)}_{x_i}(x)) \right]
\le   \langle x, \nabla u(x) \rangle - u(x)  - \sum_{i\in\mathcal{I}} \varphi_{i}(u_{x_i}(x)) 
$$
$\mu$-a.e. and $\langle x, \nabla u^{(n)}(x) \rangle - u^{(n)}(x)  - \sum_{i\in\mathcal{I}} \varphi_{i}(u^{(n)}_{x_i}(x)) \le C$ for some $C$.
The first inequality follows immediately from $\mu$-a.e. convergence and lower semicontinuity of $\varphi_i$.
Next, since $u^{(n)}$, $u^{(n)}_{x_i}$ are nonnegative, one has
$$
 \langle x, \nabla u^{(n)}(x) \rangle - u^{(n)}(x)  - \sum_{i\in\mathcal{I}} \varphi_{i}(u^{(n)}_{x_i}(x)) \le \langle x, \nabla u^{(n)}(x)\rangle  \le {I}K.
$$

The second statement of the lemma follows from the Lebesgue dominated convergence theorem applied as in Lemma~\ref{lm_Rochet_continuity}.
\end{proof}

Now we are ready to prove Theorem~\ref{duality-theorem-maxinf}.
\begin{proof}[Proof of Theorem~\ref{duality-theorem-maxinf}.] \ \\
 \indent \textbf{Step 1.} 
The optimum in the left-hand side of~\eqref{eq:max=min_max} is attained.
Indeed, by Proposition~\ref{prop_Rochet}, we know that the value of the Rochet-Chon\'e problem attains its value and this problem coincides with the left-hand side of~\eqref{eq:max=min_max}. The proposition establishes the result for a particular choice of majorizing measures $\nu_i$ originating from the condition by \cite{hart2015implementation}, but the extension to arbitrary $\nu_i$ is straightforward.

\textbf{Step 2.} Let us rewrite our problem  in the minimax form and apply Theorem~\ref{thm:minimax}:
 \begin{align*}
&\max_{u \in \mathcal{U}^1, \nu_i \le \eta_i} \int \left(\langle x, \nabla u \rangle - u(x) \right)\,{\dd}\mu= \\
&= \max_{u \in \mathcal{U}_{{\lip}, 1}} \inf_{\varphi_i \in \mathcal{U}_{\eta_i, +\infty}}\left[\int \Big(\langle x, \nabla u \rangle - u(x)  - \sum_{i\in\mathcal{I}}\varphi_i(u_{x_i})\Big)\,{\dd}\mu + \sum_{i\in\mathcal{I}}\int_0^1\varphi_i(x) \,{\dd}\eta_i( x)\right]=\\
&=\inf_{\varphi_i \in \mathcal{U}_{\eta_i, +\infty}} \max_{u \in \mathcal{U}_{{\lip}, 1}} \left[\int \Big(\langle x, \nabla u \rangle - u(x)  - \sum_{i\in\mathcal{I}}\varphi_i(u_{x_i})\Big)\,{\dd}\mu + \sum_{i\in\mathcal{I}}\int_0^1\varphi_i(x) \,{\dd}\eta_i( x)\right]=\\
&=\inf_{\varphi_i \in \mathcal{U}_{\eta_i, +\infty}} \max_{u \in \mathcal{U}_{{\lip}, 1}} \left[\Phi(u, \varphi_i) + \sum_{i\in\mathcal{I}}\int_0^1\varphi_i(x) \,{\dd}\eta_i( x)\right].
\end{align*}
 
The first equality is obvious, while the second one follows from the minimax principle. Here we use compactness of  $\mathcal{U}_{{\lip},1}$, linearity in $\varphi_i$, concavity in $u$ (follows from convexity of $\varphi_i$), the upper semicontinuity was established in Theorem 2.

\textbf{Step 3.} We construct such a family of functions $(\varphi_i)$ that the infimum in the right-hand side of~\eqref{eq:max=min_max} is prospectively reached on them. Consider a sequence of tuples of functions $\{(\varphi_i^{(n)})_{i\in\mathcal{I}}\}_n \subset \mathcal{U}_{\eta_i, +\infty}$ such that
\begin{multline}\label{eq:lim_max_is_max}
\lim_{n \to \infty}\max_{u \in \mathcal{U}_{{\lip}, 1}} \left[\Phi(u, \varphi_i^{(n)}) + \sum_{i\in\mathcal{I}}\int_0^1\varphi_i^{(n)}(x) \,{\dd}\eta_i( x)\right]= \\
= \max_{u \in \mathcal{U}^1, \nu_i \preceq_{} \eta_i} \int (\langle x, \nabla u \rangle - u(x))\,{\dd}\mu.
\end{multline}

Denote by $M$ the optimal value of the objective function
\[
M = \max_{u \in \mathcal{U}^1, \nu_i \preceq_{} \eta_i} \int (\langle x, \nabla u \rangle - u(x))\,{\dd}\mu.
\]
We may assume that for all $n$ we have
\[
2M \ge \max_{u \in \mathcal{U}_{{\lip}, 1}} \left[\Phi(u, \varphi_i^{(n)}) + \sum_{i\in\mathcal{I}}\int_0^1\varphi_i^{(n)}(x) \,{\dd}\eta_i( x)\right].
\]
Since $\max_{u \in \mathcal{U}_{{\lip}, 1}} \Phi(u, \varphi_i^{(n)}) \ge 0$ and $\varphi_i^{(n)}(x) \ge 0$ for all $x$, we conclude that $\int_0^1 \varphi_i^{(n)}(x)\,{\dd}\eta_i \le 2M$ for all ${i\in\mathcal{I}}$ and for all $n$. All the functions $\varphi_i^{(n)}$ are non-negative and non-decreasing on $[0, 1]$; therefore, for every $t \in [0, 1)$,
\[
\int_0^1 \varphi_i^{(n)}(x)\,{\dd}\eta_i \ge \int_t^1 \varphi_i^{(n)}(x)\,{\dd}\eta_i \ge \varphi_i^{(n)}(t) \cdot \eta_i([t, 1]).
\]
Thus $\varphi_i^{(n)}(t) \le M_t$ for all $t \in [0, 1)$ and for all $n$, where $M_t = 2M / \eta_i([t, 1])$.

For every $t \in [0, 1)$, the sequence $\{\varphi_i^{(n)}\}$ is uniformly bounded on $[0, t]$ by the constant $M_t$. So, applying Helly's principle and passing to subsequences countably many number of times, we can assume that there exists a tuple of functions $(\varphi_i)_{i\in\mathcal{I}}$ defined on $[0, 1)$ such that $\varphi_i^{(n)} \to \varphi_i$ pointwise on $[0, 1)$. 

Each of the functions $\varphi_i$ is  non-negative, non-decreasing, and convex. In particular, $\lim_{t \to 1-} \varphi_i(t)$ is well defined. We extend the definition of $\varphi_i$ on $[0, +\infty)$ as follows: define $\varphi_i(1)$ as $\lim_{t \to 1-} \varphi_i(t) \in \mathbb{R} \cup \{+\infty\}$, and define $\varphi_i(x)$ at every $x \in (1, +\infty)$ to be equal to $+\infty$. The constructed function is lower-semicontinuous. Besides, $\lim_{n \to \infty} \varphi_i^{(n)}(x) = \varphi_i(x)$ for all $x \in [0, 1)$, and one can easily check that $\liminf_{n \to \infty} \varphi_i^{(n)}(1) \ge \varphi_i(1)$. 

We only need to check that $\varphi_i \in L^1(\eta_i)$ to prove that $\varphi_i \in \mathcal{U}_{\eta_i, +\infty}$. For every $i$, the sequence of integrals $\{\int_0^1 \varphi_i^{(n)}(x)\,{\dd}\eta_i\}_n$ is bounded. Passing to subsequences, we may additionally assume (and we will use it in the following part of the proof) that each of this sequences converges. By the Fatou lemma,
\begin{equation}\label{eq:int_of_phi_converges}
2M \ge \lim_{n \to \infty} \int_0^1 \varphi_i^{(n)}(x)\,{\dd}\eta_i \ge \int_0^1 \liminf_{n \to \infty}
\varphi_i^{(n)}(x)\,{\dd}\eta_i \ge \int_0^1 \varphi_i(x)\,{\dd}\eta_i.
\end{equation}
Hence, $\varphi_i \in L^1(\eta_i)$. Thus $\varphi_i \in \mathcal{U}_{\eta_i, +\infty}$.

\textbf{Step 4.}  We claim that for all $u \in \mathcal{U}_{{\lip}, 1}$ the following inequality holds:
\[
\Phi(u, \varphi_i) + \sum_{i\in\mathcal{I}}\int_0^1\varphi_i(x) \,{\dd}\eta_i( x) \le M.
\]

Fix a function $u \in \mathcal{U}_{{\lip}, 1}$. For any $\varepsilon \in (0, 1)$, consider a function $u^\varepsilon = (1 - \varepsilon) \cdot u \in \mathcal{U}_{{\lip}, 1 - \varepsilon}$. For all $x \in {X}$ and for all ${i\in\mathcal{I}}$, the value $u^\varepsilon_{x_i}(x)$ is not greater than $1 - \varepsilon$. So, for any ${i\in\mathcal{I}}$, the sequence of functions $\{\varphi_i^{(n)}(u^\varepsilon_{x_i}(x))\}_n$ converges to $\varphi_i(u^{\varepsilon}_{x_i}(x))$ pointwise almost everywhere. In addition, the inequality $0 \le \varphi_i^{(n)}(u^\varepsilon_{x_i}(x)) \le M_{1 - \varepsilon}$ holds for almost all $x \in {X}$ and for all $n$; therefore, it follows from Lebesgue's dominated convergence theorem that
\[
\lim_{n \to \infty}\int\varphi_i^{(n)}(u^{\varepsilon}_{x_i})\,{\dd}\mu = \int\varphi_i(u^{\varepsilon}_{x_i})\,{\dd}\mu.
\]
Combining this with the fact that $\lim_{n \to \infty} \int_0^1 \varphi_i^{(n)}(x)\,{\dd}\eta_i \ge  \int_0^1 \varphi_i(x)\,{\dd}\eta_i$, we conclude that
\[
\lim_{n \to \infty}\left[\Phi\left(u^\varepsilon, \varphi_i^{(n)}\right) + \sum_{i\in\mathcal{I}}\int_0^1\varphi_i^{(n)}(x) \,{\dd}\eta_i( x) \right] \ge \Phi(u^\varepsilon, \varphi_i) + \sum_{i\in\mathcal{I}}\int_0^1\varphi_i(x) \,{\dd}\eta_i( x).
\]
In particular,
\begin{align*}
M &= \lim_{n \to \infty}\max_{v \in \mathcal{U}_{{\lip}, 1}}\left[\Phi\left(v, \varphi_i^{(n)}\right) + \sum_{i\in\mathcal{I}}\int_0^1\varphi_i^{(n)}(x) \,{\dd}\eta_i( x)\right] \ge \\
&\ge \lim_{n \to \infty}\left[\Phi\left(u^\varepsilon, \varphi_i^{(n)}\right) + \sum_{i\in\mathcal{I}}\int_0^1\varphi_i^{(n)}(x) \,{\dd}\eta_i( x) \right] \ge \\
&\ge \Phi(u^\varepsilon, \varphi_i) + \sum_{i\in\mathcal{I}}\int_0^1\varphi_i(x) \,{\dd}\eta_i( x).
\end{align*}

Let $\varepsilon_n = \frac{1}{n}$. For every ${i\in\mathcal{I}}$, the sequence $\{\varphi_i(u_{x_i}^{\varepsilon_n}(x))\}_n$ is an increasing sequence of non-negative functions that converges to $\varphi_i(u_{x_i}(x))$ pointwise. So, by the Beppo Levi's lemma (Theorem 11.18 in \cite{aliprantis2006infinite}) we have
\[
\lim_{n \to \infty}\int \varphi_i(u_{x_i}^{\varepsilon_n}(x))\,{\dd}\mu( x) = \int \varphi_i(u_{x_i}(x))\,{\dd}\mu(x).
\]
Thus, for all $u \in \mathcal{U}_{{\lip}, 1}$, we have
\[
\Phi(u, \varphi_i) + \sum_{i\in\mathcal{I}}\int_0^1\varphi_i(x) \,{\dd}\eta_i(x) = \lim_{n \to \infty}\left(\Phi(u^{\varepsilon_n}, \varphi_i) + \sum_{i\in\mathcal{I}}\int_0^1\varphi_i(x) \,{\dd}\eta_i( x)\right) \le M.
\]

Since the last inequality holds for all $u \in \mathcal{U}_{{\lip}, 1}$, we conclude that
\[
\max_{u \in \mathcal{U}_{{\lip}, 1}}\left[\Phi(u, \varphi_i) + \sum_{i\in\mathcal{I}}\int_0^1\varphi_i(x) \,{\dd}\eta_i(x)\right] \le M.
\]
Thus it follows from the definition of $M$ that the equality holds and the minimum in~\eqref{eq:max=min_max} is reached on a family of functions sequence of functions $(\varphi_i)_{i\in\mathcal{I}}$.
\end{proof}



\subsection{Tools to approach complete duality: main a priori estimate and its corollaries}\label{sec_apriori}

In the next section, we discuss the complete duality results. The main insight allowing us to cope with non-compactness of the problem is an a priori bound on a solution. This bound has a clear economic interpretation in the context of the monopolist's problem with production~\eqref{eq_nonlinear_monopolist}. Here we discuss this bound and its implications.

Informally, the bound is as follows. It states that in the optimal mechanism, the monopolist never gets a negative revenue ex-post, i.e., 
$\langle \nabla u^\opt(x), x\rangle -u^\opt(x)-\sum_{i\in\mathcal{I}}\varphi_{i}( u_{x_i}^\opt(x))$ is non-negative.\footnote{Formulated in terms of monopolist's mechanism $(P^\opt,T^\opt)$, this inequality means 
$T^\opt(x)-\sum_{i\in\mathcal{I}}  \varphi_{i}\big(P_{i}^\opt(x)\big)\geq 0.$} This observation is not elementary as one could possibly expect that by  serving those costumers who bring negative profit, the monopolist could extract higher rent from the rest of the population.

We will rely on the notation for functional classes introduced in Section~\ref{subsec_partial_dual_proof}.
\begin{proposition}{\bf (Main a priori estimate).}
\label{ae-exist-lip}
Let $(\varphi_i)_{i \in I}$ be a collection of functions from $\mathcal{U}_{\mathbb{R}_+}$. Then for every function $u \in \mathcal{U}^1$, there exists a non-decreasing convex function $\tilde{u}$ with $\tilde{u}(0) = 0$ such that
\begin{equation}
\label{main-ae}
 \langle x, \nabla \tilde{u}(x) \rangle - \tilde{u}(x)  - \sum_{i\in\mathcal{I}} \varphi_{i}(\tilde{u}_{x_i}(x))  
 \ge \max\left\{ \langle x, \nabla u(x) \rangle - u(x)  - \sum_{i\in\mathcal{I}} \varphi_{i}(u_{x_i}(x)),\, 0\right\}
\end{equation}
for all $x \in X$. In particular, this implies (see Proposition~\ref{prop_Lipschitz_bound}) that for any function $u^{\mathrm{opt}} \in \mathcal{U}^1$ maximizing the functional $\Phi(u, \varphi_i)$ over $u\in \mathcal{U}^1$, the inequality
\begin{equation}
\label{mainapest}
 \langle x, \nabla {u^\opt}(x) \rangle - {u^\opt}(x)  - \sum_{i\in\mathcal{I}} \varphi_{i} \left(u_{x_i}^\opt (x)\right) \ge 0
\end{equation}
holds almost everywhere.
\end{proposition}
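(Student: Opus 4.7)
The plan is to construct $\tilde{u}$ by deleting from the ``menu'' associated with $u$ precisely those bundles that create negative ex-post profit, and then taking the upper envelope of the remaining options. Writing $u^{*}$ for the Legendre transform of $u$, set
$$D^{\mathrm{good}} = \Bigl\{p \in \R_{+}^{\mathcal{I}} \cap \mathrm{dom}(u^{*}) : u^{*}(p) \ge \sum_{i\in\mathcal{I}} \varphi_{i}(p_{i})\Bigr\}$$
and define
$$\tilde{u}(x) = \sup_{p \in D^{\mathrm{good}}} \bigl(\langle p, x\rangle - u^{*}(p)\bigr).$$
Since $u\ge 0$ and $u(0)=0$ force $u^{*}(0)=0$, and since $\varphi_{i}(0)=0$, the origin lies in $D^{\mathrm{good}}$. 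Hence $\tilde{u}$ is a supremum of affine functions with non-negative slopes and non-positive intercepts, which makes it convex, non-decreasing, non-negative, and satisfies $\tilde{u}(0) = \sup_{p \in D^{\mathrm{good}}}(-u^{*}(p)) = 0$ because $u^{*}\ge 0$ on its effective domain.

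The argument hinges on two pointwise comparisons. First, $\tilde{u}(x) \le u(x)$ for every $x$, since Fenchel--Moreau gives $u(x) = \sup_{p \in \mathrm{dom}(u^{*})} (\langle p, x\rangle - u^{*}(p))$ and restricting the supremum to $D^{\mathrm{good}}$ can only decrease it. Second, writing $R(v, x) = \langle x, \nabla v(x)\rangle - v(x) - \sum_{i} \varphi_{i}(v_{x_i}(x))$ and $G = \{x : R(u, x) \ge 0\}$, at any differentiability point $x \in G$ the slope $p = \nabla u(x)$ satisfies $u^{*}(p) - \sum_{i} \varphi_{i}(p_{i}) = R(u, x) \ge 0$ (using the Fenchel equality $\langle p, x\rangle - u^{*}(p) = u(x)$), so $p \in D^{\mathrm{good}}$; consequently $\tilde{u}(x) \ge \langle p, x\rangle - u^{*}(p) = u(x)$. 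Combined with the global inequality this forces $\tilde{u}(x) = u(x)$, and since $\tilde{u}$ touches $u$ from below at a point where $u$ is differentiable, one automatically gets $\nabla \tilde{u}(x) = \nabla u(x)$ and hence $R(\tilde{u}, x) = R(u, x)$.

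At any differentiability point $x$ of $\tilde{u}$ outside $G$, the maximiser $p^{*} \in D^{\mathrm{good}}$ realises the supremum defining $\tilde{u}$, and the envelope theorem gives $\nabla \tilde{u}(x) = p^{*}$; consequently
$$R(\tilde{u}, x) = u^{*}(p^{*}) - \sum_{i\in\mathcal{I}} \varphi_{i}(p^{*}_{i}) \ge 0$$
by the definition of $D^{\mathrm{good}}$. Together the two cases yield~\eqref{main-ae} at almost every $x$. The ``in particular'' conclusion~\eqref{mainapest} then follows by contradiction: if $R(u^{\opt}, \cdot) < 0$ on a set of positive $\mu$-measure, integrating~\eqref{main-ae} produces $\Phi(\tilde{u}, \varphi_{i}) > \Phi(u^{\opt}, \varphi_{i})$, contradicting optimality of $u^{\opt}$; Proposition~\ref{prop_Lipschitz_bound} is invoked precisely to secure $\tilde{u} \in \mathcal{U}^{1}$, as its a priori Lipschitz estimate bounds the slopes available in $D^{\mathrm{good}}$ uniformly.

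The main obstacle is conceptual rather than computational: the right object to modify is not $u$ itself but its Legendre dual, viewed as a catalogue of affine pieces, with the unprofitable pieces simply discarded. Once this viewpoint is in place the inequalities $\tilde{u} \le u$ globally and $\tilde{u} = u$ on $G$ are immediate from Fenchel--Moreau, and the gradient identity $\nabla \tilde{u} = \nabla u$ on $G$ propagates automatically because $\tilde{u}$ touches $u$ from below there. The remaining technical hurdle, handled separately by Proposition~\ref{prop_Lipschitz_bound}, is to keep $\tilde{u}$ inside $\mathcal{U}^{1}$ so that the improvement argument for~\eqref{mainapest} is admissible.
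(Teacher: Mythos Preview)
Your construction is a close variant of the paper's: both work in the Legendre dual, modifying $u^*$ so that the result dominates $\sum_i\varphi_i$ and then conjugating back. You take $\tilde u=(u^*+\chi_{D^{\mathrm{good}}})^*$, discarding unprofitable slopes entirely; the paper takes $\tilde u=\bigl(\max\{u^*,\sum_i\varphi_i\}\bigr)^*$, keeping all slopes but raising the value on the bad region. These produce different $\tilde u$'s but the same conclusion. Your ``touching from below'' argument on $G$ is more elementary than the paper's appeal to the Dubovitsky--Milyutin formula for the subdifferential of a maximum; conversely, the paper's $v=\max\{u^*,\sum_i\varphi_i\}$ is automatically convex and lower semicontinuous, so $(\tilde u)^*=v$ exactly and the inequality $R(\tilde u,x)\ge 0$ falls out at every differentiability point with no further work.

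One step of yours is asserted but not justified: that the supremum defining $\tilde u(x)$ is \emph{attained} at some $p^*\in D^{\mathrm{good}}$. The set $D^{\mathrm{good}}$ is closed (since $u^*$ is finite on $\R^{\mathcal I}$, hence continuous, while $\sum_i\varphi_i$ is lower semicontinuous) but possibly unbounded, and coercivity of $p\mapsto u^*(p)-\langle p,x\rangle$ on $\R_+^{\mathcal I}$ fails when some $x_i=1$. That set is null, so your argument still goes through almost everywhere, but you should say so. A cleaner route, which avoids attainment entirely and is precisely the mechanism making the paper's version frictionless, is to observe that $(\tilde u)^*$ is the closed convex envelope of $u^*+\chi_{D^{\mathrm{good}}}\ge\sum_i\varphi_i$; since $\sum_i\varphi_i$ is itself convex and lower semicontinuous, $(\tilde u)^*\ge\sum_i\varphi_i$, and therefore $R(\tilde u,x)=(\tilde u)^*(\nabla\tilde u(x))-\sum_i\varphi_i\bigl(\nabla\tilde u(x)\bigr)\ge 0$ at every differentiability point, with no maximiser needed.
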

The main a priori estimate is used in  Proposition~\ref{prop_Lipschitz_bound} to show that, for a wide class of functions $\varphi_i$, the functional $\Phi(u,\varphi_i)$ attains its maximum on a Lipshitz function $u$. This fact will be used in approximation Lemma \ref{prop:approx_U_with_Q} which, together with  Proposition \ref{thm:max_is_min_in_Q}, help us to justify the minimax principle (Proposition \ref{thm:max_is_min_in_Q}) needed to prove complete duality (Theorem~\ref{th_vector_fields_inf_appendix}).
\begin{proof}[Proof of Proposition~\ref{ae-exist-lip}]
Consider the Legendre transform of $u$
$$
u^*(y) = \sup_{x} (\langle x, y \rangle - u(x)),
$$
assuming $u(x) = +\infty$  if $x \notin {X}$. Next we define
$$
v(y) = \max\left\{u^*(y), \sum_{i\in\mathcal{I}} \varphi_i(y_i)\right\}.
$$
Note that $v$ is a lower semicontinuous convex function and $v(0)=0$. Set  
$$
\tilde{u} = v^* = \left[\max\left\{u^*, \sum_{i\in\mathcal{I}} \varphi_i\right\}\right]^*.
$$
Then, by the Fenchel--Moreau theorem \citep{rockafellar2015convex},
$$(\tilde{u})^* = \max\left\{u^*, \sum_{i\in\mathcal{I}} \varphi_i\right\} $$ and, for every point $x$ where $\nabla \tilde{u}(x)$ exists, one has
\begin{equation}
\label{main-ae1}
\langle x, \nabla \tilde{u}(x) \rangle - \tilde{u}(x) - \sum_{i\in\mathcal{I}} \varphi_i(\tilde{u}_{x_i})
= \left( (\tilde{u})^*  -  \sum_{i\in\mathcal{I}} \varphi_i  \right)(\nabla \tilde{u}) \ge 0.
\end{equation}

Consider a point $x$, such that $\nabla u(x)$ exists and  satisfies $\langle x, \nabla u(x) \rangle - u(x)  - \sum_{i\in\mathcal{I}} \varphi_{i}(u_{x_i}(x)) \ge 0$. Equivalently,
$( u^* - \sum_{i\in\mathcal{I}} \varphi_i)(\nabla u(x)) \ge 0$. It follows from the theorem about the subdifferential of a maximum
of convex functions (Dubovitsky-Milyutin theorem; see Theorem 3.50 in \cite{beck2017first})
that $\partial v(y) = \partial \left[\max\left\{u^*, \sum_{i\in\mathcal{I}} \varphi_i\right\}\right](y)$ contains $\partial u^*(y)$ if $u^* \ge  \sum_{i\in\mathcal{I}} \varphi_i$.
Hence, if $x$ satisfies $( u^* - \sum_{i\in\mathcal{I}} \varphi_i)(\nabla u(x)) \ge 0$, then
$$
v(\nabla u(x)) = u^*(\nabla(x))\quad  { \rm and} \quad  x \in \partial v(\nabla u(x)).
$$
This implies
$v(\nabla u(x)) + v^*(x) = \langle \nabla u(x), x \rangle$, hence
$$
\tilde{u}(x) = v^*(x)  = \langle \nabla u(x), x \rangle - u^*(\nabla u(x)) = u(x)
$$
and from the inclusion $x \in \partial v(\nabla u(x))$ we get $\nabla u(x) \in \partial v^*(x) = \partial \tilde{u}(x)$.
In particular,  if $\nabla{\tilde{u}}$ exists, then $\nabla u(x) = \nabla \tilde{u}(x)$ and
\begin{equation}
\label{main-ae2}
\langle x, \nabla \tilde{u}(x) \rangle - \tilde{u}(x) - \sum_{i\in\mathcal{I}} \varphi_i(\tilde{u}_{x_i})
= \langle x, \nabla {u}(x) \rangle - {u}(x) - \sum_{i\in\mathcal{I}} \varphi_i({u}_{x_i}).
\end{equation}
The desired inequality  (\ref{main-ae}) follows from (\ref{main-ae1}) and (\ref{main-ae2}).
\end{proof}

With the help of the main a priori estimate, we obtain the following a priori bound on the regularity of the optimum.

\begin{proposition}\label{prop_Lipschitz_bound}
Fix a collection of functions $\varphi_i \in \mathcal{U}_{\mathbb{R}_+}$, $i\in \mathcal{I}$, and consider numbers $M_i$ such that
$$
\varphi_i(M_i) > M_i,
$$
which exist by the definition of the class $\mathcal{U}_{\mathbb{R}_+}$.
Then there exists a number $L$ depending on  $M_i$ and  $\varphi_i(M_i)$ such that $\Phi(u, \varphi_i) $
attains its maximum  on $\mathcal{U}^1$ at a function ${u^\opt}$ that belongs to $\mathcal{U}_{{\lip}, L}$.
\end{proposition}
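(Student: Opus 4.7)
The idea is to combine the main a priori estimate (Proposition~\ref{ae-exist-lip}) with a direct pointwise argument, replacing the maximization over the non-compact space $\mathcal U^1$ by an equivalent one over the compact Lipschitz subclass $\mathcal U_{\lip,L}$. Take any maximizing sequence $u^{(n)}\in\mathcal U^1$ for $\Phi(\cdot,\varphi_i)$ and pass to the $\tilde u^{(n)}$ produced by Proposition~\ref{ae-exist-lip}. Integrating~\eqref{main-ae} against $\mu$ gives $\Phi(\tilde u^{(n)},\varphi_i)\ge\Phi(u^{(n)},\varphi_i)$, so $\{\tilde u^{(n)}\}$ is still maximizing and, crucially, its integrand is pointwise non-negative almost everywhere.

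I next extract a uniform Lipschitz bound from this pointwise non-negativity. At a.e.\ $x\in X$ where $\nabla\tilde u^{(n)}$ exists, set $t_i=\tilde u^{(n)}_{x_i}(x)\ge 0$; using $\tilde u^{(n)}\ge 0$ and $x_i\le 1$, the inequality $\langle x,\nabla\tilde u^{(n)}\rangle-\tilde u^{(n)}\ge\sum_i\varphi_i(t_i)$ simplifies to
\[
\sum_{i\in\mathcal{I}}\varphi_i(t_i)\le\sum_{i\in\mathcal{I}}t_i.
\]
The condition $\varphi_i(M_i)>M_i$ together with convexity and $\varphi_i(0)=0$ (which makes $t\mapsto\varphi_i(t)/t$ non-decreasing) forces $\varphi_i(t)\ge\alpha_i t$ on $[M_i,\infty)$ with $\alpha_i:=\varphi_i(M_i)/M_i>1$, while $\varphi_i\ge 0$ yields $\varphi_i(t_i)-t_i\ge -M_i$ for every $t_i\ge 0$. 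Splitting off a coordinate $i_0$ and combining these two bounds in $\sum_i(\varphi_i(t_i)-t_i)\le 0$ gives $(\alpha_{i_0}-1)t_{i_0}\le\sum_{i\ne i_0}M_i$ whenever $t_{i_0}\ge M_{i_0}$; together with the alternative $t_{i_0}\le M_{i_0}$,
\[
t_{i_0}\le L_{i_0}:=\max\!\Bigl\{M_{i_0},\,\tfrac{\sum_{i\ne i_0}M_i}{\alpha_{i_0}-1}\Bigr\},
\]
so every $\tilde u^{(n)}$ lies in $\mathcal U_{\lip,L}$ with $L:=\max_{i\in\mathcal{I}}L_i$ depending only on the prescribed data $(M_i,\varphi_i(M_i))_{i\in\mathcal{I}}$.

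The set $\mathcal U_{\lip,L}$ is $\sup$-norm compact by Arzel\`a--Ascoli (cf.\ the proof of Proposition~\ref{prop_Rochet}), so a subsequence $\tilde u^{(n_k)}$ converges uniformly to some $u^\opt\in\mathcal U_{\lip,L}$. Upper semicontinuity of $\Phi(\cdot,\varphi_i)$ on $\mathcal U_{\lip,L}$ from Lemma~\ref{uppercontin} then yields $\Phi(u^\opt,\varphi_i)\ge\limsup_k\Phi(\tilde u^{(n_k)},\varphi_i)=\sup_{\mathcal U^1}\Phi(\cdot,\varphi_i)$, so $u^\opt$ is a maximizer lying in $\mathcal U_{\lip,L}$ as required; finiteness of the supremum is a by-product, since the integrand of $\Phi$ is bounded by $IL$ on $\mathcal U_{\lip,L}$. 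The main obstacle is that $\mathcal U^1$ admits no useful compactness and $\Phi$ is a priori unbounded on it: Proposition~\ref{ae-exist-lip} is precisely the tool that bypasses this, and the superlinearity of each $\varphi_i$ implied by $\varphi_i(M_i)>M_i$ is exactly what converts its pointwise non-negativity into a uniform gradient bound.
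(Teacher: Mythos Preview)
Your proof is correct and follows essentially the same route as the paper: apply Proposition~\ref{ae-exist-lip} to reduce to functions with pointwise non-negative integrand, derive $\sum_i\varphi_i(t_i)\le\sum_i t_i$, extract a coordinate-wise gradient bound from the superlinearity $\varphi_i(M_i)>M_i$, and finish by compactness of $\mathcal U_{\lip,L}$ together with Lemma~\ref{uppercontin}. Your bound $L_i=\max\{M_i,\sum_{j\ne i}M_j/(\alpha_i-1)\}$ with $\alpha_i=\varphi_i(M_i)/M_i$ is in fact identical to the paper's $\widehat M_i$ after unwinding notation.

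One small slip: the claim ``$\varphi_i\ge 0$ yields $\varphi_i(t_i)-t_i\ge -M_i$ for every $t_i\ge 0$'' is not justified by non-negativity alone (which only gives $\varphi_i(t_i)-t_i\ge -t_i$). You need a case split: for $t_i\le M_i$ use $\varphi_i\ge 0$ to get $\varphi_i(t_i)-t_i\ge -t_i\ge -M_i$; for $t_i>M_i$ use your own bound $\varphi_i(t_i)\ge\alpha_i t_i>t_i$ to get $\varphi_i(t_i)-t_i>0$. The paper handles this via the concave function $\psi_i(t)=t-\varphi_i(t)$, noting its maximum occurs on $[0,M_i]$ where $\psi_i(t)\le t\le M_i$.
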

\begin{proof}
By Proposition~\ref{ae-exist-lip}, any function $v \in \mathcal{U}^1$ can be replaced with a convex non-decreasing function $u$ with $u(0) = 0$ such that
\begin{align}
    &\Phi(u, \varphi_i) \ge \Phi(v, \varphi_i)\label{eq:phi_u_greater_phi_v}
\intertext{and}
&\langle x, \nabla u(x) \rangle - u(x) - \sum_{i\in \mathcal{I}} \varphi_i( u_{x_i}(x)) \ge 0 \label{eq:nonnegative_revenue}
\end{align}
for almost all  $x$. Moreover, if the function $v$ does not satisfy inequality~\eqref{eq:nonnegative_revenue}, then~\eqref{eq:phi_u_greater_phi_v} is strict. So, it is enough to check the existence of $L$ depending on $\varphi_i$, such that every Lipschitz function $u$ satisfying this inequality
is  $L$-Lipschitz. 

Indeed, since $u(x) \ge 0$, $u_{x_i}(x) \ge 0$ and $x_i \le 1$ for all ${i\in\mathcal{I}}$, 
assumption (\ref{eq:nonnegative_revenue}) implies:
\begin{align}
\begin{split}\label{eq:nonneg_rev_in_b_i}
u_{x_1} + u_{x_2} + \cdots + u_{x_{I}}   - &\sum_{i\in \mathcal{I}} \varphi_i(u_{x_i})\ge \\
\ge \langle x, \nabla u(x) \rangle - u(x) - &\sum_{i \in \mathcal{I}} \varphi_i(u_{x_i}) \ge 0.
\end{split}
\end{align}

For all $i\in \mathcal{I}$, consider function 
$\psi_i(x_i) =   x_i - \varphi_i(x_i)$. This function is concave and $0 = \psi_i(0) > \psi_i(M_i)$. Hence,  $\psi_i$ is decreasing on  $[M_i, +\infty)$
and its maximum is reached on $[0, M_i]$. Note that $\psi_i(x_i) \le x_i$ for all $x_i$, hence 
\[
\max_{x \ge 0} \psi_i(x_i) = \max_{0 \le x_i \le M_i} \psi_i(x_i) \le  M_i.
\]

Inequality (\ref{eq:nonneg_rev_in_b_i}) can be rewritten in the following form: 
\[
\sum_{i\in \mathcal{I}} \psi_i(u_{x_i}(x_i)) \ge 0
\]
for almost all  $x \in {X}$. Hence, for all ${i\in\mathcal{I}}$ and almost all  $x \in {X}$,
\begin{equation}\label{eq:psi_i_ineq}
\psi_i(u_{x_i}(x_i)) \ge -\sum_{j \ne i}\psi_j(u_{x_j}(x_j)) \ge - \sum_{j \ne i} M_j. 
\end{equation}

Concavity of  $\psi_i$ implies that all  $x_i \ge M_i$ satisfy inequality
\[
\frac{\psi_i(x_i) - \psi_i(M_i)}{x_i - M_i} \le \psi'_i(M_i) \le \frac{\psi_i(M_i) - \psi_i(0)}{M_i - 0} \quad \Leftrightarrow \quad \psi_i(x_i) \le \frac{x}{M_i}\psi_i(M_i).
\]
Hence, if
 \[
x_i > \max\left\{M_i, -\frac{\sum_{j \ne i} M_j \cdot M_i}{\psi_i(M_i)}\right\} = \widehat{M}_i,
\]
then  $\psi_i(x_i) < - \sum_{i\in\mathcal{I}} M_i$.

Hence, inequality (\ref{eq:psi_i_ineq}) implies that $u_{x_i}(x) \le \widehat{M}_i$ for almost all $x$. Thus $u$ is  $L$-Lipschitz with  $L = \max\left\{\widehat{M}_i\right\}$. 

It remains to show that $\Phi(u, \varphi_i)$  attains its maximum  on $\mathcal{U}_{{\lip},L}$.

According to estimate  (\ref{mainapest}) we can restrict ourselves to the set of functions  $u \in \mathcal{U}^1$ satisfying
\begin{align*}
\langle x, \nabla u(x) \rangle - u(x) - \sum_{i\in \mathcal{I}} \varphi_i(u_{x_i}(x)) \ge 0 
\end{align*}
for almost all  $x \in {X}$. We showed that all such functions  $u$ belong to $\mathcal{U}_{{\lip}, L}$.
This set is compact in uniform convergence topology and  $\Phi(u, \varphi_i)$ upper semicontinuous on $\mathcal{U}_{{\lip}, L}$. 
Hence, it reaches its maximum on this set.
\end{proof}

To prove complete duality, we will need the following weak form of  partial duality.    
 The goal is to represent  the value in the $\inf\max $ form so that we can  apply the miminax theorem and obtain the $\max \inf$ representation, which is done in  Proposition~\ref{thm:max_is_min_in_Q}.
 The subtlety is that, to apply the minimax theorem, compactness of one of the spaces is required and  so we need to choose carefully a dense minimization subspace $\mathcal{Q}$ in the set of convex one-dimensional functions.  In what follows,
\begin{equation}\label{eq_Q}
 \mathcal{Q} \subset \mathcal{U}_{\mathbb{R}_+}    
\end{equation}
  denotes the set of all increasing, convex functions   $\varphi\colon [0, +\infty)\to\R_+$ that equal zero at the origin and have bounded derivatives. 

\begin{proposition}\label{cor:max=max_inf}
Under the assumptions of Theorem~\ref{duality-theorem-maxinf}, the following partial duality equation holds:
\begin{equation*}
     \max_{u \in \mathcal{U}_{{\lip}, 1}, \nu_i \preceq_{} \eta_i  }  \int \left( \langle x, \nabla u \rangle - u(x) \right) d \mu  = \inf_{\varphi_i \in \mathcal{Q}} \max_{u \in \mathcal{U}^1} \left[\Phi(u, \varphi_i) + \sum_{i\in\mathcal{I}}\int_0^1\varphi_i(x) \,{\dd}\eta_i( x)\right].
\end{equation*}
\end{proposition}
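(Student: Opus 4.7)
My plan is to prove the two inequalities $V_{LHS}\le V_{RHS}$ and $V_{LHS}\ge V_{RHS}$ separately (where $V_{LHS},V_{RHS}$ denote the two sides of the claimed identity), using Theorem~\ref{duality-theorem-maxinf} to re-express $V_{LHS}$ in inf--max form over $\mathcal{U}_{\eta_i,+\infty}\times\mathcal{U}_{{\lip},1}$, and Proposition~\ref{prop_Lipschitz_bound} (the main a priori estimate) to control the Lipschitz norms of maximizers after the inner maximization set is enlarged from $\mathcal{U}_{{\lip},1}$ to $\mathcal{U}^1$.

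For the easy direction $V_{LHS}\le V_{RHS}$, I fix any $\varphi_i\in\mathcal{Q}$. For every $u\in\mathcal{U}_{{\lip},1}$ with $\nu_i\preceq\eta_i$, convexity and monotonicity of $\varphi_i$ combined with the majorization yield $\int\varphi_i(u_{x_i})\,d\mu\le\int_0^1\varphi_i\,d\eta_i$; adding this to $\Phi(u,\varphi_i)$ and using $\mathcal{U}_{{\lip},1}\subset\mathcal{U}^1$,
\[
\int(\langle x,\nabla u\rangle-u)\,d\mu\le\Phi(u,\varphi_i)+\sum_{i\in\mathcal{I}}\int_0^1\varphi_i\,d\eta_i\le\sup_{u'\in\mathcal{U}^1}\Phi(u',\varphi_i)+\sum_{i\in\mathcal{I}}\int_0^1\varphi_i\,d\eta_i.
\]
Taking the max over admissible $u$ on the left and the infimum over $\varphi_i\in\mathcal{Q}$ on the right gives the inequality. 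For $\varphi_i\in\mathcal{Q}$ satisfying $\varphi_i(M)>M$ for some $M$, Proposition~\ref{prop_Lipschitz_bound} turns the $\sup$ into a $\max$; the remaining $\varphi_i\in\mathcal{Q}$ make the right-hand side equal to $+\infty$ and do not affect the infimum.

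For the harder direction $V_{LHS}\ge V_{RHS}$, I invoke Theorem~\ref{duality-theorem-maxinf} to produce minimizers $\varphi_i^*\in\mathcal{U}_{\eta_i,+\infty}$ with $V_{LHS}=\max_{u\in\mathcal{U}_{{\lip},1}}\Phi(u,\varphi_i^*)+\sum_i\int_0^1\varphi_i^*\,d\eta_i$. Since $\varphi_i^*(x)=+\infty$ for $x>1$, they are not in $\mathcal{Q}$, so I approximate them by
\[
\varphi_i^{(n)}(x)=\begin{cases}\varphi_i^*(x),&0\le x\le 1-\delta_n,\\ \varphi_i^*(1-\delta_n)+s_n\bigl(x-(1-\delta_n)\bigr),& x>1-\delta_n,\end{cases}
\]
with $\delta_n\downarrow 0$ and $s_n=\max\{n,\,\partial^+\varphi_i^*(1-\delta_n)\}$ (finite by convexity of $\varphi_i^*$ on $[0,1)$). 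Each $\varphi_i^{(n)}$ is convex, non-decreasing, vanishes at $0$, and has derivative bounded by $s_n$, so $\varphi_i^{(n)}\in\mathcal{Q}$. For $M=2$ we eventually have $\varphi_i^{(n)}(2)\ge s_n\to\infty$, so Proposition~\ref{prop_Lipschitz_bound} yields a maximizer $u^{(n)}\in\mathcal{U}_{{\lip},L}$ of $\Phi(\cdot,\varphi_i^{(n)})$ over $\mathcal{U}^1$ with $L$ independent of $n$ (the explicit estimate there gives $L\le 2$ for $n$ large). Compactness of $\mathcal{U}_{{\lip},L}$ in the uniform topology lets me extract a subsequence with $u^{(n)}\to u^\infty$ uniformly and $\nabla u^{(n)}\to\nabla u^\infty$ a.e.\ (as in Lemma~\ref{lm_Rochet_continuity}). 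The bound $\sum_i\int\varphi_i^{(n)}(u^{(n)}_{x_i})\,d\mu\le\int(\langle x,\nabla u^{(n)}\rangle-u^{(n)})\,d\mu\le IL$ is uniform in $n$, so Fatou's lemma combined with $\varphi_i^{(n)}(t)\to+\infty$ for any fixed $t>1$ forces $u^\infty_{x_i}\le 1$ a.e., i.e., $u^\infty\in\mathcal{U}_{{\lip},1}$; the same Fatou estimate together with dominated convergence for the $\int(\langle x,\nabla u^{(n)}\rangle-u^{(n)})\,d\mu$ term gives $\limsup_n\Phi(u^{(n)},\varphi_i^{(n)})\le\Phi(u^\infty,\varphi_i^*)$. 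Calibrating $\delta_n$ and $s_n$ so that $\int\varphi_i^{(n)}\,d\eta_i\to\int\varphi_i^*\,d\eta_i$ then yields
\[
\limsup_n\Bigl[\max_{u\in\mathcal{U}^1}\Phi(u,\varphi_i^{(n)})+\sum_i\int\varphi_i^{(n)}\,d\eta_i\Bigr]\le\Phi(u^\infty,\varphi_i^*)+\sum_i\int\varphi_i^*\,d\eta_i\le V_{LHS},
\]
so that $V_{RHS}\le V_{LHS}$.

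The main obstacle will be the simultaneous calibration of $\delta_n$ and $s_n$: $s_n$ must tend to infinity so that the Fatou argument forces $u^\infty\in\mathcal{U}_{{\lip},1}$, yet $\delta_n$ must shrink fast enough (in particular $s_n\delta_n\,\eta_i((1-\delta_n,1])\to 0$) so that the $\eta_i$-integrals also converge --- especially in the corner cases where $\varphi_i^*$ has unbounded slope or jumps to $+\infty$ at $x=1$, or $\eta_i$ carries mass very close to (or on) $1$; controlling these requires the integrability $\int\varphi_i^*\,d\eta_i<\infty$ together with a monotone-convergence argument on $[0,1-\delta_n]$ and a Markov-type estimate on $(1-\delta_n,1]$.
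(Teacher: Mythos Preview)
Your approach is essentially the paper's: invoke Theorem~\ref{duality-theorem-maxinf} to obtain optimal $\varphi_i^*\in\mathcal{U}_{\eta_i,+\infty}$, approximate them from within $\mathcal{Q}$ by linear extension beyond a truncation point, use Proposition~\ref{prop_Lipschitz_bound} to keep the maximizers $u^{(n)}$ in a fixed $\mathcal{U}_{{\lip},L}$, extract a uniform limit, and pass the Fatou bound through. The logic and the ingredients are the same, and your outline is correct.

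The one substantive difference is the choice of truncation, and it is worth noting because it is precisely what eliminates the ``main obstacle'' you flag. The paper (Lemma~\ref{prop:approx_U_with_Q}) fixes the \emph{slope} first: it picks $t_{n,i}\in[0,1]$ with $n\in\partial\varphi_i^*(t_{n,i})$ and extends linearly with slope $n$ beyond $t_{n,i}$. This guarantees $\varphi_i^{(n)}\le\varphi_i^*$ on $[0,1]$ and that $\{\varphi_i^{(n)}\}_n$ is increasing, so $\int_0^1\varphi_i^{(n)}\,d\eta_i\to\int_0^1\varphi_i^*\,d\eta_i$ by monotone convergence --- no calibration of $\delta_n$ versus $s_n$ is needed, and no case split on whether $\varphi_i^*$ has bounded slope or whether $\eta_i$ charges $1$. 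It also yields the slightly sharper Lipschitz bound $u^{(n)}\in\mathcal{U}_{{\lip},1+4I/n}$, which makes the passage $u^\infty\in\mathcal{U}_{{\lip},1}$ immediate. Your scheme (fix $1-\delta_n$ first, then set $s_n=\max\{n,\partial^+\varphi_i^*(1-\delta_n)\}$) can be made to work --- the convexity inequality $\partial^+\varphi_i^*(1-\delta)\int_{(1-\delta,1]}(x-(1-\delta))\,d\eta_i\le\int_{(1-\delta,1]}\varphi_i^*\,d\eta_i\to 0$ is exactly the Markov-type bound you allude to --- but it requires the extra case analysis you anticipate and a separate choice of $\delta_{n,i}$ for each $i$. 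Switching to the slope-first truncation removes all of that.
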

The key part of the proof is the following lemma.
\begin{lemma}\label{prop:approx_U_with_Q}
For any family of functions $(\varphi_i)_{i\in\mathcal{I}} \subset \mathcal{U}_{[0, 1]}^{+\infty}$ there exist increasing sequences of functions $\{\varphi_i^{(n)}\}_n \subset \mathcal{Q}$, ${i\in\mathcal{I}}$,  such that each sequence $\{\varphi_i^{(n)}\}_n$ converges to $\varphi_i$ pointwise on $[0, 1]$ and that
\[
\max_{u \in \mathcal{U}_{{\lip}, 1}}\Phi(u, \varphi_i) = \lim_{n \to \infty}\max_{u \in \mathcal{U}^1}\Phi\left(u, \varphi_i^{(n)}\right).
\]
\end{lemma}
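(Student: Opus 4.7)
The plan is to define the approximations by Moreau--Yosida regularization:
\[
\varphi_i^{(n)}(x)\;=\;\inf_{y\ge 0}\bigl\{\varphi_i(y)+n\,|x-y|\bigr\}.
\]
Standard properties of infimal convolution combined with the monotonicity of $\varphi_i$ yield that each $\varphi_i^{(n)}$ is convex, non-decreasing, $n$-Lipschitz, satisfies $\varphi_i^{(n)}(0)=0$, lies below $\varphi_i$, and forms a non-decreasing sequence in $n$; lower semicontinuity of $\varphi_i$ gives $\varphi_i^{(n)}\uparrow\varphi_i$ pointwise on $\R_+$, and in particular on $[0,1]$. Taking $y=1$ in the infimum yields the crucial linear lower bound $\varphi_i^{(n)}(x)\ge n(x-1)$ on $[1,\infty)$, so in particular $\varphi_i^{(n)}(2)\ge n$; this places $\varphi_i^{(n)}$ in $\mathcal{U}_{\R_+}$ for $n>2$ and hence in $\mathcal{Q}$, and it is precisely this built-in growth at infinity that drives the uniform Lipschitz control needed below.

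The easy half of the asserted identity follows from $\varphi_i^{(n)}\le\varphi_i$: this implies $\Phi(u,\varphi_i^{(n)})\ge\Phi(u,\varphi_i)$ for every $u\in\mathcal{U}^1$, so plugging in a maximizer of $\Phi(\cdot,\varphi_i)$ over $\mathcal{U}_{\lip,1}\subset\mathcal{U}^1$ gives
\[
\max_{u\in\mathcal{U}^1}\Phi(u,\varphi_i^{(n)})\;\ge\;\max_{u\in\mathcal{U}_{\lip,1}}\Phi(u,\varphi_i)\;\ge\;0.
\]
For the reverse direction, the first task is a uniform Lipschitz bound on the maximizers $u^{(n)}$ of $\Phi(\cdot,\varphi_i^{(n)})$ over $\mathcal{U}^1$. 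Proposition~\ref{prop_Lipschitz_bound} applied with $M_i=2$ for every $i\in\mathcal{I}$ gives a Lipschitz constant bounded by $\max\bigl\{2,\,4(|\mathcal{I}|-1)/(\varphi_i^{(n)}(2)-2)\bigr\}$, and the lower bound $\varphi_i^{(n)}(2)\ge n$ collapses this to $2$ for all large $n$. Thus $u^{(n)}\in\mathcal{U}_{\lip,2}$ eventually, and compactness of $\mathcal{U}_{\lip,2}$ in the uniform topology supplies a subsequence $u^{(n_k)}\to u^{*}$.

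The limit is then taken via a diagonal argument exploiting $\varphi_i^{(n)}\ge\varphi_i^{(m)}$ for $n\ge m$. Fixing $m$, one has $\Phi(u^{(n)},\varphi_i^{(n)})\le\Phi(u^{(n)},\varphi_i^{(m)})$ once $n\ge m$, and since $\varphi_i^{(m)}$ is finite-valued and continuous, Lemma~\ref{uppercontin} makes $\Phi(\cdot,\varphi_i^{(m)})$ continuous on $\mathcal{U}_{\lip,2}$, so passing to the limit along the convergent subsequence yields $\limsup_{k}\Phi(u^{(n_k)},\varphi_i^{(n_k)})\le\Phi(u^{*},\varphi_i^{(m)})$. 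Sending $m\to\infty$ and invoking Beppo Levi's monotone convergence theorem on the non-negative integrands $\varphi_i^{(m)}(u^{*}_{x_i})\uparrow\varphi_i(u^{*}_{x_i})$ gives $\Phi(u^{*},\varphi_i^{(m)})\downarrow\Phi(u^{*},\varphi_i)$. The easy direction has already pinned the limit of the left-hand side above $0$, which forces $\Phi(u^{*},\varphi_i)$ to be finite, hence $u^{*}_{x_i}\le 1$ almost everywhere and $u^{*}\in\mathcal{U}_{\lip,1}$; the chain $\lim_n\max_{\mathcal{U}^1}\Phi(\cdot,\varphi_i^{(n)})\le\Phi(u^{*},\varphi_i)\le\max_{\mathcal{U}_{\lip,1}}\Phi(\cdot,\varphi_i)$ closes the argument. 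The main obstacle is precisely the uniform Lipschitz bound: a naive truncation of $\varphi_i$ near the point where it blows up need not grow fast enough at infinity to feed Proposition~\ref{prop_Lipschitz_bound} with a constant independent of $n$, whereas the Moreau--Yosida construction, thanks to its guaranteed slope $n$ outside the effective domain of $\varphi_i$, delivers this uniformity for free.
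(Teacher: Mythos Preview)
Your proof is correct, and your Moreau--Yosida construction in fact coincides with the paper's piecewise formula: for a non-decreasing convex $\varphi_i$, the infimal convolution $\inf_{y\ge 0}\{\varphi_i(y)+n|x-y|\}$ agrees with $\varphi_i$ on $[0,t_{n,i}]$ (where $n\in\partial\varphi_i(t_{n,i})$) and continues linearly with slope $n$ thereafter. Where the two proofs genuinely diverge is in the limit step. The paper chooses $M_i=1+4I/n$ in Proposition~\ref{prop_Lipschitz_bound} to obtain $u^{(n)}\in\mathcal{U}_{\lip,1+4I/n}$, so that $u^*\in\mathcal{U}_{\lip,1}$ falls out immediately, and then verifies $\liminf_n\varphi_i^{(n)}\bigl(u^{(n)}_{x_i}\bigr)\ge\varphi_i\bigl(u^*_{x_i}\bigr)$ pointwise through a careful case analysis with Fatou's lemma. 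Your diagonal argument --- freeze $m$, use $\varphi_i^{(n)}\ge\varphi_i^{(m)}$ together with continuity of $\Phi(\cdot,\varphi_i^{(m)})$ on $\mathcal{U}_{\lip,2}$ from Lemma~\ref{uppercontin}, then send $m\to\infty$ by monotone convergence --- is more streamlined and sidesteps that case analysis entirely; the cruder bound $u^{(n)}\in\mathcal{U}_{\lip,2}$ costs nothing since you recover $u^*\in\mathcal{U}_{\lip,1}$ from the observation that $\Phi(u^*,\varphi_i)\ge 0$ rules out $u^*_{x_i}>1$ on a set of positive measure.

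Two small slips worth noting. First, ``taking $y=1$ in the infimum'' gives an \emph{upper} bound, not a lower one; the lower bound $\varphi_i^{(n)}(x)\ge n(x-1)$ for $x\ge 1$ holds because every $y\le 1$ contributes at least $0+n(x-1)$ and every $y>1$ contributes $+\infty$. Second, you pass from the subsequential $\limsup$ to the full-sequence limit without comment; this is justified because $n\mapsto\max_{\mathcal{U}^1}\Phi(\cdot,\varphi_i^{(n)})$ is non-increasing (as $\varphi_i^{(n)}\uparrow$). Finally, your closing remark about ``naive truncation'' is off the mark: the paper's truncation \emph{is} your regularization.
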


\begin{proof}[Proof of Lemma~\ref{prop:approx_U_with_Q}]
For every $n$, denote by $t_{n, i}$ such a point on the interval $[0, 1]$ that $n \in \partial \varphi_i(t_{n, i})$. Such a point exists since $(-\infty, +\infty) = \cup_{t \in [0, 1]} \partial \varphi_i(t)$. Denote by $\varphi_i^{(n)}$ the following function:
\[
\varphi_i^{(n)}(t) = \begin{cases}
\varphi_i(t), &\text{if } t \in [0, t_{n, i}],\\
\varphi_i(t_{n, i}) + n(t - t_{n, i}), &\text{otherwise.}
\end{cases}
\]

The function $\varphi_i^{(n)}$ is convex; therefore, $\varphi_i^{(n)} \in \mathcal{Q}$. Besides, $\varphi_i^{(n)}(x) \le \varphi_i(x)$ for $x \in [0, 1]$, and $\varphi_i^{(n)}$ coincides with $\varphi_i$ on the interval $[0, t_n]$. For each $i$, the sequence of points $\{t_{n, i}\}_n$ is monotonically increasing and converges to $1$; therefore, each sequence $\{\varphi_i^{(n)}\}_n$ is increasing and converges to $\varphi_i$ pointwise on $[0, 1)$. 
Finally, since pointwise supremum of lower semicontinuous functions is lower semicontinuous, we conclude that
\[
\lim_{n \to \infty}\varphi_i^{(n)}(1) = \lim_{t \to 1}\lim_{n \to \infty}\varphi_i^{(n)}(t) = \lim_{t \to 1}\varphi_i(t) = \varphi_i(1).
\]

Consider any function $v \in \mathcal{U}_{{\lip}, 1}$. For each ${i\in\mathcal{I}}$, the sequence of non-negative functions $\{\varphi_i^{(n)}(v_{x_i}(x))\}_n$ is monotonically increasing and converges to $\varphi_i(v_{x_i}(x))$ pointwise almost everywhere; therefore, by the monotone convergence theorem
\[
\lim_{n \to \infty}\int \varphi_i^{(n)}(v_{x_i})\,{\dd}\mu = \int \varphi_i(v_{x_i})\,{\dd}\mu.
\]
So,
\[
\Phi(v, \varphi_i)  = \lim_{n \to \infty}\Phi\left(v, \varphi_i^{(n)}\right) \le \lim_{n \to \infty}\max_{u \in \mathcal{U}^1}\Phi\left(u, \varphi_i^{(n)}\right).
\]
Since the last inequality holds for all $v \in \mathcal{U}_{{\lip}, 1}$, we conclude that
\begin{equation}\label{eq:max_le_lim_max}
    \max_{u \in \mathcal{U}_{{\lip}, 1}}\Phi(u, \varphi_i) \le \lim_{n \to \infty}\max_{u \in \mathcal{U}^1}\Phi\left(u, \varphi_i^{(n)}\right).
\end{equation}

Let $u^{(n)}$ be a maximizer of the functional $\Phi(\cdot,\, \varphi_i^{(n)})$. For every $n \ge 4{I}$, we  have $\varphi_i^{(n)}\left(1 + \frac{4{I}}{n}\right) \ge 4{I}$ and $1 + \frac{4{I}}{n} \le 2$. So, if we denote by $M_i$ the number $1 + \frac{4{I}}{n}$, by Proposition~\ref{prop_Lipschitz_bound}, we have
\[
u_{x_i} \le \max\left\{M_i,\, \frac{({I} - 1) \cdot M_i^2}{\varphi_i(M_i) - M_i}\right\} \le \max\left\{1 + \frac{4{I}}{n},\, \frac{4({I} - 1)}{4{I} - 2}\right\} = 1 + \frac{4{I}}{n}.
\]
So, $u^{(n)} \in \mathcal{U}_{{\lip}, 1 + 4{I} / n} \subset \mathcal{U}_{{\lip}, 2}$ for all $n \ge 4{I}$.

Passing to a subsequence, one can assume that $u^{(n)} \to u^{(0)}$ uniformly. Since $u_{x_i}^{(n)}(x) \to u_{x_i}^{(0)}(x)$ pointwise for almost all $x$, and $u_{x_i}^{(n)} \le 1 + \frac{4{I}}{n} \to 1$, we conclude that $\overline{u}^{(0)} \in \mathcal{U}_{{\lip}, 1}$. By the dominated convergence theorem,

\begin{equation}\label{eq:lim_int_un_is_int_u}
\lim_{n \to \infty} \int \left[\langle x, \nabla u^{(n)}(x) \rangle - u^{(n)}(x)\right]\,{\dd}\mu \to \int \left[\langle x, \nabla u^{(0)}(x) \rangle - u^{(0)}(x)\right]\,{\dd}\mu.
\end{equation}

We claim that for almost all $x \in {X}$ and for every ${i\in\mathcal{I}}$ we have
\[
\liminf_{n \to \infty}\varphi_i^{(n)}(u^{(n)}_{x_i}(x)) \ge \varphi_i(u^{(0)}_{x_i}(x)).
\]
Let $t_i = \sup\{t \colon \varphi_i(t) \le 2{I} + 1\}$. First, since every $\varphi_i^{(n)}$ is a non-decreasing function, \[
\varphi_i^{(n)}(u^{(n)}_{x_i}(x)) \ge \varphi_i^{(n)}\left(\min(u^{(n)}_{x_i}(x), t_i)\right).
\]

Next, let us check that
\[
\liminf_{n \to \infty} \varphi_i^{(n)}\left(\min(u^{(n)}_{x_i}(x), t_i)\right) \ge \liminf_{n \to \infty} \varphi_i\left(\min(u^{(n)}_{x_i}(x), t_i)\right).
\]
Indeed, if $t_i < 1$, then $t_{n, i} \ge t_i$ for all large enough $n$. Therefore, \[\varphi_i^{(n)}\left(\min(u^{(n)}_{x_i}(x), t_i)\right) =  \varphi_i\left(\min(u^{(n)}_{x_i}(x), t_i)\right)
\] for all large enough $n$. Otherwise, suppose that $t_i = 1$. Then $\varphi_i(1) \le 2{I} + 1 < +\infty$; therefore, by the lower semicontinuity of $\varphi_i$ for any $\varepsilon > 0$ there exists a point $p_i < 1$ such that $\varphi_i(p_i) \ge \varphi_i(1) - \varepsilon$. Then for all $n$ such that $t_{n, i} \ge p_i$ the inequality $\varphi_i^{(n)}(x) - \varphi_i(x) \ge -\varepsilon$ holds for all $x \in [0, 1]$. Indeed, if $x \le t_{n, i}$, then $\varphi_i^{(n)}(x) = \varphi_i(x)$. Otherwise, $x \ge t_{n, i} \ge p_i$; therefore,
\[
\varphi_i^{(n)}(x) \ge \varphi_i^{(n)}(p_i) = \varphi_i(p_i) \ge \varphi_i(x) - \varepsilon.
\]
Thus, in the case $t_i = 1$, the inequality
\[
\liminf_{n \to \infty} \varphi_i^{(n)}\left(\min(u^{(n)}_{x_i}(x), t_i)\right) \ge \liminf_{n \to \infty} \varphi_i\left(\min(u^{(n)}_{x_i}(x), t_i)\right) - \varepsilon
\]
holds for all $\varepsilon > 0$. Letting $\varepsilon$ tend to 0, we obtain the desired one.

Finally, we check that $u^{(0)}_{x_i}(x) = \lim_{n \to \infty} u_{x_i}^{(n)}(x) \le t_i$. If $t_i = 1$, the inequality holds since $u^{(0)} \in \mathcal{U}_{{\lip}, 1}$. Suppose that $t_i < 1$ and $u^{(0)}_{x_i}(x) > t_i$. Then for all large enough $n$ we have $u^{(n)}_{x_i}(x) \ge t_i$ and $t_{n, i} \ge t_i$. In this case,
\[
\varphi_i^{(n)}(u^{(n)}_{x_i}(x)) \ge \varphi_i^{(n)}(t_i) = \varphi_i(t_i) = 2{I} + 1.
\]
On the other hand, by Proposition~\ref{ae-exist-lip}, the inequality 
\[
\varphi_i^{(n)}(u^{(n)}_{x_i}(x)) \le \langle x, \nabla u^{(n)}(x)\rangle
\]
holds for almost all $x \in {X}$. For all $n \ge 4{I}$ we have $u^{(n)} \in \mathcal{U}_{{\lip}, 2}$; therefore, for almost all $x$ we have
\[
\varphi_i^{(n)}(u^{(n)}_{x_i}(x)) \le 2{I},
\]
which contradicts the previous inequality.

Thus
\begin{align*}
\liminf_{n \to \infty} \varphi_i^{(n)}\left(u^{(n)}_{x_i}(x)\right) &\ge \liminf_{n \to \infty} \varphi_i^{(n)}\left(\min(u^{(n)}_{x_i}(x), t_i)\right) \ge \\ 
&\ge \liminf_{n \to \infty} \varphi_i\left(\min(u^{(n)}_{x_i}(x), t_i)\right) =\\
&=\varphi_i\left(\liminf_{n \to \infty}\min(u^{(n)}_{x_i}(x), t_i)\right) = \varphi_i\left(u_{x_i}^{(0)}(x)\right).
\end{align*}
Therefore, it follows from Fatou's lemma that
\[
\liminf_{n \to \infty}\int \varphi_i^{(n)}\left(u^{(n)}_{x_i}(x)\right)\,{\dd}\mu \ge \int \liminf_{n \to \infty}\varphi_i^{(n)}\left(u^{(n)}_{x_i}(x)\right)\,{\dd}\mu \ge \int \varphi_i\left(u^{(0)}_{x_i}(x)\right)\,{\dd}\mu.
\]
So, combining it with~\eqref{eq:lim_int_un_is_int_u} we conclude that
\[
\Phi\left(u^{(0)}, \varphi_i\right) \ge \limsup_{n \to \infty}\Phi\left(u^{(n)}, \varphi_i^{(n)}\right) = \lim_{n \to \infty}\max_{u \in \mathcal{U}^1} \Phi\left(u, \varphi_i^{(n)}\right).
\]
Comparing it to~\eqref{eq:max_le_lim_max}, we conclude that the equality holds and this completes the proof of the statement.
\end{proof}
\begin{proof}[Proof of Proposition~\ref{cor:max=max_inf}]
 By the standard argument,
 \[
 \max_{u \in \mathcal{U}_{{\lip}, 1}, \nu_i \preceq_{} \eta_i  }  \int \left( \langle x, \nabla u \rangle - u(x) \right) d \mu \le \inf_{\varphi_i \in \mathcal{Q}} \max_{u \in \mathcal{U}^1} \left[\Phi(u, \varphi_i) + \sum_{i\in\mathcal{I}}\int_0^1\varphi_i(x) \,{\dd}\eta_i( x)\right].
 \]
 
 Let $(\varphi_i)_{i\in\mathcal{I}} \subset \mathcal{U}_{\eta_i, +\infty}$ be family of functions on which the minimum in the right-hand side of~\eqref{eq:max=min_max} is reached. By Lemma~\ref{prop:approx_U_with_Q}, there exist increasing sequences $\{\varphi_i^{(n)}\}_n \subset \mathcal{Q}$, $1 \le i \le {I}$, such that the sequence $\{\varphi_i^{(n)}\}_n$ converges to $\varphi_i$ pointwise on $[0, 1]$ and that
\[
\max_{u \in \mathcal{U}_{{\lip}, 1}}\Phi(u, \varphi_i) = \lim_{n \to \infty}\max_{u \in \mathcal{U}^1}\Phi\left(u, \varphi_i^{(n)}\right).
\]
By the monotone convergence theorem, we additionally have $\lim_{n \to \infty}\int \varphi_i^{(n)}\,{\dd}\eta_i = \int \varphi_i\,{\dd}\eta_i$. Thus
\begin{align*}
\max_{u \in \mathcal{U}_{{\lip}, 1}, \nu_i \preceq_{} \eta_i  }  &\int \left( \langle x, \nabla u \rangle - u(x) \right) d \mu= \\
&= \max_{u \in \mathcal{U}_{{\lip}, 1}} \left[\Phi(u, \varphi_i) + \sum_{i\in\mathcal{I}}\int_0^1\varphi_i(x) \,{\dd}\eta_i( x)\right]= \\
&= \lim_{n \to \infty}\max_{u \in \mathcal{U}_{{\lip}, 1}} \left[\Phi(u, \varphi_i^{(n)}) + \sum_{i\in\mathcal{I}}\int_0^1\varphi_i^{(n)}(x) \,{\dd}\eta_i( x)\right].
\end{align*}
\end{proof}

\subsection{Complete duality} 

Relying on the partial duality established in Sections~\ref{sec_adversary}
and the a priori estimate from Section~\ref{sec_apriori}, we are ready to prove complete duality for
the monopolist's problem with general majorization~\eqref{eq_Rochet_Chone_general_majorization}
extending Theorems~\ref{th_vector_fields_inf}
and~\ref{th_vector_fields_min}.

We will rely on notation introduced in Section~\ref{subsec_partial_dual_proof}. Denote by
   $
  \mathcal{C}
 $
 the set of smooth nonnegative (coordinate-wise) vector fields  $c =(c_i)_{i\in \mathcal{I}}$ such that
\begin{equation}\label{eq_C_definition}
\int (\langle x, \nabla u(x) \rangle - u(x)){\dd}\mu \le \int \langle c(x), \nabla u(x) \rangle {\dd}\mu
\end{equation}
 for all  $u \in \mathcal{U}^1$. The condition above is equivalent to the majorization constraint $-{\div}_{\rho}[c] \succeq m$, where ${\dd}\mu(x) = \rho(x)\,{\dd} x$ and $m$ is the transform measure. Note, in particular, that~$x \in \mathcal{C}$.
 \begin{theorem}\label{th_vector_fields_inf_appendix}
Let $\eta_i$ be probability measures on $[0,1]$ such that $\eta_i([t, 1]) > 0$ for any~$t>0$. Then the value\footnote{Recall that this value is defined by $\max_{u \in \mathcal{U}_{{\lip}, 1}, \nu_i \preceq_{} \eta_i  }  \int \left( \langle x, \nabla u \rangle - u(x) \right) d \mu$.} of the Rochet-Chon\'e problem with general majorization~(\ref{eq_Rochet_Chone_general_majorization}) is equal to
\begin{equation}\label{eq_duality_inf_appendix}
 \inf_{\varphi_i \in \mathcal{Q},\,c \in \mathcal{C}}\left(\sum_{i\in\mathcal{I}}\int \varphi_i^*(c_i)\,{\dd}\mu + \sum_{i\in\mathcal{I}}\int_0^1\varphi_i(x)\,{\dd}\eta_i( x)\right).
\end{equation}
\end{theorem}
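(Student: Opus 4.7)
My starting point is the partial duality of Proposition~\ref{cor:max=max_inf}, which represents the LHS of the theorem as $\inf_{\varphi_i \in \mathcal{Q}} \max_{u \in \mathcal{U}^1} \bigl[\Phi(u, \varphi_i) + \sum_i \int_0^1 \varphi_i\, {\dd}\eta_i\bigr]$. It therefore suffices to establish, for each fixed tuple $(\varphi_i)_{i \in \mathcal{I}}$ in $\mathcal{Q}$, the identity $\max_{u \in \mathcal{U}^1} \Phi(u, \varphi_i) = \inf_{c \in \mathcal{C}} \sum_i \int \varphi_i^*(c_i)\, {\dd}\mu$. The weak direction ($\leq$) is routine: for any $u \in \mathcal{U}^1$ and $c \in \mathcal{C}$, the Fenchel inequality gives $\langle c(x), \nabla u(x)\rangle \leq \sum_i \bigl[\varphi_i^*(c_i(x)) + \varphi_i(u_{x_i}(x))\bigr]$ (noting that $\nabla u \ge 0$ since $u$ is non-decreasing); integrating against $\rho$ and combining with the defining inequality of $\mathcal{C}$ yields $\Phi(u, \varphi_i) \leq \sum_i \int \varphi_i^*(c_i)\, {\dd}\mu$.

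The strong direction rests on the a priori regularity bound of Proposition~\ref{prop_Lipschitz_bound}, which guarantees that the maximum of $\Phi(\cdot, \varphi_i)$ over $\mathcal{U}^1$ is attained on the compact convex set $\mathcal{U}_{\lip, L}$ for some $L = L(\varphi_i)$. Combined with the upper semicontinuity from Lemma~\ref{uppercontin}, this delivers a maximizer $u^{\opt}$. Guided by the expected complementary slackness relation~\eqref{eq_vector_field_slackness}, I will construct a dual candidate by setting $c_i(x) = \varphi_{i,+}'(u^{\opt}_{x_i}(x))$, a measurable selection from $\partial \varphi_i(u^{\opt}_{x_i}(x))$; since each $\varphi_i$ is non-decreasing, $c_i \geq 0$. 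To verify the majorization constraint defining $\mathcal{C}$, I exploit that $\mathcal{U}^1$ is a convex cone, so $u^{\opt} + t v \in \mathcal{U}^1$ for every $v \in \mathcal{U}^1$ and $t \ge 0$; the right-derivative of $t \mapsto \Phi(u^{\opt} + tv, \varphi_i)$ at $t = 0$ is therefore non-positive. A dominated-convergence computation (using boundedness of $\varphi_i'$ on the range of $u^{\opt}_{x_i}$) converts this into the inequality $\int \bigl[\langle x, \nabla v\rangle - v\bigr]\, {\dd}\mu \leq \int \langle c, \nabla v\rangle\, {\dd}\mu$ for every $v \in \mathcal{U}^1$. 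Specializing to $v = u^{\opt}$ and using that both $(1+t) u^{\opt}$ and $(1-t) u^{\opt}$ lie in $\mathcal{U}^1$ for small $t \ge 0$ upgrades the inequality to equality at $v = u^{\opt}$. Combined with the tight Fenchel identity $\varphi_i^*(c_i) + \varphi_i(u^{\opt}_{x_i}) = c_i\, u^{\opt}_{x_i}$, integration yields $\sum_i \int \varphi_i^*(c_i)\, {\dd}\mu = \Phi(u^{\opt}, \varphi_i) = \max_u \Phi(u, \varphi_i)$, closing the duality gap.

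The main remaining obstacle, which I expect to be the most technical step, is that the candidate $c$ is only measurable, whereas $\mathcal{C}$ requires smooth vector fields. I will produce a smooth approximating sequence $c^{(n)} \in \mathcal{C}$ by combining mollification of $c$ with the addition of a vanishing multiple of the identity field $x \mapsto x$, which itself lies in $\mathcal{C}$ (as verified in the paragraph following Theorem~\ref{th_vector_fields_inf}) and provides the slack needed to restore the majorization inequality after mollification near $\partial X$. Since the theorem requires only an infimum, the convergence $\sum_i \int \varphi_i^*(c_i^{(n)})\, {\dd}\mu \to \sum_i \int \varphi_i^*(c_i)\, {\dd}\mu$, which follows from dominated convergence given the uniform bound on the $c_i$, is enough to conclude. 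Assembling the three steps and passing the resulting pointwise equality through the outer infimum over $\varphi_i \in \mathcal{Q}$ from Proposition~\ref{cor:max=max_inf} completes the proof, and as a byproduct shows that the optimal vector fields can indeed be taken coordinate-wise non-negative.
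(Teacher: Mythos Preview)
Your strategy---invoke Proposition~\ref{cor:max=max_inf}, then for each fixed $(\varphi_i)$ establish $\max_u \Phi(u,\varphi_i) = \inf_{c \in \mathcal{C}} \sum_i \int \varphi_i^*(c_i)\,{\dd}\mu$ by building the dual candidate $c_i\in\partial\varphi_i(u^{\opt}_{x_i})$---is close in spirit to the paper's, but there is a genuine gap at the step where you claim equality at $v=u^{\opt}$. The two-sided scaling $(1\pm t)u^{\opt}$ does \emph{not} deliver equality when some $\varphi_i$ has a kink: the right derivative of $s\mapsto\Phi(s\,u^{\opt},\varphi_i)$ at $s=1$ involves $\varphi_{i,+}'(u^{\opt}_{x_i})$, while the left derivative involves $\varphi_{i,-}'(u^{\opt}_{x_i})$, so the two first-order conditions only sandwich $\int[\langle x,\nabla u^{\opt}\rangle-u^{\opt}]\,{\dd}\mu$ between $\int\langle c^-,\nabla u^{\opt}\rangle\,{\dd}\mu$ and $\int\langle c^+,\nabla u^{\opt}\rangle\,{\dd}\mu$. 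If $u^{\opt}_{x_i}$ lands on a kink of $\varphi_i$ on a set of positive $\mu$-measure, your choice $c=c^+$ lies in $\overline{\mathcal{C}}$ but can give $\sum_i\int\varphi_i^*(c_i)\,{\dd}\mu$ strictly above $\Phi(u^{\opt},\varphi_i)$, so the gap does not close. The paper avoids this by first reducing $\mathcal{Q}$ to the dense subclass $\mathcal{Q}_{\mathrm{st}}$ of strictly convex $C^1$ functions (the mollification argument around~\eqref{eq:inf_Q_is_inf_Q_st}), where $\varphi_{i,+}'=\varphi_{i,-}'$; separately, Proposition~\ref{thm:max_is_min_in_Q} already obtains $\max_u=\min_{c\in\overline{\mathcal{C}}}$ for \emph{all} $\varphi_i\in\mathcal{Q}$ via a minimax argument (weak-$*$ compactness of the ball $B(\varphi_i)$ in $L^\infty$), without relying on first-order conditions at all.

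Your smoothing step (mollification plus $\varepsilon x$) also differs from the paper and is underspecified as written: the constraint defining $\mathcal{C}$ must hold for every $u\in\mathcal{U}^1$, where $\nabla u$ is merely in $L^1(\mu)$, and it is not clear that the slack $\varepsilon\int u\,{\dd}\mu$ coming from the $\varepsilon x$ term absorbs the mollification error uniformly over this class, especially near $\partial X$. The paper's route (Proposition~\ref{prop:max_is_inf_over_smooth}) is cleaner but again depends on the restriction to $\mathcal{Q}_{\mathrm{st}}$: continuity of $\varphi_i'$ makes $c_i=\varphi_i'(u^{\opt}_{x_i})$ continuous at every point where $u^{\opt}$ is differentiable, hence a.e.; its upper-semicontinuous envelope then coincides with it a.e.\ and is the decreasing pointwise limit of smooth functions, which automatically lie in $\mathcal{C}$ because $\nabla u\ge 0$ for $u\in\mathcal{U}^1$ makes approximation from \emph{above} monotone in the constraint. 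Strict convexity is used here to keep the approximants below $\sup\varphi_i'$, so that $\varphi_i^*$ remains finite and continuous along the sequence.
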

 Let us check that Theorem~\ref{th_vector_fields_inf} is a corollary of Theorem~\ref{th_vector_fields_inf_appendix}.  
 \begin{proof}[Proof of Theorem~\ref{th_vector_fields_inf}]
 Fix all $\eta_i$ to coincide with the distribution of $\xi^{B-1}$ where $\xi$ is uniform on $[0,1]$. Then the value of the Rochet-Chon\'e problem with general majorization is $\frac{1}{B}$ fraction of the value of the corresponding auctioneer's problem with $B$ bidders (Proposition~\ref{prop_Rochet}).
 
 In Section~\ref{sec_duality}, we already demonstrated weak duality for the auctioneer's problem, i.e., we checked that $\frac{1}{B}$ fraction of the auctioneer's revenue cannot exceed
 \begin{equation}\label{eq_Beckmann_duality_rhs_appendix}
 \inf_{
  \footnotesize{\begin{array}{c}
       (\varphi_{i})_{i\in\mathcal{I}},\\  
       \pi\succeq_{} m
  \end{array}}}\left[
  \B_\rho\Big(\pi,\,\Phi\Big)+ \sum_{i\in\mathcal{I}}\int_0^1\varphi_{i}\left(z^{B-1}\right){\dd} z\right],
  \end{equation}
  where $\Phi(c)=\sum_i \varphi_i^*(|c_i|)$. Hence, to prove that this expression coincides with $\frac{1}{B}$ fraction of the optimal revenue, it is enough to demonstrate that it is bounded from above by~\eqref{eq_duality_inf_appendix}.
  
 Comparing~\eqref{eq_C_definition} to the definition of divergence~\eqref{eq_divergence_by_parts_function} and that of the transform measure~\eqref{eq_transform_measure}, we see that $\mathcal{C}$ consists of vector fields $c$ with non-negative components such that $-\div_\rho[c]\succeq m$ or, equivalently, there exists $\pi\succeq m$ such that $\div_\rho[c]+\pi=0$. Let $\mathcal{C}_\pm$ be the superset of $\mathcal{C}$ obtained by dropping the non-negativity condition. We get
$$\inf_{c \in \mathcal{C}_\pm}\sum_{i\in\mathcal{I}}\int \varphi_i^*(|c_i|)\,{\dd}\mu=
  \B_\rho\Big(\pi,\,\Phi\Big).$$
  Since $\mathcal{C}\subset \mathcal{C}_\pm$, we conclude that
  $$\inf_{c \in \mathcal{C}}\sum_{i\in\mathcal{I}}\int \varphi_i^*(c_i)\,{\dd}\mu=\inf_{c \in \mathcal{C}}\sum_{i\in\mathcal{I}}\int \varphi_i^*(|c_i|)\,{\dd}\mu\geq \inf_{
  \footnotesize{\begin{array}{c}
       (\varphi_{i})_{i\in\mathcal{I}},\\  
       \pi\succeq_{} m
  \end{array}}}
  \B_\rho\Big(\pi,\,\Phi\Big).$$
 Hence,~\eqref{eq_duality_inf_appendix} is an upper bound on~\eqref{eq_Beckmann_duality_rhs_appendix}. Thus~\eqref{eq_Beckmann_duality_rhs_appendix} is equal to $\frac{1}{B}$ fraction of the  auctioneer's optimal revenue.
 \end{proof}
 
 As a preliminary step to proving Theorem~\ref{th_vector_fields_inf_appendix}, we prove a complete duality result for the monopolist's problem with fixed production costs.  Denote by $\overline{\mathcal{C}}$ the set of bounded nonnegative vector fields  $c =(c_i)_{i\in \mathcal{I}}$, not necessary smooth, such that
\begin{equation*}
\int (\langle x, \nabla u(x) \rangle - u(x)){\dd}\mu \le \int \langle c(x), \nabla u(x) \rangle {\dd}\mu
\end{equation*}
 for all  $u \in \mathcal{U}^1$. Note that $\mathcal{C} \subset \overline{\mathcal{C}}$. Recall that $\mathcal{Q}$ is defined in~\eqref{eq_Q}.
\begin{proposition}\label{thm:max_is_min_in_Q}
For any family of functions $(\varphi_i)_{i\in\mathcal{I}} \subset \mathcal{Q}$,  the following relation holds
 \[
 \max_{u \in \mathcal{U}^1}  \Phi(u, \varphi_i) 
 =  \min_{c \in \overline{\mathcal{C}}}  \int \sum_{i\in\mathcal{I}} \varphi^*_{i}(c_i) {\dd}\mu.
 \]
Moreover, if all the functions $\varphi_i$ are continuously differentiable, then the vector field $c_i = \varphi_i'(\overline{u}_{x_i})$ solves the dual problem, where $\overline{u}$ is an optimal solution to the problem $\max_{u \in \mathcal{U}^1}  \Phi(u, \varphi_i)$.
 \end{proposition}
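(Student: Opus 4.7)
The plan is to establish weak duality via Fenchel's inequality, construct the optimal $c$ from the primal maximizer via first-order optimality conditions in the continuously differentiable case, and handle general $\varphi_i \in \mathcal{Q}$ by approximation.

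Weak duality ($\max \le \min$) is immediate: for any $u \in \mathcal{U}^1$ and $c \in \overline{\mathcal{C}}$, the pointwise Fenchel inequality $\sum_i c_i u_{x_i} \le \sum_i [\varphi_i^*(c_i) + \varphi_i(u_{x_i})]$ combined with the defining property of $\overline{\mathcal{C}}$ yields
$$\Phi(u, \varphi_i) \le \int \langle c, \nabla u\rangle \,{\dd}\mu - \sum_{i \in \mathcal{I}} \int \varphi_i(u_{x_i}) \,{\dd}\mu \le \int \sum_{i \in \mathcal{I}} \varphi_i^*(c_i) \,{\dd}\mu.$$
Proposition~\ref{prop_Lipschitz_bound} applied to $\varphi_i \in \mathcal{Q} \subset \mathcal{U}_{\R_+}$ then guarantees that the primal maximum is attained at some $\bar{u} \in \mathcal{U}_{\lip, L}$.

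For $\varphi_i$ continuously differentiable, I would set $c_i(x) = \varphi_i'(\bar{u}_{x_i}(x))$; the uniform bound on $\varphi_i'$ makes $c$ a bounded nonnegative admissible candidate. To verify $c \in \overline{\mathcal{C}}$, I would exploit the cone structure of $\mathcal{U}^1$: for every $v \in \mathcal{U}^1$ and $\epsilon > 0$, $\bar{u} + \epsilon v \in \mathcal{U}^1$, so the right-derivative of $\Phi(\bar{u} + \epsilon v, \varphi_i)$ at $\epsilon = 0$ is nonpositive, which after simplification reads
$$\int (\langle x, \nabla v\rangle - v) \,{\dd}\mu \le \sum_{i \in \mathcal{I}} \int \varphi_i'(\bar{u}_{x_i}) v_{x_i} \,{\dd}\mu = \int \langle c, \nabla v\rangle \,{\dd}\mu,$$
which is exactly the constraint defining $\overline{\mathcal{C}}$. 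For equality of the two objectives, Fenchel complementary slackness at $c_i = \varphi_i'(\bar{u}_{x_i})$ gives $\varphi_i^*(c_i) = c_i \bar{u}_{x_i} - \varphi_i(\bar{u}_{x_i})$, reducing the task to proving the Euler-type identity $\int \langle c, \nabla \bar{u}\rangle \,{\dd}\mu = \int (\langle x, \nabla \bar{u}\rangle - \bar{u}) \,{\dd}\mu$. This identity follows from optimality in the scaling direction: since $\lambda \bar{u} \in \mathcal{U}^1$ for all $\lambda \ge 0$, the point $\lambda = 1$ is a critical point of $\lambda \mapsto \Phi(\lambda \bar{u}, \varphi_i)$, and differentiating at $\lambda = 1$ yields the desired relation.

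For general $\varphi_i \in \mathcal{Q}$, I would approximate each $\varphi_i$ by continuously differentiable $\varphi_i^{(n)} \in \mathcal{Q}$ (for instance by mollification) with uniformly bounded derivatives, apply the differentiable case to produce pairs $(\bar{u}^{(n)}, c^{(n)})$, and extract a weak-$*$ limit $c \in L^\infty$ of $c^{(n)}$. Since $\overline{\mathcal{C}}$ is cut out by linear inequalities it is weak-$*$ closed, so $c \in \overline{\mathcal{C}}$. The main obstacle is matching primal and dual values in the limit: one needs convergence $\max \Phi(\cdot, \varphi_i^{(n)}) \to \max \Phi(\cdot, \varphi_i)$ (arranged along the lines of Lemma~\ref{prop:approx_U_with_Q}) together with lower semicontinuity of $c \mapsto \int \sum_i \varphi_i^*(c_i) \,{\dd}\mu$ under weak-$*$ convergence. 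The non-local definition of $\overline{\mathcal{C}}$, via testing against all $u \in \mathcal{U}^1$, interacts subtly with the conjugates $\varphi_i^{(n)*}$ near the boundary of their effective domains, and this is where the bulk of the technical work lies.
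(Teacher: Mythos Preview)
Your proposal is correct and proceeds along a genuinely different line from the paper. The paper rewrites the integrand of $\Phi(u,\varphi_i)$ as a pointwise minimum over $c(x)\ge 0$ via Fenchel duality, restricts to the compact set $B(\varphi_i)=\{0\le c_i\le \sup\varphi_i'\}\subset L^\infty(\mu)$, and applies a minimax theorem (Banach--Alaoglu for compactness, plus a weak-$*$ lower-semicontinuity check); this yields the duality for \emph{all} $\varphi_i\in\mathcal{Q}$ at once, and the identification $c_i=\varphi_i'(\overline{u}_{x_i})$ in the smooth case is then read off from complementary slackness. Your route instead exploits the cone structure of $\mathcal{U}^1$ to extract first-order conditions: the directional variation $\overline{u}+\epsilon v$ gives $c\in\overline{\mathcal{C}}$, and the scaling variation $\lambda\overline{u}$ at the interior critical point $\lambda=1$ gives the Euler identity $\int\langle c,\nabla\overline{u}\rangle\,{\dd}\mu=\int(\langle x,\nabla\overline{u}\rangle-\overline{u})\,{\dd}\mu$. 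This is an elegant, elementary variational argument that avoids minimax entirely in the differentiable case and makes the form of the optimal $c$ transparent. The price is that you must then treat general $\varphi_i\in\mathcal{Q}$ by approximation, whereas the paper's minimax covers that case directly; your acknowledged obstacle---passing to the limit in both $\varphi_i^{(n)}\to\varphi_i$ and $c^{(n)}\rightharpoonup c$ simultaneously---is real but manageable, essentially via Cesaro averaging and Fatou as the paper later does in Proposition~\ref{thm:monopolist_strong_duality}.
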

 
\begin{proof}
For every   $\varphi_i \in \mathcal{Q}$, one has
 $$
 \langle x, \nabla u \rangle - u(x)  - \sum_{i\in\mathcal{I}} \varphi_{i}(u_{x_i})\
 = \min_{c(x) \ge 0}
 \Bigl(  \langle x - c(x), \nabla u \rangle - u(x) +  \sum_{i\in\mathcal{I}} \varphi^*_{i}(c_i) \Bigr).
 $$
 The minimum is taken among of all nonnegative vector fields and it is attained at $c_i = \varphi'_i(u_{x_i})$.
In particular, $0 \le c_i \le \sup \varphi'_i$ (we apply here that  $\varphi_i \in \mathcal{Q}$, hence the derivatives $\varphi'_i$ are uniformly bounded).
 
 Thus for every $\varphi_i \in \mathcal{Q}$ we get 
  $$
 \langle x, \nabla u \rangle - u(x)  - \sum_{i\in\mathcal{I}} \varphi_{i}(u_{x_i})\
 = \min_{B(\varphi_i)}
 \Bigl(  \langle x - c(x), \nabla u \rangle - u(x) +  \sum_{i\in\mathcal{I}} \varphi^*_{i}(c_i) \Bigr),
 $$
where  $B(\varphi_i)$ is the set of non-negative vector fields  $c = (c_i)$ satisfying  $c_i(x) \le \sup \varphi'_i(x)$ for $\mu$-a.e. $x$.
Hence,
  \begin{align*}
  \max_{u \in \mathcal{U}^1}  \Phi(u, \varphi_i) &   =    \sup_{u \in \mathcal{U}^1} \min_{B(\varphi_i)} 
 \Bigl(\int \bigl(   \langle x - c(x), \nabla u \rangle - u(x) +  \sum_{i\in\mathcal{I}} \varphi^*_{i}(c_i) \bigr){\dd}\mu \Bigr),
 \end{align*}
 We apply the minimax principle and the fact that $B(\varphi_i)$ is a closed subset of a ball in  $L^{\infty}(\mu)$, endowed with the *-weak topology.    The Banach--Alaoglu theorem implies that $B(\varphi_i)$ is compact. Hence,
 \begin{align*}
 &
  \sup_{u \in \mathcal{U}^1} \min_{c \in B(\varphi_i)} 
 \Bigl(\int \bigl(   \langle x - c(x), \nabla u \rangle - u(x) +  \sum_{i\in\mathcal{I}} \varphi^*_{i}(c_i) \bigr){\dd}\mu \Bigr)=
 \\& =  \min_{c \in B(\varphi_i)}   \sup_{u \in \mathcal{U}^1}
 \Bigl(\int \bigl(   \langle x - c(x), \nabla u \rangle - u(x) +  \sum_{i\in\mathcal{I}} \varphi^*_{i}(c_i) \bigr){\dd}\mu \Bigr)
 \end{align*}
 Let us check that the minimax principle is applicable.
 Indeed, the convexity of the functional on  $B(\varphi_i)$ is obvious, it is sufficient to check the lower semicontinuity.
 Let us consider a sequence $c^{(n)} \in B(\varphi_i)$ such that $c^{(n)} \to c$ *-weakly in  $L^{\infty}(\mu)$ (in particular, weakly in $L^2(\mu)$).
 It is sufficient to show that  $\underline{\lim}_n \int \varphi^*_{i}(c^{(n)}_i){\dd}\mu \ge \int  \varphi^*_{i}(c_i){\dd}\mu$.
 
 Passing to  a subsequence (if necessary), which we denote again by  $c^{(n)}_i$, one can assume without loss of generality that
  $ \int \varphi^*_{i}(c^{(n)}_i){\dd}\mu $ has a limit and the sequence of  $\frac{1}{N} \sum_{n=1}^{N} c^{(n)}_i$ converges in $L^2(\mu)$
  and $\mu$-a.e.
 Applying convexity of $\varphi_i$, one gets
 \begin{align*}
 \lim_n \int \varphi^*_{i}(c^{(n)}_i){\dd}\mu &= 
 \lim_N  \frac{1}{N} \sum_{n=1}^{N} \int \varphi^*_{i}(c^{(n)}_i){\dd}\mu\ge
 \\
 &\ge \lim_N \int \varphi^*_{i}\Bigl(  \frac{1}{N}  \sum_{n=1}^{N} c^{(n)}_i \Bigr){\dd}\mu
\ge  \int  \varphi^*_{i}(c_i){\dd}\mu.
 \end{align*}
 In the last inequality we use convergence almost everywhere and the Fatou lemma.
 
The next step is obvious:
 \begin{multline*}
  \min_{c \in B(\varphi_i)}   \sup_{u \in \mathcal{U}^1}
 \left(\int \bigl(   \langle x - c(x), \nabla u \rangle - u(x) +  \sum_{i\in\mathcal{I}} \varphi^*_{i}(c_i) \bigr){\dd}\mu \right)=
 \\
 =   \min_{c \in  B(\varphi_i) \cap \overline{\mathcal{C}}}  
\sum_{i\in\mathcal{I}}\int    \varphi^*_{i}(c_i) {\dd}\mu.
 \end{multline*}
 Hence,
 \[
  \max_{u \in \mathcal{U}^1}  \Phi(u, \varphi_i)   =   \min_{c \in  B(\varphi_i) \cap \overline{\mathcal{C}}}  
 \Bigl(\sum_{i\in\mathcal{I}}\int    \varphi^*_{i}(c_i) {\dd}\mu \Bigr).
 \]
Clearly, $\min_{c \in  B(\varphi_i)\cap \overline{\mathcal{C}}}  
 \sum_{i\in\mathcal{I}}\int    \varphi^*_{i}(c_i) {\dd}\mu $ can be replaced with $\min_{c \in \overline{\mathcal{C}}}  
 \sum_{i\in\mathcal{I}}\int    \varphi^*_{i}(c_i) {\dd}\mu $, since, by the standard arguments,  
 $ \max_{u \in \mathcal{U}^1}  \Phi(u, \varphi_i) \le \min_{c \in \overline{\mathcal{C}}} \sum_{i\in\mathcal{I}}\int    \varphi^*_{i}(c_i) {\dd}\mu $.
 
 Now, assume that all the functions $\varphi_i$ are continuously differentiable. Let $c^{(0)}$ be an optimal solution to the dual problem $\min_{c \in\overline{\mathcal{C}}} \int \sum_{i\in\mathcal{I}} \varphi^*_{i}(c_i) {\dd}\mu$, and let $\overline{u}$ be an optimal solution to the problem $\max_{u \in \mathcal{U}^1}  \Phi(u, \varphi_i)$. The following sequence of inequalities holds:
 \begin{align*}
 &\int \left(\langle \nabla \overline{u}(x), x \rangle  - \overline{u}(x)  - \sum_{i \in I}\varphi_i(\overline{u}_{x_i}(x))\right)\,{\dd}\mu \le \\
 &\le \int \left(\langle \nabla \overline{u}(x), x \rangle  - \overline{u}(x)  - \sum_{i \in I}\overline{u}_{x_i}(x) \cdot c^{(0)}_i(x)\right)\,{\dd}\mu + \int \sum_{i \in I} \varphi^*\left(c^{(0)}_i\right)\,{\dd}\mu\le \\
 &\le\int \sum_{i \in I} \varphi^*\left(c^{(0)}_i\right)\,{\dd}\mu.
 \end{align*}
 The left-hand and right-hand sides of this inequality are equal. Therefore,
 \begin{align*}
 &\int \left(\langle \nabla \overline{u}(x), x \rangle  - \overline{u}(x)  - \sum_{i \in I}\overline{u}_{x_i}(x) \cdot c^{(0)}_i(x)\right)\,{\dd}\mu = 0\\
 \intertext{and}
 &\sum_{i \in I}\int \left( \varphi_i(\overline{u}_{x_i}) + \varphi^*_i\left(c^{(0)}_i\right) - \overline{u}_{x_i} \cdot c^{(0)}_i\right)\,{\dd}\mu = 0.
 \end{align*}
 Thus  $c^{(0)}_i(x) \in \partial \varphi_i(\overline{u}_{x_i}(x))$  for $\mu$-almost all $x$.
 Since the functions $\varphi_i$ are continuously differentiable, we conclude that $\partial \varphi_i(x_0) = \{\varphi'(x_0)\}$ for all $x_0 > 0$ and $\partial \varphi_i(x_0) = (-\infty, \varphi'(x_0))$ at the point $x_0 = 0$. Thus $c^{(0)}_i(x) = \varphi_i'(\overline{u}_{x_i}(x))$ for $\mu$-almost all such points $x$ that $\overline{u}_{x_i}(x) > 0$.
 
 Consider a vector field defined by the equation $\overline{c}_i(x) = \varphi_i'(\overline{u}_{x_i}(x))$ for all $x$. Since $\overline{c}_i(x) \ge c^{(0)}_i(x)$ for $\mu$-almost all $x$, we conclude easily that $\overline{c} \in \overline{\mathcal{C}}$. Since $\overline{c}_i(x) \in \partial \varphi_i(\overline{u}_{x_i}(x))$ for $\mu$-almost all $x$,
 \[
 \sum_{i \in I}\int \left( \varphi_i(\overline{u}_{x_i}) + \varphi^*_i(\overline{c}_i) - \overline{u}_{x_i} \cdot \overline{c}_i\right)\,{\dd}\mu = 0.
 \]
 In addition, $\overline{u}_{x_i} \cdot c^{(0)}_i = \overline{u}_{x_i} \cdot \overline{c}_i$ for $\mu$-almost all $x$; therefore,
 \begin{multline*}
 \int \left(\langle \nabla \overline{u}(x), x \rangle  - \overline{u}(x)  - \sum_{i \in I}\overline{u}_{x_i}(x) \cdot \overline{c}_i(x)\right)\,{\dd}\mu= \\
 = \int \left(\langle \nabla \overline{u}(x), x \rangle  - \overline{u}(x)  - \sum_{i \in I}\overline{u}_{x_i}(x) \cdot c^{(0)}_i(x)\right)\,{\dd}\mu = 0.
 \end{multline*}
 Thus we finally conclude that
 \begin{align*}
 \int \sum_{i \in I} \varphi^*(\overline{c}_i)\,{\dd}\mu &=\int\left(\langle \nabla \overline{u}(x), x \rangle  - \overline{u}(x)  - \sum_{i \in I}\varphi_i(\overline{u}_{x_i})\right)\,{\dd}\mu+\\
  &+ \sum_{i \in I}\int\left(\varphi_i(\overline{u}_{x_i}) + \varphi^*_i(\overline{c}_i) - \overline{u}_{x_i} \cdot \overline{c}_i\right)\,{\dd}\mu- \\
  &- \int \left(\langle \nabla \overline{u}(x), x \rangle  - \overline{u}(x)  - \sum_{i \in I}\overline{u}_{x_i}(x) \cdot \overline{c}_i(x)\right)\,{\dd}\mu=\\
  &=\int\left(\langle \nabla \overline{u}(x), x \rangle  - \overline{u}(x)  - \sum_{i \in I}\varphi_i(\overline{u}_{x_i})\right)\,{\dd}\mu.
 \end{align*}
 This equality means that $\overline{c} \in \argmin_{c \in \overline{\mathcal{C}}} \int \sum_{i\in\mathcal{I}} \varphi^*_{i}(c_i) {\dd}\mu$.
\end{proof}

Next, we extend the previous result to smooth vector fields.
 \begin{proposition}\label{prop:max_is_inf_over_smooth}
 For any family of strictly convex continuously differentiable functions $(\varphi_i)_{i\in\mathcal{I}} \subset \mathcal{Q}$, the following relation holds
  \[
 \max_{u \in \mathcal{U}^1}  \Phi(u, \varphi_i) 
 =  \inf_{c \in \mathcal{C}}  \int \sum_{i\in\mathcal{I}} \varphi^*_{i}(c_i) {\dd}\mu.
\]
 \end{proposition}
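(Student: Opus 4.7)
Write $J(c) := \int \sum_{i \in \mathcal{I}} \varphi_i^*(c_i)\,{\dd}\mu$. The inequality $\inf_{c \in \mathcal{C}} J(c) \ge \max_{u \in \mathcal{U}^1} \Phi(u, \varphi_i)$ is the easy direction and follows from pointwise Fenchel~\eqref{eq_Fenchel_inequality}: for any $c \in \mathcal{C}$ and $u \in \mathcal{U}^1$, summing $\varphi_i^*(c_i(x)) + \varphi_i(u_{x_i}(x)) \ge c_i(x)\,u_{x_i}(x)$ over $i$, integrating against $\mu$, and invoking the defining inequality of $\mathcal{C}$ gives
$$J(c) \;\ge\; \int \langle c, \nabla u\rangle\,{\dd}\mu - \int \sum_i \varphi_i(u_{x_i})\,{\dd}\mu \;\ge\; \int(\langle x,\nabla u\rangle - u)\,{\dd}\mu - \int \sum_i \varphi_i(u_{x_i})\,{\dd}\mu \;=\; \Phi(u,\varphi_i).$$

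For the converse inequality I invoke Proposition~\ref{thm:max_is_min_in_Q}, which supplies a primal optimizer $\overline{u}$ together with a dual optimizer $\overline{c}_i = \varphi'_i(\overline{u}_{x_i}) \in \overline{\mathcal{C}}$ attaining $J(\overline{c}) = \max_{u\in\mathcal{U}^1}\Phi(u,\varphi_i)$. Since $\mathcal{C} \subset \overline{\mathcal{C}}$ and $\overline{c}$ is bounded and non-negative (though in general only measurable), it is enough to exhibit smooth fields $c^n \in \mathcal{C}$ with $J(c^n) \to J(\overline{c})$. The natural construction is mollification cushioned toward the canonical feasible field $x \in \mathcal{C}$: set
$$c^n \;:=\; (1 - \tfrac{1}{n})(\tilde c * \eta_{\varepsilon_n}) \;+\; \tfrac{1}{n}\,x,$$
where $\tilde c$ is a bounded non-negative extension of $\overline{c}$ to $\mathbb{R}^\mathcal{I}$ (for instance by reflection across the faces of $X$), $\eta_{\varepsilon_n}$ is a standard smooth mollifier, and $\varepsilon_n \downarrow 0$ is to be chosen below. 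Smoothness and non-negativity of $c^n$ are evident. Since the sequence is uniformly bounded by $\|\overline{c}\|_\infty + 1$, converges $\mu$-a.e.\ to $\overline{c}$, and $\varphi_i^*$ is continuous on bounded intervals, the dominated convergence theorem delivers $J(c^n) \to J(\overline{c})$ for any $\varepsilon_n \to 0$.

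The delicate step is verifying $c^n \in \mathcal{C}$, i.e.\ the inequality
$\int \langle c^n, \nabla u\rangle\,{\dd}\mu \ge \int(\langle x, \nabla u\rangle - u)\,{\dd}\mu$
for \emph{every} $u \in \mathcal{U}^1$ simultaneously. The pure convex combination $(1-\tfrac{1}{n})\overline{c} + \tfrac{1}{n}x$ is feasible with a strict slack $\tfrac{1}{n}\int u\,{\dd}\mu$; the task is to absorb the mollification error $(1-\tfrac{1}{n})\int \langle \tilde c * \eta_{\varepsilon_n} - \overline{c}, \nabla u\rangle\,{\dd}\mu$ into that slack uniformly in $u$. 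This is the main obstacle, because no Sobolev-type comparison relates $\int|\nabla u|\,{\dd}\mu$ to $\int u\,{\dd}\mu$ on the entire cone $\mathcal{U}^1$, so an $L^\infty$-rate for the mollification alone does not suffice. I expect the right route is to exploit scale-invariance of the inequality in $u$ and the structure of $\mathcal{U}^1$ (convex, monotone, non-negative, $u(0)=0$ on the bounded domain $X$) to reduce the verification to an extremal or compact family of test functions on which the mollification error can be controlled uniformly; shrinking $\varepsilon_n$ below that uniform bound then yields $c^n \in \mathcal{C}$. Finally, the additional assertion that $\overline{c}_i = \varphi'_i(\overline{u}_{x_i})$ solves the dual is inherited directly from the explicit construction in Proposition~\ref{thm:max_is_min_in_Q}.
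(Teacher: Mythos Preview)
Your easy direction is correct, and you are right to lean on Proposition~\ref{thm:max_is_min_in_Q} to obtain the bounded, non-negative optimizer $\overline{c}_i = \varphi_i'(\overline{u}_{x_i}) \in \overline{\mathcal{C}}$. The gap is exactly where you flag it: the mollification step. Your slack from the convex combination with $x$ is $\tfrac{1}{n}\int u\,{\dd}\mu$, whereas the mollification error is of size $\|\tilde{c}*\eta_{\varepsilon_n} - \overline{c}\|\cdot\int|\nabla u|\,{\dd}\mu$, and on $\mathcal{U}^1$ the ratio $\int|\nabla u|\,{\dd}\mu \big/ \int u\,{\dd}\mu$ is unbounded (take $u(x)=x_1^k$ on $[0,1]^\mathcal{I}$: the ratio is of order $k$). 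No choice of $\varepsilon_n$ absorbs the error uniformly over $\mathcal{U}^1$, and the hoped-for reduction to a compact family of test functions is not carried out and does not seem to help, since the obstruction comes from highly concentrated gradients near the northeast corner. Mollification also scrambles the sign of $\tilde{c}*\eta_{\varepsilon_n} - \overline{c}$, so the non-negativity of $\nabla u$ cannot be exploited.

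The paper avoids this altogether by approximating \emph{from above}. The point is that $\overline{u}_{x_i}$, being a partial derivative of a convex function, is continuous at every point of differentiability of $\overline{u}$; hence $\overline{c}_i = \varphi_i'(\overline{u}_{x_i})$ is continuous $\mu$-a.e. Passing to the upper semicontinuous envelope $\overline{c}_i^{\mathrm{up}}(x_0)=\limsup_{x\to x_0}\overline{c}_i(x)$ changes nothing on a full-measure set, so $\overline{c}^{\mathrm{up}}\in\overline{\mathcal{C}}$ with the same objective value. An upper semicontinuous bounded function is the pointwise limit of a \emph{non-increasing} sequence of smooth functions $c_i^{(n)}\ge \overline{c}_i^{\mathrm{up}}$. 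Because $\nabla u\ge 0$ for every $u\in\mathcal{U}^1$, domination $c^{(n)}\ge \overline{c}^{\mathrm{up}}$ immediately forces $c^{(n)}\in\mathcal{C}$ --- no uniform error estimate over test functions is needed. Strict convexity of $\varphi_i$ is then used (via Proposition~\ref{prop_Lipschitz_bound}, which bounds $\overline{u}_{x_i}$) to guarantee $\sup\overline{c}_i^{\mathrm{up}} < \sup\varphi_i'$, so the $c_i^{(n)}$ can be kept inside an interval where $\varphi_i^*$ is finite and continuous, and monotone convergence yields $J(c^{(n)})\to J(\overline{c}^{\mathrm{up}})=J(\overline{c})$.
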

 \begin{proof}
 Let $\overline{u} \in \argmax_{u \in \mathcal{U}^1}  \Phi(u, \varphi_i)$, and let $(\overline{c}_i)_{i \in I}$ be the vector field defined by the formula $\overline{c}_i(x) = \varphi_i'(\overline{u}_{x_i}(x))$. By Proposition~\ref{thm:max_is_min_in_Q}, the vector field $\overline{c}$ is an optimal solution to the dual problem 
 \[
 \max_{c \in \overline{\mathcal{C}}}  \int \sum_{i\in\mathcal{I}} \varphi^*_{i}(c_i) {\dd}\mu.
 \]
 
 We claim that the function $\overline{u}_{x_i}$ is continuous at all the points $x_0$ such that $\overline{u}$ is differentiable at $x_0$. Indeed, consider any sequence of points $\{x_n\}_n$ converging to $x_0$. One can verify easily that if $p_n \in \partial \overline{u}(x_n)$ and if $\partial \overline{u}(x_0)$ contains only one point $\nabla \overline{u}(x)$, then the sequence $\{p_n\}_n$ converges to $\nabla \overline{u}(x_0)$. Hence, the sequence $\{\overline{u}_{x_i}(x_n)\}_n$ converges to $\overline{u}_{x_i}(x_0)$, and this implies the continuity of $\overline{u}_{x_i}$ at the point~$x_0$.
 
 In particular, this means that the function $\overline{c}_i = \varphi_i'(\overline{u}_{x_i})$ is continuous almost everywhere; therefore, the function
 \[
 \overline{c}_i^{\mathrm{up}}(x_0) = \limsup_{x \to x_0} \overline{c}_i(x)
 \]
 is upper semi-continuous and $\overline{c}_i =  \overline{c}_i^{\mathrm{up}}$ almost everywhere. Thus the vector field $\overline{c}^{\mathrm{up}} = \left( \overline{c}_i^{\mathrm{up}}\right)_{i \in I}$ belongs to $\overline{\mathcal{C}}$ and is an optimal solution to the dual problem.
 
Denote $L_i = \sup_x \varphi'(\overline{u}_{x_i}(x)) = \sup_x \overline{c}_i^{\mathrm{up}}(x)$. Since the function $\overline{u}_{x_i}$ is bounded by Proposition~\ref{prop_Lipschitz_bound} and the function $\varphi_i'$ is strictly increasing, we have $L_i < \sup_x \varphi_i'(x) = \widehat{L}_i$. Since the function $\overline{c}_i^{\mathrm{up}}$ is upper semi-continuous, it can be written  as the pointwise limit of a non-increasing family $\{c_i^{(n)}\}$ of smooth functions; moreover, we can require that $\sup_x c_i^{(n)} \le \left(L_i + \widehat{L}_i\right) / 2$ for all $n$.

Since $c_i^{(n)} \ge \overline{c}_i^{\mathrm{up}}$, we clearly have $c^{(n)} = \left(c_i^{(n)}\right)_{i \in I} \in \mathcal{C}$. Let us check that
\begin{equation}\label{eq:smooth_approx_integrals_converges}
\lim_{n \to \infty} \int \sum_{i\in\mathcal{I}} \varphi^*_{i}\big(c_i^{(n)}\big){\dd}\mu = \int \sum_{i\in\mathcal{I}} \varphi^*_{i}\big(\overline{c}^{\mathrm{up}}_i\big) {\dd}\mu.
\end{equation}
Indeed, the function $\varphi_i^*$ is continuous and non-decreasing on the interval $\big[0, (L_i + \widehat{L}_i) / 2\big]$; therefore, the sequence of functions $\varphi_i^*\big(c_i^{(n)}\big)$ is a non-increasing family that converges to $\varphi_i^*\big(\overline{c}_i^{\mathrm{up}}\big)$ pointwise. Then Beppo Levi's lemma implies~\eqref{eq:smooth_approx_integrals_converges}.

This implies that
\[
\inf_{c \in \mathcal{C}}  \int \sum_{i\in\mathcal{I}} \varphi^*_{i}(c_i) {\dd}\mu \le \lim_{n \to \infty} \int \sum_{i\in\mathcal{I}} \varphi^*_{i}\big(c_i^{(n)}\big){\dd}\mu = \int \sum_{i\in\mathcal{I}} \varphi^*_{i}\big(\overline{c}^{\mathrm{up}}_i\big) {\dd}\mu = \max_{u \in \mathcal{U}^1}  \Phi(u, \varphi_i).
\]
On the other hand, since $\mathcal{C} \subset \overline{\mathcal{C}}$, we  have
\[
\inf_{c \in \mathcal{C}}\int \sum_{i\in\mathcal{I}} \varphi^*_{i}(c_i) {\dd}\mu \ge \min_{c \in \overline{\mathcal{C}}}\int \sum_{i\in\mathcal{I}} \varphi^*_{i}(c_i) {\dd}\mu = \max_{u \in \mathcal{U}^1}  \Phi(u, \varphi_i).
\]
This implies the desired duality relation.
 \end{proof}
 
Now, we can prove Theorem~\ref{th_vector_fields_inf_appendix}.
 \begin{proof}[Proof of Theorem~\ref{th_vector_fields_inf_appendix}.]
 Denote by $\mathcal{Q}_{\mathrm{st}}$ the subset of functions $\varphi \in \mathcal{Q}$ such that $\varphi$ is continuously differentiable and strictly convex. We claim that
 \begin{multline}\label{eq:inf_Q_is_inf_Q_st}
 \inf_{\varphi_i \in \mathcal{Q}}\max_{u \in \mathcal{U}^1} \left[\Phi(u, \varphi_i) + \sum_{i\in\mathcal{I}}\int_0^1\varphi_i(x) \,{\dd}\eta_i( x)\right]= \\
 =  \inf_{\varphi_i \in \mathcal{Q_{\mathrm{st}}}}\max_{u \in \mathcal{U}^1} \left[\Phi(u, \varphi_i) + \sum_{i\in\mathcal{I}}\int_0^1\varphi_i(x) \,{\dd}\eta_i( x)\right].
 \end{multline}
 Since $\mathcal{Q}_{\mathrm{st}} \subset \mathcal{Q}$, we conclude that the left-hand side is not greater than the right-hand side. Let us check the opposite inequality.
 
 Consider any function $\varphi \in \mathcal{Q}$. Let $\{p^{(n)}(x)\}$ be a sequence of smooth non-negative kernel functions such that each function $p^{(n)}(x)$ is supported on the interval $[-1/n, 0]$ and $\int_{-\infty}^{+\infty}p^{(n)}(x)\,{\dd} x = 1$. Consider the function \[
 \varphi^{(n)}(x) = \left(\varphi * p^{(n)}\right)(x) = \int_0^{+\infty} \varphi(t) p^{(n)}(x-t)\,{\dd} t = \int_{0}^{\frac{1}{n}}\varphi(x + t) p^{(n)}(-t)\,{\dd} t.
 \]
 One can easily check that the function $ \varphi^{(n)}(x)$ is smooth, convex, non-negative, and non-decreasing for $x \ge 0$. Moreover, if $\varphi'(x) \le L$ for all $x$, then $\big(\varphi^{(n)}\big)'(x) \le L$ for all $x$, so $\varphi^{(n)}(x) - \varphi^{(n)}(0) \in \mathcal{Q}$. Finally, $\varphi(x) \le \varphi^{(n)}(x) \le \varphi\big(x + \frac{1}{n}\big) \le \varphi(x) + \frac{L}{n}$ for all $x \ge 0$. Denoting  \[
 \widehat{\varphi}^{(n)}(x)  = \left(\varphi^{(n)}(x) - \varphi^{(n)}(0)\right) + \frac{1}{n}\left(x + \exp(-x) - 1\right),
 \]
 we conclude that $\widehat{\varphi}^{(n)} \in \mathcal{Q}_{\mathrm{st}}$, that $\widehat{\varphi}^{(n)}(x) \ge \varphi(x) - \varphi^{(n)}(0) \ge \varphi(x) - \frac{L}{n}$ for all $x \ge 0$, and that \[
 \widehat{\varphi}^{(n)}(x) - \varphi(x) \le \frac{L + 1}{n}\quad\text{for all $x \in [0, 1]$.}
 \]
 
 Consider any family of functions $(\varphi_i)_{i \in I} \subset \mathcal{Q}$. For each $i \in I$, let $\{\varphi_i^{(n)}\}_n \subset \mathcal{Q}_{\mathrm{st}}$ be a sequence of functions such that $\varphi_i^{(n)}(x)  - \varphi_i(x) \ge -\frac{1}{n}$ for all $x \ge 0$ and that $\varphi_i^{(n)}(x) - \varphi_i(x) \le \frac{1}{n}$ for all $x \in [0, 1]$.
 Since $\varphi_i^{(n)}(x) \ge \varphi_i(x) - \frac{1}{n}$ for all $x \ge 0$, we have
 \[
     \Phi(u, \varphi^{(n)}_i) \le \Phi(u, \varphi_i) + \frac{I}{n}\quad\Rightarrow\quad\max_{u \in \mathcal{U}^1}  \Phi(u, \varphi^{(n)}_i) \le  \max_{u \in \mathcal{U}^1}  \Phi\left(u, \varphi_i\right) + \frac{I}{n}.
 \]
 In addition,
 \[
 \int_0^1 \varphi_i^{(n)}(x)\,{\dd}\eta_i(x) \le  \int_0^1 \varphi_i(x)\,{\dd}\eta_i(x) + \frac{1}{n}.
 \]
 Thus
 \begin{multline*}
 \max_{u \in \mathcal{U}^1} \left[\Phi\left(u, \varphi^{(n)}_i\right) + \sum_{i\in\mathcal{I}}\int_0^1\varphi^{(n)}_i(x) \,{\dd}\eta_i( x)\right]\le \\
 \le \max_{u \in \mathcal{U}^1} \left[\Phi(u, \varphi_i) + \sum_{i\in\mathcal{I}}\int_0^1\varphi_i(x) \,{\dd}\eta_i( x)\right] + \frac{2I}{n}.
 \end{multline*}
 
 In particular, for any family of functions $(\varphi_i)_{i \in I} \subset \mathcal{Q}$ we have
 \begin{align*}
     \inf_{\varphi_i \in \mathcal{Q_{\mathrm{st}}}}&\max_{u \in \mathcal{U}^1} \left[\Phi(u, \varphi_i) + \sum_{i\in\mathcal{I}}\int_0^1\varphi_i(x) \,{\dd}\eta_i( x)\right]\le \\
     &\le\liminf_{n \to \infty}\max_{u \in \mathcal{U}^1} \left[\Phi\left(u, \varphi^{(n)}_i\right) + \sum_{i\in\mathcal{I}}\int_0^1\varphi^{(n)}_i(x) \,{\dd}\eta_i( x)\right] \le\\
     &\le \max_{u \in \mathcal{U}^1} \left[\Phi(u, \varphi_i) + \sum_{i\in\mathcal{I}}\int_0^1\varphi_i(x) \,{\dd}\eta_i( x)\right].
 \end{align*}
 This implies the equality (\ref{eq:inf_Q_is_inf_Q_st}).
 
 Now, the desired result follows from the combination of Propositions~\ref{cor:max=max_inf} and~\ref{prop:max_is_inf_over_smooth}.
 \end{proof}
 \medskip

Next, we prove the duality theorem in the strong form $\max = \min$. To do it, we need an extension of the set of feasible vector fields $\mathcal{C}$.
We denote by $\mathcal{C}^{mes}$ the set of tuples of non-negative measures $(\varsigma_i)_{i\in\mathcal{I}}$ satisfying
\[
\int \left(\langle \nabla u(x), x \rangle  - u(x) \right)\,{\dd}\mu \le \sum_{i\in\mathcal{I}} \int u_{x_i}\,{\dd}\varsigma_i
\]
for every smooth $u \in \mathcal{U}^1$.

 \begin{theorem}\label{th_duality_min_appendix}
Under the assumptions of Theorem~\ref{th_vector_fields_inf_appendix}, the following identity holds:
\begin{multline*}
\max_{u \in \mathcal{U}^1, \nu_i \preceq_{} \eta_i} \int (\langle x, \nabla u \rangle - u(x))\,{\dd}\mu= \\
= \min_{\substack{\varsigma \in \mathcal{C}^{mes},\\
\varphi_i \in \mathcal{U}_{\eta_i, +\infty}}} \sum_{i\in\mathcal{I}} \left(\varsigma_i^{\mathrm{sing}}(X) + \int_0^1 \varphi_i(x)\,{\dd}\eta_i + \int \varphi_i^*(\varsigma_i^a(x))\,{\dd}\mu\right).
\end{multline*}
\end{theorem}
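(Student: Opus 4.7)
The plan is to prove Theorem~\ref{th_duality_min_appendix} in two stages: first establish weak duality (the LHS is at most the RHS for any admissible pair $(\varsigma,\varphi_i)$), and then prove that the infimum on the RHS is attained by passing to weak-$*$ limits of a minimizing sequence produced by Theorem~\ref{th_vector_fields_inf_appendix}.

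\textbf{Weak duality.} Fix $u\in \mathcal{U}^1$ with $\nu_i\preceq\eta_i$, a tuple $\varsigma\in\mathcal{C}^{mes}$, and functions $\varphi_i\in\mathcal{U}_{\eta_i,+\infty}$. By mollification we may approximate $u$ by smooth convex non-decreasing functions converging in $W^{1,1}(\mu)$, so the defining inequality of $\mathcal{C}^{mes}$ extends to $u$ itself: $\int(\langle x,\nabla u\rangle-u)\,{\dd}\mu\leq \sum_i\int u_{x_i}\,{\dd}\varsigma_i$. Split $\varsigma_i=\varsigma_i^a\rho\,{\dd} x+\varsigma_i^{\sing}$. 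On the absolutely continuous part, the Fenchel inequality yields $u_{x_i}\varsigma_i^a\leq \varphi_i(u_{x_i})+\varphi_i^*(\varsigma_i^a)$; on the singular part, the constraint $\nu_i\preceq\eta_i$ combined with $\eta_i$ being supported in $[0,1]$ forces $u_{x_i}\leq 1$ a.e., hence $\int u_{x_i}\,{\dd}\varsigma_i^{\sing}\leq \varsigma_i^{\sing}(X)$. Finally, the majorization constraint gives $\int \varphi_i(u_{x_i})\,{\dd}\mu\leq\int_0^1\varphi_i\,{\dd}\eta_i$. Summing these bounds gives the weak duality.

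\textbf{Attainment.} Let $M$ be the LHS. By Theorem~\ref{th_vector_fields_inf_appendix}, choose a minimizing sequence $(\varphi_i^{(n)},c^{(n)})$ with $\varphi_i^{(n)}\in\mathcal{Q}$, $c^{(n)}\in\mathcal{C}$, and associated cost $J^{(n)}\to M$. Put $\varsigma_i^{(n)}:=c_i^{(n)}\rho\,{\dd} x\in\mathcal{C}^{mes}$. I first need an a priori bound on the total masses $\varsigma_i^{(n)}(X)$. For any $t\in(0,1)$, the Fenchel inequality $\varphi_i^{(n),*}(y)\geq ty-\varphi_i^{(n)}(t)$ gives
\[
t\int c_i^{(n)}\,{\dd}\mu\leq \int \varphi_i^{(n),*}(c_i^{(n)})\,{\dd}\mu+\varphi_i^{(n)}(t)\cdot\mu(X),
\]
while the monotonicity of $\varphi_i^{(n)}$ and positivity of $\eta_i$ on $[t,1]$ imply $\varphi_i^{(n)}(t)\leq J^{(n)}/\eta_i([t,1])$. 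Fixing any such $t$ bounds the masses uniformly. Extract, by weak-$*$ compactness of bounded Radon measures on the compact set $X$, a subsequence with $\varsigma_i^{(n)}\rightharpoonup^*\varsigma_i$. By the Helly selection principle applied exactly as in Step~3 of the proof of Theorem~\ref{duality-theorem-maxinf}, pass to a further subsequence with $\varphi_i^{(n)}\to\varphi_i$ pointwise on $[0,1)$, and extend $\varphi_i$ to $\R_+$ by lower semicontinuity at $t=1$ and by $+\infty$ on $(1,+\infty)$; the resulting $\varphi_i$ lies in $\mathcal{U}_{\eta_i,+\infty}$.

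\textbf{Closing the gap.} To show $\varsigma\in\mathcal{C}^{mes}$, take any smooth $u\in\mathcal{U}^1$; the bounded continuous functions $u_{x_i}$ test weak-$*$ convergence and the inequality~\eqref{eq_C_definition} passes to the limit. By Fatou, $\liminf_n\int_0^1\varphi_i^{(n)}\,{\dd}\eta_i\geq\int_0^1\varphi_i\,{\dd}\eta_i$. The main obstacle, as expected, is the joint lower semicontinuity
\[
\liminf_{n\to\infty}\int\varphi_i^{(n),*}\!\bigl(c_i^{(n)}\bigr)\,{\dd}\mu\;\geq\;\int\varphi_i^*(\varsigma_i^a)\,{\dd}\mu+\varsigma_i^{\sing}(X).
\]
Here $\varphi_i\in\mathcal{U}_{\eta_i,+\infty}$ enforces $\varphi_i=+\infty$ on $(1,+\infty)$, so the recession function of $\varphi_i^*$ equals the identity, which is precisely the coefficient $1$ appearing in front of $\varsigma_i^{\sing}(X)$. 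To prove the inequality, I would use the Legendre-duality representation $\varphi_i^{(n),*}(y)=\sup_{t\geq 0}(ty-\varphi_i^{(n)}(t))$: for any non-negative continuous $h$ with $\|h\|_\infty\leq 1$ and any $t\in[0,1)$, the lower bound
\[
\int h\cdot \varphi_i^{(n),*}(c_i^{(n)})\,{\dd}\mu\;\geq\; t\int h\,{\dd}\varsigma_i^{(n)}-\varphi_i^{(n)}(t)\int h\,{\dd}\mu
\]
passes to the limit using weak-$*$ convergence. Taking the supremum over test functions $h$ with $h\equiv 1$ on the singular support and $h\equiv 0$ on a set carrying the absolutely continuous mass, and letting $t\uparrow 1$, recovers exactly the singular contribution $\varsigma_i^{\sing}(X)$; a standard Reshetnyak-type argument for convex functionals of measures (or, equivalently, duality with $C_b(X)$) then handles the absolutely continuous part, noting that $\varphi_i^{(n),*}\to\varphi_i^*$ is monotone limit of convex functions on compacts of $[0,\varphi_i^\infty)$ where limits can be passed via Beppo Levi or dominated convergence. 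Combining these lower semicontinuities yields $\sum_i\bigl(\varsigma_i^{\sing}(X)+\int\varphi_i^*(\varsigma_i^a)\,{\dd}\mu+\int_0^1\varphi_i\,{\dd}\eta_i\bigr)\leq M$, and together with weak duality we conclude equality, proving that $(\varsigma,\varphi_i)$ attains the minimum.
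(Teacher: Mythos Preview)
Your overall architecture—weak duality plus attainment by compactness and lower semicontinuity—is reasonable, and your weak-duality paragraph is essentially the content of Lemma~\ref{prop:c_def_works_for_u_lip}. But your approach to attainment differs from the paper's and, as written, has a real gap in the lower-semicontinuity step.

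The paper does \emph{not} pass to the limit simultaneously in $(\varphi_i^{(n)},c^{(n)})$. Instead it decouples: Theorem~\ref{duality-theorem-maxinf} first produces the optimal $\varphi_i\in\mathcal{U}_{\eta_i,+\infty}$, and then Proposition~\ref{thm:monopolist_strong_duality} proves, for this \emph{fixed} $\varphi_i$, that $\max_u\Phi(u,\varphi_i)=\min_{\varsigma\in\mathcal{C}^{mes}}\sum_i\bigl(\varsigma_i^{\sing}(X)+\int\varphi_i^*(\varsigma_i^a)\,d\mu\bigr)$. The approximants $\varphi_i^{(n)}$ used there are \emph{monotone increasing} to the fixed $\varphi_i$, which yields a uniform lower bound $-\varphi_i(t)$ needed for Fatou. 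The crucial device you are missing is the Koml\'os theorem: it extracts a subsequence along which the Ces\`aro averages of $c_i^{(n)}$ converge $\mu$-a.e.\ to some $c_i$, one then shows $c_i\,d\mu\le\varsigma_i$ (hence $c_i\le\varsigma_i^a$), and finally the $1$-Lipschitz property of $\varphi_i^*$ (since $\varphi_i=+\infty$ on $(1,\infty)$) upgrades $\int\varphi_i^*(c_i)\,d\mu$ to $\int\varphi_i^*(\varsigma_i^a)\,d\mu$ while the mass defect $\varsigma_i(X)-\int c_i\,d\mu$ accounts for the singular term.

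Your sketch instead invokes continuous test functions $h$ ``equal to $1$ on the singular support and $0$ on a set carrying the absolutely continuous mass,'' but such continuous $h$ need not exist: the singular and absolutely continuous supports can interpenetrate topologically. The single-$t$ bound you wrote, $\liminf_n\int h\,\varphi_i^{(n),*}(c_i^{(n)})\,d\mu\ge t\int h\,d\varsigma_i-\varphi_i(t)\int h\,d\mu$, only gives one affine minorant and does not localize, so it cannot recover $\int\varphi_i^*(\varsigma_i^a)\,d\mu$. A genuine Reshetnyak-type statement would require a spatially varying $g\colon X\to[0,1)$ and then Lusin approximation plus a delicate $\epsilon\uparrow 1$ limit; moreover, with $\varphi_i^{(n)}$ only pointwise convergent (not monotone), controlling $\int\varphi_i^{(n)}(g)\,d\mu$ near $g\approx 1$ is nontrivial. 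None of this is ``standard,'' and it is exactly the difficulty the paper sidesteps by fixing $\varphi_i$ first and using Koml\'os.
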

Note that Theorem~\ref{th_vector_fields_min} is a particular case of Theorem~\ref{th_duality_min_appendix} for all $\eta_i$ equal to the distribution of $\xi^{B-1}$ with $\xi$ uniform on $[0,1]$.
\smallskip

To prove Theorem~\ref{th_duality_min_appendix} we need several auxiliary results.

\begin{lemma}
The set $\mathcal{C}^{mes}$ is closed in the weak*-topology.
\end{lemma}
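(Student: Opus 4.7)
The plan is to verify directly that the two conditions defining $\mathcal{C}^{mes}$ — non-negativity of each component and the integral inequality against smooth test functions $u \in \mathcal{U}^1$ — are preserved under weak* limits. Since $X = [0,1]^{\mathcal{I}}$ is compact, the relevant weak*-topology is that on the dual of $C(X)$: a net $\varsigma^{(\alpha)} = (\varsigma^{(\alpha)}_i)_{i\in\mathcal{I}}$ converges to $\varsigma = (\varsigma_i)_{i\in\mathcal{I}}$ if $\int f\,d\varsigma^{(\alpha)}_i \to \int f\,d\varsigma_i$ for every $f \in C(X)$ and each $i \in \mathcal{I}$.

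First, I would check non-negativity: for any $f \in C(X)$ with $f \geq 0$, the inequality $\int f\,d\varsigma^{(\alpha)}_i \geq 0$ passes to the limit, giving $\int f\,d\varsigma_i \geq 0$. By the Riesz representation theorem this yields $\varsigma_i \geq 0$.

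Second, I would fix a smooth $u \in \mathcal{U}^1$. Smoothness on the compact set $X$ ensures each partial derivative $u_{x_i}$ is a continuous bounded function on $X$, hence an admissible test function. Using the inequality defining $\mathcal{C}^{mes}$ for the approximating sequence,
\begin{equation*}
\int \bigl(\langle \nabla u(x), x\rangle - u(x)\bigr)\,d\mu(x) \leq \sum_{i\in\mathcal{I}} \int u_{x_i}\,d\varsigma^{(\alpha)}_i,
\end{equation*}
and taking the limit along $\alpha$ — the left-hand side is independent of $\alpha$, while each term on the right converges to $\int u_{x_i}\,d\varsigma_i$ by weak* convergence — yields the same inequality for $\varsigma$. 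Thus $\varsigma \in \mathcal{C}^{mes}$.

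I do not expect any serious obstacle: both steps are routine applications of the definition of weak* convergence once one observes that every admissible test quantity (non-negative continuous functions for non-negativity, and smooth $u_{x_i}$ for the constraint) lies in $C(X)$ thanks to compactness of $X$ and the smoothness restriction in the definition of $\mathcal{C}^{mes}$. The only mildly subtle point is that the constraint is imposed only for \emph{smooth} $u \in \mathcal{U}^1$ — this is precisely what guarantees that $u_{x_i}$ is a legitimate weak*-test function, which is what makes the argument work cleanly.
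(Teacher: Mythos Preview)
Your proof is correct and matches the paper's approach exactly: the paper's one-line proof simply notes that the result is ``trivial, as we can additionally require the test function $u$ to be smooth,'' which is precisely your observation that smoothness of $u$ makes each $u_{x_i}$ a continuous function on the compact $X$ and hence a legitimate test function for weak* convergence. You have just written out the details of what the paper leaves implicit.
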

\begin{proof} Trivial, as we can additionally require the test function $u$ to be smooth.
\end{proof}
\begin{lemma}\label{prop:c_def_works_for_u_lip}
Let $(\varsigma_i)_{i\in\mathcal{I}} \in \mathcal{C}^{mes}$, and let $\varsigma_i = \varsigma_i^a(x)\,{\dd}\mu( x) + \varsigma_i^{\mathrm{sing}}$ be a decomposition of the component $\varsigma_i$ into an absolutely continuous and a singular part w.r.t. $\mu$. Then for any $u \in \mathcal{U}_{{\lip}, 1}$, the following inequality holds:
\[
\int (\langle \nabla u(x), x \rangle  - u(x))\,{\dd}\mu \le \sum_{i\in\mathcal{I}} \left(\varsigma_i^{\mathrm{sing}}(X) + \int u_{x_i}(x) \cdot \varsigma_i^a(x)\,{\dd}\mu\right).
\]
\end{lemma}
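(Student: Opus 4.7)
The plan is to approximate $u \in \mathcal{U}_{{\lip}, 1}$ by smooth elements of $\mathcal{U}^1$, apply the defining inequality of $\mathcal{C}^{mes}$ to each approximant, and pass to the limit. The guiding observation is that the absolutely continuous and singular parts of $\varsigma_i$ must be handled by different mechanisms: dominated convergence for the former, using $\mu$-almost-everywhere convergence of the approximants' gradients, and the pointwise bound $v^\varepsilon_{x_i}\le 1$ coming from $1$-Lipschitzness for the latter.

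First I would extend $u$ to a convex, non-decreasing, $\ell^1$-Lipschitz function $\bar u$ on $\R^{\mathcal{I}}$ via the standard sup-of-affine-minorants formula $\bar u(x) := \sup\bigl\{u(y) + \langle p,\, x-y\rangle : y\in X,\ p\in\partial u(y)\bigr\}$. Monotonicity and $1$-Lipschitzness of $u$ force every subgradient $p$ to lie in $[0,1]^{\mathcal{I}}$, and the subgradient inequality at $0$ gives $\bar u(0)=0$ since $u(y)-\langle p,y\rangle\le u(0)=0$ for all admissible $y,p$. Convolving with a standard smooth mollifier $\rho_\varepsilon$ yields $u^\varepsilon := \bar u * \rho_\varepsilon$, which is smooth, convex, non-decreasing, and still $\ell^1$-Lipschitz, so $0 \le u^\varepsilon_{x_i}\le 1$ pointwise on $X$. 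Setting $v^\varepsilon := u^\varepsilon - u^\varepsilon(0)$ produces a smooth element of $\mathcal{U}^1$ with $v^\varepsilon \to u$ uniformly on $X$ and $\nabla v^\varepsilon \to \nabla u$ $\mu$-almost-everywhere (convex functions are differentiable a.e., and mollifications of $\bar u$ have gradients converging a.e.\ to $\nabla \bar u$).

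Then I would invoke the definition of $\mathcal{C}^{mes}$ for each $v^\varepsilon$ and split the right-hand side via the Lebesgue decomposition:
\[
\int \bigl(\langle \nabla v^\varepsilon, x\rangle - v^\varepsilon\bigr)\,{\dd}\mu \le \sum_{i\in\mathcal{I}} \int v^\varepsilon_{x_i}(x)\,\varsigma_i^a(x)\,{\dd}\mu + \sum_{i\in\mathcal{I}}\int v^\varepsilon_{x_i}\,{\dd}\varsigma_i^{\mathrm{sing}}.
\]
The left-hand side converges to the target integral by dominated convergence (integrand uniformly bounded); the absolutely continuous sum on the right likewise converges, with envelope $\varsigma_i^a \in L^1(\mu)$; and the singular sum is bounded termwise by $\varsigma_i^{\mathrm{sing}}(X)$ using only $v^\varepsilon_{x_i}\le 1$, with no information needed on the (Lebesgue-null) singular set. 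Sending $\varepsilon\to 0$ yields the claim. The main obstacle I expect is the approximation step: the construction must preserve monotonicity, convexity, and the sharp Lipschitz constant $1$ simultaneously, since any slack in the Lipschitz bound would be catastrophic for the singular-part estimate, where no almost-everywhere control is available. The sup-of-affine extension followed by convolution is the standard device that delivers all three properties at once.
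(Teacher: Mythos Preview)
Your proposal is correct and follows essentially the same approach as the paper's own proof: extend $u$ to all of $\R^{\mathcal{I}}$ as a sup of affine minorants, mollify to obtain smooth approximants preserving convexity, monotonicity, and the $1$-Lipschitz bound, subtract the value at $0$, apply the defining inequality of $\mathcal{C}^{mes}$, split via the Lebesgue decomposition bounding the singular piece by $\varsigma_i^{\mathrm{sing}}(X)$ using $v^\varepsilon_{x_i}\le 1$, and pass to the limit by dominated convergence (invoking a.e.\ convergence of gradients for uniformly convergent convex functions). The paper uses Gaussian kernels where you use a generic mollifier, but otherwise the two arguments are the same.
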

\begin{proof}
Let $\overline{u}$ be any non-negative convex function defined on the whole $\mathbb{R}^{\mathcal{I}}$ such that $0 \le \overline{u}_{x_i}(x) \le 1$ for all $x \in \mathbb{R}^{\mathcal{I}}$ and that $\overline{u}|_{{X}} = u$. {It can be defined, for instance, in the following way: 
$\overline{u}(x) = \sup_{\alpha} l_{\alpha}$, where $\{l_{\alpha}$\} is the set of  affine functions satisfying $l_{\alpha}|_{X} \le u$.}

Let $\{p_n\}$ be a sequence of Gaussian kernels converging to $\delta(0)$, and denote by $\overline{u}^{(n)}$ the convolution $\overline{u} * p_n$. One can easily check that $u^{(n)}(x)$ is a smooth non-negative convex function such that $0 \le \overline{u}^{(n)}_{x_i}(x) \le 1$ for all $x \in \mathbb{R}^{\mathcal{I}}$. Moreover, the sequence $\{\overline{u}^{(n)}\}$ converges uniformly to $u$ on ${X}$. Thus, denoting by $u^{(n)}(x)$ the function $\overline{u}^{(n)}(x) - \overline{u}^{(n)}(0)$, we conclude that $u^{(n)}$ is smooth, $u^{(n)} \in \mathcal{U}_{{\lip}, 1}$, and the sequence $\{u^{(n)}\}$ converges uniformly to $u$ on ${X}$.

Since $u^{(n)}_{x_i}(x) \le 1$ for all $x \in {X}$, the following inequality holds:
\begin{align}
\begin{split}\label{eq:ineq_on_un_and_c_tmp}
\int \big(\langle \nabla u^{(n)}(x), x \rangle  &- u^{(n)}(x) \big)\,{\dd}\mu \le\\
&\le \sum_{i\in\mathcal{I}} \left(\int u^{(n)}_{x_i}(x) \cdot \varsigma_i^a(x)\,{\dd}\mu + \int u^{(n)}_{x_i}(x)\,{\dd}\varsigma_i^{\mathrm{sing}}\right)\le\\
&\le \sum_{i\in\mathcal{I}} \left(\int u^{(n)}_{x_i}(x) \cdot \varsigma_i^a(x)\,{\dd}\mu + \varsigma_i^{\mathrm{sing}}(X)\right).
\end{split}
\end{align}
Since $\{u^{(n)}\}$ converges to $u$ uniformly on ${X}$, the sequence $\{\nabla u^{(n)}(x)\}$ converges to $\nabla u(x)$ for $\mu$-almost all $x$. Therefore, by the Lebesgue's dominated convergence theorem 
\begin{align*}
&\lim_{n \to \infty}\int (\langle \nabla u^{(n)}(x), x \rangle  - u^{(n)}(x))\,{\dd}\mu = \int (\langle \nabla u(x), x \rangle  - u(x))\,{\dd}\mu, \\
&\lim_{n \to \infty}\int u^{(n)}_{x_i}(x) \cdot \varsigma_i^a(x)\,{\dd}\mu = \int u_{x_i}(x) \cdot \varsigma_i^a(x)\,{\dd}\mu.
\end{align*}
Thus, passing to the limits in~\eqref{eq:ineq_on_un_and_c_tmp}, we obtain the desired inequality.
\end{proof}

The following proposition extends the complete duality result for the monopolist's problem with fixed costs (Proposition~\ref{thm:max_is_min_in_Q}) so that the minimum in the dual is attained.
\begin{proposition}\label{thm:monopolist_strong_duality}
For any given family of functions $(\varphi_i)_{i\in\mathcal{I}} \subset \mathcal{U}_{[0, 1]}^{+\infty}$ and an absolutely continuous measure $\mu$ on ${X}$, the following duality relation holds:
\begin{equation*}
    \max_{u \in \mathcal{U}_{{\lip}, 1}}\Phi(u, \varphi_i) = \min_{\varsigma \in \mathcal{C}^{mes}} \sum_{i\in\mathcal{I}} \left(\varsigma_i^{\mathrm{sing}}(X) + \int \varphi_i^*(\varsigma_i^a(x))\,{\dd}\mu\right),
\end{equation*}
where $\varsigma_i = \varsigma_i^a(x)\,{\dd}\mu + \varsigma_i^{\mathrm{sing}}$ is a decomposition of the component $\varsigma_i$ into an absolutely continuous  and a singular part w.r.t. $\mu$.
\end{proposition}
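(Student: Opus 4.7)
The plan is to combine a direct weak-duality estimate with an approximation argument reducing the problem to the already-solved case $\varphi_i \in \mathcal{Q}$ (Proposition~\ref{thm:max_is_min_in_Q}). The singular part $\varsigma_i^{\mathrm{sing}}$ of the dual optimizer will absorb the mass that, under the approximation, concentrates on the edge $\{u_{x_i}=1\}$. For the easy direction, take $u \in \mathcal{U}_{\lip,1}$ and $\varsigma \in \mathcal{C}^{mes}$ with Lebesgue decomposition $\varsigma_i = \varsigma_i^a \,{\dd}\mu + \varsigma_i^{\mathrm{sing}}$. Lemma~\ref{prop:c_def_works_for_u_lip} supplies
$$
\int (\langle \nabla u, x\rangle - u)\,{\dd}\mu \;\le\; \sum_{i\in \mathcal{I}} \left( \int u_{x_i}\,\varsigma_i^a\,{\dd}\mu + \varsigma_i^{\mathrm{sing}}(X) \right),
$$
and the pointwise Fenchel inequality $u_{x_i}\,\varsigma_i^a \le \varphi_i(u_{x_i}) + \varphi_i^*(\varsigma_i^a)$ then gives $\Phi(u,\varphi_i) \le \sum_i \bigl(\varsigma_i^{\mathrm{sing}}(X) + \int \varphi_i^*(\varsigma_i^a)\,{\dd}\mu\bigr)$, which is the direction $\max \le \min$.

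For attainment, use Lemma~\ref{prop:approx_U_with_Q} to pick $\varphi_i^{(n)} \in \mathcal{Q}$ with $\varphi_i^{(n)} \nearrow \varphi_i$ on $[0,1]$ and $\max_{\mathcal{U}^1}\Phi(\cdot,\varphi_i^{(n)}) \to \max_{\mathcal{U}_{\lip,1}}\Phi(\cdot,\varphi_i)$. Proposition~\ref{thm:max_is_min_in_Q} then produces $c^{(n)} \in \overline{\mathcal{C}}$ with $\sum_i \int \varphi_i^{(n),*}(c_i^{(n)})\,{\dd}\mu = \max_{\mathcal{U}^1}\Phi(\cdot,\varphi_i^{(n)})$. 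Set the vector measures $\varsigma_i^{(n)} := c_i^{(n)}\,{\dd}\mu$. Since $\varphi_i$ has domain $[0,1]$ and is finite at every $1-\delta$, the elementary estimate $\varphi_i^{(n),*}(t) \ge (1-\delta)t - \varphi_i(1-\delta)$ yields
$$
(1-\delta)\int c_i^{(n)}\,{\dd}\mu \;\le\; \int \varphi_i^{(n),*}(c_i^{(n)})\,{\dd}\mu + \varphi_i(1-\delta)\,\mu(X),
$$
giving a uniform bound on the total mass of $\varsigma_i^{(n)}$. By Banach--Alaoglu, some subsequence satisfies $\varsigma_i^{(n)} \rightharpoonup^{*} \varsigma_i$; weak-* closedness of $\mathcal{C}^{mes}$ then gives $\varsigma \in \mathcal{C}^{mes}$.

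It remains to prove the lower-semicontinuity estimate
$$
\liminf_{n} \int \varphi_i^{(n),*}(c_i^{(n)})\,{\dd}\mu \;\ge\; \int \varphi_i^*(\varsigma_i^a)\,{\dd}\mu + \varsigma_i^{\mathrm{sing}}(X).
$$
Pointwise $\varphi_i^{(n),*} \searrow \varphi_i^*$ (dual to $\varphi_i^{(n)} \nearrow \varphi_i$), and because $\varphi_i$ is bounded on the compact domain $[0,1]$ the recession function of $\varphi_i^*$ is the identity $(\varphi_i^*)_\infty(t)=t$, which matches exactly the unit coefficient attached to $\varsigma_i^{\mathrm{sing}}(X)$ in the dual. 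Reshetnyak's lower-semicontinuity theorem for convex integral functionals of vector measures with linear recession then delivers the inequality. Combined with the convergence of maxima from Lemma~\ref{prop:approx_U_with_Q} and the weak duality from the first paragraph, this gives $\max_{\mathcal{U}_{\lip,1}}\Phi(u,\varphi_i) \ge \sum_i\bigl(\varsigma_i^{\mathrm{sing}}(X)+\int \varphi_i^*(\varsigma_i^a)\,{\dd}\mu\bigr)$, so the infimum is attained at $\varsigma$ and the duality identity holds.

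The main obstacle is precisely this last step. Under the approximation $\varphi_i^{(n)} \nearrow \varphi_i$, the densities $c_i^{(n)}$ may concentrate in regions where $\varphi_i^{(n)}$ becomes very steep and, in the limit, produce a singular piece $\varsigma_i^{\mathrm{sing}}$ that is not visible in the absolutely continuous part. The cost of this escaped mass has to be accounted for with exactly the coefficient $1$, and this works out because the feasibility constraint $u_{x_i} \le 1$ in $\mathcal{U}_{\lip,1}$ pairs precisely with the linear recession slope $1$ of $\varphi_i^*$ at infinity. Establishing this matching rigorously, via the Reshetnyak-type lower-semicontinuity result, is the technical crux of the argument.
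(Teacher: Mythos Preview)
Your proof is essentially correct and follows the same overall architecture as the paper: weak duality via Lemma~\ref{prop:c_def_works_for_u_lip} plus Fenchel, approximation by $\varphi_i^{(n)}\in\mathcal{Q}$ via Lemma~\ref{prop:approx_U_with_Q}, duality at level $n$ via Proposition~\ref{thm:max_is_min_in_Q}, a uniform mass bound from $\varphi_i^{(n),*}(t)\ge(1-\delta)t-\varphi_i(1-\delta)$, weak-$*$ compactness, and closedness of $\mathcal{C}^{mes}$. The one genuine divergence is the lower-semicontinuity step, and that is worth a comparison.

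You close the argument by first replacing $\varphi_i^{(n),*}$ with $\varphi_i^*$ (using $\varphi_i^{(n)}\nearrow\varphi_i\Rightarrow\varphi_i^{(n),*}\ge\varphi_i^*$) and then invoking a Reshetnyak/Goffman--Serrin type theorem for the functional $\varsigma\mapsto\int\varphi_i^*(\varsigma^a)\,{\dd}\mu+(\varphi_i^*)_\infty\cdot\varsigma^{\mathrm{sing}}(X)$ with recession slope $(\varphi_i^*)_\infty=1$. This is clean and short, but note one imprecision: you justify $(\varphi_i^*)_\infty=1$ by saying ``$\varphi_i$ is bounded on $[0,1]$,'' which need not hold since $\mathcal{U}_{[0,1]}^{+\infty}$ allows $\varphi_i(1)=+\infty$. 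The conclusion is still correct because $(\varphi_i^*)_\infty=\sup\{r:\varphi_i(r)<\infty\}=1$ regardless, but the stated reason should be adjusted. The paper, by contrast, avoids any external lower-semicontinuity theorem: it passes to a subsequence along which the Ces\`aro means $\tfrac{1}{n}\sum_{k\le n}c_i^{(k)}$ converge $\mu$-a.e.\ (Koml\'os), identifies the a.e.\ limit $c_i$ with a function dominated by $\varsigma_i^a$, and then combines Fatou with the $1$-Lipschitz property of $\varphi_i^*$ to recover exactly $\varsigma_i^{\mathrm{sing}}(X)+\int\varphi_i^*(\varsigma_i^a)\,{\dd}\mu$. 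Your route is more conceptual and reusable; the paper's is self-contained and makes explicit why the coefficient in front of $\varsigma_i^{\mathrm{sing}}(X)$ is exactly $1$ (it comes out of the $1$-Lipschitz bound rather than a recession computation).
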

\begin{proof}
By Lemma~\ref{prop:c_def_works_for_u_lip}, for any $u \in \mathcal{U}_{{\lip}, 1}$ and $\varsigma \in \mathcal{C}^{mes}$, we have
\[
\int (\langle \nabla u(x), x \rangle  - u(x))\,{\dd}\mu \le \sum_{i\in\mathcal{I}} \left(\varsigma_i^{\mathrm{sing}}(X) + \int u_{x_i}(x) \cdot \varsigma_i^a(x)\,{\dd}\mu\right).
\]
Therefore,
\begin{align*}
\Phi(u, \varphi_i)  &\le \sum_{i\in\mathcal{I}}\left(\varsigma_i^{\mathrm{sing}}(X) +  \int\left[u_{x_i}(x) \cdot \varsigma_i^a(x) - \varphi_i(u_{x_i}(x))\right]\,{\dd}\mu\right) \le \\
&\le \sum_{i\in\mathcal{I}} \left(\varsigma_i^{\mathrm{sing}}(X) + \int \varphi_i^*(\varsigma_i^a(x))\,{\dd}\mu\right),
\end{align*}
where the last part of the inequality follows from the inequality $\varphi_i(u_{x_i}(x)) + \varphi_i^*(\varsigma_i^a(x)) \ge u_{x_i}(x) \cdot \varsigma_i^a(x)$, which holds for all $x$. Thus we conclude that
\begin{equation}\label{eq:max_u_le_min_c}
\max_{u \in \mathcal{U}_{{\lip}, 1}}\Phi(u, \varphi_i) \le \min_{\varsigma \in \mathcal{C}^{mes}} \sum_{i\in\mathcal{I}} \left(\varsigma_i^{\mathrm{sing}}(X) + \int \varphi_i^*(\varsigma_i^a(x))\,{\dd}\mu\right).
\end{equation}

By Lemma~\ref{prop:approx_U_with_Q}, there exist increasing sequences $\{\varphi_i^{(n)}\}_n \subset \mathcal{Q}$, $1 \le i \le {I}$ that converge to $\varphi_i$ pointwise on $[0, 1]$ and that 
\begin{equation}\label{eq:max_Phi=lim_max_Phi_n}
\max_{u \in \mathcal{U}_{{\lip}, 1}}\Phi(u, \varphi_i) = \lim_{n \to \infty}\max_{u \in \mathcal{U}^1}\Phi\left(u, \varphi_i^{(n)}\right).
\end{equation}
Denote by $M$ the maximal value of $\Phi(u, \varphi_i)$. We may assume that for all $n$ we have
\[
2M \ge \max_{u \in \mathcal{U}^1}\Phi\left(u, \varphi_i^{(n)}\right).
\]

By Proposition~\ref{thm:max_is_min_in_Q}, for each $n$ there exists a tuple of functions $\{c_i^{(n)}\}_{i\in\mathcal{I}} \subset \overline{\mathcal{C}}$ such that
\begin{equation}\label{eq:max_is_sum_c}
\max_{u \in \mathcal{U}^1} \Phi(u, \varphi_i^{(n)}) = \sum_{i\in\mathcal{I}}\int \left(\varphi_i^{(n)}\right)^*(c_i^{(n)}(x))\,{\dd}\mu.
\end{equation}
Denote by $\varsigma_i^{(n)}$ the measure $c_i^{(n)}(x)\,{\dd}\mu$. By the definition of $\big(\varphi_i^{(n)}\big)^*$, for all $x \ge 0$ and for every $t \in [0, 1)$ we have
\begin{equation}\label{eq:phi_*_bounded_from_below}
(\varphi_i^{(n)})^*(x) \ge t \cdot x - \varphi_i^{(n)}(t) \ge t \cdot x - \varphi_i(t);
\end{equation}
in the last inequality, we use that $\{\varphi_i^{(n)}\}_n$ is an increasing sequence of functions.
So, for each $i$ and for every $t \in [0, 1)$ the following inequality holds:
\begin{multline*}
2M \ge \sum_{i\in\mathcal{I}}\int \left(\varphi_i^{(n)}\right)^*(c_i^{(n)}(x))\,{\dd}\mu\ge \\
\ge \int \left(\varphi_i^{(n)}\right)^*(c_i^{(n)}(x))\,{\dd}\mu \ge t\int c_i^{(n)}(x)\,{\dd}\mu - \varphi_i(t).
\end{multline*}

This means that the sequence $\varsigma_i^{(n)}(X) = \int c_i^{(n)}(x)\,{\dd}\mu$ is bounded from above by $(2M + \varphi_i(t)) / t$. Applying the Prokhorov theorem and passing to a subsequence, we may assume that the sequence of measures $\{\varsigma_i^{(n)}\}_n$ converges weakly to some non-negative measure $\varsigma_i$. Also, applying the Komlos theorem and passing to a subsequence, we may assume that
\[
\frac{1}{n}\sum_{i = 1}^n c_i^{(n)} \underset{n \to \infty}{\to} c_i
\]
for some $c_i \in L^1(\mu)$ almost everywhere.

Since $\varsigma_i^{(n)}$ converges weakly to $\varsigma_i$, one has
\[
\lim_{n \to \infty} \int c_i^{(n)}(x)\,{\dd}\mu = \lim_{n \to \infty} \varsigma_i^{(n)}(X) = \varsigma_i(X).
\]
So, combining equations~\eqref{eq:max_Phi=lim_max_Phi_n} and~\eqref{eq:max_is_sum_c}, we conclude that for every $t \in [0, 1)$ the following equality holds:
\begin{align}\label{eq:proof_main_dual_1}
\begin{split}
\max_{u \in \mathcal{U}_{{\lip}, 1}}\Phi(u, \varphi_i) &= \lim_{n \to \infty}\sum_{i\in\mathcal{I}}\int (\varphi_i^{(n)})^*(c_i^{(n)}(x))\,{\dd}\mu=
\\ &=\sum_{i\in\mathcal{I}}t \cdot \varsigma_i(X) +  \lim_{n \to \infty} \sum_{i\in\mathcal{I}}\int \left[(\varphi_i^{(n)})^*(c_i^{(n)}(x)) - t \cdot c_i^{(n)}(x)\right]\,{\dd}\mu.
\end{split}
\end{align}

Consider the last item of the previous expression's right-hand side. By the Cesaro means,
\begin{align}\label{eq:proof_main_dual_2}
\begin{split}
\lim_{n \to \infty}\sum_{i\in\mathcal{I}}\int &\left[\left(\varphi_i^{(n)}\right)^*(c_i^{(n)}(x)) - t \cdot c_i^{(n)}(x)\right]\,{\dd}\mu =\\
&= \lim_{n \to \infty}\int \sum_{i\in\mathcal{I}}\frac{1}{n}\sum_{k = 1}^n\left[\left(\varphi_i^{(k)}\right)^*(c_i^{(k)}(x)) - t\cdot c_i^{(k)}(x)\right]\,{\dd}\mu\ge\\
&\ge \sum_{i\in\mathcal{I}}\liminf_{n \to \infty}\left(\int\frac{1}{n}\sum_{k = 1}^n\left[\left(\varphi_i^{(k)}\right)^*(c_i^{(k)}(x)) - t \cdot c_i^{(k)}(x)\right]\,{\dd}\mu\right).
\end{split}
\end{align}

Denote by $\psi_i^{(n)}(x)$ the function $(\varphi_i^{(n)})^*(c_i^{(n)}(x)) - t \cdot c_i^{(n)}(x)$. Inequality~\eqref{eq:phi_*_bounded_from_below} implies that each function $\psi_i^{(n)}(x)$ is bounded from below by $-\varphi_i(t)$, so the function $(\psi_i^{(1)}(x) + \dots + \psi_i^{(n)}(x)) / n$ is also bounded from below by $-\varphi_i(t)$. Therefore, it follows from the Fatou lemma that
\begin{align}\label{eq:proof_main_dual_3}
\begin{split}
\liminf_{n \to \infty}\Bigg(\int\frac{1}{n}\sum_{k = 1}^n&\left[(\varphi_i^{(k)})^*(c_i^{(k)}(x)) - t \cdot c_i^{(k)}(x)\right]\,{\dd}\mu\Bigg)\ge \\
&\ge \int \liminf_{n \to \infty}\left(\frac{1}{n}\sum_{k = 1}^n\left[(\varphi_i^{(k)})^*(c_i^{(k)}(x)) - t \cdot c_i^{(k)}(x)\right]\right)\,{\dd}\mu=\\
&=\int \liminf_{n \to \infty}\left(\frac{1}{n}\sum_{k = 1}^n(\varphi_i^{(k)})^*(c_i^{(k)}(x))\right)\,{\dd}\mu - t \cdot \int c_i(x)\,{\dd}\mu,
\end{split}
\end{align}
where the last equation follows from the fact that $\frac{1}{n}\sum_{i = 1}^n c_i^{(k)}$ converges to $c_i(x)$ for $\mu$-almost every~$x$.

Finally, for every $x \ge 0$ and $y \in [0, 1)$ the following inequality holds:
\[
(\varphi_i^{(n)})^*(c_i^{(n)}(x)) + \varphi_i^{(n)}(y) \ge y \cdot c_i^{(n)}(x);
\]
therefore,
\[
\frac{1}{n}\sum_{k = 1}^n(\varphi_i^{(k)})^*(c_i^{(k)}(x)) + \frac{1}{n}\sum_{k = 1}^n\varphi_i^{(k)}(y) \ge y \cdot \frac{1}{n}\sum_{k = 1}^n c_i^{(k)}(x).
\]
Passing to the limits, we conclude that
\[
 \liminf_{n \to \infty}\left(\frac{1}{n}\sum_{k = 1}^n(\varphi_i^{(k)})^*(c_i^{(k)}(x))\right) + \varphi_i(y) \ge y \cdot c_i(x)
\]
for all $y \in [0, 1]$. Thus
\begin{equation}\label{eq:proof_main_dual_4}
    \liminf_{n \to \infty}\left(\frac{1}{n}\sum_{k = 1}^n(\varphi_i^{(k)})^*(c_i^{(k)}(x))\right) \ge \max_{y \in [0, 1]} \left[y \cdot c_i(x) - \varphi_i(y)\right] =  \varphi_i^*(c_i(x)).
\end{equation}

Combining inequalities~\eqref{eq:proof_main_dual_1}, \eqref{eq:proof_main_dual_2},  \eqref{eq:proof_main_dual_3}, and \eqref{eq:proof_main_dual_4}, we conclude that for all $t \in [0, 1)$ the following inequality holds:
\[
\max_{u \in \mathcal{U}_{{\lip}, 1}}\Phi(u, \varphi_i) 
\ge \sum_{i\in\mathcal{I}}\left(t \cdot \varsigma_i(X) +  \int\varphi_i^*(c_i(x))\,{\dd}\mu - t \cdot \int c_i(x)\,{\dd}\mu\right).
\]
Letting $t$ tend to $1$, we obtain the following inequality:
\[
\max_{u \in \mathcal{U}_{{\lip}, 1}}\Phi(u, \varphi_i)
\ge \sum_{i\in\mathcal{I}}\left(\varsigma_i(X) +  \int\varphi_i^*(c_i(x))\,{\dd}\mu - \int c_i(x)\,{\dd}\mu\right).
\]

Next, we check that $c_i(x)\,{\dd}\mu \le \varsigma_i$ for all ${i\in\mathcal{I}}$. By the Cesaro means, the sequence of measures $(c_i^{(1)}(x)\,{\dd}\mu + \dots + c_i^{(n)}(x)\,{\dd}\mu) / n$ converges weakly to $\varsigma_i$. Then by the well-known property of the weak convergence for any closed subset $A$ of ${X}$ we have
\[
\limsup \int_A\frac{1}{n}\sum_{k \in \mathcal{I}} c_i^{(k)}(x)\,{\dd}\mu \le \varsigma_i(A).
\]
By the Fatou lemma,
\[
\limsup \int_A\frac{1}{n}\sum_{k \in \mathcal{I}} c_i^{(k)}(x)\,{\dd}\mu \ge \liminf \int_A\frac{1}{n}\sum_{k = 1}^{\mathcal{I}} c_i^{(k)}(x)\,{\dd}\mu \ge \int_A c_i(x)\,{\dd}\mu.
\]
Thus $\int_A c_i(x)\,{\dd}\mu \le \varsigma_i(A)$ for all closed subsets $A$ of ${X}$; therefore, $c_i(x)\,{\dd}\mu \le \varsigma_i$.

Let $\varsigma_i = \varsigma_i^a(x)\,{\dd}\mu + \varsigma_i^{\mathrm{sing}}$ be a decomposition of the component $\varsigma_i$ into an absolutely continuous and a singular part w.r.t. $\mu$. Since $c_i(x) \,{\dd}\mu \le \varsigma_i$, we conclude that $c_i(x) \le \varsigma_i^a(x)$ for $\mu$-almost every $x$. Therefore,
\begin{align*}
\varsigma_i(X) + \int\varphi_i^*(c_i(x))\,{\dd}\mu &- \int c_i(x)\,{\dd}\mu =\\ &=\varsigma_i^{\mathrm{sing}}(X) + \int(\varphi_i^*(c_i(x)) - c_i(x) + \varsigma_i^a(x))\,{\dd}\mu\ge\\
&\ge  \varsigma_i^{\mathrm{sing}}(X) + \int \varphi_i^*(\varsigma^a_i(x))\,{\dd}\mu,
\end{align*}
where the last inequality follows from the fact that $\varphi_i^*$ is a 1-Lipschitz function. Finally,
\[
\max_{u \in \mathcal{U}_{{\lip}, 1}}\Phi(u, \varphi_i) \ge \sum_{i\in\mathcal{I}} \left(\int_0^1 \varphi_i(x)\,{\dd}\eta_i + \varsigma_i^{\mathrm{sing}}(X) + \int \varphi_i^*(\varsigma^a_i(x))\,{\dd}\mu\right).
\]

Comparing this inequality to~\eqref{eq:max_u_le_min_c}, we conclude that the equality holds and thus complete the proof of Proposition~\ref{thm:monopolist_strong_duality}.
\end{proof}

\begin{proof}[Proof of Theorem~\ref{th_duality_min_appendix}]
The theorem is a combination of Proposition~\ref{thm:monopolist_strong_duality} and Theorem~\ref{duality-theorem-maxinf}.
\end{proof}

 \section{Examples and applications}\label{app_examples}
 \ed{We show how vector fields solving the dual problems from Theorem~\ref{th_vector_fields_inf} and~\ref{th_vector_fields_min} can be constructed explicitly. First, we consider several bidders competing for one item and demonstrate that the optimal vector field (a scalar, in this case) is equal to the ironed virtual valuation function. Then we use this insight to recover the result by \cite{jehiel2007mixed} that, for any number of items and bidders having independent values over them, auctioning the items separately is never an optimal mechanism. 
  Finally, we consider the one-bidder problem with two items with i.i.d. values uniform on $[0,1]$ and recover the result by \cite{manellivincent} showing that the optimal mechanism is selling each item separately together with offering the bundle for a discounted price.} 
 
\subsection{The case of one item}
\ed{Consider the auctioneer's problem with one item and $B > 1$ bidders whose values are distributed with  continuously differentiable strictly positive density $\rho$ on $[0,1]$. We will see that the optimal vector field in the dual problem
coincides with the Myersonian ironed virtual valuation function thus proving Proposition~\ref{prop_ironed}.

We allow for generalized vector fields represented by vector measures with singular components as in Theorem~\ref{th_vector_fields_min} but, as we will see below, there are no singularities in the optimum. The dual problem we start with is to find a positive measure $\varsigma$ defined on $[0, 1]$ that satisfies the constraint
\[
\int_0^1 (x \cdot u'(x) - u(x))\rho(x)\,{\dd}x \le \int u'(x)\,{\dd}\varsigma
\]
for any smooth non-decreasing convex $u$ with $u(0)=0$ and minimizes the functional
\[
\mathrm{Dual}(\varsigma) = \inf_{\varphi}\left[ \varsigma^{\mathrm{sing}}([0,1]) + \int_0^1 \varphi^*(\varsigma^a(x))\rho(x)\,{\dd}x + \int_0^1\varphi(t^{B-1})\,{\dd}t\right],
\]
where infimum is taken over convex non-decreasing functions $\varphi$ equal zero at zero.

\begin{remark}\label{rem:measure_restriction}
By  complementary slackness conditions (Corollary~\ref{cor_slackness_extended}), the absolutely continuous component   $\varsigma^a(x)$ of $\varsigma$  is a non-decreasing function of $x$ and the singular component can only be supported on $x=1$, i.e., $\varsigma^{\mathrm{sing}}$ is either absent or is a point mass at~$1$.
%
\end{remark}
}

\begin{lemma}
For any smooth $u$, 
\[
\int_0^1 (x \cdot u'(x) - u(x))\rho(x)\,{\dd}x = \int_0^1 u'(x)\cdot V(x)\rho(x)\,{\dd}x - u(0),
\]
\ed{where $V$ is the virtual valuation function:}
\[
V(x) = x - \frac{1 - \mathcal{P}(x)}{\rho(x)}, \quad \mathcal{P}(t) = \int_0^t \rho(x)\,{\dd}x.
\]
\end{lemma}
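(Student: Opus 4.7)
The plan is a direct integration-by-parts computation. I would expand the left-hand side as $\int_0^1 x\,u'(x)\rho(x)\,{\dd}x-\int_0^1 u(x)\rho(x)\,{\dd}x$ and transform the second integral so that it matches the structure of the first.

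First, I would integrate the second term by parts, choosing the antiderivative of $\rho$ that vanishes at the upper endpoint, namely $\mathcal{P}(x)-1$ (so that the boundary term at $x=1$ is killed). This gives
\[
\int_0^1 u(x)\rho(x)\,{\dd}x=\bigl[u(x)(\mathcal{P}(x)-1)\bigr]_0^1-\int_0^1 u'(x)(\mathcal{P}(x)-1)\,{\dd}x=u(0)+\int_0^1 u'(x)(1-\mathcal{P}(x))\,{\dd}x,
\]
using $\mathcal{P}(1)=1$ and $\mathcal{P}(0)=0$.

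Next I would substitute this identity back and collect the two integrals under the common factor $u'(x)$:
\[
\int_0^1\bigl(x\,u'(x)-u(x)\bigr)\rho(x)\,{\dd}x=\int_0^1 u'(x)\Bigl(x\rho(x)-(1-\mathcal{P}(x))\Bigr){\dd}x-u(0).
\]
Factoring $\rho(x)$ out of the bracket yields $\rho(x)\bigl(x-\tfrac{1-\mathcal{P}(x)}{\rho(x)}\bigr)=\rho(x)V(x)$, which is exactly the right-hand side. No step is subtle here: the only small thing to observe is the correct choice of antiderivative of $\rho$ so that the boundary contribution at $x=1$ vanishes and the contribution at $x=0$ produces precisely the $-u(0)$ term. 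Smoothness of $u$ and strict positivity/smoothness of $\rho$ on $[0,1]$ ensure that all integrations by parts are legitimate and the quotient $\tfrac{1-\mathcal{P}(x)}{\rho(x)}$ is well-defined.
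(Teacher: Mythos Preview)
Your proof is correct and follows exactly the same route as the paper: integrate $\int_0^1 u(x)\rho(x)\,{\dd}x$ by parts using the antiderivative $\mathcal{P}(x)-1$ so that the boundary term at $x=1$ vanishes, then substitute back and factor out $\rho(x)$ to identify $V(x)$.
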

\begin{proof}
Integrating by parts,
\[
\int_0^1 u(x) \rho(x)\,{\dd}x = -\int_0^1u(x)\,d(1 - \mathcal{P}(x)) = u(0) + \int_0^1 u'(x) \cdot (1 - \mathcal{P}(x))\,{\dd}x;
\]
therefore,
\begin{multline*}
\int_0^1(x\cdot u'(x) - u(x))\rho(x)\,{\dd}x = \int_0^1 (x \cdot \rho(x) - (1 - \mathcal{P}(x))) \cdot u'(x)\,{\dd}x - u(0)= \\
= \int_0^1 u'(x)\cdot V(x)\rho(x)\,{\dd}x - u(0).
\end{multline*}
\end{proof}

\begin{proposition}
Denote $F_V(t) = \int_t^{1} V(x)\rho(x)\,{\dd}x$, $F_\varsigma(t) = \varsigma\left([t, 1]\right)$. Then the inequality
\begin{equation}\label{eq:1D_majorization}
\int_0^1 (x \cdot u'(x) - u(x))\rho(x)\,{\dd}x \le \int_0^1 u'(x)\,{\dd}\varsigma
\end{equation}
\ed{holds for all smooth convex $u$ with $u(0)=0$ if and only if} $F_\varsigma(t) \ge F_V(t)$ for all $t \ge 0$.
\end{proposition}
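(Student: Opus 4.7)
By the preceding lemma and the condition $u(0)=0$, the left-hand side of~\eqref{eq:1D_majorization} equals $\int_0^1 u'(x) V(x)\rho(x)\,{\dd}x$, so the inequality to be analyzed is
\[
\int_0^1 u'(x) V(x)\rho(x)\,{\dd}x \;\le\; \int_0^1 u'(x)\,{\dd}\varsigma(x),
\]
quantified over smooth convex non-decreasing $u$ with $u(0)=0$. The plan is to re-express both integrals in terms of $u'(0)\ge 0$ and the non-negative function $u''$, using the representation $u'(x)=u'(0)+\int_0^x u''(s)\,{\dd}s$.

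Applying Fubini to swap the orders of integration on each side yields
\[
\int_0^1 u'(x) V(x)\rho(x)\,{\dd}x = u'(0)\,F_V(0) + \int_0^1 u''(s)\,F_V(s)\,{\dd}s,
\]
\[
\int_0^1 u'(x)\,{\dd}\varsigma(x) = u'(0)\,F_\varsigma(0) + \int_0^1 u''(s)\,\varsigma\bigl((s,1]\bigr)\,{\dd}s,
\]
where the second identity uses the definition $F_V(t)=\int_t^1 V(x)\rho(x)\,{\dd}x$. Because $\varsigma$ has at most countably many atoms, $\varsigma\bigl((s,1]\bigr)$ agrees with $F_\varsigma(s)=\varsigma([s,1])$ for Lebesgue-almost every $s$, so the last integral may be replaced by $\int_0^1 u''(s) F_\varsigma(s)\,{\dd}s$. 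The inequality~\eqref{eq:1D_majorization} thus reduces to
\[
u'(0)\bigl(F_V(0)-F_\varsigma(0)\bigr) + \int_0^1 u''(s)\bigl(F_V(s)-F_\varsigma(s)\bigr)\,{\dd}s \;\le\; 0.
\]

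Sufficiency is now immediate: if $F_\varsigma\ge F_V$ pointwise on $[0,1]$, then both $u'(0)\ge 0$ and $u''\ge 0$ (which hold because $u$ is non-decreasing and convex) make both terms non-positive.

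For necessity, first test $u(x)=x$ (so $u'(0)=1$, $u''\equiv 0$) to deduce $F_V(0)\le F_\varsigma(0)$. Next, for each $t_0\in(0,1)$, take a sequence of smooth convex non-decreasing functions $u_\epsilon$ with $u_\epsilon(0)=u_\epsilon'(0)=0$ and $u_\epsilon''$ a smooth mollifier approximating a Dirac mass concentrated on an interval of length $\epsilon$ just below $t_0$. Passing to the limit $\epsilon\to 0$, the inequality gives $F_V(t_0)\le \lim_{s\uparrow t_0}F_\varsigma(s)=F_\varsigma(t_0)$, where the last equality uses that $F_\varsigma$ is left-continuous as a tail of a finite measure and $F_V$ is continuous as an integral. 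This establishes $F_V(t_0)\le F_\varsigma(t_0)$ for every $t_0\in[0,1]$. The principal subtlety is the bookkeeping around atoms of $\varsigma$ — ensuring that the limit of the smooth-test-function procedure picks up the full mass $F_\varsigma(t_0)$ and not just $\varsigma((t_0,1])$ — which is resolved by approaching $t_0$ from the left (using left-continuity of $F_\varsigma$) rather than from the right.
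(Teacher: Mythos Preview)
Your proof is correct and follows essentially the same route as the paper's: both reduce~\eqref{eq:1D_majorization} to the statement
\[
u'(0)\bigl(F_V(0)-F_\varsigma(0)\bigr)+\int_0^1 u''(s)\bigl(F_V(s)-F_\varsigma(s)\bigr)\,{\dd}s\le 0
\]
for all $u'(0)\ge 0$ and smooth $u''\ge 0$, and then read off the equivalence. The paper gets there via two integrations by parts (and, on the $\varsigma$-side, Riemann--Stieltjes integration by parts after invoking the structural decomposition $F_\varsigma=F_\varsigma^a+\varsigma^{\mathrm{sing}}(\{1\})$ from the preceding Remark), whereas your Fubini argument via $u'(x)=u'(0)+\int_0^x u''$ works for any finite non-negative measure $\varsigma$ without that extra hypothesis. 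You are also more explicit about necessity --- the paper simply asserts the equivalence holds ``for (almost) all $t$'' --- and your left-continuity argument correctly upgrades this to a pointwise inequality, which is a small improvement in rigor.
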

\begin{proof}
Integrating by parts twice, we get
\begin{align*}
    \int_0^1 &(x \cdot u'(x) - u(x))\rho(x)\,{\dd}x = \int_0^1 u'(x) V(x)\rho(x)\,{\dd}x =\\
    &= -\int_0^1 u'(x)\,{\dd}F_V(x) = u'(0) \cdot F_V(0) + \int_0^1 F_V(x)\cdot u''(x)\,{\dd}x.
\end{align*}
Next, we decompose the function $F_\varsigma(t)$ as $F_\varsigma^a(t) + \ed{\varsigma^{\mathrm{sing}}(\{1\})}$, where
\[
F_\varsigma^a(t) = \int_t^1 \varsigma^a(x)\rho(x){\dd}x.
\]
\ed{Integrating by parts in the Riemann–Stieltjes integral, we obtain}
\begin{align*}
\int_0^1 u'(x)\,{\dd}\varsigma &= u'(1) \cdot \ed{\varsigma^{\mathrm{sing}}(\{1\})} - \int_0^1 u'(t)\,{\dd}F^a_\varsigma(t)= \\
&= u'(1) \cdot  \ed{\varsigma^{\mathrm{sing}}(\{1\})} + u'(0) \cdot F^a_\varsigma(0) + \int_0^1 F^a_\varsigma(t)u''(t)\, {\dd}t=\\
&=(u'(1) - u'(0)) \cdot  \ed{\varsigma^{\mathrm{sing}}(\{1\})} + u'(0) \cdot F_\varsigma(0) + \int_0^1 F^a_\varsigma(t)u''(t)\, {\dd}t=\\
&= u'(0) \cdot F_\varsigma(0) + \int_0^1 F_\varsigma(x) \cdot u''(x)\,{\dd}x.
\end{align*}

So, the inequality
\[
F_\varsigma(0) \cdot u'(0) + \int_0^1 F_\varsigma(x) \cdot u''(x)\,{\dd}x \ge u'(0) \cdot F_V(0) + \int_0^1 F_V(x)\cdot u''(x)\,{\dd}x
\]
holds for all $u'(0) \ge 0$ and all smooth non-negative functions $u''(x)$. This happens if and only if $F_\varsigma(t) \ge F_V(t)$ for (almost) all $t \in [0, 1]$.
\end{proof}
\ed{Denote $\mathcal{P}^{-1}(x)$ by $\mathcal{Q}(x)$.} \fed{Sasha, please, check if I guessed correctly.}
Consider the function $G_\varsigma(t) = F_\varsigma(\mathcal{Q}(t)) = \int_t^1 \varsigma^a(\mathcal{Q}(x))\,{\dd}x + \ed{\varsigma^{\mathrm{sing}}(\{1\})}$ and the set 
\[
\mathcal{G} = \{g \colon [0, 1] \to \mathbb{R} \mid g(t) = G_\varsigma(t) \text{ for some $\varsigma$ satisfying Remark~\ref{rem:measure_restriction}}\}.
\]
One can check that $g \in \mathcal{G}$ if and only if $g(t)$ is a non-negative non-increasing concave function (concavity follows from the fact that $\varsigma^a(x)$ is non-decreasing). Denoting $G_V(t) = F_V(\mathcal{Q}(t))$, we conclude that the inequality $F_\varsigma(t) \ge F_V(t)$ for all $t \in [0, 1]$ is equivalent to $G_\varsigma(t) \ge G_V(t)$ for all $t \in [0, 1]$. We obtain the following statement.
\begin{corollary}
Inequality~\eqref{eq:1D_majorization} holds if and only if
\[
G_\varsigma(t) \ge \overline{G_V}(t)
\]
for all $t \in [0, 1]$, where
\[
\overline{G_V}(t) = \inf\{g(t) \colon g \in \mathcal{G} \text{ and } g(x) \ge G_V(x) \text{ for all $x \in [0, 1]$}\}
\]
\ed{is the minimal non-negative non-increasing concave function pointwise above $G_V$.}
\end{corollary}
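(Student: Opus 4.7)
The plan is to deduce the corollary directly from the preceding proposition combined with the definition of $\overline{G_V}$. The proposition characterizes validity of~\eqref{eq:1D_majorization} as the pointwise inequality $F_\varsigma(t) \ge F_V(t)$ on $[0,1]$, and since $\mathcal{Q}$ is a strictly increasing bijection of $[0,1]$, this is equivalent to $G_\varsigma(t) \ge G_V(t)$ for all $t \in [0,1]$. So it suffices to establish the pointwise equivalence
\[
G_\varsigma \ge G_V \;\Longleftrightarrow\; G_\varsigma \ge \overline{G_V}.
\]

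The direction $(\Leftarrow)$ is immediate: every $g$ in the family whose infimum defines $\overline{G_V}$ satisfies $g \ge G_V$ pointwise, so taking the pointwise infimum gives $\overline{G_V} \ge G_V$; chaining with the hypothesis $G_\varsigma \ge \overline{G_V}$ yields $G_\varsigma \ge G_V$. For $(\Rightarrow)$, I would invoke that $G_\varsigma$ itself belongs to $\mathcal{G}$. Granting this, once $G_\varsigma \ge G_V$ we see that $G_\varsigma$ is one of the admissible functions in the infimum defining $\overline{G_V}(t)$ at each $t$, hence $\overline{G_V}(t) \le G_\varsigma(t)$.

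The only point that deserves verification is the membership $G_\varsigma \in \mathcal{G}$, which is where Remark~\ref{rem:measure_restriction} is used. Non-negativity and monotonicity of $G_\varsigma$ follow directly from $\varsigma \ge 0$. For concavity, I would differentiate $G_\varsigma(t) = \int_t^1 \varsigma^a(\mathcal{Q}(x))\,{\dd} x + \varsigma^{\mathrm{sing}}(\{1\})$ to obtain $G'_\varsigma(t) = -\varsigma^a(\mathcal{Q}(t))$, and observe that both $\varsigma^a$ (by the Remark) and $\mathcal{Q}$ are non-decreasing, so $G'_\varsigma$ is non-increasing and $G_\varsigma$ is concave. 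This exactly matches the characterization of $\mathcal{G}$ as the non-negative, non-increasing, concave functions on $[0,1]$ already noted in the text.

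There is no real technical obstacle: the entire corollary is a clean bookkeeping consequence of the preceding proposition and the definition of $\overline{G_V}$. The only place where care is needed is to ensure that $G_\varsigma$ is a legitimate candidate in the infimum, which is why the restriction on $\varsigma$ imposed by Remark~\ref{rem:measure_restriction} (forcing $\varsigma^a$ monotone and $\varsigma^{\mathrm{sing}}$ supported at $1$) is essential rather than cosmetic.
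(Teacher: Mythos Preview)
Your proposal is correct and follows exactly the route the paper takes: the text immediately preceding the corollary reduces~\eqref{eq:1D_majorization} to $G_\varsigma \ge G_V$ via the proposition and the change of variables $\mathcal{Q}$, and then notes that $G_\varsigma \in \mathcal{G}$ (concavity coming from monotonicity of $\varsigma^a$ per Remark~\ref{rem:measure_restriction}), which is precisely what you use to pass to $G_\varsigma \ge \overline{G_V}$. The paper essentially states the corollary without further proof, and your argument spells out the implicit steps faithfully.
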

\ed{By concavity, the derivative $-\overline{G_V}\,'(\mathcal{P}(x))$ exists almost everywhere and is known as the ironed virtual valuation function, which we denote by $\overline{V}$. Provided that $\overline{G_V}(t)$ is continuous at $t=1$ (checked in Lemma~\ref{lm:varsigma_description} below), we get $\overline{G_V}(t)=G_{\overline{V}}$.} 
\begin{proposition} For any measure $\varsigma$ satisfying all the conditions of Remark~\ref{rem:measure_restriction},
\[
\mathrm{Dual}(\varsigma) = \int_0^1 G_\varsigma(t)\, {\dd}t^{B - 1},
\]
and the equality holds if and only if $\varphi'(\mathcal{P}(x)^{B - 1}) = \varsigma^a(x)$ for (almost) all $x$.
\end{proposition}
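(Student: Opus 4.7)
The plan is to simplify $\int_0^1 G_\varsigma(t)\,{\dd} t^{B-1}$ by integration by parts, lower-bound $\mathrm{Dual}(\varsigma)$ by Fenchel's inequality, and then exhibit an explicit $\varphi$ attaining the bound.

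First, I would compute the right-hand side. Using the expression $G_\varsigma(t) = \int_t^1 \varsigma^a(\mathcal{Q}(s))\,{\dd} s + \varsigma^{\mathrm{sing}}(\{1\})$ (the singular part is concentrated at $\{1\}$ by Remark~\ref{rem:measure_restriction}), integration by parts against $(B-1)t^{B-2}\,{\dd} t$ yields
\[
\int_0^1 G_\varsigma(t)\,{\dd} t^{B-1} = \varsigma^{\mathrm{sing}}(\{1\}) + \int_0^1 t^{B-1}\varsigma^a(\mathcal{Q}(t))\,{\dd} t,
\]
and the substitution $t = \mathcal{P}(x)$, $\,{\dd} t = \rho(x)\,{\dd} x$, converts the last integral into $\int_0^1 \mathcal{P}(x)^{B-1}\varsigma^a(x)\rho(x)\,{\dd} x$.

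Next, I would bound $\mathrm{Dual}(\varsigma)$ from below. Applying the Fenchel inequality $\varphi^*(a)+\varphi(b)\ge ab$ pointwise with $a = \varsigma^a(x)$ and $b = \mathcal{P}(x)^{B-1}$, integrating against $\rho(x)\,{\dd} x$, and using the same change of variable to recognize $\int_0^1\varphi(\mathcal{P}(x)^{B-1})\rho(x)\,{\dd} x = \int_0^1\varphi(t^{B-1})\,{\dd} t$, one gets
\[
\int_0^1\varphi^*(\varsigma^a(x))\rho(x)\,{\dd} x + \int_0^1\varphi(t^{B-1})\,{\dd} t \ge \int_0^1 \mathcal{P}(x)^{B-1}\varsigma^a(x)\rho(x)\,{\dd} x.
\]
Adding $\varsigma^{\mathrm{sing}}(\{1\})$ to both sides gives $\mathrm{Dual}(\varsigma)\ge\int_0^1 G_\varsigma(t)\,{\dd} t^{B-1}$. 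The Fenchel step is an equality iff $\varsigma^a(x)\in\partial\varphi(\mathcal{P}(x)^{B-1})$ a.e., which under differentiability of $\varphi$ becomes $\varphi'(\mathcal{P}(x)^{B-1})=\varsigma^a(x)$, the asserted equality condition.

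Finally, I would exhibit an admissible $\varphi$ achieving equality. Setting
\[
\varphi(y) = \int_0^y \varsigma^a\!\bigl(\mathcal{Q}(z^{1/(B-1)})\bigr)\,{\dd} z\quad\text{for } y\in[0,1],\qquad \varphi(y)=+\infty\ \text{for }y>1,
\]
gives $\varphi(0)=0$, $\varphi\ge 0$, and $\varphi'(y)=\varsigma^a(\mathcal{Q}(y^{1/(B-1)}))$, which is non-decreasing in $y$ since both $\varsigma^a$ (by Remark~\ref{rem:measure_restriction}) and $\mathcal{Q}$ are non-decreasing. Hence $\varphi$ is convex and non-decreasing, and by construction $\varphi'(\mathcal{P}(x)^{B-1})=\varsigma^a(x)$, so the Fenchel inequality becomes equality and the infimum is attained. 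The main obstacle is essentially bookkeeping: verifying that the candidate $\varphi$ truly lies in the admissible class (convex, non-decreasing, $\varphi(0)=0$, with $\int_0^1 \varphi(t^{B-1})\,{\dd} t$ finite, the latter following from $\varsigma^a\in L^1(\mu)$) and carefully handling the a.e.\ translation from pointwise Fenchel equality to the stated condition on $\varphi'$.
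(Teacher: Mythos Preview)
Your proof is correct and follows essentially the same approach as the paper: the paper frames the Fenchel lower bound via an arbitrary coupling of $\mu$ and $\eta$ and then specializes to the comonotone coupling $(\chi,\mathcal{P}(\chi)^{B-1})$, which is exactly your direct pointwise application of Fenchel with $b=\mathcal{P}(x)^{B-1}$ followed by the same change of variables and integration by parts. The paper merely asserts existence of an optimal $\varphi$ from monotonicity of $\varsigma^a$, while you write it down explicitly, but this is a presentational difference, not a substantive one.
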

\begin{proof}
\ed{Consider the measure $\mu$ given by $\dd\mu(x)=\rho(x)\dd x$ and let $\eta$ be the distribution of $\xi^{B-1}$ where $\xi$ is uniform on $[0,1]$.
Choose any coupling $\varkappa$ of $\mu$ and $\eta$, i.e., any measure on $[0,1]^2$ with marginals $\mu$ and $\eta$.} \ed{By the Fenchel inequality (Appendix~\ref{sect_convex_analysis}), we have}
\[
\int_0^1 \varphi^*(\varsigma^a(x))\rho(x)\,{\dd}x + \int_0^1\varphi(t^{B-1})\,\dd t \ge \int_{[0,1]^2} t\cdot\varsigma^a(x)\,\dd\varkappa(x,t).
\]
Now consider a particular choice: let $\varkappa$ be the joint law of the variable $(\chi, \mathcal{P}(\chi)^{B - 1})$, where $\chi \sim \mu$. The distribution of  $\mathcal{P}(\chi)^{B - 1}$ coincides with  $\eta$; therefore,
\begin{multline*}
\int_0^1 \varphi^*(\varsigma^a(x))\rho(x)\,{\dd}x + \int_0^1\varphi(t)\,\dd\eta(t) \ge \int_{[0,1]^2} t\varsigma^a(x)\,\dd \varkappa(x, t)= \\
= \int_0^1 \mathcal{P}(x)^{B - 1} \cdot \varsigma^a(x)\rho(x)\,{\dd}x
=-\int_0^1 \mathcal{P}(x)^{B - 1} \,{\dd}F^a_\varsigma(x) = \int_0^1 F^a_\varsigma(x)\,{\dd}\mathcal{P}(x)^{B - 1}.
\end{multline*}
For all $t \in [0, 1]$, we have $F_\varsigma^a(t) = F_\varsigma(t) - \ed{\varsigma^{\mathrm{sing}}(\{1\})}$; therefore,
\begin{align*}
    \ed{\varsigma^{\mathrm{sing}}(\{1\})} + \int_0^1 \varphi^*(\varsigma^a(x))\rho(x)\,{\dd}x& + \int_0^1\varphi(t^{B-1})\,{\dd}t \ge\\
    &\ge\ed{\varsigma^{\mathrm{sing}}(\{1\})} + \int_0^1(F_\varsigma(x) - \ed{\varsigma^{\mathrm{sing}}(\{1\})})\,{\dd}\mathcal{P}(x)^{B - 1}= \\
     &= \int_0^1 F_\varsigma(x)\,{\dd}\mathcal{P}(x)^{B - 1} = \int_0^1 G_\varsigma(x)\, {\dd}t^{B - 1}.
\end{align*}

The equality holds if and only if $\varphi(\mathcal{P}(x)^{B - 1}) + \varphi^*(\varsigma^a(x)) = \mathcal{P}(x)^{B - 1} \cdot \varsigma^a(x)$ for almost all $x$, which is equivalent to the condition
\begin{equation}\label{eq:subgradient_of_phi}
\varsigma^a(x) \in (\partial \varphi)(\mathcal{P}(x)^{B - 1})
\end{equation}
for all $x \in [0, 1]$. \ed{Such a convex function $\varphi$ exists since $\varsigma^a(x)$ is non-decreasing. The function $\varphi$ is unique up to an additive constant pinned down by the requirement $\varphi(0)=0$.}
\end{proof}
\ed{
\begin{corollary}\label{cor_one_item_appendix}
A measure $\varsigma$ solves the dual problem if and only if $G_\varsigma = G_{\overline{V}}$ or, equivalently, $\dd\varsigma(x)=\varsigma^a(x)\rho(x)\dd x$ where $\varsigma^a(x)=\overline{V}(x)$. We conclude that, for one item, the optimum in Theorem~\ref{th_vector_fields_min} is unique, has no singular components and so is attained at a classical ``field'' $c=\varsigma^a$ from Theorem~\ref{th_vector_fields_inf}. This field coincides with the ironed virtual valuation function and, hence, we obtain Proposition~\ref{prop_ironed}. The optimal function $\varphi$ is also unique and is defined by~\eqref{eq:subgradient_of_phi}.
\end{corollary}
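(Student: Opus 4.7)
The plan is to combine the two preceding results into a pointwise minimization, then separately rule out singular mass at the right endpoint. By the preceding corollary, a measure $\varsigma$ satisfying Remark~\ref{rem:measure_restriction} is feasible in the dual problem if and only if $G_\varsigma \in \mathcal{G}$ and $G_\varsigma \geq \overline{G_V}$ pointwise on $[0,1]$. By the preceding proposition, the objective reads
\[
\mathrm{Dual}(\varsigma) = \int_0^1 G_\varsigma(t)\,{\dd}t^{B-1},
\]
and for $B\geq 2$ the integrating measure ${\dd}t^{B-1}$ has density $(B-1)t^{B-2}$, strictly positive on $(0,1)$. Consequently minimizing the integral over feasible $\varsigma$ reduces to pointwise minimization of $G_\varsigma$ subject to $G_\varsigma \geq \overline{G_V}$ and $G_\varsigma \in \mathcal{G}$.

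First I would check that $\overline{G_V}$ itself lies in $\mathcal{G}$: the pointwise infimum of a family of concave, non-increasing, non-negative functions is again concave, non-increasing, and non-negative. Hence the candidate $\varsigma^\opt$ whose associated function equals $\overline{G_V}$ is feasible and achieves the infimum, and any other optimum must satisfy $G_\varsigma = \overline{G_V}$ almost everywhere on $(0,1)$ in the ${\dd}t^{B-1}$ sense. Combined with continuity and monotonicity of both sides, this forces equality everywhere and yields uniqueness.

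Next I would rule out a singular component. Integration by parts gives
\[
F_V(1) = \int_0^1 V(x)\rho(x)\,{\dd}x = \int_0^1 \big(x\rho(x) - (1-\mathcal{P}(x))\big)\,{\dd}x = 0,
\]
so $G_V(1)=0$. Using continuity of $G_V$ at $t=1$ together with $M := \sup_t G_V(t) < \infty$, one exhibits for every $\delta > 0$ an explicit competitor $g_\delta \in \mathcal{G}$ by taking $g_\delta(t) = M$ on $[0,1-\delta]$ and $g_\delta(t) = M(1-t)/\delta$ on $[1-\delta,1]$; its slope decreases from $0$ to $-M/\delta$ so it is concave, and for $\delta$ small enough $g_\delta \geq G_V$ globally (using the local behaviour $G_V(t) \sim V(1)(1-t)$ near $t=1$). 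Thus $\overline{G_V}(1) \leq g_\delta(1) = 0$. Since $G_{\varsigma^\opt}(1) = \varsigma^{\opt,\mathrm{sing}}(\{1\})$ by definition of $G_\varsigma$, we obtain $\varsigma^{\opt,\mathrm{sing}}(\{1\}) = 0$, so $\varsigma^\opt$ is absolutely continuous. Its density is recovered from $G_{\varsigma^\opt}{}'(t) = -\varsigma^{\opt,a}(\mathcal{Q}(t))$, i.e.\ $\varsigma^{\opt,a}(x) = -\overline{G_V}{}'(\mathcal{P}(x))$, which by definition is the ironed virtual valuation $\overline{V}(x)$; Proposition~\ref{prop_ironed} follows. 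Uniqueness of $\varphi$ modulo $\varphi(0)=0$ is a consequence of the complementary slackness relation $\varphi'(\mathcal{P}(x)^{B-1}) = \varsigma^{\opt,a}(x)$, which pins down $\varphi'$ on the range of $\mathcal{P}^{B-1}$, namely all of $[0,1]$.

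The main obstacle is the no-singular-part claim: a priori the concave envelope could have $\overline{G_V}(1^-) > 0$, producing a point mass of $\varsigma^\opt$ at $x=1$. The essential inputs making this impossible are the identity $G_V(1)=0$ (delivered by integration by parts) and the piecewise constant-then-linear construction above, which is admissible precisely because concavity on $[0,1]$ only requires the slope to be non-increasing. Everything else is a direct application of the two preceding propositions.
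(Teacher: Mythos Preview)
Your argument is correct and follows the same overall reduction as the paper: feasibility becomes $G_\varsigma\in\mathcal{G}$ with $G_\varsigma\ge\overline{G_V}$, the objective becomes $\int G_\varsigma\,{\dd}t^{B-1}$ with strictly positive density on $(0,1)$, and pointwise minimization singles out $G_\varsigma=\overline{G_V}$. One small slip: the integral you compute is $F_V(0)$, not $F_V(1)$; but $G_V(1)=F_V(\mathcal{Q}(1))=F_V(1)=\int_1^1\cdots=0$ is immediate, so the conclusion stands.

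Where you genuinely deviate from the paper is in ruling out the singular mass at $x=1$. The paper proves Lemma~\ref{lm:varsigma_description}, showing that $-G_V'(t)=V(\mathcal{Q}(t))$ is increasing on some $[a,1]$; this forces $G_V$ to be concave there, hence $\overline{G_V}=G_V$ on $[a,1]$, and in particular $\overline{G_V}(1)=G_V(1)=0$. That argument uses $\mathcal{Q}''$, i.e.\ differentiability of $\rho$. Your competitor construction $g_\delta$ is more elementary: it needs only $G_V(t)=O(1-t)$ near $t=1$, which follows from continuity of $V$ at $1$ (hence merely continuity of $\rho$). So your route proves the corollary under a weaker hypothesis. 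On the other hand, the paper's lemma delivers the stronger conclusion that $\overline{V}=V$ on a neighbourhood of $1$ (``high types are never ironed''), which is exactly what is exploited in the next subsection to show suboptimality of selling separately; your competitor argument does not yield that structural fact.
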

}
\ed{It remains to check that $\overline{G_{V}}$ is continuous at $1$. We prove a stronger statement:   $\overline{G_{V}}(t)$ coincides with ${G_{V}}(t)$ for high enough $t$. In particular, $\overline{V}=V$ for high types, i.e., high types are never ironed.
\begin{lemma}\label{lm:varsigma_description}
There exists $a \in (0, 1)$ such that $\overline{G_V}(t)=G_V(t)$ for all $t \in [a, 1]$.
\end{lemma}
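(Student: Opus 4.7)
The plan is to exhibit, for every $t_0$ sufficiently close to $1$, an affine function $L_{t_0}$ that lies in $\mathcal{G}$ (i.e.\ is non-negative, non-increasing, and concave on $[0,1]$), dominates $G_V$ pointwise on $[0,1]$, and satisfies $L_{t_0}(t_0)=G_V(t_0)$. Such an $L_{t_0}$ competes in the infimum defining $\overline{G_V}(t_0)$, forcing $\overline{G_V}(t_0)\leq L_{t_0}(t_0)=G_V(t_0)$; combined with the reverse inequality $\overline{G_V}\geq G_V$, the claimed equality on $[a,1]$ follows immediately.

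The natural candidate is the tangent line $L_{t_0}(t)=G_V(t_0)-h(t_0)(t-t_0)$, where $h(s):=V(\mathcal{Q}(s))=-G_V'(s)$. First I would exploit the explicit form of $V$ to compute $V(1)=1$ and $V'(1)=2$; since $\rho\in C^1$ and $\rho>0$, $V'$ is continuous, so there exists $a_0\in(0,1)$ on which $V$ is strictly increasing and strictly positive. The elementary inequality $V(x)\leq x$ (immediate from $(1-\mathcal{P}(x))/\rho(x)\geq 0$) then yields $V(x)\leq a_0$ on $[0,a_0]$. Next I would pick $x^*\in[a_0,1)$ such that $V(x)\geq a_0$ throughout $[x^*,1]$ --- this is possible since $V$ is continuous, non-decreasing on $[a_0,1]$, and $V(1)=1>a_0$ --- and set $a:=\mathcal{P}(x^*)\in(0,1)$.

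With this choice of $a$, the key bound $h(s)\leq h(t_0)$ holds for every $s\in[0,t_0]$ and every $t_0\in[a,1]$: on $[0,\mathcal{P}(a_0)]$ it follows from $h(s)\leq\mathcal{Q}(s)\leq a_0\leq h(t_0)$, while on $[\mathcal{P}(a_0),t_0]$ it follows from monotonicity of $V$ on $[a_0,1]$. The identity $L_{t_0}(t)-G_V(t)=\int_t^{t_0}(h(t_0)-h(s))\,{\dd}s$ for $t\leq t_0$, together with the analogous $L_{t_0}(t)-G_V(t)=\int_{t_0}^{t}(h(s)-h(t_0))\,{\dd}s$ for $t\geq t_0$ (which uses that $h$ is non-decreasing on $[a,1]$), then gives $L_{t_0}\geq G_V$ on $[0,1]$. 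The slope $-h(t_0)\leq-a_0<0$ makes $L_{t_0}$ non-increasing, and $L_{t_0}(1)=\int_{t_0}^{1}(h(s)-h(t_0))\,{\dd}s\geq 0$ makes it non-negative, so $L_{t_0}\in\mathcal{G}$ and the argument is complete.

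The hard part will be the ``backwards'' dominance $L_{t_0}\geq G_V$ on $[0,t_0]$: because $G_V$ need not be concave on the whole of $[0,1]$, classical support-line arguments fail on this interval. The resolution rests on the universal bound $V(x)\leq x$, which caps $h$ in the low-quantile regime, together with the choice of $a$ ensuring $h(t_0)\geq a_0$; jointly these force $h\leq h(t_0)$ throughout $[0,t_0]$, which is precisely what makes the tangent line at $t_0$ a global majorant in the absence of global concavity of $G_V$.
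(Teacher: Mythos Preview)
Your proposal is correct and follows essentially the same approach as the paper: both arguments show that $h(s)=V(\mathcal{Q}(s))=-G_V'(s)$ is increasing on an interval $[a,1]$ and satisfies $h(s)\le h(a)$ for $s\le a$, using $V(1)=1$, $V'(1)=2$, and the bound $V(x)\le x$. The paper works directly with the composite $V\circ\mathcal{Q}$ and asserts the reduction ``it is enough to check\ldots'' without further comment, whereas you spell out the tangent-line construction $L_{t_0}$ and verify it belongs to $\mathcal{G}$ and majorizes $G_V$; this extra detail is helpful but does not constitute a different method.
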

\begin{proof}
Consider the derivative
\[
-G'_V(t) = V(\mathcal{Q}(t)) = \mathcal{Q}(t) - (1 - t)\mathcal{Q}'(t).
\]
It is enough to check that there exists $a < 1$ such that $V(\mathcal{Q}(t))$ is increasing on the interval $[a, 1]$ and $V(\mathcal{Q}(t)) \le V(\mathcal{Q}(a))$ for all $t \le a$. Indeed, 
\[
\frac{d}{dt} V(\mathcal{Q}(t)) = 2\mathcal{Q}'(t) - (1 - t)\mathcal{Q}''(t) > 0 \quad\text{for all $t \ge a_0$.}
\]
In addition, $V(\mathcal{Q}(1)) = \mathcal{Q}(1) = 1$ and $V(\mathcal{Q}(t)) \le \mathcal{Q}(t)$ for $t \in [0, 1]$. So, there exists $a \in [a_0, 1)$ such that $V(\mathcal{Q}(a)) \ge \mathcal{Q}(a_0) \ge V(\mathcal{Q}(t))$ for all $t \le a_0$. For such a parameter $a$, the function $V(\mathcal{Q}(t))$ is increasing on $[a, 1]$.
\end{proof}
}

\subsection{The case of several items and bidders: suboptimality of selling separately} 
\ed{Consider $B>1$ bidders competing for $I>1$ items and assume that their values are distributed on $X=[0,1]^I$ according to a density
\[
\rho(x_1,\ldots, x_I) = \rho_1(x_1)\cdot \rho_2(x_2)\cdot  \ldots \cdot\rho_I(x_I)
\]
with  continuously differentiable $\rho_i$ strictly positive on $[0,1]$.  Building on explicitly solved dual problem for $I=1$ item, we will show that it is never optimal to sell $I>1$ items separately using the Myersonian optimal auction for each of them.  

It is enough to demonstrate suboptimality of any mechanism such that the reduced-form allocation of each item $i$ depends on the value $x_i$ for this item only. Equivalently, it is enough to show that the optimal solution $u=u^\opt(x_1, \dots, x_I)$ to the primal problem cannot have the form
\[
u(x_1, \dots, x_I) = u_1(x_1) + \dots + u_I(x_I).
\]

Towards a contradiction, assume that the optimal $u$ has such a separable form. Let us show that the optimal dual solution $\varphi_1, \dots, \varphi_I$ and $\varsigma = (\varsigma_1, \dots, \varsigma_I)$ from Theorem~\ref{th_vector_fields_min} is composed of $(\varsigma_i,\varphi_i)$ that are optimal in the corresponding one-item problem.
By  complementary slackness conditions (Corollary~\ref{cor_slackness_extended}), 
\begin{itemize}
    \item the absolutely continuous component $\varsigma_i^a(x_1, \dots, x_I)$ is a non-decreasing function of $x_i$ for all $i = 1, \dots, I$;
   \item the support of the singlular component $\varsigma_i^{\mathrm{sing}}$ is contained in the set $\{x_i = 1\}$.
\end{itemize}
 Denote the marginal of $\varsigma_i$ on $x_i$ by  $\widehat{\varsigma}_i$. By the feasibility of $\varsigma$, for any smooth convex non-deceasing $v=v(x_i)$ with $v(0)=0$, we get}
\begin{align*}
\int_0^1 (x_i v'(x_i) - v(x_i)) \rho_i(x_i)\,{\dd}x_i &= \int_X (\langle x, \nabla v \rangle - v)\,\rho(x_1,\ldots,x_I)\dd x_1\ldots \dd x_I\le \\
&\le \sum_{k = 1}^I \int_X \frac{\partial v}{\partial x_k } {\dd}\varsigma_k = \int_0^1 v'(x_i)\,\dd \widehat{\varsigma}_i(x_i).
\end{align*}
\ed{Therefore, $\widehat{\varsigma}_i$  is feasible in the one-item dual problem with item $i$ and density $\rho_i$. 
By Theorem~\ref{th_vector_fields_min} and the Jensen's inequality,}
\begin{align*}
\sum_{i = 1}^I \int_0^1& (x_i u'(x_i) - u(x_i)) \rho_i(x_i)\,{\dd}x_i = \int_X (\langle x, \nabla u \rangle - u(x))\,\rho(x_1,\ldots,x_I)\dd x_1\ldots \dd x_I =\\
&= \sum_{i = 1}^I\left(\int_0^1 \varphi_i(t^{B-1})\,{\dd}t + \int_X \varphi_i^*(\varsigma_i^a(x))\,\rho(x_1,\ldots,x_I)\dd x_1\ldots \dd x_I + \varsigma_i^{\mathrm{sing}}(X)\right) \ge\\
&\ge \sum_{i = 1}^I\left(\int_0^1 \varphi_i(t^{B-1})\,{\dd}t + \int_0^1 \varphi_i^*(\widehat{\varsigma}_i^a(x_i))\,\rho_i(x_i){\dd}x_i + \widehat{\varsigma}_i\,^{\mathrm{sing}}([0,1])\right).
\end{align*}
\ed{We derive the following conclusions:
\begin{itemize}
    \item The pair $(\widehat{\varsigma}_i$, $\varphi_i)$ is an optimal solution to the corresponding dual one-item problem with density $\rho_i$.
    \item By Corollary~\ref{cor_one_item_appendix}, the singular component $\widehat{\varsigma}_i\,^{\,\mathrm{sing}}$ (and thus $\varsigma_i^{\mathrm{sing}})$ are absent.
    \item By Jensen's inequality, the absolutely continuous components $\varsigma_i^a(x_1, \dots, x_n) = \widehat{\varsigma}_i^a(x_i)$ for almost all $x_1, \dots, x_{i-1}, x_{i+1}, \dots, x_I$ and $x_i$ such that $\varphi^*$ is strictly convex at the point $\widehat{\varsigma}^a_i(x_i)$. By~\eqref{eq:subgradient_of_phi}, we have
    \[
    \widehat{\varsigma}_i^a(x_i) \in (\partial\varphi_i)(\mathcal{P}_i^{B - 1}(x_i)) \quad \Leftrightarrow \quad \mathcal{P}_i^{B - 1}(x_i) \in (\partial\varphi_i^*)(\widehat{\varsigma}_i^a(x_i)).
    \]
    Hence, $\varphi_i^*$ is strictly convex at  $\widehat{\varsigma}_i^a(x_i)$ if $\widehat{\varsigma}_i^a(x_i)$ is continuous at $x_i$.
\end{itemize}
By Lemma~\ref{lm:varsigma_description}, $\widehat{\varsigma}_i^a(x_i)$ coincides with the virtual valuation function $V_i$ for all $x_i \in [a_i, 1]$; therefore, $\varsigma_i^a(x_1, \dots, x_I) = V_i(x_i)$ for almost all $x_i \in [a_i, 1]$ and all other coordinates.

Consider a particular test function $v(x_1, \dots, x_I) = \max(x_1 + \dots + x_I - a, 0)$ such that $B> a > B - 1 + \max_i a_i$. This function is convex, non-decreasing, and equals zero at zero.} Integration by parts and the definition of the virtual values $V_i$ imply
\begin{align*}
\int_X \left(x_i \frac{\partial v}{\partial x_i}(x_1, \dots, x_I) - v(x_1, \dots, x_I)\right)\,\rho(x_1,\ldots,x_I)\dd x_1\ldots \dd x_I&= \\
= \int_X V_i(x_i) \cdot \frac{\partial v}{\partial x_i}(x_1, \dots, x_I)\,\rho(x_1,\ldots,x_I)\dd x_1\ldots \dd x_I &= \int_X \frac{\partial v}{\partial x_i} {\dd}\varsigma_i.
\end{align*}
\ed{Summing up these identities over $i$, we obtain} 
\begin{align*}
\int_X (&\langle x, \nabla v\rangle - v)\,\rho(x_1,\ldots,x_I)\dd x_1\ldots \dd x_I =\\
&=\sum_{i = 1}^I\int_X \frac{\partial v}{\partial x_i} {\dd}\varsigma_i + (I - 1)\int_X v\,{\dd}\,\rho(x_1,\ldots,x_I)\dd x_1\ldots \dd x_I 
>\sum_{i = 1}^I\int_X \frac{\partial v}{\partial x_i} {\dd}\varsigma_i.
\end{align*}
\ed{Thus $\varsigma$ is not a feasible solution to the dual problem. This contradiction implies that, in the optimal mechanism, the allocation of item 
$i$ cannot depend on  $x_i$ exclusively. In particular, running $I$ separate auctions is not optimal.}

\subsection{The case of one bidder}  
 \ed{We consider a benchmark problem with  two items and one bidder whose values are uniformly distributed on $[0,1]^2$ and show how to solve the dual problem from Theorem~\ref{th_vector_fields_min}.  As we will see, the solutions may be non-unique and singular.} 

\ed{The optimal mechanism for this problem  was obtained by \cite{manellivincent}: each item is offered for the price of $\frac{2}{3}$ and the grand bundle, for $\frac{4-\sqrt{2}}{3}$. Our dual solution gives an optimality certificate for this mechanism and thus provides an alternative proof of its optimality.}

\cite{daskalakis2017strong}
\ed{derived the mechanism of  \cite{manellivincent}} via the the associated Monge-Kantorovich transportation problem~\eqref{eq_Daskalakis_result}. 
The optimal function $u=u^\opt$ is given by
\begin{equation}
\label{uauctionsol}
    u^\opt(x,y) = \begin{cases}
    0 & (x,y) \in  \mathcal{Z} \\
    x - \frac{2}{3}& (x,y) \in \mathcal{A} \\
    y -\frac{2}{3} & (x,y) \in \mathcal{B} \\
    x+y - \frac{4-\sqrt{2}}{3} & (x,y) \in \mathcal{W}
    \end{cases},
\end{equation}
where the sets $\mathcal{Z},\mathcal{A}, \mathcal{B}$ and $\mathcal{W}$ are depicted in Figure~\ref{fig:u_map} borrowing the notation  from the original paper.
\begin{figure}[!h]
   \centering
   \includegraphics[width=0.5\textwidth]{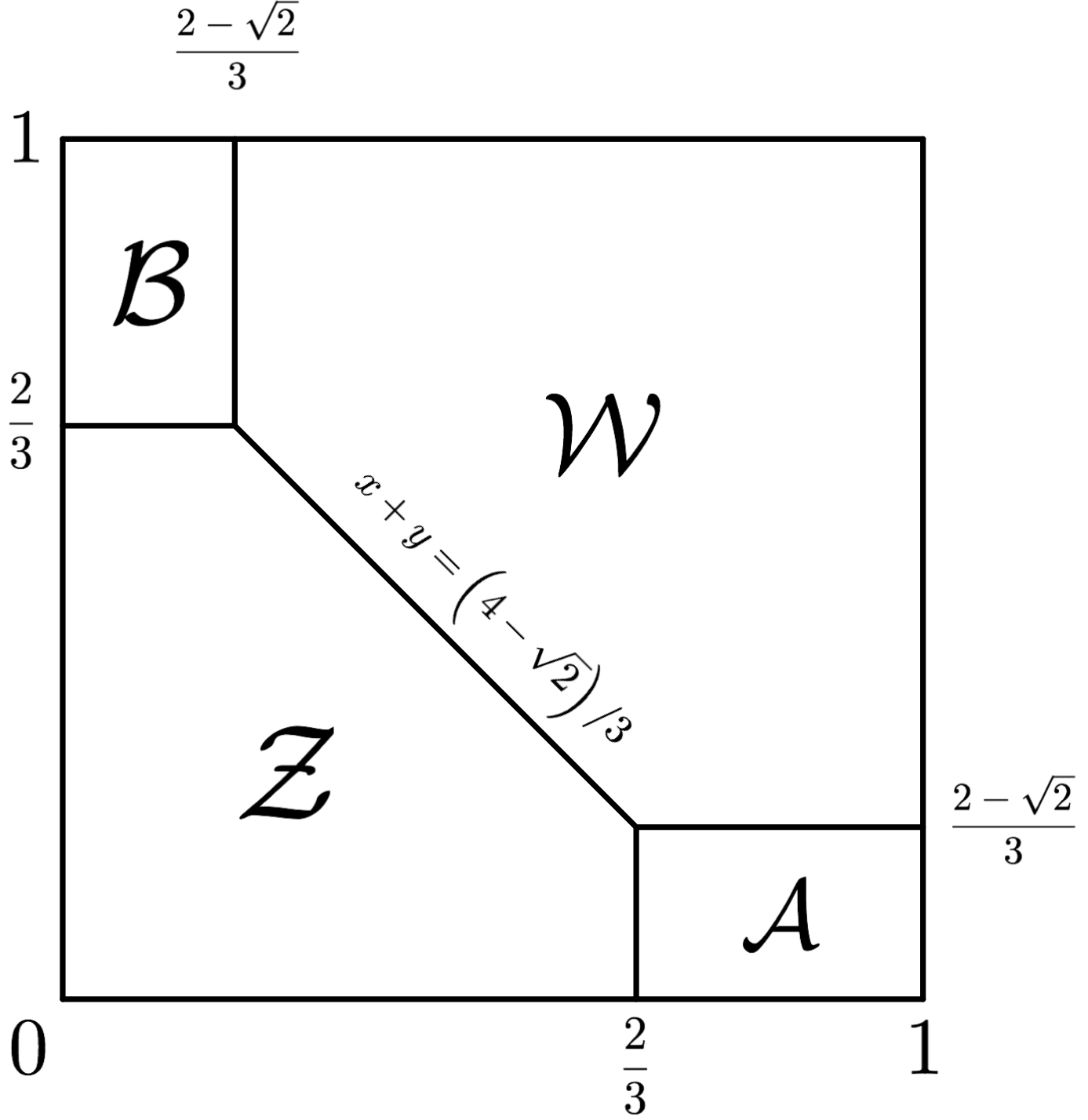}
    \caption{Partition of the square with respect to the optimal $u$.}
    \label{fig:u_map}
\end{figure}
The answers for the transform measure $m$ defined by~\eqref{eq_transform_measure} and for the optimal ``imbalance'' $\pi=\pi^\opt$ majorizing  $m$ are as follows: 
\begin{equation}\label{eq_m_uniform}
m =   \delta_0 
+ \lambda_1|_{[0,1] \times \{0\}} + \lambda_1|_{\{0\} \times[0,1] }
- 3 \lambda_2|_{[0,1]^2},
\end{equation}
\begin{equation}\label{eq_pi_uniform}
\pi^\opt =    \lambda_1|_{[0,1] \times \{0\}} + \lambda_1|_{\{0\} \times[0,1] }
-3 \lambda_2|_{[0,1]^2 \setminus \mathcal{Z}},
\end{equation}
where $\lambda_2, \lambda_1$ are the two- and one-dimensional Lebesgue measures, respectively. \ed{We rely on these observations to simplify the construction.}

Recall that for $B=1$ bidder, the dual problem from Theorem~\ref{th_vector_fields_min} can be simplified (Corollary~\ref{1bidderbeckmann}). It takes the following form:
\begin{equation}\label{eq_dual_uniform}
\mbox{minimize:}\quad\int_{[0,1]^2} \bigl( d \varsigma_1 + d \varsigma_2 \bigr)
\end{equation}
over vector measures $\varsigma = (\varsigma_1, \varsigma_2)$  satisfying
\begin{equation}\label{eq_dual_feasibility_uniform}
  \int u {\dd} m
  \le \int (u_x d \varsigma_1 + 
  u_y d \varsigma_2)
\end{equation}
for all convex non-decreasing $u$ with $u(0)=0$.

First, let us construct an absolutely continuous solution, i.e., such that 
${\dd}\varsigma_i = c^i {\dd} x {\dd} y$. 
We will need the following lemma.

\begin{lemma}
\label{c1c2lemma}
Assume that a couple of nonnegative functions $c^1, c^2$ satisfy
\begin{enumerate}
\item
\begin{equation}
\label{zerosets}
c^1|_{\mathcal{Z} \cup \mathcal{B}} =0, \ c^2|_{\mathcal{Z} \cup \mathcal{A}} =0, 
\end{equation}
\item
$c^1$ is weakly differentiable along $x$ and satisfies the following integration by parts identity for every smooth $\varphi$
$$
\int_{[0,1]^2} \varphi_x \cdot c^1 {\dd} x {\dd} y
=  - \int_{[0,1]^2} \varphi\cdot  c^1_x {\dd} x {\dd} y + \int_0^1 \varphi(1,y) {\dd} y.
$$
Similarly
$c^2$ is weakly differentiable along $y$ and satisfies the following integration by parts identity for every smooth $\varphi$
$$
\int_{[0,1]^2} \varphi_y \cdot c^2 {\dd} x {\dd} y
=  - \int_{[0,1]^2} \varphi \cdot c^2_y {\dd} x {\dd} y + \int_0^1 \varphi(x,1) {\dd} x.
$$
\item 
\begin{equation}
    \label{div3}
c^1_x + c^2_y =3,
\end{equation}
on $[0,1]^1 \setminus \mathcal{Z}$ almost everywhere.
\end{enumerate}
Then the vector field $c=(c^1,c^2)$  satisfies
\begin{equation}
    \label{2itemdual}
    {\div} [c] +\pi^\opt=0
\end{equation}
and $\varsigma=(\varsigma_1,\varsigma_2)$ such that ${\dd}\varsigma_i=c^i{\dd} x{\dd} y$
is a solution to the dual problem~\eqref{eq_dual_uniform}.
\end{lemma}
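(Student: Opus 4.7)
The strategy is to verify feasibility of $\varsigma$, check the divergence identity $\div[c]+\pi^{\opt}=0$, and deduce optimality by matching with the explicit primal solution $u^{\opt}$ in \eqref{uauctionsol}. Throughout, the decomposition $[0,1]^2 = \mathcal{Z}\cup\mathcal{A}\cup\mathcal{B}\cup\mathcal{W}$ and the matching between the vanishing of $c^i$ (assumption~(1)) and the vanishing of the partial derivatives of $u^{\opt}$ will be the key structural fact.

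Step one is a direct distributional computation. For a smooth test function $\varphi$, applying the two integration-by-parts identities of assumption~(2) and then invoking assumption~(3) together with $c^1=c^2=0$ on $\mathcal{Z}$ yields
\[
-\int(\varphi_x c^1+\varphi_y c^2)\,dx\,dy = \int_{[0,1]^2\setminus\mathcal{Z}} 3\varphi\, dx\, dy - \int_0^1\varphi(1,y)\,dy - \int_0^1\varphi(x,1)\,dx.
\]
This identifies $\div[c]$ as a signed measure whose interior part is $3\lambda_2|_{[0,1]^2\setminus\mathcal{Z}}$ and whose singular part lives on the boundary. Matching against the formula \eqref{eq_pi_uniform} for $\pi^{\opt}$ gives $\div[c]+\pi^{\opt}=0$.

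Step two establishes feasibility \eqref{eq_dual_feasibility_uniform}. Applying the same integration-by-parts identities to a test function $u\in\mathcal{U}_{{\lip},1}$ with $u(0)=0$ yields
\[
\int u_x\,d\varsigma_1 + u_y\,d\varsigma_2 = -3\int_{[0,1]^2\setminus\mathcal{Z}} u\,dx\,dy + \int_0^1 u(1,y)\,dy + \int_0^1 u(x,1)\,dx.
\]
Expanding $\int u\,dm$ from the definition \eqref{eq_transform_measure} via the same integration-by-parts yields an analogous formula with $-3\int_{[0,1]^2} u$ in place of the restricted integral, so the difference $\int u_x\,d\varsigma_1+u_y\,d\varsigma_2-\int u\,dm$ equals $3\int_{\mathcal{Z}} u\,dx\,dy\geq 0$, confirming feasibility.

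Step three shows that $\varsigma$ attains the dual value. The dual objective at $\varsigma$ is $\int(c^1+c^2)\,dx\,dy$. By assumption~(1), $c^1$ is supported in $\mathcal{A}\cup\mathcal{W}$ and $c^2$ in $\mathcal{B}\cup\mathcal{W}$; on these regions the explicit formula \eqref{uauctionsol} gives $u^{\opt}_x=1$ and $u^{\opt}_y=1$ respectively. Hence
\[
\int(c^1+c^2)\,dx\,dy = \int\big(u^{\opt}_x c^1 + u^{\opt}_y c^2\big)\,dx\,dy.
\]
Applying the Step~2 identity with $u=u^{\opt}$ and using $u^{\opt}\equiv 0$ on $\mathcal{Z}$ reduces the right-hand side to $\int u^{\opt}\,dm$, which equals the primal value $\int(\langle\nabla u^{\opt},x\rangle-u^{\opt})\,dx\,dy$. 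Since the primal value is always a lower bound for the dual by weak duality, equality forces $\varsigma$ to be a minimizer. The main technical obstacle is the careful bookkeeping of the boundary contributions produced by the integration-by-parts formulas on $\partial[0,1]^2$ and matching them with the line-Dirac components of $m$ and $\pi^{\opt}$; this is the content of assumption~(2) which builds the correct boundary values of $c$ into the weak derivatives.
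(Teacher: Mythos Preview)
Your proof is correct and follows essentially the same approach as the paper's. Both arguments compute $\div[c]$ from assumptions (2)--(3), establish the chain $\int u\,dm \le \int\langle\nabla u,c\rangle \le \int(c^1+c^2)$, and show both inequalities collapse to equalities at $u=u^{\opt}$ using $u^{\opt}|_{\mathcal{Z}}=0$ and the support condition from assumption~(1). The only organizational difference is that you compute the feasibility gap $3\int_{\mathcal{Z}}u$ directly from the integration-by-parts identities, whereas the paper routes the same computation through the intermediate measure $\pi^{\opt}$ and the majorization $m\preceq\pi^{\opt}$; the two presentations are equivalent.
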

\begin{proof}
Take any convex non-decreasing  $1$-Lipschitz  function $u$ with $u(0)=0$. Then
\begin{align*}
\int \bigl( x u_x + y u_y - u\bigr) {\dd} x {\dd} y = \int u {\dd} m \le \int u {\dd} \pi^\opt.
\end{align*}
 The first equality is  the definition of the transform measure $m$, the second one holds since
$ m \preceq_{}  \pi^\opt$. 
Since $\pi^\opt$ is given by an explicit formula~\eqref{eq_pi_uniform}, the identity for the divergence (\ref{2itemdual}) follows from an elementary computation. 
 Using (\ref{2itemdual}) and the definition of divergence, we obtain
\begin{align*}
\int u {\dd}\pi^\opt &= - \int u \ {\div} [c] {\dd} x {\dd} y = \int \langle \nabla u, c \rangle \ {\dd} x {\dd} y\le\\
&\le \int (c^1 + c^2) {\dd} x {\dd} y= \int \bigl( d \varsigma_1 + d \varsigma_2 \bigr),
\end{align*}
where, we used that $0 \le u_x \le 1$ and  $0 \le u_y \le 1$ to get the inequality.
Substituting $u=u^\opt$ given by (\ref{uauctionsol}), we see that the two inequalities become equalities because 
$u|_{\mathcal{Z}}=0$ and $c|_{\mathcal{Z}}=0$. We conclude that the objective in the dual problem on $\varsigma$ coincides with the optimal value of the primal problem $\int u^\opt{\dd} m$. Thus $\varsigma$ is the optimal solution of the dual.
\end{proof}

\begin{example}[Absolutely continuous solution]
Consider the following vector field:
$$
c(x,y) = (c^1(x,y), c^2(x,y)),
$$
where 
$$
c^2(x,y) = c^1(y,x)
$$
and
\begin{equation*}
c^1(x,y) = \begin{cases}
0 & (x,y) \in \mathcal{Z} \cup \mathcal{B} \\
3x -2 &  (x,y) \in \mathcal{A} 
\\
\frac{3}{2} \bigl( x+y - \frac{4-\sqrt{2}}{3} \bigr) 
& (x,y) \in \mathcal{W}, x \le \frac{2}{3}, y \le \frac{2}{3}
\\ \frac{9}{2}(1-x) \bigl( y - \frac{2-\sqrt{2}}{3}\bigr)
\\\quad\quad\quad\quad+ 3  (x - \frac{2}{3})
&  (x,y) \in \mathcal{W},  \frac{2}{3} \le x \le 1, \frac{2-\sqrt{2}}{3} \le y \le \frac{2}{3}
\\
\frac{9}{4} \bigl( x - \frac{2-\sqrt{2}}{3} \bigr)^2 
& (x,y) \in \mathcal{W}, \frac{2-\sqrt{2}}{3} \le x \le \frac{2}{3}, \frac{2}{3} \le y \le 1
\\ \frac{1}{2} + \frac{3}{2} \bigl( x - \frac{2}{3} \bigr) &
(x,y) \in \mathcal{W}, \frac{2}{3} \le x \le 1, \frac{2}{3} \le y \le 1
\end{cases}
\end{equation*}
One can   check that the vector field $c=(c^1,c^2)$ satisfies the assumptions of Lemma~\ref{c1c2lemma}. We conclude that $c$ solves the dual problem.

\end{example}

\begin{remark}[Non-uniqueness]
It turns out that there are many solutions to the dual problem.
 However, the reader should be not confused by existing results on uniqueness of the optimal vector field $c$ in Beckmann's problem; see, e.g., \cite{santambrogio2015optimal}. Unlike most of the works on Beckmann's problem, our cost function is given by the $l^1$-norm $\sum_i |c_i|$ instead of the Euclidean $l^2$-norm $\sqrt{\sum_i |c_i|^2}$. The $l^1$-norm constitutes a degenerate case.
Indeed, 
if $c$ is a solution and $\varphi$ is a smooth function,
supported on a small neighbourhood of a point $(x_0,y_0) \in {\rm int}(\mathcal{W})$, where $c^i(x_0,y_0)>0$, then for sufficiently small $\varepsilon$ the vector field
$$
c_{\varepsilon} =(c^1 + \varepsilon \varphi_y, c^2 - \varepsilon \varphi_x)
$$
satisfies all the assumptions. Integrating by parts one gets
$$
\int \sum_{i=1}^2 c^i_{\varepsilon} \ {\dd} x {\dd} y
= \int \sum_{i=1}^2 c^i \ {\dd} x {\dd} y.
$$
Thus $c_{\varepsilon}$ is also a solution.

Moreover, one can easily find
 solutions which are  not weakly differentiable.
Let  $a, \delta$ be numbers and $Q$ be the square with the center $(a,a)$ and vertices
$$
q_{-,a} =(a-\delta,a),\ \  q_{a,-}=(a, a-\delta),\ \ 
q_{+,a}=(a+\delta, a), \ \ 
q_{a,+} = (a, a+\delta).
$$
Define
\begin{equation*}
\psi_1(x,y) = 
I_{Q}(x,y) \bigl( -I_{a-\delta \le x \le a}(x) + I_{a \le x \le a+\delta}(x) \bigr) 
\end{equation*}
\begin{equation*}
\psi_2(x,y) = 
I_{Q}(x,y) \bigl( -I_{a-\delta \le y \le a}(y) + I_{a \le y \le a+\delta}(y) \bigr) 
\end{equation*}
It is easy to verify that
\begin{equation}
\label{psixy}
\frac{\partial \psi_1(x,y)}{\partial y}
=
\frac{\partial \psi_2(x,y)}{\partial x}
=
\frac{1}{\sqrt{2}} \Bigl[ -\lambda_1|_{[q_{-,a}, q_{a,-}]} 
+ \lambda_1|_{[q_{-,a}, q_{a,+}]} 
-\lambda_1|_{[q_{a,+}, q_{+,a}]} 
+ \lambda_1|_{[q_{a,-}, q_{+,a}]} 
\Bigr]
\end{equation}
in the weak sense,
where $[a,b]$ denotes the segment joining $a$ and $b$.
Clearly,
$$
\int \psi_1 {\dd} x {\dd} y 
=\int \psi_2 {\dd} x {\dd} y 
=0
$$
and
(\ref{psixy}) implies that
$$
{\div} [(\psi_2, -\psi_1)] =0.
$$
Thus for any solution $c$ to the dual problem, strictly positive
in some neighbourhood $U$ of a point $(a,a) \in \rm{int}(\mathcal{W})$, the vector field
$$
c + (\psi_2,-\psi_1)
$$
is a solution to the dual problem for sufficiently small $\delta$.
\end{remark}
\subsubsection{Singular solutions}
It may seem intuitive --- at least for our toy example --- that vector fields solving the dual problem must be integrable functions. Surprisingly, there exist singular solutions. We construct a measure-valued solution $\varsigma$  with the following properties:
\begin{itemize}
\item The vector field $\varsigma$ is singular, i.e., its components are not  absolutely continuous measures: namely, $\varsigma$ has an atom at $(1,1)$.
\item Because of this atom, the divergence of $\varsigma$ is not a measure and can only be defined in the space of generalized functions.
\end{itemize}
\begin{example}[Singular solution]
Define a couple of measures $(\varsigma_1, \varsigma_2)$ as follows:
\begin{align*}
    \varsigma_1 &=  \mathbbm{1}((x, y) \in \mathcal{A}) \cdot 3\left(x - \frac{2}{3}\right)\,{\dd} x\,{\dd} y +\\
    &\quad\quad+ \mathbbm{1}\left((x, y) \in \mathcal{W} \cap \Bigl\{y \le \frac{2}{3}\Bigr\}\right) \cdot 3\left(x + y - \frac{4 - \sqrt{2}}{3}\right)\,{\dd} x\,{\dd} y,\\
    \varsigma_2 &= \mathbbm{1}\left(y \ge \frac{2}{3}\right) \cdot 3\left(y - \frac{2}{3}\right)\,{\dd} x\,{\dd} y + C \cdot \delta(x=1, y=1),
\end{align*}
where $C = \frac{1}{18} + \frac{\sqrt{2}}{27}$;
see Figure~\ref{fig:u_map}.

Let us show that the vector-measure $\varsigma=(\varsigma_1, \varsigma_2)$ is a solution to the dual problem. We need to demonstrate that $\varsigma$
satisfies the majorization condition~\eqref{eq_dual_feasibility_uniform} and minimizes the dual objective~\eqref{eq_dual_uniform} over such vector measures.

First, we check that $\varsigma$ satisfies~\eqref{eq_dual_feasibility_uniform}. Integrating by parts, we conclude that for any smooth $u$ defined on $[0, 1]^2$,
\begin{align}\label{eq:dc_by_parts}
\begin{split}
    \int \frac{\partial u}{\partial x}\,{\dd}\varsigma_1 &= \int_0^{\frac{2 - \sqrt{2}}{3}} v(y)\,{\dd} y + \int_{\frac{2 - \sqrt{2}}{3}}^{\frac{2}{3}} v(y) \cdot 3\left(y + \frac{\sqrt{2} - 1}{3}\right)\,{\dd} y- \\
    &\quad- 3\int_{ (\mathcal{W} \cup \mathcal{A}) \cap \{y \le \frac{2}{3}\} } u(x, y)\,{\dd} x\,{\dd} y,\\
    \int \frac{\partial u}{\partial y}\,{\dd}\varsigma_2 &= \int_0^1 u(x, 1)\,{\dd} x + C \cdot v'(1) - 3\int_{y \ge \frac{2}{3}} u(x, y)\,{\dd} x\,{\dd} y,
\end{split}
\end{align}
where $v(y) = u(1, y)$.

Let us prove that for any smooth convex non-decreasing $u$ with $u(0)=0$,
\begin{align*}
\int_0^1\int_0^1 \left( x \cdot \frac{\partial u}{\partial x} + y \cdot \frac{\partial u}{\partial y} - u\right)\,{\dd} x\,{\dd} y \le \int \frac{\partial u}{\partial x}\,{\dd}\varsigma_1 + \int \frac{\partial u}{\partial y}\,{\dd}\varsigma_2.
\end{align*}
Integrating by parts,
\begin{multline*}
\int_0^1\int_0^1 \left( x \cdot \frac{\partial u}{\partial x} + y \cdot \frac{\partial u}{\partial y} - u\right)\,{\dd} x\,{\dd} y =\\
= \int_0^1 u(x, 1)\,{\dd} x + \int_0^1 v(y)\,{\dd} y - 3\int_0^1\int_0^1 u(x, y)\,{\dd} x\,{\dd} y.
\end{multline*}
Comparing it to~\eqref{eq:dc_by_parts}, we conclude that the inequality above is equivalent to the following one:
\[
C \cdot v'(1) - \int_{\frac{2}{3}}^1 v(y)\,{\dd} y + \int_{\frac{2 - \sqrt{2}}{3}}^{\frac{2}{3}} v(y) \cdot 3\left(y - \frac{2 - \sqrt{2}}{3}\right)\,{\dd} y \ge -3\int_\mathcal{Z}u(x, y)\,{\dd} x\,{\dd} y.
\]

The right-hand side is non-positive, so it is sufficient to prove that the left-hand side is non-negative for any smooth  convex non-decreasing $u$ with $u(0)=0$. For any such $u$, the function $v(y) = u(1, y)$ is a non-decreasing convex function defined on $[0, 1]$. So, for any $y \in \left[\frac{2}{3}, 1\right]$, we have $v(y) \le v(2 / 3) + (y - 2/3) \cdot v'(1)$. Therefore,
\[
\int_{\frac{2}{3}}^1 v(y)\,{\dd} y \le \frac{1}{3} \cdot  v\left(\frac{2}{3}\right) + \frac{1}{18} \cdot v'(1).
\]
In addition, for any $y \in \left[\frac{2 - \sqrt{2}}{3}, \frac{2}{3}\right]$, we have $$v\left(\frac{2}{3}\right) - v(y) \le v'\left(\frac{2}{3}\right) \cdot \left(\frac{2}{3} - y\right) \le v'(1) \cdot \left(\frac{2}{3} - y\right).$$ Therefore,
\begin{multline}\label{eq:v_est_1}
\int_{\frac{2 - \sqrt{2}}{3}}^{\frac{2}{3}} v(y) \cdot 3\left(y - \frac{2 - \sqrt{2}}{3}\right)\,{\dd} y \ge \\
\ge v\left(\frac{2}{3}\right) \cdot \int_{\frac{2 - \sqrt{2}}{3}}^{\frac{2}{3}} 3\left(y - \frac{2 - \sqrt{2}}{3}\right)\,{\dd} y - v'(1) \cdot \int_{\frac{2 - \sqrt{2}}{3}}^{\frac{2}{3}}3\left(y - \frac{2 - \sqrt{2}}{3}\right) \left(\frac{2}{3} - y\right)\,{\dd} y \\
= \frac{1}{3} \cdot v\left(\frac{2}{3}\right) - \frac{\sqrt{2}}{27} \cdot v'(1).
\end{multline}
Finally,
\begin{multline}\label{eq:v_est_2}
    C \cdot v'(1) - \int_{\frac{2}{3}}^1 v(y)\,{\dd} y + \int_{\frac{2 - \sqrt{2}}{3}}^{\frac{2}{3}} v(y) \cdot 3\left(y - \frac{2 - \sqrt{2}}{3}\right)\,{\dd} y \ge \\
    \ge C \cdot v'(1) - \frac{1}{3} \cdot  v\left(\frac{2}{3}\right) - \frac{1}{18} \cdot v'(1) + \frac{1}{3} \cdot v\left(\frac{2}{3}\right) - \frac{\sqrt{2}}{27} \cdot v'(1) = 0.
\end{multline}
We conclude that $(\varsigma_1, \varsigma_2)$ satisfies~\eqref{eq_dual_feasibility_uniform}.

Let us verify the optimality of $(\varsigma_1, \varsigma_2)$. It is enough to check that the value of the dual objective~\eqref{eq_dual_uniform}
on $\varsigma$ coincides with the optimal value of the primal problem. Recall that $u^\opt$ denotes the optimal function in the primal problem. Hence,
\[
\int \left({\dd} \varsigma_1 + {\dd}\varsigma_2\right) =  \int \frac{\partial u^\opt}{\partial x}\,{\dd}\varsigma_1 + \int \frac{\partial u^\opt}{\partial y}\,{\dd}\varsigma_2.
\]
Thus it is enough to check that the right-hand side is equal to the value of the primal problem:
$$
\int_0^1\int_0^1 \left( x \cdot \frac{\partial u^\opt}{\partial x} + y \cdot \frac{\partial u^\opt}{\partial y} - u\right)\,{\dd} x\,{\dd} y = \int \frac{\partial u^\opt}{\partial x}\,{\dd}\varsigma_1 + \int \frac{\partial u^\opt}{\partial y}\,{\dd}\varsigma_2.
$$
Equivalently,
\begin{multline*}
C \cdot (v^\opt)'(1) - \int_{\frac{2}{3}}^1 v^\opt(y)\,{\dd} y + \int_{\frac{2 - \sqrt{2}}{3}}^{\frac{2}{3}} v^\opt(y) \cdot 3\left(y - \frac{2 - \sqrt{2}}{3}\right)\,{\dd} y=\\ =-3\int_\mathcal{Z}u^\opt(x, y)\,{\dd} x\,{\dd} y.
\end{multline*}
By~\eqref{uauctionsol}, the right-hand side is equal to~0. The function $v^\opt$ is linear on $\left[\frac{2 - \sqrt{2}}{3}, 1\right]$ and, hence, both inequalities~\eqref{eq:v_est_1} and~\eqref{eq:v_est_2} hold as equalities. Therefore, the left-hand side is also~0. Thus $\varsigma$ is an optimal solution to the dual problem as its objective~\eqref{eq_dual_uniform} on $\varsigma$ is equal to the optimal value of the primal problem.
\end{example}


\subsection{Upper bound on auctioneer's revenue}\label{app_full_surplus}

In Section~\ref{sec_duality}, we showed that the auctioneer's revenue is bounded from above by
\begin{equation}\label{rhozetB_appendix}
B \cdot \inf_{
  \footnotesize{\begin{array}{c}
       (\varphi_{i})_{i\in\mathcal{I}}
  \end{array}}}
 \sum_{i\in \mathcal{I}}\left( \int_X  \varphi_i^*(x_i) \rho(x){\dd} x+ \int_0^1 \varphi_{i}\left(z^{B-1}\right){\dd} z\right).
\end{equation}
for any number $B$ of bidders, $I$ of items, and any density $\rho$; see formula~\eqref{rhozetB}. Here we show that this upper bound corresponds to full surplus extraction. 

Our goal is to show that the expression~\eqref{rhozetB_appendix} equals to the full surplus defined by
\begin{equation}\label{eq_full_surplus}
\sum_{y\in\mathcal{I}}\mathbb{E}\left[\max_{b\in\mathcal{B}} \chi_{b,i}\right],
\end{equation}
where $\chi_b\in X$ are i.i.d. random  vectors distributed with density $\rho$. 

Let $\rho_i$ be the one-dimensional marginals
of $\rho$ onto the $i$-th coordinate. Then~\eqref{rhozetB_appendix} equals to
$B \sum_{i \in \mathcal{I}} DMK_i$, where
$$
DMK_i = \inf_{
  \footnotesize{\begin{array}{c}
       (\varphi_{i})_{i\in\mathcal{I}}
  \end{array}}}
 \left( \int_0^1  \varphi_i^*(x_i) \rho_i(x_i){\dd} x_i+ \frac{1}{B-1} \int_0^1 \varphi_{i} (y_i) y^{\frac{2-B}{B-1}}_i {\dd} y_i\right).
$$
The value $DMK_i$ is nothing else but the value of the dual Monge--Kantorovich problem for the cost function $ - x_i y_i$. Adding the terms $\frac{1}{2} \int_0^1 x^2_i \rho_i(x_i) {\dd} x_i$ and   $\frac{1}{2(B-1)}\int_0^1 y^{2+ \frac{2-B}{B-1}}_i {\dd} y_i$ with known value, the reader can easily verify that this problem is equivalent to the transportation problem with the standard
cost $\frac{1}{2}|x_i - y_i|^2$. Thus, according to the one-dimensional version of the Brenier theorem, the solution is concentrated on the graph of the mapping $T_i$ given by
$$
\int_0^{t_i} \rho_i {\dd} x_i
= \frac{1}{B-1} \int_0^{T(t_i)} y_i^{\frac{2-B}{B-1}} {\dd} y_i
= T^{\frac{1}{B-1}}_i(t_i)
$$
and the cost equals $\int_0^1 x_i T_i(x_i) \rho_i{\dd} x_i$. Let $F_i$ be the cumulative distribution function of $x_i$.
Finally, we get
$$
DMK_i =  \int_0^1 x_i  
 F_i^{B-1}(x_i) \rho_i(x_i) {\dd} x_i= \frac{1}{B}\int_0^1 x_i  
  \rho_i(x_i) {\dd} F_i^{B}(x_i)=\frac{1}{B}\mathbb{E}\left[\max_{b\in\mathcal{B}}\chi_{b,i}\right]
$$
and conclude that~\eqref{rhozetB_appendix} is equal to~\eqref{eq_full_surplus}.

\section{Numerical approach}\label{app_numerical}
This section is devoted to computing  the  auctioneer's optimal revenue and an optimal reduced-form mechanism.
By Proposition~\ref{prop_Rochet}, the auctioneer's problem is equivalent to the multi-bidder Rochet-Chon\'e problem~\eqref{eq_Rochet_Chone_extension}. We describe  a numerical approximation scheme for this problem and
prove convergence results.

Recall that in the multi-bidder Rochet-Chon\'e problem, we are given the number~$B$ of bidders, the set $\mathcal{I}$ of $|\mathcal{I}|=I$ items, and a distribution $\mu$ on the set of types $X=[0,1]^\mathcal{I}$ with density $\rho$. Let $\eta$ be the majorizing measure equal to the distribution of $\xi^{B-1}$, where $\xi$ is uniform on $[0,1]$.  The goal is to maximize
$$
B\cdot \int_{X} \Big(\langle \nabla u(x), x\rangle -u(x)\Big)  {\dd} \mu(x)
$$
over functions $u \in \mathcal{U}_{{\lip}, 1}$  satisfying the majorization constraint
$$
\mathrm{law}(u_{x_i}) \preceq \eta,
$$
for all $i\in \mathcal{I}$. Recall that $\mathcal{U}_{{\lip}, 1}$ is the set of $1$-Lipshitz convex non-decreasing functions  $u:\ X\to\R_+$, $\mathrm{law}(\xi)$ denotes the distribution of a random variable $\xi$, and the partial derivative $u_{x_i}=u_{x_i}(\chi)$ is treated as a random variable assuming that its argument  $\chi$ is distributed according to $\mu$. 

We will assume that the distribution $\mu$ satisfies the following assumption: 
\begin{assumption}
The density $\rho$ is a continuously differentiable function, and there exist constants $0 < c < C$ such that $c \le \rho(x) \le C$ for all $x \in {X}$.
\end{assumption}

\paragraph{Outline of the results.} 
The multi-bidder Rochet-Chon\'e problem is a well-defined optimization problem, however, converting it into an algorithm approximating the solution --- a numerical approximation scheme --- is not straightforward. The first obstacle is that the solution $u$ as well as the input data $\mu$ and $\eta$ are continuous objects. Hence, the problem is to be discretized in a way that solutions of the discrete problems approximate those of the continuous one. The second obstacle is that the majorization constraint, in addition to requiring discretization, is non-linear. 

We demonstrate that the majorization constraint is equivalent to a linear constraint suggested by a connection between majorization and martingales and construct provably convergent approximations. As a result, we obtain a finite-dimensional linear program approximating the original Rochet-Chon\'e problem. To summarize, the approach consists of three steps:
\begin{enumerate}
    \item discretize the set of types $X=[0, 1]^\mathcal{I}$ and the distribution $\mu$;
    \item approximate the gradient $\nabla u$ and the convexity constraint $u \in \mathcal{U}_{{\lip}, 1}$;
    \item linearize and approximate the majorization constraint $\mathrm{law}(u_{x_i}) \preceq \eta$;
    \item use an LP solver to find a solution to the resulting linear program.
\end{enumerate}
Section~\ref{sec_conv_approximation} describes the second step, the third step is discussed in Section~\ref{sec_maj_approximation}, and Section~\ref{sec_empiric_optimization} contains provable heuristics improving the run time. Here we provide a high-level overview.
\smallskip

To discretize the domain ${X}=[0, 1]^\mathcal{I}$, we consider the uniform partition of $X$ into $n^{I}$ equal cubes and replace the probability distribution $\mu$ with the associated sum of point masses. After that, we approximate the initial auction design problem with the corresponding discrete version. The convexity constraint can be written as follows:
\begin{equation}\label{eq_convexity_discrete}
u(\theta_i) - u(\theta_j) \ge \langle \nabla u(\theta_j), \theta_i - \theta_j \rangle \quad \text{for all $\theta_i$, $\theta_j$ from the discrete lattice.}
\end{equation}
This approach is based on the algorithm described by \cite{ekeland2010algorithm}. 
To approximate the majorization constraint, we use a generalization of Strassen's theorem \cite{strassen1965existence} reducing the constraint to the existence of the supermartingale with the given marginals. To get a finite-dimensional linear program, we discretize the distribution $\eta$.
The convergence of the discretization  is demonstrated in Theorem~\ref{thm:discrete_approximation_convergence}, Theorem~\ref{thm:stochastic_dominance_convergence_rate}, and Corollary~\ref{cor:stochastic_dominance_convergence}.

In practice, the computation can be sped up by reducing the size of the linear program, which can be achieved via heuristics identifying redundant constraints. The approach of ``directional convexity'' by \cite{oberman2013numerical} allows us to reduce the number of convexity constraints in~\eqref{eq_convexity_discrete} and results in the substantial improvement in computation time.

For simplicity, we focus on the case of a common majorizing measure $\eta=\mathrm{law}(\xi^{B-1})$ with $\xi \sim \mathrm{Uniform}[0, 1]$. The results can be easily extended to the Rochet-Chon\'e problem with general majorization~\eqref{eq_Rochet_Chone_general_majorization} and distinct majorizing measures $\eta_i$.

\subsection{Convexity constraint approximation}\label{sec_conv_approximation}
With the continuous problem, we associate its discrete version as described in \cite[Section 3]{ekeland2010algorithm}.

Fix a positive integer $n$. We partition the domain $X=[0, 1]^\mathcal{I}$ into $n^{I}$ equal cubes with the edge length $1 / n$. The elements of the partition will be denoted by $\sigma_j^{(n)}$, $1 \le j \le n^{I}$. Denote 
\[
\Theta_n = \{\theta_j^{(n)} \colon 1 \le j \le n^{I}\},
\]
where $\theta_j^{(n)}$ is the center of the cube $\sigma_j^{(n)}$. Finally, we denote 
\[
\mu_j^{(n)} = \frac{1}{n^{I}} \cdot \min\{\rho(x) \colon x \in \sigma_j^{(n)}\}.
\]
Note that the weight sum $\sum_j \mu_k^{(n)}$ is not necessary equal to 1; therefore, we define $\mu_0^{(n)} = 1 - \sum_{j = 1}^{n^{I}}\mu_j^{(n)}$.

For every $1 \le j \le n^{I}$, we associate with the cube $\sigma_j^{(n)}$ the scalar variable $u_j^{(n)}$ that corresponds to the value of the utility function $u$ at $\theta_j^{(n)}$, and the vector variable $p_j^{(n)} = (p_{j, 1}^{(n)}, \dots, p_{j, {I}}^{(n)}) \in \mathbb{R}^\mathcal{I}$ that corresponds to the value of $\nabla u$ at $\theta_j^{(n)}$. After that, we define the following non-linear program $\mathcal{D}_n$:
\begin{align*}
   &\text{maximize:}\quad\sum_{j} \mu_j^{(n)} \cdot \left( \langle \theta_j^{(n)}, p_j^{(n)}\rangle - u_j^{(n)}\right)&& \tag{$\mathcal{D}_n$}\\
    &\text{subject to:}\quad&&\\
    &\quad\text{\textbf{(ir)}}\quad u_j^{(n)} \ge 0 \quad&&\text{for all } 1 \le j \le n^{I};\\
    &\quad\text{\textbf{(fs)}}\quad 0 \le p_{j, k}^{(n)} \le 1 \quad&&\text{for all } 1 \le j \le n^{I}, k\in\mathcal{I};\\
   &\quad\text{\textbf{(ic)}}\quad u_i^{(n)} - u_j^{(n)} \ge \langle \theta_i^{(n)} - \theta_j^{(n)}, p_j^{(n)} \rangle\quad&&\text{for all } 1 \le i, j \le n^{I};\\
    &\quad\text{\textbf{(mj)}}\quad \mu_0^{(n)} \cdot \delta(x=0) + \sum_{j} \mu_j^{(n)} \cdot \delta\left(x = p_{j, k}^{(n)}\right) \preceq \eta\quad&& \text{for all } k\in\mathcal{I}.
\end{align*}
Here, the shortcuts \textbf{(ir)}, \textbf{(fs)}, \textbf{(ic)}, and \textbf{(mj)} correspond to the individual rationality, feasibility, incentive compatibility, and majorization, respectively. The only non-linear constraint in this program is \textbf{(mj)} discussed in the next section.

Given a solution $(\overline{u}^{(n)}, \overline{p}^{(n)})$, we define a function
\[
\overline{u}^{(n)}(x) = \max\left\{0,\, \max_j \left\{u_j^{(n)} + \langle x - \theta_j^{(n)}, p_j^{(n)}\rangle\right\}\right\}.
\]
One can easily check that $\overline{u}^{(n)} \in \mathcal{U}_{{\lip}, 1}$, $\overline{u}^{(n)}\big(\theta_j^{(n)}\big) = \overline{u}^{(n)}_j$, and $\overline{p}_j^{(n)} \in \partial \overline{u}^{(n)}\big(\theta_j^{(n)}\big)$ for all $1 \le j \le n^{I}$. Unfortunately, it does not necessary true that $\mathrm{law}(\overline{u}^{(n)}_{x_i}) \preceq \eta$; therefore, the function $\overline{u}^{(n)}$ does not necessary correspond to the interim utility function of a feasible auction mechanism. Nevertheless, we prove that the limiting function $\overline{u}$ satisfies the majorization constraint:
\begin{proposition}\label{prop:u_overline_partial_convergence}
    There exists a subsequence $\{\overline{u}_{n_k}\} \subset \{\overline{u}_n\}$ such that:
    \begin{enumerate}[label=\upshape{(\alph*)}]
    \item the subsequence $\{\overline{u}_{n_k}\}$ converges uniformly to $\overline{u}$;
    \item the function $\overline{u} \in \mathcal{U}_{{\lip}, 1}$ and $\mathrm{law}(\overline{u}_{x_i}) \preceq \eta$ for all $i\in\mathcal{I}$;
    \item \(\displaystyle\lim_{k \to \infty}\sum_{j} \mu_j^{(n)} \cdot \left( \langle \theta_j^{(n_k)}, \overline{p}_j^{(n_k)}\rangle - \overline{u}_j^{(n_k)}\right) = \int_{{X}} \left(\langle x, \nabla \overline{u}(x)\rangle - \overline{u}(x)\right)\,{\dd}\mu\)
    \end{enumerate}
\end{proposition}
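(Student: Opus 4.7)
Each $\overline{u}^{(n)}$ lies in $\mathcal{U}_{\mathrm{lip},1}$ by construction, so the sequence is uniformly bounded and uniformly equicontinuous. By Arzelà--Ascoli (as already exploited in Lemma~\ref{lm_compactness}), we extract a subsequence, still denoted $\{\overline{u}^{(n_k)}\}$, converging uniformly to some $\overline{u}$. Since $\mathcal{U}_{\mathrm{lip},1}$ is closed in the uniform topology, $\overline{u}\in\mathcal{U}_{\mathrm{lip},1}$, proving (a) and the first half of (b). Reusing the subdifferential-limit argument from the proof of Lemma~\ref{lm_Rochet_continuity}, any limit point of $\nabla \overline{u}^{(n_k)}(x)$ belongs to $\partial\overline{u}(x)$; Alexandrov differentiability then gives $\nabla \overline{u}^{(n_k)}(x)\to\nabla \overline{u}(x)$ for Lebesgue-a.e.\ $x$.

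The core step is the majorization half of (b). Introduce the discrete measures
\[
\nu_i^{(n_k)} \;=\; \mu_0^{(n_k)}\,\delta_{0} \;+\; \sum_{j}\mu_j^{(n_k)}\,\delta_{p_{j,i}^{(n_k)}},
\]
all supported in $[0,1]$, and satisfying $\nu_i^{(n_k)}\preceq\eta$ by \textbf{(mj)}. I would show $\nu_i^{(n_k)}\to \mathrm{law}(\overline{u}_{x_i})$ weakly. By Lipschitzness of $\overline{u}^{(n_k)}$, the value $p_{j,i}^{(n_k)}\in\partial \overline{u}^{(n_k)}(\theta_j^{(n_k)})$ differs from $\tfrac{\partial \overline{u}^{(n_k)}}{\partial x_i}(x)$ by at most a quantity vanishing as $n_k\to\infty$ for $x$ in $\sigma_j^{(n_k)}$ (using that convex $1$-Lipschitz functions have gradients continuous on a full-measure set, combined with the diameter of $\sigma_j^{(n_k)}$ going to $0$). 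Since the weights $\mu_j^{(n_k)}$ differ from $\mu(\sigma_j^{(n_k)})$ by an amount controlled by the modulus of continuity of $\rho$, standard arguments give $\nu_i^{(n_k)} - \mathrm{law}\big(\overline{u}_{x_i}^{(n_k)}\big)\to 0$ weakly. The a.e.\ convergence $\nabla\overline{u}^{(n_k)}\to\nabla\overline{u}$ then yields $\mathrm{law}\big(\overline{u}_{x_i}^{(n_k)}\big)\to \mathrm{law}(\overline{u}_{x_i})$ weakly. Since the class of majorization test functions on $[0,1]$ consists of bounded non-decreasing convex $\varphi$, weak convergence preserves $\preceq\eta$, giving $\mathrm{law}(\overline{u}_{x_i})\preceq\eta$.

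For (c), I split the error
\[
\Big|\sum_j \mu_j^{(n_k)}\big(\langle \theta_j^{(n_k)},p_j^{(n_k)}\rangle-u_j^{(n_k)}\big) - \int_X\big(\langle x,\nabla\overline{u}\rangle-\overline{u}\big){\dd}\mu\Big|
\]
into two pieces. First, the discrete sum is compared to $\int_X(\langle x,\nabla\overline{u}^{(n_k)}(x)\rangle-\overline{u}^{(n_k)}(x)){\dd}\mu$: on each cube $\sigma_j^{(n_k)}$, the integrand equals the summand plus terms of order $1/n_k$ (from $\langle x-\theta_j^{(n_k)},p_j^{(n_k)}\rangle$ and from the $1$-Lipschitz deviation $|\overline{u}^{(n_k)}(x)-u_j^{(n_k)}|$), and the weight discrepancy $|\mu_j^{(n_k)}-\mu(\sigma_j^{(n_k)})|$ is controlled by the modulus of continuity of $\rho$. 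Second, the integral converges to $\int_X(\langle x,\nabla\overline{u}\rangle-\overline{u}){\dd}\mu$ by the continuity established in Lemma~\ref{lm_Rochet_continuity}, applied to the uniformly convergent sequence $\overline{u}^{(n_k)}\to\overline{u}$.

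\paragraph{Main obstacle.}
The delicate point is the weak convergence $\nu_i^{(n_k)}\to\mathrm{law}(\overline{u}_{x_i})$: the $p_{j,i}^{(n_k)}$ are \emph{some} subdifferential elements at the cube centers and are not a priori close to the values of $\nabla\overline{u}^{(n_k)}$ on the cube unless one exploits that subdifferentials of convex functions are small away from a null set of non-differentiability. Making this quantitative on a grid of shrinking mesh, uniformly in $k$, is the technical heart of the argument.
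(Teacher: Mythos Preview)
Your plan is correct and close in spirit to the paper's proof, but the paper packages the ``main obstacle'' you identify into a single technical lemma rather than working through weak convergence of the empirical measures $\nu_i^{(n_k)}$. Specifically, the paper proves (and applies) a Riemann-sum lemma (Lemma~\ref{lem:riemann_approx}, itself resting on \cite[Lemma~A.6]{ekeland2010algorithm}) stating that for any $C^1$ function $\phi(\theta,z,p)$ and any uniformly convergent sequence of convex functions $f_n\to f$ with uniformly bounded subdifferentials,
\[
\sum_j \mu_j^{(n)}\,\phi\big(\theta_j^{(n)},\,f_n(\theta_j^{(n)}),\,\nabla f_n(\theta_j^{(n)})\big)\ \longrightarrow\ \int_X \phi(\theta,f(\theta),\nabla f(\theta))\,{\dd}\mu.
\]
This is precisely the statement that the discrete subgradient data at cube centers correctly approximates the continuous gradient integral, i.e., the content of your ``main obstacle''. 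With this lemma in hand, part~(b) follows by taking $\phi(\theta,z,p)=\varphi(p_i)$ for each $C^1$ non-decreasing convex test function $\varphi$ and using \textbf{(mj)}, and part~(c) follows by taking $\phi(\theta,z,p)=\langle\theta,p\rangle-z$; no separate weak-convergence or error-splitting argument is needed. Your route via $\nu_i^{(n_k)}\to\mathrm{law}(\overline{u}_{x_i})$ weakly is a valid alternative, but to make it rigorous you would end up reproving the substance of Lemma~A.6 from \cite{ekeland2010algorithm}; the paper's approach is shorter because it cites that result directly and applies it twice.
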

Before proving Proposition~\ref{prop:u_overline_partial_convergence}, we need the following technical result:
\begin{lemma}\label{lem:riemann_approx}
Let $Q$ be a compact subset of $\mathbb{R}^{\mathcal{I}}$. Consider a function $\phi(\theta, z, p) \in C^1({X} \times \mathbb{R} \times Q \to \mathbb{R})$. Let $f_k \colon {X} \to \mathbb{R}$ be a family of convex functions such that $\partial f_k(\theta) \in Q$ for all $\theta \in {X}$, whose uniform limit is $f$. Then
\[
\lim_{k \to \infty} \sum_j \mu_j^{(n)} \cdot \phi\left(\theta_j^{(n)}, f_n(\theta_j^{(n)}), \nabla f_n(\theta_j^{(n)})\right) = \int_{{X}} \phi(\theta, f(\theta), \nabla f(\theta))\,{\dd}\mu.
\]
\end{lemma}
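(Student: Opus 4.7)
The plan is to combine three ingredients: almost-everywhere convergence of the gradients $\nabla f_k \to \nabla f$, control of the mismatch between the weights $\mu_j^{(n)}$ and the true cube masses $\mu(\sigma_j^{(n)})$, and a Riemann-type approximation argument carried out on a set where the gradients converge uniformly.

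First I would establish the a priori regularity. Since $\partial f_k(\theta)\subset Q$ and $Q$ is compact, each $f_k$ is $L$-Lipschitz for $L=\sup_{p\in Q}|p|$, and the uniform limit $f$ is likewise $L$-Lipschitz with $\partial f(\theta)\subset Q$. Repeating the convexity argument from the proof of Lemma~\ref{lm_Rochet_continuity}, every accumulation point of the sequence $\{\nabla f_k(\theta)\}$ lies in $\partial f(\theta)$; since a convex function is differentiable almost everywhere, $\partial f(\theta)$ is a singleton a.e., and hence $\nabla f_k(\theta)\to\nabla f(\theta)$ for $\mu$-a.e.\ $\theta\in X$. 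Writing $\phi_k(\theta)=\phi(\theta,f_k(\theta),\nabla f_k(\theta))$ and $\phi_\infty(\theta)=\phi(\theta,f(\theta),\nabla f(\theta))$, continuity of $\phi$ on the compact set $X\times[-M,M]\times Q$ gives $\phi_k(\theta)\to\phi_\infty(\theta)$ a.e.\ and $|\phi_k|\le M_\phi$ uniformly, so by dominated convergence $\int_X\phi_k\,{\dd}\mu\to\int_X\phi_\infty\,{\dd}\mu$.

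Second, I would compare the discrete weights with the cube masses. By uniform continuity of $\rho$ on $X$,
\[
0\le\mu(\sigma_j^{(n)})-\mu_j^{(n)}\le \frac{\omega_\rho(\sqrt{I}/n)}{n^{I}},
\]
where $\omega_\rho$ denotes the modulus of continuity of $\rho$. Summing over $j$ and using $|\phi_k|\le M_\phi$,
\[
\Bigl|\sum_j\mu_j^{(n)}\phi_k(\theta_j^{(n)})-\sum_j\mu(\sigma_j^{(n)})\phi_k(\theta_j^{(n)})\Bigr|\le M_\phi\cdot\omega_\rho(\sqrt{I}/n)\to 0
\]
as $n\to\infty$. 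Hence it suffices to show the Riemann-type sum $\sum_j\mu(\sigma_j^{(n)})\phi_k(\theta_j^{(n)})$ converges to $\int_X\phi_\infty\,{\dd}\mu$.

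The final and hardest step is this Riemann approximation, and it will be the main obstacle because the integrands $\phi_k$ are only measurable rather than continuous (the jumps of $\nabla f_k$ are inherited by $\phi_k$). To handle this, given $\varepsilon>0$ I would invoke Egorov's theorem to produce $E_\varepsilon\subset X$ with $\mu(E_\varepsilon)<\varepsilon$ such that $\nabla f_k\to\nabla f$ uniformly on $X\setminus E_\varepsilon$; by Lusin one may additionally assume $\nabla f$ is continuous on $X\setminus E_\varepsilon$. On the cubes contained in $X\setminus E_\varepsilon$, uniform continuity of $\phi$ combined with uniform gradient convergence yields $\sup_{\theta\in\sigma_j^{(n)}}|\phi_k(\theta)-\phi_\infty(\theta)|\to 0$, and the value $\phi_\infty(\theta_j^{(n)})$ is uniformly close to the mean of $\phi_\infty$ on $\sigma_j^{(n)}$ for $n$ large, so the corresponding partial sum approximates $\int_{X\setminus E_\varepsilon}\phi_\infty\,{\dd}\mu$. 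On the remaining cubes (those meeting $E_\varepsilon$), the total contribution is bounded by $M_\phi$ times the sum of their masses, which is at most $M_\phi(\varepsilon+o(1))$ as $n\to\infty$ (boundary cubes covering $E_\varepsilon$ add up to at most $\varepsilon$ plus a term vanishing with $1/n$). Sending first $k,n\to\infty$ and then $\varepsilon\to 0$ closes the argument.
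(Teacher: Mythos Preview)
Your first two ingredients coincide with the paper's proof: it too uses a.e.\ convergence of $\nabla f_n\to\nabla f$ together with dominated convergence to pass from $\int_X\phi(\theta,f_n,\nabla f_n)\,{\dd}\mu$ to $\int_X\phi(\theta,f,\nabla f)\,{\dd}\mu$, and it controls the mismatch between $\mu_j^{(n)}$ and $\rho(\theta_j^{(n)})/n^I$ via the uniform continuity of $\rho$, exactly as you do (your comparison with $\mu(\sigma_j^{(n)})$ is a cosmetic variant).

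Where you diverge is the Riemann step. The paper does not argue via Egorov/Lusin; it invokes \cite[Lemma~A.6]{ekeland2010algorithm}, which states directly that for $C^1$ integrands $\psi(\theta,f_n(\theta),\nabla f_n(\theta))$ with $f_n$ convex and subdifferentials in a fixed compact set, the equal-weight sums $n^{-I}\sum_j\psi(\theta_j^{(n)},\dots)$ are within $\varepsilon$ of $\int_X\psi\,{\dd}\theta$ for all $n$ large. Applied with $\psi=\rho\phi$, this finishes in one line. Your Egorov/Lusin route is in principle workable, but the claim that the cubes meeting $E_\varepsilon$ have total mass at most $\varepsilon+o(1)$ is a genuine gap: an Egorov exceptional set may be open, dense, and of measure $<\varepsilon$ (e.g.\ a union of tiny balls around a countable dense set), in which case \emph{every} cube at every scale meets $E_\varepsilon$ and the bound fails outright. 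In general the mass of cubes meeting $E_\varepsilon$ tends to $\mu(\overline{E_\varepsilon})$, not $\mu(E_\varepsilon)$. To salvage the argument you must choose $E_\varepsilon$ with controlled boundary---for instance, replace it via outer regularity by a finite union of grid cubes so that $\mu(\overline{E_\varepsilon})<2\varepsilon$---and only then does the bad-cube mass become $\le 2\varepsilon+o(1)$. This refinement is precisely what the cited Ekeland lemma packages, and it is the missing idea in your sketch.
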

\begin{proof}[Proof of Lemma~\ref{lem:riemann_approx}]
The proof is based on the following result.
\begin{lemma*}[{{\cite{ekeland2010algorithm}, Lemma A.6}}] Under the assumptions of Lemma~\ref{lem:riemann_approx}, for every $\varepsilon > 0$ there exists $K \in \mathbb{N}$ such that
\[
\left|\frac{1}{n^{I}}\sum_j \phi\left(\theta_j^{(n)}, f_n(\theta_j^{(n)}), \nabla f_n(\theta_j^{(n)})\right) - \int_{{X}} \phi(\theta, f_n(\theta), \nabla f_n(\theta))\,{\dd}\theta\right| < \varepsilon
\]
for all $n > K$.
\end{lemma*}

Consider a function $\psi(\theta, z, p) = \rho(\theta) \cdot \phi(\theta, z, p)$, where $\rho$ is the density of $\mu$. The function $\psi$ is continuously differentiable; therefore, for every $\varepsilon > 0$ there exists $K \in \mathbb{N}$ such that
\begin{align}\label{eq:riemann_approx_1}
\begin{split}
   &\left|\frac{1}{n^{I}}\sum_j \psi\left(\theta_j^{(n)}, f_n(\theta_j^{(n)}), \nabla f_n(\theta_j^{(n)})\right) - \int_{{X}} \psi(\theta, f_n(\theta), \nabla f_n(\theta))\,{\dd}\theta\right|= \\
   &=\left|\frac{1}{n^{I}}\sum_j \rho(\theta_j^{(n)}) \cdot \phi\left(\theta_j^{(n)}, f_n(\theta_j^{(n)}), \nabla f_n(\theta_j^{(n)})\right) - \int_{{X}} \phi(\theta, f_n(\theta), \nabla f_n(\theta))\,{\dd}\mu\right| < \varepsilon
   \end{split}
\end{align}
for all $n > K$.

Since all $f_n$ are convex and $\{f_n\}$ converges uniformly to $f$, the sequence of gradients $\{\nabla f_n(\theta)\}$ converges to $\nabla f(\theta)$ for almost all $\theta \in {X}$. Thus it follows from the continuity of $\phi$ that
\[
\lim_{n \to \infty}\phi(\theta, f_n(\theta), \nabla f_n(\theta)) = \phi(\theta, f(\theta), \nabla f(\theta))
\]
for almost all $\theta \in {X}$. We may assume that the family of functions $f_n$ is uniformly bounded. Therefore, there exists a constant $M$ such that
\[
\left|\phi(\theta, f_n(\theta), \nabla f_n(\theta))\right| < M
\]
for all $\theta \in {X}$ and for all $n$. Thus it follows from Lebesgue's dominated convergence theorem that
\begin{equation}\label{eq:riemann_approx_2}
\lim_{n \to \infty}\int_{{X}} \phi(\theta, f_n(\theta), \nabla f_n(\theta)) \,{\dd}\mu = \int_{{X}} \phi(\theta, f(\theta), \nabla f(\theta)) \,{\dd}\mu.
\end{equation}

Finally, we have
\begin{align}\label{eq:riemann_approx_3}
\begin{split}
&\Bigg|\frac{1}{n^{I}}\sum_j \rho(\theta_j^{(n)}) \cdot \phi\left(\theta_j^{(n)}, f_n(\theta_j^{(n)}), \nabla f_n(\theta_j^{(n)})\right)- \\
&\quad\quad\quad\quad- \sum_j \mu_j^{(n)} \cdot \phi\left(\theta_j^{(n)}, f_n(\theta_j^{(n)}), \nabla f_n(\theta_j^{(n)})\right)\Bigg|\le\\
&\quad\le M \cdot \sum_j \left|\frac{\rho(\theta_j^{(n)})}{n^{I}} - \mu_j^{(n)}\right| \le M \cdot \sup_{j;\,\,x, y \in \sigma_j^{(n)}}|\rho(x) - \rho(y)|,
\end{split}
\end{align}
and the latter expression tends to 0 as $n \to \infty$ by the uniform continuity of $\rho$.

Combining the expressions~\eqref{eq:riemann_approx_1},   \eqref{eq:riemann_approx_2} and \eqref{eq:riemann_approx_3}, we obtain the desired convergence equality.
\end{proof}
\begin{proof}[Proof of Proposition~\ref{prop:u_overline_partial_convergence}]
Since $\overline{u}^{(n)} \in \mathcal{U}_{{\lip}, 1}$ and this set space is sequentially compact in the uniform convergence topology (Lemma~\ref{lm_compactness}), there is a subsequence $\{\overline{u}_{n_k}\}$ that converges uniformly to the function $\overline{u} \in \mathcal{U}_{{\lip}, 1}$.

To prove the majorization condition, it is sufficient to check that for any continuously differentiable non-decreasing convex function $\varphi$, we have
\begin{equation}\label{eq:phi_riemann_approx}
\int \varphi(\overline{u}_{x_i})\,{\dd}\mu \le \int_0^1 \varphi(x)\,{\dd}\eta(x).
\end{equation}
It follows from Lemma~\ref{lem:riemann_approx} that
\[
\left|\int \varphi(\overline{u}_{x_i})\,{\dd}\mu - \sum_j \mu_j^{(n_k)} \cdot \varphi\big(\overline{p}_{j, i}^{(n_k)}\big)\right| \le \varepsilon(k)
\]
and $\varepsilon(k) \to 0$ as $k \to +\infty$. It follows from the \textbf{(mj)} constraint that
\[
\sum_j \mu_j^{(n_k)} \cdot \varphi\big(\overline{p}_{j, i}^{(n_k)}\big) \le \int_0^1 \varphi(x)\,{\dd}\eta(x);
\]
therefore, letting $k$ tend to $+\infty$, we conclude that the inequality~\eqref{eq:phi_riemann_approx} holds. The point (c) also follows directly from Lemma~\ref{lem:riemann_approx}. 
\end{proof}

Let $u^{\opt} \in \mathcal{U}_{{\lip}, 1}$ be the optimum of the multi-bidder Rochet-Chon\'e problem. For each positive integer $n$, denote
\begin{align*}
&u_j^{\opt,(n)} = n^{I} \cdot \int_{\sigma_j^{(n)}} u^{\opt}(x)\,{\dd} x,\\
&p_{j, k}^{\opt,(n)} = n^{I} \cdot \int_{\sigma_j^{(n)}} u^{\opt}_{x_k}(x)\,{\dd} x.
\end{align*}
\begin{proposition}
The variables $u_j^{\opt,(n)}$ and $ p_j^{\opt,(n)}$ satisfy all the constraints of the program~$\mathcal{D}_n$.
\end{proposition}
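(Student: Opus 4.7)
The plan is to verify the four families of constraints \textbf{(ir)}, \textbf{(fs)}, \textbf{(ic)}, \textbf{(mj)} separately, each being an averaged version of a pointwise property of $u^\opt$. Since $u^\opt \in \mathcal{U}_{{\lip},1}$, we have $u^\opt \ge 0$ and $0 \le u^\opt_{x_k} \le 1$ a.e., so the averages $u_j^{\opt,(n)}$ and $p_{j,k}^{\opt,(n)}$ inherit these bounds; this gives \textbf{(ir)} and \textbf{(fs)} at once. The remaining work is \textbf{(ic)} and \textbf{(mj)}.

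For \textbf{(ic)}, I would exploit the translation structure of the partition: the cubes $\sigma_i^{(n)}$ and $\sigma_j^{(n)}$ differ by a translation $\theta_i^{(n)} - \theta_j^{(n)}$, so via the change of variables $x = \theta_i^{(n)} + z$, $y = \theta_j^{(n)} + z$ with $z$ ranging over the reference cube $C = [-1/(2n),1/(2n)]^{\mathcal{I}}$,
\[
u_i^{\opt,(n)} - u_j^{\opt,(n)} = n^{I}\int_C \left[u^\opt(\theta_i^{(n)} + z) - u^\opt(\theta_j^{(n)} + z)\right]{\dd} z.
\]
Since $u^\opt$ is convex and differentiable a.e., $u^\opt(\theta_i^{(n)} + z) \ge u^\opt(\theta_j^{(n)} + z) + \langle \nabla u^\opt(\theta_j^{(n)} + z),\,\theta_i^{(n)} - \theta_j^{(n)}\rangle$ for a.e.\ $z$. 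Integrating this inequality over $C$ and pulling the constant vector $\theta_i^{(n)} - \theta_j^{(n)}$ out of the integral yields exactly \textbf{(ic)}.

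For \textbf{(mj)}, since the two measures on the left and right have equal total mass ($=1$), I may assume WLOG $\varphi(0)=0$ for an arbitrary test function $\varphi$ that is convex and non-decreasing, so $\varphi \ge 0$ on $[0,\infty)$. Then the atom at $0$ contributes nothing and, by Jensen's inequality applied cube-by-cube,
\[
\varphi\bigl(p_{j,k}^{\opt,(n)}\bigr) = \varphi\!\left(n^{I}\int_{\sigma_j^{(n)}} u^\opt_{x_k}(y)\,{\dd} y\right) \le n^{I}\int_{\sigma_j^{(n)}} \varphi\bigl(u^\opt_{x_k}(y)\bigr){\dd} y.
\]
Using $\mu_j^{(n)}\cdot n^{I} = \min_{x\in\sigma_j^{(n)}}\rho(x) \le \rho(y)$ for $y\in\sigma_j^{(n)}$ and $\varphi\ge 0$,
\[
\sum_j \mu_j^{(n)}\varphi\bigl(p_{j,k}^{\opt,(n)}\bigr) \le \sum_j\int_{\sigma_j^{(n)}}\varphi\bigl(u^\opt_{x_k}(y)\bigr)\rho(y){\dd} y = \int_X \varphi(u^\opt_{x_k})\,{\dd}\mu \le \int_0^1 \varphi(x)\,{\dd}\eta(x),
\]
where the last step is the majorization constraint satisfied by $u^\opt$ itself (Proposition~\ref{prop_Rochet}).

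The only delicate step is \textbf{(mj)}: it is tempting but incorrect to drop the auxiliary atom $\mu_0^{(n)}\delta_0$, and one must explicitly exploit that $\mu_j^{(n)}$ is built from the \emph{minimum} of $\rho$ on $\sigma_j^{(n)}$, which is precisely what makes the cube-by-cube inequality $\mu_j^{(n)} n^{I} \le \rho(y)$ available and in turn preserves the direction of majorization under the discretization. Everything else is a routine averaging argument.
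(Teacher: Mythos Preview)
Your proof is correct and follows essentially the same route as the paper's: \textbf{(ir)} and \textbf{(fs)} are immediate from averaging the pointwise bounds, \textbf{(ic)} is the convexity inequality integrated over a cube using the translation structure, and \textbf{(mj)} combines Jensen on each cube with the key inequality $\mu_j^{(n)} n^{I}\le \rho(y)$ on $\sigma_j^{(n)}$. The only cosmetic difference is that the paper keeps track of the $\varphi(0)$ contribution explicitly (the residual $\sum_j(\mu(\sigma_j^{(n)})-\mu_j^{(n)})\varphi(0)$ sums to $\mu_0^{(n)}\varphi(0)$, matching the atom), whereas you normalize $\varphi(0)=0$ up front, which is legitimate since both measures in \textbf{(mj)} have total mass~$1$.
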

\begin{proof}
The constraints \textbf{(ir)} and \textbf{(fs)} follow from the inequalities $u^{\opt}(x) \ge 0$ and $0 
\le u^{\opt}_{x_k}(x) \le 1$.

Since $u^{\opt}(x)$ is convex, we have
\[
u^{\opt}\big(x - \theta_j^{(n)} + \theta_i^{(n)}\big) - u^{\opt}(x) \ge \langle \theta_i^{(n)} - \theta_j^{(n)}, \nabla u^{\opt}(x) \rangle
\]
for almost all $x \in \sigma_j^{(n)}$. Integrating this inequality over the cube $\sigma_j^{(n)}$, we conclude that
\[
u_i^{\opt,(n)} - u_j^{\opt,(n)} \ge \langle \theta_i^{(n)} - \theta_j^{(n)}, p_j^{\opt,(n)}\rangle.
\]
Thus the constraint \textbf{(ic)} holds.

Consider any non-decreasing convex function function $\varphi$. Since $\mathrm{law}(u_{x_i}^{\opt}) \preceq \eta_i$, we conclude that
\[
\int_0^1 \varphi(x)\,{\dd}\eta(x) \ge \sum_{j}\int_{\sigma_j^{(n)}}\varphi(u_{x_i}^{\opt})\,{\dd}\mu.
\]
Since $\varphi(x)$ is non-decreasing, we have
\[
\int_{\sigma_j^{(n)}}\varphi(u_{x_i}^{\opt})\,{\dd}\mu \ge \mu_j^{(n)} \cdot n^{I} \cdot \int_{\sigma_j^{(n)}}\varphi(u_{x_i}^{\opt})\,{\dd} x + \left(\mu(\sigma_j^{(n)}) - \mu_j^{(n)}\right) \cdot \varphi(0).
\]
Finally, it follows from Jensen's inequality that
\[
n^{I} \cdot \int_{\sigma_j^{(n)}}\varphi(u_{x_i}^{\opt})\,{\dd} x \ge \varphi\left(n^{I} \cdot \int_{\sigma_j^{(n)}} u_{x_i}^{\opt}(x)\,{\dd} x\right) = \varphi\big(p_{j, i}^{\opt,(n)}\big)
\]

Thus
\[
\int_0^1 \varphi(x)\,{\dd}\eta(x) \ge \sum_j \mu_j^{(n)} \cdot \varphi\big({p}_{j, i}^{\opt,(n)}\big) + \mu_0^{(n)} \cdot \varphi(0).
\]
Since this inequality holds for all $\varphi$, we conclude that the constraint \textbf{(mj)} holds.
\end{proof}
The next result demonstrates that the optimal revenue in the continuous problem is approximated by its discretization.   
\begin{proposition}\label{prop:riemann_approx_to_tilde_u_converges} The following identity holds:
$$ \lim_{n \to \infty}\sum_{j} \mu_j^{(n)} \cdot \left( \langle \theta_j^{(n)}, p_j^{\opt,(n)}\rangle - u_j^{\opt,(n)}\right) = \int_{{X}} \left(\langle x, \nabla u^{\opt}(x)\rangle - u^{\opt}(x)\right)\,{\dd}\mu.$$
\end{proposition}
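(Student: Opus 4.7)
The plan is to show directly that the Riemann-type sum on the left differs from the integral on the right by a quantity that tends to zero, exploiting the uniform regularity of $\rho$ and the uniform boundedness of $u^\opt$ and $\nabla u^\opt$. The key observation is that by the definitions of $u_j^{\opt,(n)}$ and $p_j^{\opt,(n)}$ as cube averages, we can rewrite
\begin{equation*}
\sum_{j} \mu_j^{(n)} \bigl(\langle \theta_j^{(n)}, p_j^{\opt,(n)}\rangle - u_j^{\opt,(n)}\bigr) = \sum_j \rho_{\min,j}^{(n)} \int_{\sigma_j^{(n)}} \bigl(\langle \theta_j^{(n)}, \nabla u^\opt(x)\rangle - u^\opt(x)\bigr) {\dd} x,
\end{equation*}
where $\rho_{\min,j}^{(n)} = \min\{\rho(x): x \in \sigma_j^{(n)}\}$. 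The right-hand side of the proposition equals $\sum_j \int_{\sigma_j^{(n)}} (\langle x,\nabla u^\opt(x)\rangle - u^\opt(x))\rho(x){\dd} x$. The problem thus reduces to estimating the cube-by-cube difference between these two expressions.

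On each cube $\sigma_j^{(n)}$, I would split the integrand difference into two contributions: one from replacing $x$ by $\theta_j^{(n)}$ in the linear term, and one from replacing $\rho(x)$ by $\rho_{\min,j}^{(n)}$. Concretely,
\begin{equation*}
\bigl(\langle x, \nabla u^\opt\rangle - u^\opt\bigr)\rho(x) - \rho_{\min,j}^{(n)}\bigl(\langle \theta_j^{(n)}, \nabla u^\opt\rangle - u^\opt\bigr) = \langle x - \theta_j^{(n)}, \nabla u^\opt\rangle\rho(x) + \bigl(\langle \theta_j^{(n)}, \nabla u^\opt\rangle - u^\opt\bigr)\bigl(\rho(x) - \rho_{\min,j}^{(n)}\bigr).
\end{equation*}

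Now I bound each piece using that $u^\opt \in \mathcal{U}_{\lip,1}$ implies $|u^\opt|\le I$ and each partial derivative lies in $[0,1]$ a.e., so $\|\nabla u^\opt\|_1 \le I$. On $\sigma_j^{(n)}$ one has $\|x-\theta_j^{(n)}\|_\infty \le 1/(2n)$, which gives $|\langle x-\theta_j^{(n)},\nabla u^\opt\rangle|\le I/(2n)$; combined with $\rho\le C$ and summation over all cubes, the first piece contributes at most $CI/(2n)\cdot|X|\to 0$. For the second piece, $|\langle \theta_j^{(n)},\nabla u^\opt\rangle - u^\opt|\le 2I$, while uniform continuity of $\rho$ on the compact $X$ (which follows from the assumption that $\rho$ is continuously differentiable) yields $|\rho(x)-\rho_{\min,j}^{(n)}|\le \omega_\rho(\sqrt{I}/n)$, where $\omega_\rho$ is the modulus of continuity; the total contribution is at most $2I\,\omega_\rho(\sqrt{I}/n)\to 0$. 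Combining these two estimates proves the convergence.

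This argument involves no real obstacle; it is essentially a careful Riemann-sum-type estimate, parallel in spirit to Lemma~\ref{lem:riemann_approx} but adapted to averages rather than point values. The only minor subtlety is that $\nabla u^\opt$ is defined only almost everywhere, but this causes no trouble since all the bounds involve integrals over $\sigma_j^{(n)}$ and use only the essentially bounded gradient; convexity of $u^\opt$ ensures the almost-everywhere definition suffices for Fubini-style manipulations.
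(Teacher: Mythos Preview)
Your proof is correct and follows essentially the same Riemann-sum approach as the paper: both rewrite the discrete sum via the definitions of $u_j^{\opt,(n)}$ and $p_j^{\opt,(n)}$ as cube averages, then bound the difference using the uniform boundedness of $\langle x,\nabla u^\opt\rangle - u^\opt$ and the uniform continuity of $\rho$. Your two-term decomposition is in fact slightly more careful than the paper's, since you explicitly isolate and bound the contribution $\langle x-\theta_j^{(n)},\nabla u^\opt\rangle\rho(x)$ coming from replacing $x$ by the cube center $\theta_j^{(n)}$, a term the paper's displayed equality tacitly absorbs.
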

\begin{proof}
 The definition of $u_j^{\opt,(n)}$ and $p_j^{\opt,(n)}$ implies that
\begin{multline*}
\int_{{X}} \left(\langle x, \nabla u^{\opt}(x)\rangle - u^{\opt}(x)\right)\,{\dd}\mu - \sum_{j} \mu_j^{(n)} \cdot \left( \langle \theta_j^{(n)}, p_j^{\opt,(n)}\rangle - u_j^{\opt,(n)}\right)= \\
= \sum_{j}\int_{\sigma_j^{(n)}}\left(\langle x, \nabla u^{\opt}(x)\rangle - u^{\opt}(x)\right) \cdot \left(\rho(x) - n^{I}\mu_j^{(n)}\right)\,{\dd} x\leq \\
\le \sup_{x \in {X}} |\langle x, \nabla u^{\opt}(x)\rangle - u^{\opt}(x)| \cdot \sup_{j, x \in \sigma_j^{(n)}}\left|\rho(x) - n^{I} \mu_j^{(n)}\right|.
\end{multline*}
The function $|\langle x, \nabla u^{\opt}(x)\rangle - u^{\opt}(x)|$ is bounded, and the result follows from the uniform continuity of $\rho$.
\end{proof}
Putting all the pierces together, we obtain the following convergence result.
\begin{theorem}\label{thm:discrete_approximation_convergence}
\,
\begin{enumerate}[label=\upshape{(\alph*)}]
\item The function $\overline{u} = \lim_{k \to \infty} \overline{u}^{(n_k)}$ is a solution to the multi-bidder Rochet-Chon\'e problem.
\item \(\displaystyle\lim_{k \to \infty}\sum_{j} \mu_j^{(n)} \cdot \left( \langle \theta_j^{(n_k)}, \overline{p}_j^{(n_k)}\rangle - \overline{u}_j^{(n_k)}\right) = \max_{\substack{u \in \mathcal{U}_{{\lip}, 1}, \\ \mathrm{law}(u_{x_i}) \preceq \eta}} \int_{{X}} \left(\langle x, \nabla u(x)\rangle - u(x)\right)\,{\dd}\mu\).
\end{enumerate}
\end{theorem}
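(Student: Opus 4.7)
The proof essentially consists of combining the two propositions that immediately precede the theorem. My plan is to argue by squeezing the limit of the discrete objective between a lower bound equal to the continuous optimal value and the continuous objective evaluated at the limiting function $\overline{u}$.

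First, I would use the fact that $(\overline{u}^{(n)}, \overline{p}^{(n)})$ is optimal in the discrete program $\mathcal{D}_n$ while $(u_j^{\opt,(n)}, p_j^{\opt,(n)})$ is feasible in $\mathcal{D}_n$. Hence for every $n$,
\[
\sum_{j} \mu_j^{(n)} \cdot \left( \langle \theta_j^{(n)}, \overline{p}_j^{(n)}\rangle - \overline{u}_j^{(n)}\right) \ge \sum_{j} \mu_j^{(n)} \cdot \left( \langle \theta_j^{(n)}, p_j^{\opt,(n)}\rangle - u_j^{\opt,(n)}\right).
\]
Passing to the subsequence $\{n_k\}$ from Proposition~\ref{prop:u_overline_partial_convergence} and taking $k\to\infty$, the left-hand side converges to $\int_X (\langle x,\nabla \overline{u}\rangle - \overline{u})\,{\dd}\mu$ by part (c) of that proposition, and the right-hand side converges to the Rochet-Chon\'e value at $u^\opt$, which is the optimal value of the continuous problem, by Proposition~\ref{prop:riemann_approx_to_tilde_u_converges}. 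This yields
\[
\int_X (\langle x,\nabla \overline{u}\rangle - \overline{u})\,{\dd}\mu \ge \max_{\substack{u \in \mathcal{U}_{{\lip}, 1},\\ \mathrm{law}(u_{x_i}) \preceq \eta}} \int_X (\langle x, \nabla u\rangle - u)\,{\dd}\mu.
\]

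Second, by Proposition~\ref{prop:u_overline_partial_convergence}(b), the limiting function $\overline{u}$ belongs to $\mathcal{U}_{{\lip},1}$ and satisfies the majorization constraints $\mathrm{law}(\overline{u}_{x_i}) \preceq \eta$ for every item $i$. Hence $\overline{u}$ is feasible in the multi-bidder Rochet-Chon\'e problem, and so the continuous objective at $\overline{u}$ is at most the optimal value. Combining this with the reverse inequality from the previous paragraph forces equality, which simultaneously establishes both claims: $\overline{u}$ attains the maximum (part (a)) and the discrete objective values along $\{n_k\}$ converge to the continuous optimal value (part (b)).

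There is no real obstacle here since the heavy lifting has already been done in the two preceding propositions — the one subtlety worth flagging in the write-up is that the argument uses optimality of the discrete solutions only through the inequality comparing them to the natural discretization of $u^{\opt}$, so no additional compactness or regularity beyond what Proposition~\ref{prop:u_overline_partial_convergence} already supplies is needed.
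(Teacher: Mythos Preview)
Your proposal is correct and follows essentially the same argument as the paper: compare the optimal discrete solution to the feasible discretization of $u^{\opt}$, pass to the limit along the subsequence using Propositions~\ref{prop:u_overline_partial_convergence}(c) and~\ref{prop:riemann_approx_to_tilde_u_converges}, and then use feasibility of $\overline{u}$ from Proposition~\ref{prop:u_overline_partial_convergence}(b) to close the squeeze. The structure and the ingredients are identical to the paper's own proof.
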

\begin{proof}
Recall that $u^{\opt}(x)$ is a solution to the multi-bidder Rochet-Chon\'e problem. Hence, it follows from Proposition~\ref{prop:riemann_approx_to_tilde_u_converges} that
\begin{align*}
\max_{\substack{u \in \mathcal{U}_{{\lip}, 1}, \\ \mathrm{law}(u_{x_i}) \preceq \eta}} \int_{{X}} \left(\langle x, \nabla u(x)\rangle - u(x)\right)\,{\dd}\mu &= \int_{{X}} \left(\langle x, \nabla u^{\opt}(x)\rangle - u^{\opt}(x)\right)\,{\dd}\mu= \\
&= \lim_{n \to \infty}\sum_{j} \mu_j^{(n)} \cdot \left( \langle \theta_j^{(n)}, p_j^{\opt,(n)}\rangle - u_j^{\opt,(n)}\right).
\end{align*}

Since $(u_j^{\opt,(n)}, p_j^{\opt,(n)})$ satisfies all the constraints of the problem $\mathcal{D}_n$ and $(\overline{u}_j^{(n)}, \overline{p}_j^{(n)})$ is an optimal solution to this problem, we have
\begin{align*}
&\sum_j\mu_j^{(n)} \cdot \left( \langle \theta_j^{(n)}, p_j^{\opt,(n)}\rangle - u_j^{\opt,(n)}\right) \le \sum_j\mu_j^{(n)} \cdot \left( \langle \theta_j^{(n)}, \overline{p}_j^{(n)}\rangle - \overline{u}_j^{(n)}\right)\\
&\quad\quad\lim_{n \to \infty}\sum_j\mu_j^{(n)} \cdot \left( \langle \theta_j^{(n)}, p_j^{\opt,(n)}\rangle - u_j^{\opt,(n)}\right) \le \lim_{n \to \infty}\sum_j\mu_j^{(n)} \cdot \left( \langle \theta_j^{(n)}, \overline{p}_j^{(n)}\rangle - \overline{u}_j^{(n)}\right).
\end{align*}
By Proposition~\ref{prop:u_overline_partial_convergence},
\[
\lim_{n \to \infty}\sum_j\mu_j^{(n)} \cdot \left( \langle \theta_j^{(n)}, \overline{p}_j^{(n)}\rangle - \overline{u}_j^{(n)}\right) = \int_{{X}} \left(\langle x, \nabla \overline{u}(x)\rangle - \overline{u}(x)\right)\,{\dd}\mu.
\]
Thus
\[
\int_{{X}} \left(\langle x, \nabla \overline{u}(x)\rangle - \overline{u}(x)\right)\,{\dd}\mu \ge \max_{\substack{u \in \mathcal{U}_{{\lip}, 1}, \\ \mathrm{law}(u_{x_i}) \preceq \eta}} \int_{{X}} \left(\langle x, \nabla u(x)\rangle - u(x)\right)\,{\dd}\mu.
\]
At the same time, by Proposition~\ref{prop:u_overline_partial_convergence}, we have $\overline{u} \in \mathcal{U}_{{\lip}, 1}$ and $\mathrm{law}(\overline{u}_{x_i}) \preceq \eta$ for all $i\in\mathcal{I}$. Thus the equality holds; therefore, $\overline{u}$ is a solution the multi-bidder Rochet-Chon\'e problem and
\[
\lim_{k \to \infty}\sum_{j} \mu_j^{(n)} \cdot \left( \langle \theta_j^{(n_k)}, \overline{p}_j^{(n_k)}\rangle - \overline{u}_j^{(n_k)}\right) = \max_{\substack{u \in \mathcal{U}_{{\lip}, 1}, \\ \mathrm{law}(u_{x_i}) \preceq \eta}} \int_{{X}} \left(\langle x, \nabla u(x)\rangle - u(x)\right)\,{\dd}\mu.
\]
\end{proof}

\subsection{Approximation of the majorization constraints}\label{sec_maj_approximation}

The majorization constraint \textbf{(mj)} is non-linear. An equivalent linear constraint can be obtained using the following characterization of the majorization order.
\begin{theorem}[{{\cite{shaked2007stochastic}, Theorem 4.A.5}}]\label{thm:stochastic_dominance_criterion}
Two random variables $X$ and $Y$ satisfy $X \preceq Y$ if and only if there exist two random variables $\widehat{X}$ and $\widehat{Y}$ defined on the same probability space such that
\begin{align*}
    &\mathrm{law}(X) = \mathrm{law}(\widehat{X}),\\
    &\mathrm{law}(Y) = \mathrm{law}(\widehat{Y}),
\end{align*}
and $\{\widehat{X}, \widehat{Y}\}$ is a supermartingale, that is,
\[
\mathbb{E}\left[\widehat{Y} \mid \widehat{X}\right] \ge \widehat{X} \quad \text{almost surely.}
\]
\end{theorem}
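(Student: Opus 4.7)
The plan is to prove the two directions separately; the ``$\Leftarrow$'' direction follows immediately from Jensen's inequality, while the ``$\Rightarrow$'' direction is the substantive content and can be reduced to Strassen's classical martingale-coupling theorem via an intermediate decomposition of the majorization order. For sufficiency, suppose such $\widehat{X}, \widehat{Y}$ exist. For any convex non-decreasing $\varphi$, conditioning on $\widehat{X}$ and applying Jensen's inequality followed by monotonicity of $\varphi$ combined with the supermartingale property yields
\begin{equation*}
\mathbb{E}[\varphi(Y)] = \mathbb{E}[\varphi(\widehat{Y})] = \mathbb{E}\bigl[\mathbb{E}[\varphi(\widehat{Y}) \mid \widehat{X}]\bigr] \geq \mathbb{E}\bigl[\varphi(\mathbb{E}[\widehat{Y} \mid \widehat{X}])\bigr] \geq \mathbb{E}[\varphi(\widehat{X})] = \mathbb{E}[\varphi(X)],
\end{equation*}
which is precisely $X \preceq Y$ in the sense of Definition~\ref{def_majorization}.

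For the necessity direction, the plan rests on a two-stage decomposition. Given $X \preceq Y$, I would first construct an intermediate random variable $Z$ such that $X$ is stochastically dominated by $Z$ (meaning $F_Z \leq F_X$ pointwise) and $Z$ is dominated by $Y$ in the convex order (meaning $\mathbb{E} Z = \mathbb{E} Y$ and $\mathbb{E}[\varphi(Z)] \leq \mathbb{E}[\varphi(Y)]$ for every convex $\varphi$). The construction proceeds at the level of tail integrals: the functions $\psi_X(t) = \mathbb{E}[(X-t)_+]$ and $\psi_Y(t) = \mathbb{E}[(Y-t)_+]$ are convex and non-increasing with $\psi_X \leq \psi_Y$ by majorization (applied to $\varphi(s) = (s-t)_+$), so one defines the law of $Z$ to have tail integral exactly $\psi_Y$ while placing its mass to the right of $X$, which forces $F_Z \leq F_X$. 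With $Z$ in hand, the supermartingale coupling is assembled by gluing two canonical couplings along the shared marginal $Z$: a monotone coupling $(\widehat{X}, \widehat{Z})$ from the stochastic order (push a common uniform random variable through $F_X^{-1}$ and $F_Z^{-1}$, yielding $\widehat{X} \leq \widehat{Z}$ almost surely), and a martingale coupling $(\widehat{Z}, \widehat{Y})$ from Strassen's theorem applied to $Z \leq_{cx} Y$. The resulting joint law on $(\widehat{X}, \widehat{Z}, \widehat{Y})$, in which $\widehat{X}$ and $\widehat{Y}$ are conditionally independent given $\widehat{Z}$, satisfies
\begin{equation*}
\mathbb{E}[\widehat{Y} \mid \widehat{X}] = \mathbb{E}\bigl[\mathbb{E}[\widehat{Y} \mid \widehat{Z}, \widehat{X}] \,\big|\, \widehat{X}\bigr] = \mathbb{E}[\widehat{Z} \mid \widehat{X}] \geq \widehat{X},
\end{equation*}
where the second equality uses conditional independence and the martingale property, and the inequality uses $\widehat{X} \leq \widehat{Z}$ a.s.

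The main obstacle will be the construction of the intermediate $Z$, since both orderings must hold simultaneously and one needs to verify that a distribution with the prescribed tail integral $\psi_Y$ and CDF pointwise below $F_X$ actually exists; pinning down the precise quantile-function formula and checking integrability require some care. A natural backup, which bypasses the explicit construction entirely, is a direct Hahn--Banach separation argument: the set of couplings on $\mathbb{R}^2$ with prescribed marginals $\mathrm{law}(X)$ and $\mathrm{law}(Y)$ satisfying the supermartingale condition $\int (y-x)\mathbf{1}_A(x)\,\mathrm{d}\gamma(x,y) \geq 0$ for every Borel set $A$ forms a convex weak-$*$ closed subset of a compact set of measures, and its non-emptiness is dual to the inequality $\mathbb{E}[\varphi(Y)] \geq \mathbb{E}[\varphi(X)]$ holding for all convex non-decreasing $\varphi$, which is exactly the majorization hypothesis.
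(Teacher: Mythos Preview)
The paper does not supply its own proof of this statement: it is quoted verbatim as Theorem~4.A.5 of \cite{shaked2007stochastic} and used as a black box in the numerical section. So there is no ``paper's proof'' to compare against; I can only comment on the soundness of your argument.

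Your sufficiency direction is clean and correct. Your necessity strategy---decompose $X \preceq Y$ as $X \leq_{\mathrm{st}} Z \leq_{\mathrm{cx}} Y$, then glue a monotone coupling to a Strassen martingale coupling---is the standard route and would work. However, your description of the construction of $Z$ is wrong as written: you say ``define the law of $Z$ to have tail integral exactly $\psi_Y$,'' but the stop-loss transform $t \mapsto \mathbb{E}[(Z-t)_+]$ determines the distribution uniquely, so $\psi_Z = \psi_Y$ forces $Z \stackrel{d}{=} Y$, and then $X \leq_{\mathrm{st}} Z$ would require $X \leq_{\mathrm{st}} Y$, which majorization does not give. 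The actual construction of the intermediate $Z$ is more delicate (one common choice is to take $\bar F_Z$ to be the smallest survival function dominating $\bar F_X$ whose integrated tail stays below $\psi_Y$, or equivalently to work with the concave envelope of a suitable transform). You correctly flag this step as the main obstacle, and your Hahn--Banach backup is a legitimate alternative route that avoids the explicit construction entirely---indeed that separation argument is essentially Strassen's original proof strategy.
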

 Using this criterion, we reformulate the \textbf{(mj)}-constraints as a condition of the existence of the joint distribution of   $\widehat{X}$ and $\widehat{Y}$. In what follows, we fix an item $k\in\mathcal{I}$.
\begin{proposition}\label{prop:stochastic_dominans criterions}
Denote $X = T = [0, 1]$, $J = \{1, 2, \dots, n^{I}\}$, and $\overline{J} = \{0\} \cup J$. The following statements are equivalent:
\begin{enumerate}[label=\upshape{(\alph*)}]
    \item The majorization condition \(
\mu_0^{(n)} \cdot \delta(x=0) + \sum_{j} \mu_j^{(n)} \cdot \delta\left(x = p_{j, k}^{(n)}\right) \preceq \eta \) holds.
\item There exists a probability distribution $\pi$ concentrated on $X \times T$ such that
\begin{align*}
&\mathrm{pr}_T \pi = \eta,\\
&\mathrm{pr}_X \pi = \mu_0^{(n)} \cdot \delta(x=0) + \sum_{j} \mu_j^{(n)} \cdot \delta\left(x = p_{j, k}^{(n)}\right),\\
&\int_T t \cdot {\dd}\pi(x, t) \ge x \cdot \mathrm{pr}_X \pi\big(\{x\}\big)\quad \text{for all } x \in X,
\end{align*}
where $\mathrm{pr}_X \pi$ and $\mathrm{pr}_T \pi$ denotes the marginals of $\pi$ on $X$ and $T$, respectively.
\item There exists a probability distribution $\pi$ concentrated on $\overline{J} \times T$ such that
\begin{align*}
    &\mathrm{pr}_T \pi = \eta,\\
    &\mathrm{pr}_{\overline{J}} \pi(j) = \mu_j^{(n)} \quad \text{for all } j \in \overline{J},\\
    &\int_T t \cdot {\dd}\pi(j, t) \ge p_{j, k}^{(n)} \cdot \mu_j^{(n)} \quad \text{for all } j \in J.
\end{align*}
\item There exists a (not necessary probability) measure $\pi$ concentrated on $J \times T$ such that
\begin{align*}
    &\mathrm{pr}_T \pi \le \eta,\\
    &\mathrm{pr}_J \pi (j) \le \mu_j^{(n)} \quad \text{for all } j \in J,\\
    &\int_T t \cdot {\dd}\pi(j, t) \ge p_{j, k}^{(n)} \cdot \mu_j^{(n)}\quad\text{for all } j \in J.
\end{align*}
\end{enumerate}
\end{proposition}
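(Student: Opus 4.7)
I would establish the chain of implications $(a) \Leftrightarrow (b) \Rightarrow (c) \Rightarrow (d) \Rightarrow (c)$ plus $(c) \Rightarrow (b)$, or more economically $(a) \Leftrightarrow (b)$, $(b) \Leftrightarrow (c)$, $(c) \Leftrightarrow (d)$.

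First, for $(a) \Leftrightarrow (b)$, I would appeal directly to Theorem~\ref{thm:stochastic_dominance_criterion}. A measure $\nu$ on $X$ is majorized by $\eta$ on $T$ precisely when there is a supermartingale $(\widehat{X}, \widehat{Y})$ with $\widehat{X} \sim \nu$ and $\widehat{Y} \sim \eta$; the joint law of this supermartingale is exactly the coupling $\pi$ on $X \times T$ described in $(b)$. The supermartingale condition $\mathbb{E}[\widehat{Y} \mid \widehat{X} = x] \ge x$ for every $x$ in the support translates (after multiplying by $\mathrm{pr}_X\pi(\{x\})$) into the integral inequality $\int_T t\,{\dd}\pi(x,t) \ge x\cdot \mathrm{pr}_X\pi(\{x\})$, giving the equivalence.

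Next, for $(b) \Leftrightarrow (c)$, the key observation is that $\mathrm{pr}_X\pi$ is supported on the finite set $\{0\}\cup\{p_{j,k}^{(n)}: j \in J\}$, so $\pi$ can be transported to a measure on $\overline{J}\times T$ by labeling $X$-atoms with indices in $\overline{J}$. The nuance is that several indices $j$ may share the same value $v = p_{j,k}^{(n)}$: to pass from $(b)$ to $(c)$ I would split the conditional law $\pi(\cdot\mid X=v)$ among the indices $\{j: p_{j,k}^{(n)}=v\}$ proportionally to the weights $\mu_j^{(n)}$. If $m_v$ denotes the total mass at $v$, the supermartingale inequality $\int_T t\,{\dd}\pi(v,t) \ge v\cdot m_v$ distributes to $\int_T t\,{\dd}\pi(j,t) \ge \frac{\mu_j^{(n)}}{m_v}\cdot v\cdot m_v = p_{j,k}^{(n)}\mu_j^{(n)}$, as required. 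The reverse direction $(c)\Rightarrow(b)$ is immediate by aggregating indices back to values.

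Finally, for $(c) \Leftrightarrow (d)$, the direction $(c)\Rightarrow(d)$ is obtained by restricting $\pi$ to $J\times T$: the marginal constraints become inequalities (mass on $\{0\}\times T$ is dropped), and the integral bounds for $j \in J$ persist unchanged. For $(d)\Rightarrow(c)$, given $\pi$ on $J\times T$ with sub-marginals, I would extend it by adding a measure supported on $\{0\}\times T$ to fill in the deficits. The deficit in the $T$-marginal is $\eta - \mathrm{pr}_T\pi \ge 0$, which has total mass $1 - \sum_{j\in J}\mathrm{pr}_J\pi(j) \ge \mu_0^{(n)}$ by mass conservation; if the inequality is strict, I would first scale up $\pi$ on rows $j \in J$ (or equivalently add more mass to the columns) until the $T$-marginal deficit has mass exactly $\mu_0^{(n)}$, then place it all on the $\{0\}$-row. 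The integral conditions for $j \in J$ are preserved under this enlargement (they only increase), and no integral constraint is imposed at $j=0$, so the resulting $\pi$ satisfies $(c)$. The main obstacle is verifying that this bookkeeping can always be done consistently with all marginal constraints simultaneously, which reduces to a feasibility check for a finite transportation problem that follows directly from the mass balance $\mu_0^{(n)} + \sum_{j \in J}\mu_j^{(n)} = 1 = \eta(T)$.
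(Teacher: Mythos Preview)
Your proposal is correct and follows essentially the same route as the paper: $(a)\Leftrightarrow(b)$ via the supermartingale characterization, $(b)\Leftrightarrow(c)$ by splitting/aggregating the $X$-fiber mass proportionally to the weights $\mu_j^{(n)}$, and $(c)\Leftrightarrow(d)$ by restriction/extension. Your $(d)\Rightarrow(c)$ step is in fact more explicit than the paper's (which simply asserts that a probability $\widehat\pi\ge\pi$ with the exact marginals exists); just replace the slightly misleading phrase ``scale up'' with ``add nonnegative mass to rows $j\in J$ drawn from the $T$-marginal deficit $\eta-\mathrm{pr}_T\pi$,'' which is precisely the transportation feasibility you identify.
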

\begin{proof}
The equivalence (a) $\Leftrightarrow$ (b) is a reformulation of Theorem~\ref{thm:stochastic_dominance_criterion}. The distribution $\pi$ can be considered as the joint law of $\widehat{X}$ and $\widehat{Y}$.

(c) $\Rightarrow$ (b). Let $\pi$ be a distribution satisfying all the conditions of (c). Consider a mapping $f \colon \overline{J} \to X$ defined as $f(0) = 0$ and $f(j) = p_{j, k}^{(n)}$ for all $j \in J$. Define by $\widehat{\pi}$ the pushforward measure $f_{\#}\pi$ concentrated on $X \times T$. It follows directly from the construction that
\[\mathrm{pr}_T \widehat{\pi} = \eta \quad \text{and}\quad \mathrm{pr}_X \widehat{\pi} = \mu_0^{(n)} \cdot \delta(x=0) + \sum_{j} \mu_j^{(n)} \cdot \delta\left(x = p_{j, k}^{(n)}\right).
\]
Finally, we need to check the inequality $\int_T t \cdot {\dd}\pi(x, t) \ge x \cdot \mathrm{pr}_X \pi\big(\{x\}\big)$. If $x = 0$, there is nothing to prove. Otherwise,
\[
\int_T t {\dd}\widehat{\pi}(x,t) = \sum_{j \in J \colon f(j) = x} \int_T t {\dd}\pi(j, t) \ge \sum_{j \in J \colon p_{j, k}^{(n)} = x} p_{j, k}^{(n)} \cdot \mu_j^{(n)} = x \cdot \mathrm{pr}_X \pi\big(\{x\}\big).
\]
Thus $\widehat{\pi}$ satisfies all the restrictions of (b).

(b) $\Rightarrow$ (c). Let $\pi$ be a distribution satisfying all the conditions of (b). For each $j \in J$, define
\begin{align*}
&\pi_j = \frac{\mu_j^{(n)}}{\mathrm{pr}_X \pi\big(\{p_{j, k}^{(n)}\}\big)} \cdot \pi|_{x = p_{j, k}^{(n)}},
\intertext{and}
&\pi_0 = \frac{\mu_0^{(n)}}{\mathrm{pr}_X \pi\big(\{0\}\big)} \cdot \pi|_{x = 0}.
\end{align*}
One can check easily that $\pi = \delta_0 \otimes \pi_0 + \sum_{j \in J} \delta_{p_{j, k}^{(n)}} \otimes \pi_j$, where $\delta_x$ is the Dirac delta measure concentrated at a point $x$.

Define a measure $\widehat{\pi} = \sum_{j \in \overline{J}} \delta_j \otimes \pi_j$ concentrated on $\overline{J} \times T$. We have
\begin{align*}
    &\mathrm{pr}_T \widehat{\pi} = \sum_{j \in \overline{J}} \mathrm{pr}_T \pi_j = \mathrm{pr}_T \pi = \eta,\\
    &\mathrm{pr}_{\overline{J}} \widehat{\pi}(j) = |\pi_j| = \mu_j^{(n)} \quad \text{for all } j \in \overline{J},\\
    &\int_T t \cdot {\dd}\widehat{\pi}(j, t) = \int_T t \cdot {\dd}\pi_j(t)= \\
    &\quad\quad= \frac{\mu_j^{(n)}}{\mathrm{pr}_X\big(\{p_{j, k}^{(n)}\}\big)} \int_T t \cdot {\dd} \pi\big(p_{j, k}^{(n)}, t\big) \ge \mu_j^{(n)} \cdot p_{j, k}^{(n)} \quad\text{for all } j \in J.
\end{align*}
Thus $\widehat{\pi}$ satisfies all the restrictions of (c).

(c) $\Rightarrow$ (d). If a distribution $\pi$ satisfies all the restrictions of (c), then the restriction of $\pi$ to the set $J \times T$ satisfies all the restrictions of (d). 

(d) $\Rightarrow$ (c). Let $\pi$ be a measure concentrated on $J \times T \subset \overline{J} \times T$ satisfying all the restrictions of (d). One can easily prove that there exists a distribution $\widehat{\pi}$ concentrated on $\overline{J} \times T$ such that
$\pi \le \widehat{\pi}$, $\mathrm{pr}_T \widehat{\pi} = \eta$, and $\mathrm{pr}_{\overline{J}}\widehat{\pi}(j) = \mu_j^{(n)}$ for all $j \in \overline{J}$. As a consequence,
\[
\int_T t {\dd}\widehat{\pi}(j, t) \ge \int_T t {\dd}\pi(j, t) \ge \mu_j^{(n)} \cdot p_{j, k}^{(n)} \quad\text{for all } j \in J.
\]
Thus the distribution $\widehat{\pi}$ satisfies all the restrictions of (c).
\end{proof}

The measure $\pi$ obtained in Proposition~\ref{prop:stochastic_dominans criterions}(d) is not discrete. To discretize this measure, we discretize the space $T$. Let $0 = q_0 < q_1 < \dots < q_M = 1$ be any partition of the space $T = [0, 1]$. For each $1 \le m \le M$, denote
\begin{align*}
    w_m = \eta\big([q_{m - 1}, q_m]\big), \quad t_m = \frac{1}{w_m}\int_{q_{m - 1}}^{q_m}t\,{\dd}\eta, \quad \eta_m = \frac{1}{w_m} \cdot \eta|_{[q_{m - 1}, q_m]}.
\end{align*}
As a discrete approximation of $\pi$, we will only consider measures of the form
\begin{equation}\label{eq:pi_approx_def}
    \pi = \sum_{1 \le j \le n^{I}, 1 \le m \le M} \pi_{j, m} \cdot \delta_j \otimes \eta_m
\end{equation}
for some non-negative coefficients $\pi_{j, m}$. The following statement characterizes all such measures that satisfy the restrictions of Proposition~\ref{prop:stochastic_dominans criterions}(d).

\begin{lemma}\label{lem:discrete_stochastic_dominance_criterion}
The measure $\pi$ defined in~\eqref{eq:pi_approx_def} satisfies all the restrictions of Proposition~\ref{prop:stochastic_dominans criterions}(d) if and only if the following inequalities hold:
\begin{align*}
    &\sum_{1 \le j \le n^{I}} \pi_{j, m} \le w_m &&\text{for all }\quad 1 \le m \le M,\\
    &\sum_{1 \le m \le M} \pi_{j, m} \le \mu_j^{(n)} &&\text{for all }\quad 1 \le j \le n^{I},\\
    &\sum_{1 \le m \le M} t_m \cdot \pi_{j, m} \ge p_{j, k}^{(n)} \cdot \mu_j^{(n)} && \text{for all }\quad 1 \le j \le n^{I}.
\end{align*}
\end{lemma}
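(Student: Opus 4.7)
The proof is a direct verification: I will substitute the ansatz~\eqref{eq:pi_approx_def} into each of the three constraints of Proposition~\ref{prop:stochastic_dominans criterions}(d) and show that it reduces to one of the three stated inequalities. The computation is elementary; the only subtlety is handling the constraint $\mathrm{pr}_T \pi \le \eta$ on measures over $T$.

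The plan is to start by recording three identities from the definitions of $w_m$, $t_m$, and $\eta_m$. First, since $\eta_m = \frac{1}{w_m}\eta|_{[q_{m-1},q_m]}$ and $\eta$ is absolutely continuous (so the grid points are $\eta$-null), the decomposition
\begin{equation*}
\eta = \sum_{m=1}^M w_m \eta_m
\end{equation*}
holds as measures, with the summands supported on essentially disjoint intervals. Second, each $\eta_m$ is a probability measure with $\int_T t \,{\dd}\eta_m(t) = t_m$ by construction. Third, for the ansatz~\eqref{eq:pi_approx_def}, the projections on $J$ and the $t$-moments become
\begin{equation*}
\mathrm{pr}_J\pi(j) = \sum_{m=1}^M \pi_{j,m}, \qquad \int_T t\,{\dd}\pi(j,t) = \sum_{m=1}^M t_m\,\pi_{j,m},
\end{equation*}
using $\eta_m(T)=1$ and the definition of $t_m$. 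Substituting these into the second and third constraints of Proposition~\ref{prop:stochastic_dominans criterions}(d) immediately yields the second and third inequalities of the lemma.

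For the first constraint, compute
\begin{equation*}
\mathrm{pr}_T \pi \;=\; \sum_{m=1}^M\Big(\sum_{j=1}^{n^I}\pi_{j,m}\Big)\,\eta_m.
\end{equation*}
The main step is to show that, for non-negative coefficients $c_m$ and $w_m$, the inequality $\sum_m c_m \eta_m \le \sum_m w_m \eta_m$ as measures on $T$ is equivalent to $c_m \le w_m$ for every $m$. One direction is trivial; for the other, the essential disjointness of the supports gives, for each $m$,
\begin{equation*}
c_m = \big(\textstyle\sum_{m'} c_{m'}\eta_{m'}\big)\big([q_{m-1},q_m]\big) \le \big(\textstyle\sum_{m'} w_{m'}\eta_{m'}\big)\big([q_{m-1},q_m]\big) = w_m.
\end{equation*}
Applied with $c_m = \sum_j \pi_{j,m}$, this yields the first inequality of the lemma.

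I expect no genuine obstacle here: the argument is bookkeeping once the decomposition $\eta = \sum_m w_m\eta_m$ and the essential disjointness of supports are in hand. The only point worth flagging is that we used absolute continuity of $\eta$ to discard the measure-zero overlaps at grid endpoints; this holds since $\eta$ is the law of $\xi^{B-1}$ with $\xi$ uniform on $[0,1]$.
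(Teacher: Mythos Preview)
Your verification is correct. The paper states this lemma without proof, treating it as a routine translation of the constraints in Proposition~\ref{prop:stochastic_dominans criterions}(d) under the ansatz~\eqref{eq:pi_approx_def}; your argument supplies exactly that translation, and the only nontrivial step---reducing the measure inequality $\mathrm{pr}_T\pi\le\eta$ to the coefficient inequalities $\sum_j\pi_{j,m}\le w_m$ via the essentially disjoint supports of the $\eta_m$---is handled cleanly.
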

This suggests considering the following linear problem.
\begin{definition}
Given a partition $0 = q_0 < q_1 < \dots < q_M = 1$, consider the following linear problem $\mathcal{D}_{n, M}$:
\begin{align*}
    \text{maximize:}&\quad\sum_{j} \mu_j^{(n)} \cdot \left( \langle \theta_j^{(n)}, p_j^{(n)}\rangle - u_j^{(n)}\right)&&\tag{$\mathcal{D}_{n, M}$}\\
    \text{subject to:}&\quad&&\\
    \text{\textbf{(ir)}}&\quad u_j^{(n)} \ge 0 \quad&&\text{for all } 1 \le j \le n^{I};\\
    \text{\textbf{(fs)}}&\quad 0 \le p_{j, k}^{(n)} \le 1 \quad&&\text{for all } 1 \le j \le n^{I}, k\in\mathcal{I};\\
    \text{\textbf{(ic)}}&\quad u_i^{(n)} - u_j^{(n)} \ge \langle \theta_i^{(n)} - \theta_j^{(n)}, p_j^{(n)} \rangle\quad&&\text{for all } 1 \le i, j \le n^{I};\\
    \text{\textbf{(mj-T)}}&\quad \sum_{1 \le j \le n^{I}} \pi_{j, m, k}^{(n)} \le w_m&& \text{for all } 1 \le m \le M, k\in\mathcal{I};\\
    \text{\textbf{(mj-J)}}&\quad \sum_{1 \le m \le M} \pi_{j, m, k}^{(n)} \le \mu_j^{(n)}&& \text{for all } 1 \le j \le n^{I}, k\in\mathcal{I};\\
    \text{\textbf{(mj-E)}}&\quad \sum_{1 \le m \le M} t_m \cdot \pi_{j, m, k}^{(n)} \ge p_{j, k}^{(n)} \cdot \mu_j^{(n)}&& \text{for all } 1 \le j \le n^{I}, k\in\mathcal{I};\\
    \text{\textbf{(mj-P)}}&\quad\pi^{(n)}_{j, m, k} \ge 0 && \text{for all }j, m, k.
\end{align*}
\end{definition}

A direct consequence of Proposition~\ref{prop:stochastic_dominans criterions} and Lemma~\ref{lem:discrete_stochastic_dominance_criterion} is the following connection between the problems $\mathcal{D}_n$ and $\mathcal{D}_{n, M}$.
\begin{corollary}\label{cor:D_nM_more_restrictive_than_D_n}
If $(u_j^{(n)}, p_j^{(n)}, \pi_{j, m, k}^{(n)})$ satisfies all the constraints of the problem $\mathcal{D}_{n, M}$, then $(u_j^{(n)}, p_j^{(n)})$ satisfies all the constraints of $\mathcal{D}_n$.
\end{corollary}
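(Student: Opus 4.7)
The plan is to reduce the corollary to the two preceding results, since the constraints (ir), (fs), (ic) appear verbatim in both problems and require no work. The only non-trivial thing to check is that satisfaction of (mj-T), (mj-J), (mj-E), (mj-P) by the auxiliary variables $\pi_{j,m,k}^{(n)}$ forces the majorization constraint (mj) of $\mathcal{D}_n$ to hold.

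First, I would fix an item $k \in \mathcal{I}$ and, out of the coefficients $\pi_{j,m,k}^{(n)}$ provided by the hypothesis, build the discrete measure $\pi_k = \sum_{j,m} \pi_{j,m,k}^{(n)} \cdot \delta_j \otimes \eta_m$ on $J \times T$ considered in~\eqref{eq:pi_approx_def}. The four constraints (mj-T), (mj-J), (mj-E), (mj-P) are precisely the four inequalities appearing in Lemma~\ref{lem:discrete_stochastic_dominance_criterion} (non-negativity, marginal bound along $T$, marginal bound along $J$, and the conditional-expectation inequality). Hence by that lemma the measure $\pi_k$ satisfies all the restrictions of Proposition~\ref{prop:stochastic_dominans criterions}(d).

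Next, I would invoke the equivalence (a)$\Leftrightarrow$(d) established in Proposition~\ref{prop:stochastic_dominans criterions} for the data $(p_{j,k}^{(n)}, \mu_j^{(n)})$. This equivalence says that the existence of a measure $\pi_k$ on $J \times T$ with the four properties of (d) is equivalent to the majorization inequality
\[
\mu_0^{(n)} \cdot \delta(x=0) + \sum_{j} \mu_j^{(n)} \cdot \delta\!\left(x = p_{j, k}^{(n)}\right) \preceq \eta,
\]
which is exactly constraint (mj) of $\mathcal{D}_n$ for item $k$. Running this argument for every $k \in \mathcal{I}$ yields (mj) in full, and together with the inherited (ir), (fs), (ic) this shows that $(u_j^{(n)}, p_j^{(n)})$ is feasible for $\mathcal{D}_n$.

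There is no genuine obstacle here; the entire corollary is a bookkeeping exercise, because Lemma~\ref{lem:discrete_stochastic_dominance_criterion} was designed to translate the $\pi_{j,m,k}^{(n)}$-constraints into the hypotheses of Proposition~\ref{prop:stochastic_dominans criterions}(d), and Proposition~\ref{prop:stochastic_dominans criterions} does all the substantive work of converting the supermartingale-type condition back into the majorization inequality. The only thing one has to be a bit careful about is that the auxiliary mass $\mu_0^{(n)} = 1 - \sum_{j\geq 1} \mu_j^{(n)}$ at the origin is handled automatically by the equivalence (c)$\Leftrightarrow$(d) within Proposition~\ref{prop:stochastic_dominans criterions}, so it need not be encoded explicitly in the discrete program $\mathcal{D}_{n,M}$.
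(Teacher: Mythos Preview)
Your proposal is correct and is exactly the argument the paper intends: the corollary is stated there as a ``direct consequence of Proposition~\ref{prop:stochastic_dominans criterions} and Lemma~\ref{lem:discrete_stochastic_dominance_criterion}'' without further elaboration, and your write-up simply spells out that consequence.
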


The constraint \textbf{(fs)} partially follows from the constraints \textbf{(mj-J)},  \textbf{(mj-E)}, and \textbf{(mj-P)}:
\begin{align*}
p_{j, k}^{(n)} \cdot \mu_j^{(n)} \le \sum_{1 \le m \le M} t_m \cdot \pi_{j, m, k}^{(n)} \le \max_{1 \le m \le M} t_m \cdot \sum_{1 \le m \le M} \pi_{j, m, k}^{(n)} \le t_M \cdot \mu_j^{(n)} \\
\Rightarrow \quad p_{j, k}^{(n)} \le t_M \le 1.
\end{align*}
Thus the linear problem $\mathcal{D}_{n, M}$ is equivalent to the problem $\mathcal{D}_{n, M}'$, where the \textbf{(fs)}-constraints are replaced with the following:
\[
\text{\textbf{(fs')}} \quad p_{j, k}^{(n)} \ge 0 \quad \text{for all } 1 \le j \le n^{I}, k\in\mathcal{I}.
\]

Our goal is to prove that the sequence $\big(\overline{u}_j^{(n, M)}, \overline{p}_j^{(n, M)}, \overline{\pi}_{j, m, k}^{(n, M)}\big)$ of optimal solutions to the problem $\mathcal{D}_{n, M}$ contains a maximizing subsequence to the problem $\mathcal{D}_{n}$ as $M \to \infty$. In order to do it, we formulate a dual problem to $\mathcal{D}'_{n, M}$.

\begin{lemma}\label{lem:alternative_dual}
Consider the following finite-dimensional convex programs:
\begin{align*}
    (\mathcal{P}) \text{max}&\quad c^Tx & (\mathcal{D})\text{ min}&\quad b^Ty_1 & (\mathcal{C}) \text{ min}&\quad b^Ty &\\
    \text{s.t.}&\quad Ax \le b, &\text{s.t.}&\quad A^Ty_1 + Q^Ty_2 \ge c & \text{s.t.}&\quad y^TAx \ge c^Tx\\
    &\quad Qx \le 0, && \quad y_1 \ge 0, y_2 \ge 0; &&\quad\forall x \ge 0\colon Qx \le 0,\\
    &\quad x \ge 0; &&&&\quad y \ge 0.
\end{align*}
Assume that the problem $(\mathcal{P})$ is feasible and bounded. Then 
\begin{enumerate}[label=\upshape{(\alph*)}]
\item if $(y_1, y_2)$ satisfies all the restrictions of $(\mathcal{D})$, then $y_1$ satisfies all the restrictions of $(\mathcal{C})$;
\item if $(\overline{y}_1, \overline{y}_2)$ solves the problem $(\mathcal{D})$, then $\overline{y}_1$ solves the problem $(\mathcal{C})$;
\item the strong duality holds: \( \max_{\mathcal{P}} c^Tx = \min_{\mathcal{D}}b^Ty_1 = \min_{\mathcal{C}}b^Ty\).
\end{enumerate}
\end{lemma}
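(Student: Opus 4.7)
The plan is to deduce all three claims from standard finite-dimensional LP duality together with one application of Farkas-type reasoning to the inner maximization hidden in the definition of~$(\mathcal{C})$.

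First, I would dispatch part~(a) by direct substitution. If $(y_1,y_2)$ is feasible for~$(\mathcal{D})$, then for every $x\ge 0$ with $Qx\le 0$ one has $y_1^{T}Ax+y_2^{T}Qx\ge c^{T}x$; since $y_2\ge 0$ and $Qx\le 0$ imply $y_2^{T}Qx\le 0$, this yields $y_1^{T}Ax\ge c^{T}x$, so $y_1$ is feasible for~$(\mathcal{C})$.

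Next, I would prove the feasibility characterization that drives everything else: a vector $y\ge 0$ is feasible for $(\mathcal{C})$ if and only if there exists $y_2\ge 0$ such that $(y,y_2)$ is feasible for $(\mathcal{D})$. The ``if'' direction is~(a). For the ``only if'' direction, consider the auxiliary linear program
\begin{equation*}
\mathrm{maximize}\ (c-A^{T}y)^{T}x\quad\text{subject to}\quad Qx\le 0,\ x\ge 0.
\end{equation*}
The point $x=0$ is feasible, so its optimal value is at least $0$; the $(\mathcal{C})$-constraint says precisely that this value is at most $0$, hence equals $0$ (in particular the LP is bounded). Its LP dual is
\begin{equation*}
\mathrm{minimize}\ 0^{T}y_2\quad\text{subject to}\quad Q^{T}y_2\ge c-A^{T}y,\ y_2\ge 0,
\end{equation*}
whose optimal value is $0$ when feasible and $+\infty$ when infeasible. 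Strong LP duality then forces the dual to be feasible, yielding $y_2\ge 0$ with $A^{T}y+Q^{T}y_2\ge c$, i.e., $(y,y_2)$ feasible for~$(\mathcal{D})$.

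With this characterization in hand, (c) is almost immediate. Classical LP duality applied to the pair $(\mathcal{P})/(\mathcal{D})$ (feasibility and boundedness of $(\mathcal{P})$ being assumed) gives $\max_{\mathcal{P}}c^{T}x=\min_{\mathcal{D}}b^{T}y_1$. Since $b^{T}y_1$ does not involve $y_2$, the characterization above shows that the feasible sets of $(\mathcal{D})$ and $(\mathcal{C})$ project onto the same set of $y$'s, so $\min_{\mathcal{D}}b^{T}y_1=\min_{\mathcal{C}}b^{T}y$. Finally, (b) follows: if $(\overline{y}_1,\overline{y}_2)$ solves $(\mathcal{D})$, part~(a) shows $\overline{y}_1$ is feasible for $(\mathcal{C})$, and its objective value $b^{T}\overline{y}_1$ equals $\min_{\mathcal{D}}b^{T}y_1=\min_{\mathcal{C}}b^{T}y$, so $\overline{y}_1$ is optimal in~$(\mathcal{C})$.

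The only delicate step is the Farkas-style argument for the feasibility characterization; everything else is bookkeeping. The possible pitfall is the case in which the inner LP is unbounded, but the $(\mathcal{C})$-constraint rules this out by forcing the value to be at most~$0$, so strong duality applies without further assumptions.
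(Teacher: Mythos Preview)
Your argument is correct, but it takes a somewhat different and slightly more elaborate route than the paper's.

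The paper proves (a) exactly as you do. For (b) and (c), however, it avoids your feasibility characterization and the attendant Farkas/auxiliary-LP step entirely. Instead, it first proves weak duality $\min_{\mathcal{C}} b^{T}y \ge \max_{\mathcal{P}} c^{T}x$ directly: take any $y$ feasible for $(\mathcal{C})$ and the optimal $\overline{x}$ for $(\mathcal{P})$; then $y^{T}A\overline{x}\ge c^{T}\overline{x}$ by the $(\mathcal{C})$-constraint, while $A\overline{x}\le b$ and $y\ge 0$ give $b^{T}y\ge y^{T}A\overline{x}$. Combined with standard LP duality $\max_{\mathcal{P}}=\min_{\mathcal{D}}$ and part~(a), this sandwiches $\min_{\mathcal{C}}$ between $\max_{\mathcal{P}}$ and $b^{T}\overline{y}_1=\min_{\mathcal{D}}=\max_{\mathcal{P}}$, finishing (b) and (c) simultaneously.

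Your approach proves more---that every $(\mathcal{C})$-feasible $y$ lifts to a $(\mathcal{D})$-feasible pair---which gives a conceptually clean explanation of why the two minima coincide (same projected feasible set). The paper's approach is more economical: it needs only one feasible primal point $\overline{x}$ and the easy direction (a), with no inner LP or Farkas reasoning at all.
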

\begin{proof}
(a) Consider any couple of non-negative vectors $(y_1, y_2)$ satisfying the inequality 
\[
A^T y_1 + Q^T y_2 \ge c \quad \Leftrightarrow \quad y_1^TA + y_2^TQ \ge c^T.
\]
Then, for any vector $x \ge 0$, we have
\[
y_1^TAx + y_2^TQx \ge c^Tx.
\]
Assume in addition that $Qx \le 0$. Then it follows from the non-negativity of $y_2$ that $y_2^TQx \le 0$; therefore,
\[
y_1^TAx \ge y_1^TAx + y_2^TQx \ge c^Tx.
\]
So, the vector $y_1$ satisfies all the restrictions of the problem $(\mathcal{C})$.

(b) and (c). First, we prove the weak duality
\[
\min_{\mathcal{C}}b^Ty \ge \max_{\mathcal{P}} c^Tx.
\]
Let $\overline{x}$ be a solution to $(\mathcal{P})$ and let $y$ be any vector satisfying all the restrictions of the problem $(\mathcal{C})$. We have
\[
y^TA \overline{x} \ge c^T\overline{x}.
\] Since $A \overline{x} \le b$ and $y \ge 0$, we conclude that $y^Tb \ge y^TA\overline{x}$. Thus
\[
b^Ty = y^Tb \ge c^T\overline{x} = \max_{\mathcal{P}}c^Tx\quad \Rightarrow \quad \min_{\mathcal{C}} b^Ty \ge \max_{\mathcal{P}}c^Tx.
\]

Let $(\overline{y}_1, \overline{y}_2)$ be a solution to the problem $(\mathcal{D})$. It follows from the duality theorem that $b^T\overline{y}_1 = \max_{\mathcal{P}} c^Tx$. In addition, $\overline{y}_1$ satisfies all the restrictions of the problem $(\mathcal{C})$; therefore,
\[
\max_{\mathcal{P}} c^Tx = b^T\overline{y}_1 = \min_{\mathcal{C}}b^Ty.
\]
Thus $\overline{y}_1$ is a solution to $(\mathcal{C})$ and the strong duality holds.
\end{proof}

Using this lemma, we formulate a dual convex program to $\mathcal{D}'_{n, m}$.
\begin{definition}
Given a positive integer $n$ and a partition $0 = q_0 < q_1 < \dots < q_M = 1$, we define a convex program $\mathcal{D}^*_{n, M}$ as follows:
\begin{align*}
    \text{minimize:}\quad&\mathrlap{\sum_{k\in \mathcal{I}}\left(\sum_{1 \le m \le M} \varphi^{(n)}_{m, k} \cdot w_m + \sum_{1 \le j \le n^{I}} \psi^{(n)}_{j, k} \cdot \mu_j^{(n)}\right)} \tag{$\mathcal{D}^*_{n, M}$}\\
    \text{subject to:}\quad&\varphi^{(n)}_{m, k} \ge 0, \psi^{(n)}_{j, k} \ge 0, c^{(n)}_{j, k} \ge 0 \\
    \textbf{(lt)}\quad&\varphi^{(n)}_{m, k} + \psi^{(n)}_{j, k} \ge t_m \cdot c^{(n)}_{j, k} \\
    \textbf{(c-def)}\quad&\mathrlap{\sum_{j} \mu_j^{(n)} \cdot \left( \langle \theta_j^{(n)}, p_j^{(n)}\rangle - u_j^{(n)}\right) \le \sum_j \mu_j^{(n)} \cdot \langle c_j^{(n)}, p_j^{(n)} \rangle}\\
    &\quad\quad\quad\text{for all $(u_j^{(n)}, p_j^{(n)})$ satisfying \textbf{(ir)}, \textbf{(fs')}, and \textbf{(ic)}.}
\end{align*}
Here, the shortcut \textbf{(lt)} indicates a relation to the Legendre transform, and \textbf{(c-def)}, to the set of vector fields $\mathcal{C}$ from Appendix~\ref{app_Dual} defined by
\[
\int (\langle x, \nabla u(x) \rangle - u(x)){\dd}\mu \le \int \langle c(x), \nabla u(x) \rangle{\dd}\mu
\]
for all  convex non-decreasing $u$ with $u(0)=0$.
\end{definition}
\begin{proposition}\label{prop:dual_D'_nM}
The strong duality holds: $\max \mathcal{D}'_{n, M} = \min \mathcal{D}^*_{n, M}$.
\end{proposition}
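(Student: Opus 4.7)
The plan is to cast $\mathcal{D}'_{n,M}$ as an instance of the primal $(\mathcal{P})$ in Lemma~\ref{lem:alternative_dual} and then to identify the reduced dual $(\mathcal{C})$ of that lemma with $\mathcal{D}^*_{n,M}$. The decisive choice is how to split the constraints of $\mathcal{D}'_{n,M}$ between the ``visible'' block $Ax\le b$ (whose dual multipliers survive into $(\mathcal{C})$) and the ``hidden'' block $Qx\le 0$ (whose dual multipliers get eliminated and reappear implicitly through the condition $y^{T}Ax\ge c^{T}x$). I will put the \textbf{(mj-T)}, \textbf{(mj-J)} and \textbf{(mj-E)} constraints (with dual variables $\varphi_{m,k}^{(n)}$, $\psi_{j,k}^{(n)}$ and $c_{j,k}^{(n)}$, respectively) into $Ax\le b$, and put the \textbf{(ic)}-constraints into $Qx\le 0$. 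The \textbf{(ir)}, \textbf{(fs')} and \textbf{(mj-P)} non-negativity constraints become the requirement $x\ge 0$.

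First I would check the technical hypotheses of Lemma~\ref{lem:alternative_dual}. Feasibility is immediate because $u\equiv 0$, $p\equiv 0$, $\pi\equiv 0$ satisfies every constraint. For boundedness, \textbf{(mj-T)} yields $\pi_{j,m,k}^{(n)}\le w_m$, \textbf{(mj-E)} together with this bound gives $p_{j,k}^{(n)}\le 1$, so the term $\sum_j\mu_j^{(n)}\langle\theta_j^{(n)},p_j^{(n)}\rangle$ is bounded, while $u_j^{(n)}\ge 0$ ensures that increasing $u$ can only decrease the objective. Hence $\mathcal{D}'_{n,M}$ is feasible and bounded, so the conclusions of Lemma~\ref{lem:alternative_dual} apply.

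Next I would compute $y_1^{T}Ax$ for $y_1=(\varphi,\psi,c)$ and verify the $(\mathcal{C})$-condition explicitly. Writing out the three blocks produces
\[
y_1^{T}Ax=\sum_{j,m,k}\pi_{j,m,k}^{(n)}\bigl(\varphi_{m,k}^{(n)}+\psi_{j,k}^{(n)}-t_m c_{j,k}^{(n)}\bigr)+\sum_{j,k}c_{j,k}^{(n)}\mu_j^{(n)}p_{j,k}^{(n)},
\]
while $c^{T}x=\sum_j\mu_j^{(n)}\bigl(\langle\theta_j^{(n)},p_j^{(n)}\rangle-u_j^{(n)}\bigr)$. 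The $(\mathcal{C})$-condition demands $y_1^{T}Ax\ge c^{T}x$ for every $x\ge 0$ with $Qx\le 0$. Letting $\pi_{j,m,k}^{(n)}$ range freely while keeping $u=p=0$ forces the coefficient in parentheses to be non-negative, giving constraint \textbf{(lt)}. Setting $\pi=0$ and letting $(u,p)\ge 0$ range over all tuples satisfying \textbf{(ic)} yields exactly constraint \textbf{(c-def)}. Conversely, if \textbf{(lt)} and \textbf{(c-def)} hold, then both summands are non-negative on the feasible cone, and the $(\mathcal{C})$-condition follows. Together with $\varphi,\psi,c\ge 0$, this shows that the program $(\mathcal{C})$ produced by Lemma~\ref{lem:alternative_dual} coincides with $\mathcal{D}^*_{n,M}$, and its objective $b^{T}y$ equals $\sum_{m,k}w_m\varphi_{m,k}^{(n)}+\sum_{j,k}\mu_j^{(n)}\psi_{j,k}^{(n)}$, as desired.

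Applying part~(c) of Lemma~\ref{lem:alternative_dual} then yields the strong duality $\max\mathcal{D}'_{n,M}=\min\mathcal{D}^*_{n,M}$. The only delicate point is the bookkeeping in the constraint split and the verification that the two universally-quantified consequences (\textbf{(lt)} from the $\pi$-block, \textbf{(c-def)} from the $(u,p)$-block) together are equivalent to the single condition $y_1^{T}Ax\ge c^{T}x$ for all $x\ge 0$ with $Qx\le 0$; the separability above shows that this is indeed the case.
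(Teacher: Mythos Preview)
Your proposal is correct and follows essentially the same route as the paper: both apply Lemma~\ref{lem:alternative_dual} with the \textbf{(mj-T)}, \textbf{(mj-J)}, \textbf{(mj-E)} constraints in the visible block $Ax\le b$, the \textbf{(ic)}-constraints in the homogeneous block $Qx\le 0$, and the non-negativity constraints \textbf{(ir)}, \textbf{(fs')}, \textbf{(mj-P)} absorbed into $x\ge 0$; both then read off \textbf{(lt)} from the $\pi$-coefficient and \textbf{(c-def)} from the $(u,p)$-part. Your version is in fact slightly more careful than the paper's, since you explicitly verify feasibility and boundedness of $\mathcal{D}'_{n,M}$ before invoking the lemma; one minor phrasing point is that ``letting $\pi$ range freely'' should read ``range over all non-negative values'' (since $\pi\ge 0$ is part of $x\ge 0$), but you draw the correct conclusion---non-negativity of the coefficient---so the argument is unaffected.
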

\begin{proof}
Let $\varphi_{m, k}^{(n)}$ be a dual variable for the \textbf{(mj-T)}-constraint $\sum_j \pi_{j, m, k}^{(n)} \le w_m$, let $\psi_{j, m}^{(n)}$ be a dual variable for the \textbf{(mj-J)}-constraint $\sum_m \pi_{j, m, k}^{(n)} \le \mu_j^{(n)}$, and let $c_{j, k}^{(n)}$ be a dual variable for the \textbf{(mj-E)}-constraint $p^{(n)}_{j, k} \cdot \mu_j^{(n)} - \sum_m t_m \cdot \pi_{j, m, k}^{(n)} \le 0$. The following duality equation  follows from Lemma~\ref{lem:alternative_dual} applied to the linear program $\mathcal{D}'_{n, M}$:
\[
\max \mathcal{D}'_{n, M} = \min \sum_{k = 1}^\mathcal{I}\left(\sum_{1 \le m \le M} \varphi^{(n)}_{m, k} \cdot w_m + \sum_{1 \le j \le n^{I}} \psi^{(n)}_{j, k} \cdot \mu_j^{(n)}\right),
\]
where minimum in the right-hand side is taken over all non-negative variables $\big(\varphi_{m, k}^{(n)}, \psi_{j, k}^{(n)}, c_{j, k}^{(n)}\big)$ such that the inequality
\begin{multline}\label{eq:dual_D'_nM_ineq_temp}
    \sum_{m, k} \varphi^{(n)}_{m, k} \cdot \sum_j \pi^{(n)}_{j, m, k} + \sum_{j, k} \psi^{(n)}_{j, k} \cdot \sum_m \pi^{(n)}_{j, m, k} + \sum_{j, k} c^{(n)}_{j, k} \cdot \left(p^{(n)}_{j, k} \cdot \mu_j^{(n)} - \sum_m t_m \cdot \pi_{j, m, k}^{(n)}\right)\\
    \ge \sum_{j} \mu_j^{(n)} \cdot \left( \langle \theta_j^{(n)}, p_j^{(n)}\rangle - u_j^{(n)}\right)
\end{multline}
holds for all $\big(u_j^{(n)}, p_j^{(n)}, \pi_{j, m, k}^{(n)}\big)$ satisfying the constraints \textbf{(ir)}, \textbf{(fs')}, and \textbf{(ic)}.

After the rearrangement of the left-hand side terms, inequality~\eqref{eq:dual_D'_nM_ineq_temp} transforms into
\begin{multline}\label{eq:dual_D'_nM_ineq}
\sum_{j, m, k} \pi^{(n)}_{j, m, k} \cdot \left(\varphi_{m, k}^{(n)} + \psi_{j, k}^{(n)} - t_m \cdot c_{j, k}^{(n)}\right) + \sum_{j} \mu_j^{(n)} \cdot \langle p_j^{(n)}, c_j^{(n)} \rangle\geq \\
\ge \sum_{j} \mu_j^{(n)} \cdot \left( \langle \theta_j^{(n)}, p_j^{(n)}\rangle - u_j^{(n)}\right)
\end{multline}
Since the constraints \textbf{(ir)}, \textbf{(fs')}, and \textbf{(ic)} do not contain any restrictions on $\pi_{j, m, k}^{(n)}$, it follows from~\eqref{eq:dual_D'_nM_ineq} that the \textbf{(lt)}-constraints hold:
\[
\varphi_{m, k}^{(n)} + \psi_{j, k}^{(n)} - t_m \cdot c_{j, k}^{(n)} \ge 0\quad\text{for all } 1 \le j \le n^{I}, 1 \le m \le M, k\in\mathcal{I}.
\]

Substituting $\pi_{j, m, k}^{(n)} = 0$ into~\eqref{eq:dual_D'_nM_ineq}, we conclude that \textbf{(c-def)}-constraint holds. Vice versa, if all the \textbf{(lt)}- and \textbf{(c-def)}-constraints hold, then the inequality~\eqref{eq:dual_D'_nM_ineq} holds for all $\big(u_j^{(n)}, p_j^{(n)}\big)$ satisfying the constraints \textbf{(ir)}, \textbf{(fs')}, and \textbf{(ic)}, and for all non-negative $\pi_{j, m, k}^{(n)}$.
\end{proof}

\begin{remark}\label{rem:dual_interpretation}
The problem $\mathcal{D}^*_{n, M}$ can be seen as a discrete approximation of the dual problem described in Theorem~\ref{th_vector_fields_inf_appendix}. The variable $c_{j, k}^{(n)}$ corresponds to the value of the component $c_k$ of the vector field $c = (c_1, \dots, c_I)$ at the point $\theta_j^{(n)}$; the constraint \textbf{(c-def)} is a discrete approximation of the inequality
\[
\int (\langle x, \nabla u(x) \rangle - u(x)){\dd}\mu \le \int \langle c(x), \nabla u(x) \rangle{\dd}\mu.
\]

The variable $\varphi^{(n)}_{m, k}$ corresponds to the value of the function $\varphi_k(x)$ at the point~$t_m$. By the constraint \textbf{(lt)}, the optimal value of $\varphi^{(n)}_{m, k}$ is equal to
\[
\overline{\varphi}^{(n)}_{m, k} = \max\left\{0,\, \max_j\big(t_m \cdot \overline{c}_{j, k}^{(n)} - \overline{\psi}^{(n)}_{j, k}\big)\right\} = \overline{\varphi}_k(t_m),
\]
where the function $\overline{\varphi}(x)$ is convex and non-negative as a maximum of non-decreasing linear functions. Similarly, the optimal value of $\psi^{(n)}_{j, k}$ is equal to $\overline{\varphi}_k^*\big(c_{j, k}^{(n)}\big) \approx \overline{\varphi}_k^*\big(c_k(\theta_j^{(n)})\big)$. The term    $\sum_m \overline{\varphi}_{m, k}^{(n)} \cdot w_k$ in the objective function is an approximation of the integral $\int_0^1 \overline{\varphi}_k(t)\,{\dd}\eta(t)$, and the term $\sum_j \overline{\psi}_{j, k}^{(n)} \cdot \mu_j^{(n)}$ approximates the integral $\int \overline{\varphi}_k^*(c_k(x))\,{\dd}\mu$.
\end{remark}
\begin{remark}
Lemma~\ref{lem:alternative_dual}(b) provides a practical way of solving the problem $\mathcal{D}^*_{n, M}$: we need to solve the problem $\mathcal{D}'_{n, M}$ and extract the optimal values of the dual variables that correspond to the constraints \textbf{(mj-T)}, \textbf{(mj-J)}, and \textbf{(mj-E)}. 
\end{remark}
Next, we formulate a weak duality for the problem $\mathcal{D}_n$.
\begin{proposition}\label{prop:weak_discrete_duality}
Consider a family of functions $\varphi_k(x) \colon [0, 1] \to \mathbb{R}$, $k\in\mathcal{I}$, a family of variables $c_{j, k}^{(n)} \ge 0$ satisfying the constraint \textbf{(c-def)}, and a family of variables $\psi^{(n)}_{j, k}$ satisfying the inequality
\[
\varphi_k(t) + \psi^{(n)}_{j, k} \ge t \cdot c_{j, k}^{(n)}\quad\text{for all }\quad  t \in [0, 1], \ \ 1 \le j \le n^{I}, \ \  k\in\mathcal{I}.
\]
Then
\[
\max \mathcal{D}_n \le \sum_{k = 1}^\mathcal{I}\left(\int_0^1 \varphi_k(x)\,{\dd}\eta(x) + \sum_{j = 1}^{n^{I}}\psi^{(n)}_{j, k} \cdot \mu_j^{(n)}\right).
\]
\end{proposition}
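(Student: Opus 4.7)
The strategy is to mimic, at the discrete level, the weak-duality computation carried out for the continuous auctioneer's problem just before Corollary~\ref{cor_slackness} in Section~\ref{sec_duality}. Fix any feasible triple $(u_j^{(n)}, p_j^{(n)})$ of $\mathcal{D}_n$, so that it satisfies \textbf{(ir)}, \textbf{(fs)}, and \textbf{(ic)}. Since the vector field $c^{(n)}$ is assumed to satisfy the quantified constraint \textbf{(c-def)} over every such triple, specializing \textbf{(c-def)} at this particular $(u, p)$ yields
\begin{equation*}
\sum_j \mu_j^{(n)}\bigl(\langle \theta_j^{(n)}, p_j^{(n)}\rangle - u_j^{(n)}\bigr) \;\le\; \sum_{j,k} \mu_j^{(n)} c_{j,k}^{(n)} p_{j,k}^{(n)}.
\end{equation*}

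Next, substituting $t = p_{j,k}^{(n)} \in [0,1]$ into the assumed inequality $\varphi_k(t) + \psi_{j,k}^{(n)} \ge t c_{j,k}^{(n)}$ gives the pointwise Fenchel bound $c_{j,k}^{(n)} p_{j,k}^{(n)} \le \varphi_k(p_{j,k}^{(n)}) + \psi_{j,k}^{(n)}$; weighting by $\mu_j^{(n)}$ and summing in $j$ and $k$ bounds the objective from above by $\sum_k\sum_j \mu_j^{(n)}\varphi_k(p_{j,k}^{(n)}) + \sum_k\sum_j \mu_j^{(n)}\psi_{j,k}^{(n)}$. The remaining task is to exchange the pointwise sum in $j$ for an integral against $\eta$, which is exactly what the majorization constraint \textbf{(mj)} accomplishes: since $\mu_0^{(n)}\delta_0 + \sum_j \mu_j^{(n)}\delta_{p_{j,k}^{(n)}} \preceq_{} \eta$, applying the definition of majorization to the convex non-decreasing test function $\varphi_k$ (which, in accordance with the dual $\mathcal{D}^*_{n, M}$ and Remark~\ref{rem:dual_interpretation}, is taken non-negative convex non-decreasing with $\varphi_k(0) = 0$) gives
\begin{equation*}
\mu_0^{(n)}\varphi_k(0) + \sum_j \mu_j^{(n)} \varphi_k(p_{j,k}^{(n)}) \;\le\; \int_0^1 \varphi_k\,d\eta,
\end{equation*}
and the boundary term $\mu_0^{(n)}\varphi_k(0) \ge 0$ can be dropped. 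Chaining these three inequalities and taking the supremum over feasible $(u_j^{(n)}, p_j^{(n)})$ yields the asserted bound.

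The only care that is needed---genuinely the \emph{hard part}, though mild---is the compatibility of the general Fenchel-type hypothesis on $\varphi_k$ with the convex non-decreasing test function required by the majorization step. To handle an arbitrary $\varphi_k$, I would pass to the canonical convex non-decreasing minorant
\begin{equation*}
\bar\varphi_k(t) := \max_{1 \le j \le n^\mathcal{I}}\bigl[t c_{j,k}^{(n)} - \psi_{j,k}^{(n)}\bigr],
\end{equation*}
which is automatically convex and non-decreasing (every slope $c_{j,k}^{(n)}$ is non-negative), satisfies $\bar\varphi_k \le \varphi_k$ on $[0,1]$ by the Fenchel-type hypothesis itself, and still obeys $c_{j,k}^{(n)} p_{j,k}^{(n)} \le \bar\varphi_k(p_{j,k}^{(n)}) + \psi_{j,k}^{(n)}$ at the prescribed points. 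Running the majorization step with $\bar\varphi_k$ in place of $\varphi_k$ and using $\int \bar\varphi_k\,d\eta \le \int \varphi_k\,d\eta$ then reduces the general case to the structured one above.
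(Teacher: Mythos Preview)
Your argument is correct and takes a genuinely different route from the paper. The paper proves this proposition via the \emph{coupling} characterization of majorization (Proposition~\ref{prop:stochastic_dominans criterions}(d)): it picks measures $\pi_k$ on $J\times T$ witnessing \textbf{(mj)}, integrates the Fenchel-type inequality against $\pi_k$, and then uses the marginal bounds $\mathrm{pr}_T\pi_k\le\eta$ and $\mathrm{pr}_J\pi_k(j)\le\mu_j^{(n)}$ to pass to $\eta$ and $\mu^{(n)}$. Your proof bypasses that machinery entirely by substituting $t=p_{j,k}^{(n)}$ directly into the Fenchel hypothesis and then invoking the convex-test-function \emph{definition} of $\preceq$. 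This is shorter and more elementary; the paper's route has the advantage of not requiring $\varphi_k$ to be convex non-decreasing (only non-negative), which is why you need the envelope reduction and the paper does not.

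One small glitch in your reduction: with $\bar\varphi_k(t)=\max_j\bigl[t\,c_{j,k}^{(n)}-\psi_{j,k}^{(n)}\bigr]$ you get $\bar\varphi_k(0)=-\min_j\psi_{j,k}^{(n)}$, which is negative whenever all $\psi_{j,k}^{(n)}>0$; then the boundary term $\mu_0^{(n)}\bar\varphi_k(0)$ has the wrong sign to be ``dropped'' in the direction you need. The obvious fix is to use $\tilde\varphi_k(t)=\max\{0,\bar\varphi_k(t)\}$ instead, which is still convex non-decreasing, still satisfies $c_{j,k}^{(n)}t\le\tilde\varphi_k(t)+\psi_{j,k}^{(n)}$, and has $\tilde\varphi_k(0)\ge 0$. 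To keep $\tilde\varphi_k\le\varphi_k$ you then need $\varphi_k\ge 0$---but note that the paper's own proof makes the same implicit assumption (the step $\int\varphi_k\,d(\mathrm{pr}_T\pi_k)\le\int\varphi_k\,d\eta$ from $\mathrm{pr}_T\pi_k\le\eta$ needs $\varphi_k\ge 0$), and every application of the proposition in the paper uses $\varphi_k$ that are non-negative anyway.
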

\begin{proof}
Consider any family of variables $\big(u_j^{(n)}, p_j^{(n)}\big)$ satisfying all the constraints of~$\mathcal{D}_{n}$. First, by the \textbf{(c-def)}-constraint, we have
\[
\sum_{j} \mu_j^{(n)} \cdot \left( \langle \theta_j^{(n)}, p_j^{(n)}\rangle - u_j^{(n)}\right) \le \sum_j \mu_j^{(n)} \cdot \langle c_j^{(n)}, p_j^{(n)} \rangle = \sum_{k \in\mathcal{I}}\sum_{j = 1}^{n^{I}}c_{j, k}^{(n)} \cdot p_{j, k}^{(n)} \cdot \mu_j^{(n)}.
\]

Since $\big(u_j^{(n)}, p_j^{(n)}\big)$ satisfy the \textbf{(mj)}-constraint, we can find a family of measures $\pi^{(n)}_k$ satisfying all the constraints of Proposition~ \ref{prop:stochastic_dominans criterions}(d). We have
\[
c_{j, k}^{(n)} \cdot p_{j, k}^{(n)} \cdot \mu_j^{(n)} \le \int_T t \cdot c_{j, k}^{(n)}\,{\dd} \pi_k(j, t) \quad \Rightarrow \quad \sum_{j = 1}^{n^{I}}c_{j, k}^{(n)} \cdot p_{j, k}^{(n)} \cdot \mu_j^{(n)} \le \int_{J \times T}t \cdot c_{j, k}^{(n)}\,{\dd}\pi_k(j, t).
\]
Next, it follows from the inequality $\varphi_k(t) + \psi^{(n)}_{j, k} \ge t \cdot c_{j, k}^{(n)}$ that
\begin{align*}
\int_{J \times T}t \cdot c_{j, k}^{(n)}\,{\dd}\pi_k(j, t) &\le \int_{J \times T}\left(\varphi_k(t) + \psi^{(n)}_{j, k}\right)\,{\dd}\pi_k(j, t)= \\
&= \int_T \varphi_k(t) \,{\dd} \mathrm{pr}_T\pi_k(t) + \int_J \psi^{(n)}_{j, k} {\dd}\mathrm{pr}_J \pi_k(j).
\end{align*}
Finally, since $\mathrm{pr}_T\pi_k \le \eta$ and $\mathrm{pr}_J\pi_k(j) \le \mu_j^{(n)}$,
\[
\int_T \varphi_k(t) \,\mathrm{pr}_T{\dd}\pi_k(t) \le \int_T \varphi_k(t)\,{\dd}\eta(t), \qquad \int_J \psi^{(n)}_{j, k} \mathrm{pr}_J {\dd}\pi_k(j) \le \sum_{j = 1}^{n^{I}} \psi_{j, k}^{(n)} \cdot \mu_j^{(n)}.
\]

Summing up, we conclude that for all $\big(u_j^{(n)}, p_j^{(n)}\big)$ satisfying all the constraints of $\mathcal{D}_{n}$, the following inequality holds:
\begin{multline*}
 \sum_{j} \mu_j^{(n)} \cdot \left( \langle \theta_j^{(n)}, p_j^{(n)}\rangle - u_j^{(n)}\right) \le    \sum_{k = 1}^\mathcal{I}\sum_{j = 1}^{n^{I}}c_{j, k}^{(n)} \cdot p_{j, k}^{(n)} \cdot \mu_j^{(n)}\le \\
 \le \sum_{k\in\mathcal{I}} \int_{J \times T}t \cdot c_{j, k}^{(n)}\,{\dd}\pi_k(j, t) \le \sum_{k\in\mathcal{I}}\left(\int_0^1 \varphi_k(x)\,{\dd}\eta(x) + \sum_{j = 1}^{n^{I}}\psi^{(n)}_{j, k} \cdot \mu_j^{(n)}\right).
\end{multline*}
\end{proof}
\begin{remark}
In fact, the following strong duality holds:
\[
\max \mathcal{D}_n = \min \sum_{k\in\mathcal{I}}\left(\int_0^1 \varphi_k(x)\,{\dd}\eta(x) + \sum_{j = 1}^{n^{I}}\psi^{(n)}_{j, k} \cdot \mu_j^{(n)}\right).
\]
This duality can be proven similarly to Theorem~\ref{th_duality_min_appendix}.
\end{remark}
Finally, using a solution to the dual problem, we can estimate how well the problem $\mathcal{D}_{n, M}$ approximates the problem $\mathcal{D}_n$.
\begin{theorem}\label{thm:stochastic_dominance_convergence_rate}
Denote $\varepsilon = \max_m |q_m - q_{m - 1}|$. Then for all $\varepsilon \le \frac{1}{6}$,
\[
\max \mathcal{D}_{n, M} \le \max \mathcal{D}_n \le (1 + \varepsilon C) \cdot \max \mathcal{D}_{n, M},
\]
where $C$ is a constant that depends only on $\eta$ and is independent of $n$, $\mu$, and ${I}$.
\end{theorem}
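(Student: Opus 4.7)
The inequality $\max\mathcal{D}_{n,M}\le\max\mathcal{D}_n$ is immediate: by Corollary~\ref{cor:D_nM_more_restrictive_than_D_n}, any feasible $(u^{(n)}_j,p^{(n)}_j,\pi^{(n)}_{j,m,k})$ for $\mathcal{D}_{n,M}$ projects to a feasible $(u^{(n)}_j,p^{(n)}_j)$ for $\mathcal{D}_n$ with the same objective value. The substance of the theorem is therefore the reverse inequality $\max\mathcal{D}_n\le(1+\varepsilon C)\max\mathcal{D}_{n,M}$, which I will obtain by linear-programming duality. Proposition~\ref{prop:dual_D'_nM} provides strong duality between $\mathcal{D}'_{n,M}$ (equivalent to $\mathcal{D}_{n,M}$) and $\mathcal{D}^*_{n,M}$, so I may take optimal dual variables $(\overline{\varphi}^{(n)}_{m,k},\overline{\psi}^{(n)}_{j,k},\overline{c}^{(n)}_{j,k})$ attaining $\min\mathcal{D}^*_{n,M}=\max\mathcal{D}_{n,M}$. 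My plan is to lift these discrete dual variables to a continuous dual certificate for $\mathcal{D}_n$ compatible with Proposition~\ref{prop:weak_discrete_duality}, and then bound the resulting discretization loss by $\varepsilon C\max\mathcal{D}_{n,M}$.

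\textbf{Construction of the certificate.} For each $k\in\mathcal{I}$ define
$$\varphi_k(t)=\max\Bigl\{0,\,\max_j\bigl(t\cdot\overline{c}^{(n)}_{j,k}-\overline{\psi}^{(n)}_{j,k}\bigr)\Bigr\},\qquad t\in[0,1].$$
As a pointwise maximum of non-decreasing affine functions and zero, $\varphi_k$ is convex, non-decreasing and non-negative, and the pointwise inequality $\varphi_k(t)+\overline{\psi}^{(n)}_{j,k}\ge t\cdot\overline{c}^{(n)}_{j,k}$ needed in Proposition~\ref{prop:weak_discrete_duality} holds by construction. The Lipschitz constant of $\varphi_k$ equals $L_k:=\max_j\overline{c}^{(n)}_{j,k}$, and the optimality of $(\overline{\varphi}^{(n)}_{m,k},\overline{\psi}^{(n)}_{j,k},\overline{c}^{(n)}_{j,k})$ in $\mathcal{D}^*_{n,M}$ makes the \textbf{(lt)}-constraint saturate in a manner that yields $\varphi_k(t_m)\le\overline{\varphi}^{(n)}_{m,k}$ at every $m$.

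\textbf{Cell-by-cell estimate and weak duality.} Using $|t-t_m|\le q_m-q_{m-1}\le\varepsilon$ on $[q_{m-1},q_m]$ together with the Lipschitz property of $\varphi_k$,
$$\int_{q_{m-1}}^{q_m}\varphi_k(t)\,d\eta(t)\le\varphi_k(t_m)\,w_m+L_k\int_{q_{m-1}}^{q_m}|t-t_m|\,d\eta(t)\le\overline{\varphi}^{(n)}_{m,k}\,w_m+L_k\,\varepsilon\,w_m.$$
Summing over $m$ and invoking the weak duality of Proposition~\ref{prop:weak_discrete_duality} with the certificate $(\varphi_k,\overline{c}^{(n)}_{j,k},\overline{\psi}^{(n)}_{j,k})$ gives
$$\max\mathcal{D}_n\le\sum_{k\in\mathcal{I}}\Bigl(\int_0^1\varphi_k\,d\eta+\sum_j\overline{\psi}^{(n)}_{j,k}\mu^{(n)}_j\Bigr)\le\min\mathcal{D}^*_{n,M}+\varepsilon\sum_{k\in\mathcal{I}}L_k=\max\mathcal{D}_{n,M}+\varepsilon\sum_{k\in\mathcal{I}}L_k.$$

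\textbf{Main obstacle.} What remains, and what I expect to be the hardest step, is the a~priori bound $\sum_{k\in\mathcal{I}}L_k\le C\cdot\max\mathcal{D}_{n,M}$ with $C$ depending only on $\eta$. The hypothesis $\varepsilon\le 1/6$ enters precisely here: it guarantees $t_M\ge q_{M-1}\ge 1-\varepsilon\ge 5/6$, so the \textbf{(lt)}-constraint at $m=M$ already yields the pointwise bound $\overline{c}^{(n)}_{j,k}\le\tfrac65\bigl(\overline{\varphi}^{(n)}_{M,k}+\overline{\psi}^{(n)}_{j,k}\bigr)$. Converting this into a bound on $\sum_k L_k$ in terms of $\max\mathcal{D}_{n,M}$ requires a careful argument because naive per-coordinate estimates produce factors of $1/w_M$ or $1/\mu^{(n)}_{\min}$ that scale badly with $n$, $\mu$ and $I$. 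The natural route, which I would carry out explicitly, is to exploit the \textbf{(c-def)}-constraint (the only source of lower bounds on $\overline{c}^{(n)}_{j,k}$) to show that at optimum one may truncate $\overline{c}^{(n)}_{j,k}$ to a universal constant interpretable as a discrete ironed virtual valuation, and then to verify that for the scaled primal solution $((1-\varepsilon')u^\opt,(1-\varepsilon')p^\opt)$ with a suitable $\varepsilon'=O(\varepsilon)$ the induced majorization degrades by at most $O(\varepsilon)\cdot\max\mathcal{D}_n$, so that $\max\mathcal{D}_{n,M}$ controls $\max\mathcal{D}_n$ up to the desired multiplicative factor. Combining this bound with the estimate of the previous paragraph completes the proof.
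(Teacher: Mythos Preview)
Your overall strategy—take an optimal dual solution of $\mathcal{D}^*_{n,M}$, lift it to a certificate for $\mathcal{D}_n$ via Proposition~\ref{prop:weak_discrete_duality}, and control the discretization error—is exactly the paper's. The first inequality and the use of strong duality are fine. The divergence from the paper, and the genuine gap in your argument, is in \emph{where} you put the discretization error.

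You keep $\overline{\psi}^{(n)}_{j,k}$ fixed and absorb the error into the $\varphi$-side via the global Lipschitz constant $L_k=\max_j\overline{c}^{(n)}_{j,k}$, leading to the remainder $\varepsilon\sum_kL_k$. As you correctly suspect, this maximum cannot be bounded independently of $n$, $\mu$, and $I$: from \textbf{(lt)} one only gets $\overline{c}^{(n)}_{j,k}\le t_m^{-1}(\overline{\varphi}^{(n)}_{m,k}+\overline{\psi}^{(n)}_{j,k})$, and the individual $\overline{\psi}^{(n)}_{j,k}$ is controlled only by $R_k/\mu^{(n)}_j$, which blows up with $n$. Your proposed fixes (truncating $\overline{c}$ or scaling the primal) are not substantiated; there is no reason the optimal dual has bounded $\overline{c}^{(n)}_{j,k}$, and the primal scaling idea is a different argument you do not carry out.

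The paper avoids this obstacle by loading the error into the $\psi$-side instead: it takes the step function $\widehat{\varphi}_k(t)=\overline{\varphi}^{(n)}_{m,k}$ on $[q_{m-1},q_m)$ (so $\int\widehat{\varphi}_k\,d\eta=\sum_m\overline{\varphi}^{(n)}_{m,k}w_m$ exactly) and sets $\widehat{\psi}^{(n)}_{j,k}=\overline{\psi}^{(n)}_{j,k}+\varepsilon\,\overline{c}^{(n)}_{j,k}$. The \textbf{(lt)}-constraint for Proposition~\ref{prop:weak_discrete_duality} then holds because $t_m\ge t-\varepsilon$ on the $m$th cell. The resulting error is $\varepsilon\sum_j\overline{c}^{(n)}_{j,k}\mu^{(n)}_j$, a $\mu$-\emph{weighted} sum rather than a maximum. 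This sum is bounded by $C\,R_k$ via a single application of \textbf{(lt)} at the index $m$ with $q_{m-1}\le\tfrac12<q_m$: one gets $\sum_j\overline{c}^{(n)}_{j,k}\mu^{(n)}_j\le t_m^{-1}\bigl(\sum_j\overline{\psi}^{(n)}_{j,k}\mu^{(n)}_j+\overline{\varphi}^{(n)}_{m,k}\bigr)$, then uses $t_m\ge\tfrac13$ (this is where $\varepsilon\le\tfrac16$ enters) and $\overline{\varphi}^{(n)}_{m,k}\le R_k/\eta([1/2,1])$ from monotonicity of $m\mapsto\overline{\varphi}^{(n)}_{m,k}$. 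The constant $C=3(1+1/\eta([1/2,1]))$ indeed depends only on $\eta$.
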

\begin{proof}
Consider an optimal solution $\big(\overline{u}_{j}^{(n)}, \overline{p}_j^{(n)}, \overline{\pi}^{(n)}_{j, m, k}\big)$ to the problem $\mathcal{D}_{n, M}$. By Corollary~\ref{cor:D_nM_more_restrictive_than_D_n}, the variables $\big(\overline{u}_{j}^{(n)}, \overline{p}_j^{(n)}\big)$ satisfy all the constraints of the problem $\mathcal{D}_n$; therefore, since the objective functions of $\mathcal{D}_n$ and $\mathcal{D}_{n, M}$ are identical,
\[
\max \mathcal{D}_{n, M} \le \max \mathcal{D}_n.
\]

Let $\big(\overline{\varphi}^{(n)}_{m, k}, \overline{\psi}^{(n)}_{j, k}, \overline{c}^{(n)}_{j, k}\big)$ be an optimal solution to $\mathcal{D}^*_{n, M}$. Denote
\[
R_k = \sum_{m = 1}^M \overline{\varphi}_{m, k}^{(n)} \cdot w_m + \sum_{j = 1}^{n^{I}} \overline{\psi}_{j, k}^{(n)} \cdot \mu_j^{(n)}.
\]
By the strong duality, $\sum_{k\in\mathcal{I}} R_k = \max \mathcal{D}_{n, M}$. By the \textbf{(lt)}-constraint,
\[
\overline{c}^{(n)}_{j, k} \le \frac{1}{t_m}\left(\overline{\psi}^{(n)}_{j, k} + \overline{\varphi}^{(n)}_{m, k}\right);
\]
therefore, for each index $1 \le m \le M$, we have
\begin{equation}
\sum_{j = 1}^{n^{I}} \overline{c}_{j, k}^{(n)} \cdot \mu_j^{(n)} \le \frac{1}{t_m}\left(\sum_{j = 1}^{n^{I}} \overline{\psi}_{j, k}^{(n)} \cdot \mu_j^{(n)} + \overline{\varphi}^{(n)}_{m, k}\right) \le \frac{1}{t_m}\left(R_k + \overline{\varphi}^{(n)}_{m, k}\right).
\end{equation}

As we mentioned in Remark~\ref{rem:dual_interpretation}, the values $\overline{\varphi}^{(n)}_{m, k}$ are equal to the values of the non-decreasing function $\overline{\varphi}_k$ at the points $t_m$; hence, the sequence $\{\overline{\varphi}^{(n)}_{m, k}\}_m$ is non-decreasing, and
\[
R_k \ge \sum_{m = 1}^M \overline{\varphi}^{(n)}_{m, k} \ge \overline{\varphi}^{(n)}_{m, k} \cdot (w_m + w_{m + 1} + \dots + w_M) = \overline{\varphi}^{(n)}_{m, k} \cdot \eta\big([q_{m - 1}, 1]\big).
\]

Choose  an index $m$ such that $q_{m - 1} \le \frac{1}{2} < q_m$. Since $q_m - q_{m - 1} \le \frac{1}{6}$, we can estimate $t_m \ge q_{m - 1} \ge \frac{1}{3}$. In addition, $\eta\big([q_{m - 1}, 1]\big) \ge \eta\big([1/2, 1]\big)$; therefore,
\[
\sum_{j = 1}^{n^{I}} \overline{c}_{j, k}^{(n)} \cdot \mu_j^{(n)} \le \frac{1}{t_m}\left(R_k + \overline{\varphi}^{(n)}_{m, k}\right) \le 3\left(1 + \frac{1}{\eta\big([1/2, 1]\big)}\right) R_k = C \cdot R_k,
\]
where $C$ is a constant that depends only on $\eta$.

For each $k\in\mathcal{I}$, consider a function $\widehat{\varphi}_k \colon [0, 1] \to \mathbb{R}$, $\widehat{\varphi}_k(t) = \overline{\varphi}^{(n)}_{m, k}$ for all $t \in [q_{m - 1}, q_m)$, and $\widehat{\varphi}_k(1) = \overline{\varphi}^{(n)}_{M, k}$. In addition, consider a family of variables $\widehat{\psi}^{(n)}_{j, k}$ defined as follows:
\[
\widehat{\psi}^{(n)}_{j, k} = \overline{\psi}^{(n)}_{j, k} + \varepsilon \cdot \overline{c}_{j, k}^{(n)}.
\]

We claim that the \textbf{(lt)}-constraint holds:
\[
\widehat{\psi}^{(n)}_{j, k} + \widehat{\varphi}_k(t) \ge t \cdot \overline{c}^{(n)}_{j, k}\quad\text{for all } t \in [0, 1], 1 \le j \le n^{I}.
\]
Indeed, for each $t \in [0, 1]$, there exists an index $m$ such that $t_m \ge t - \varepsilon$ and $\widehat{\varphi}_k(t) = \overline{\varphi}^{(n)}_{m, k}$. Hence,
\[
\widehat{\psi}^{(n)}_{j, k} + \widehat{\varphi}_k(t) = \overline{\psi}^{(n)}_{j, k} + \overline{\varphi}^{(n)}_{m, k} + \varepsilon \cdot \overline{c}_{j, k}^{(n)} \ge (t_m + \varepsilon) \cdot \overline{c}_{j, k}^{(n)} \ge t \cdot \overline{c}_{j, k}^{(n)}.
\]
Thus $(\widehat{\varphi}_k, \widehat{\psi}^{(n)}_{j, k}, \overline{c}^{(n)}_{j, k})$ satisfies all the constraints of Proposition~\ref{prop:weak_discrete_duality}; therefore,
\begin{align*}
\max \mathcal{D}_n &\le \sum_{k\in\mathcal{I}}\left(\int_0^1 \widehat{\varphi}_k(t)\,{\dd}\eta(t) + \sum_{j = 1}^{n^{I}} \widehat{\psi}^{(n)}_{j, k} \cdot \mu_j^{(n)}\right) \\
&= \sum_{k\in\mathcal{I}}\left(\sum_{m = 1}^M \overline{\varphi}^{(n)}_{m, k} \cdot w_m + \sum_{j = 1}^{n^{I}} \overline{\psi}^{(n)}_{j, k} \cdot \mu_j^{(n)} + \varepsilon\sum_{j = 1}^{n^{I}} \overline{c}^{(n)}_{j, k} \cdot \mu_j^{(n)}\right)\\
&\le \sum_{k\in\mathcal{I}}\left(R_k + \varepsilon C \cdot R_k\right) = (1 + \varepsilon C) \cdot \max \mathcal{D}_{n, M}.
\end{align*}
\end{proof}

We conclude that the following convergence result holds.
\begin{corollary}\label{cor:stochastic_dominance_convergence}
Let $\big(\overline{u}_j^{(n, M)}, \overline{p}_j^{(n, M)}, \overline{\pi}_{j, m, k}^{(n, M)}\big)$ be an optimal solution to the problem $\mathcal{D}_{n, M}$. Assume that $\max_{1 \le m \le M}|q_{m} - q_{m - 1}| \to 0$ as $M \to \infty$. Then the sequence $\left\{\big(\overline{u}_j^{(n, M)}, \overline{p}_j^{(n, M)}\big)\right\}_M$ contains a subsequence converging to an optimum of $\mathcal{D}_n$.
\end{corollary}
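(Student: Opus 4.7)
The plan is to combine the rate estimate from Theorem~\ref{thm:stochastic_dominance_convergence_rate} with a standard compactness-plus-closedness argument. Fix $n$. The feasible set of $\mathcal{D}_n$ lives inside a compact box: by \textbf{(fs)} each $p^{(n,M)}_{j,k}\in[0,1]$, and combining \textbf{(ir)} and \textbf{(ic)} (taking $i$ with $\theta^{(n)}_i=0$, which lies in some cube after a harmless re-indexing, or chaining through $\theta^{(n)}_j$'s) gives a uniform bound $0\le u^{(n,M)}_j\le L$ for a constant $L$ depending only on $n$ and $I$. Hence along some subsequence $M_\ell\to\infty$ we have $\bigl(\overline{u}^{(n,M_\ell)}_j,\overline{p}^{(n,M_\ell)}_j\bigr)\to\bigl(\overline{u}^{(n)}_j,\overline{p}^{(n)}_j\bigr)$.

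First I would show that the limit point is feasible for $\mathcal{D}_n$. By Corollary~\ref{cor:D_nM_more_restrictive_than_D_n}, each prefix-pair $(\overline{u}^{(n,M_\ell)}_j,\overline{p}^{(n,M_\ell)}_j)$ satisfies \textbf{(ir)}, \textbf{(fs)}, \textbf{(ic)}, and the \emph{non-linear} \textbf{(mj)} constraint of $\mathcal{D}_n$. The linear constraints \textbf{(ir)}, \textbf{(fs)}, \textbf{(ic)} pass to the limit trivially. For \textbf{(mj)}, the empirical measures
\[
\nu^{(n,M_\ell)}_k=\mu^{(n)}_0\delta_0+\sum_j\mu^{(n)}_j\delta_{\overline{p}^{(n,M_\ell)}_{j,k}}
\]
converge weakly to $\nu^{(n)}_k=\mu^{(n)}_0\delta_0+\sum_j\mu^{(n)}_j\delta_{\overline{p}^{(n)}_{j,k}}$ by continuity of evaluation in the finitely many atoms. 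Since the set of probability measures dominated by $\eta$ in the convex order is closed under weak convergence (it is an intersection of halfspaces $\int\varphi\,d\nu\le\int\varphi\,d\eta$ for continuous convex non-decreasing test functions, all of which are bounded on $[0,1]$), the limit satisfies $\nu^{(n)}_k\preceq\eta$, i.e., the \textbf{(mj)} constraint of $\mathcal{D}_n$ holds.

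Next I would identify the limiting objective value. The objective $\sum_j\mu^{(n)}_j(\langle\theta^{(n)}_j,p_j\rangle-u_j)$ is linear in $(u_j,p_j)$ and hence continuous, so its value at the limit equals $\lim_\ell\max\mathcal{D}_{n,M_\ell}$. Theorem~\ref{thm:stochastic_dominance_convergence_rate} gives
\[
\max\mathcal{D}_{n,M_\ell}\le\max\mathcal{D}_n\le(1+\varepsilon_{M_\ell}C)\max\mathcal{D}_{n,M_\ell},
\]
with $\varepsilon_{M_\ell}=\max_m|q_m-q_{m-1}|\to 0$ as $\ell\to\infty$. Thus $\max\mathcal{D}_{n,M_\ell}\to\max\mathcal{D}_n$, so the limit $(\overline{u}^{(n)}_j,\overline{p}^{(n)}_j)$ is feasible for $\mathcal{D}_n$ and attains the value $\max\mathcal{D}_n$; it is therefore optimal.

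The only mildly delicate point is the second paragraph: one must make sure the $\pi^{(n,M_\ell)}_{j,m,k}$ components do not matter for feasibility in $\mathcal{D}_n$ (they are precisely what Corollary~\ref{cor:D_nM_more_restrictive_than_D_n} absorbs) and that weak convergence of atomic measures suffices to preserve the majorization inequality $\nu\preceq\eta$. Everything else is compactness and linearity of the objective.
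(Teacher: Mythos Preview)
Your argument is correct and is precisely the standard compactness-plus-rate filling-in that the paper leaves implicit (the paper states the corollary without proof, as an immediate consequence of Theorem~\ref{thm:stochastic_dominance_convergence_rate} and Corollary~\ref{cor:D_nM_more_restrictive_than_D_n}). One cosmetic remark: no lattice point $\theta^{(n)}_i$ equals $0$ in the centers-of-cubes grid, so the uniform bound on $u^{(n,M)}_j$ should be obtained via your ``chaining'' alternative (e.g., the \textbf{(ic)} inequalities give $|u_i-u_j|\le I$, and nonnegativity of the objective at the optimum bounds the weighted mean of the $u_j$), not by evaluating at $0$.
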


\subsection{Additional empirical optimizations}\label{sec_empiric_optimization}
The total number of \textbf{(ic)}-constraints is $n^{2{I}}$, which constitutes the majority of all the constraints in the problem $\mathcal{D}_n$. We list some heuristics that allow to get rid of the redundant constraints  improving the run time in practice.

\begin{definition}
A couple of points $\theta_i^{(n)}, \theta_j^{(n)} \in \Theta_n$ is called \textit{irreducible} if the interval $(\theta_i^{(n)}, \theta_j^{(n)})$  does not contain any elements of $\Theta_n$ (by the interval $(\theta_i^{(n)}, \theta_j^{(n)})$ we mean the multi-dimensional linear segment with endpoints $\theta_i^{(n)}$ and $\theta_j^{(n)}$).
\end{definition}
The following proposition shows that the incentive compatibility constraints from the problem $\mathcal{D}_n$ can be verified only for irreducible couples of points.
\begin{proposition}\label{prop:irreducibility_heuristic}
Suppose that the inequality
\[
u_i^{(n)} - u_j^{(n)} \ge \langle \theta_i^{(n)} - \theta_j^{(n)}, p_j^{(n)} \rangle
\]
holds for all couples of irreducible couple of points $\theta_i^{(n)}$, $\theta_j^{(n)}$. Then such an inequality holds for all $1 \le i, j \le n^{I}$.
\end{proposition}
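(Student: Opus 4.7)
The plan is to reduce an arbitrary pair to a chain of irreducible pairs lying on the same straight segment, and then propagate the inequality along the chain. Fix $\theta_i^{(n)}, \theta_j^{(n)} \in \Theta_n$ and let $\theta_j^{(n)} = \theta^{(0)}, \theta^{(1)}, \ldots, \theta^{(m)} = \theta_i^{(n)}$ be the complete list of lattice points from $\Theta_n$ lying on the closed segment between $\theta_j^{(n)}$ and $\theta_i^{(n)}$, ordered by their position on the segment. By construction, each consecutive pair $(\theta^{(r)}, \theta^{(r+1)})$ is irreducible. Denote the corresponding variables by $u^{(r)}$ and $p^{(r)}$, and let $e \in \mathbb{R}^{\mathcal{I}}$ be the unit vector pointing from $\theta_j^{(n)}$ to $\theta_i^{(n)}$, so that each $\theta^{(r+1)} - \theta^{(r)} = \lambda_r e$ for some $\lambda_r > 0$.

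The first key step is to establish a directional monotonicity of the gradient surrogates $p^{(r)}$. Applying the assumed irreducible IC inequality in \emph{both} directions for the irreducible pair $(\theta^{(r)}, \theta^{(r+1)})$ yields
\begin{equation*}
\langle \theta^{(r+1)} - \theta^{(r)},\, p^{(r)} \rangle \;\leq\; u^{(r+1)} - u^{(r)} \;\leq\; \langle \theta^{(r+1)} - \theta^{(r)},\, p^{(r+1)} \rangle,
\end{equation*}
which, after dividing by $\lambda_r > 0$, gives $\langle e, p^{(r)} \rangle \leq \langle e, p^{(r+1)} \rangle$. In particular $\langle e, p^{(0)} \rangle \leq \langle e, p^{(r)} \rangle$ for every $r$.

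The second step is a telescoping sum. For every $r$,
\begin{equation*}
\langle \theta^{(r+1)} - \theta^{(r)},\, p^{(0)} \rangle = \lambda_r \langle e, p^{(0)} \rangle \leq \lambda_r \langle e, p^{(r)} \rangle = \langle \theta^{(r+1)} - \theta^{(r)},\, p^{(r)} \rangle \leq u^{(r+1)} - u^{(r)}.
\end{equation*}
Summing over $r = 0, \ldots, m-1$ and noting that $\sum_r (\theta^{(r+1)} - \theta^{(r)}) = \theta_i^{(n)} - \theta_j^{(n)}$ and $\sum_r (u^{(r+1)} - u^{(r)}) = u_i^{(n)} - u_j^{(n)}$, we obtain exactly the desired general IC inequality
\begin{equation*}
u_i^{(n)} - u_j^{(n)} \geq \langle \theta_i^{(n)} - \theta_j^{(n)},\, p_j^{(n)} \rangle.
\end{equation*}

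There is no serious obstacle: the argument is elementary, resting on the routine observation that two-sided IC inequalities for adjacent points force monotonicity of the directional derivative, which is the discrete analogue of convexity along the segment. The only mild care needed is the verification that consecutive collinear lattice points are genuinely irreducible in the sense of the definition (the open segment between two consecutive points on the line by construction contains no other point of $\Theta_n$), after which the telescoping is immediate.
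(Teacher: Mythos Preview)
Your proof is correct and uses essentially the same key observation as the paper: two-sided IC inequalities for consecutive collinear lattice points force the directional derivative $\langle e, p^{(r)}\rangle$ to be non-decreasing along the segment, after which the general inequality follows. The only cosmetic difference is that the paper phrases the argument as a minimal-counterexample induction (pick a violating pair with the fewest intermediate lattice points and derive a contradiction from one intermediate point), whereas you do the equivalent direct telescoping over the full chain of irreducible pairs.
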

\begin{proof}
Assume the converse and choose a pair of points $\theta_i^{(n)}$, $\theta_j^{(n)}$ such that
\[
u_i^{(n)} - u_j^{(n)} < \langle \theta_i^{(n)} - \theta_j^{(n)}, p_j^{(n)} \rangle
\]
and the interval $(\theta_i^{(n)}, \theta_j^{(n)})$ contains the smallest number of elements of $\Theta_n$. Since the couple of points $\theta_i^{(n)}, \theta_j^{(n)}$ is not irreducible, there exists a point $\theta_k^{(n)} \in \Theta_n \cap (\theta_i^{(n)}, \theta_j^{(n)})$. By the construction, the following inequalities hold:
\begin{align}
    &u_i^{(n)} - u_k^{(n)} \ge \langle \theta_i^{(n)} - \theta_k^{(n)}, p_k^{(n)} \rangle,\label{eq:irreducible_1}\\
    &u_k^{(n)} - u_j^{(n)} \ge \langle \theta_k^{(n)} - \theta_j^{(n)}, p_j^{(n)} \rangle,\label{eq:irreducible_2}\\
    &u_j^{(n)} - u_k^{(n)} \ge \langle \theta_j^{(n)} - \theta_k^{(n)}, p_k^{(n)} \rangle.\label{eq:irreducible_3}
\end{align}

Combining inequalities~\eqref{eq:irreducible_2} and~\eqref{eq:irreducible_3}, we conclude that
\[
\langle \theta_j^{(n)} - \theta_k^{(n)}, p_j^{(n)} \rangle \ge \langle \theta_j^{(n)} - \theta_k^{(n)}, p_k^{(n)} \rangle \quad\Rightarrow\quad \langle \theta_j^{(n)} - \theta_k^{(n)}, p_j^{(n)} - p_k^{(n)}\rangle \ge 0.
\]
Since $\theta_k^{(n)} \in (\theta_i^{(n)}, \theta_j^{(n)})$, there exists a constant $\gamma > 0$ such that $\theta_k^{(n)} - \theta_i^{(n)} = \gamma \cdot (\theta_j^{(n)} - \theta_k^{(n)})$; therefore,
\begin{multline*}
\langle \theta_k^{(n)} - \theta_i^{(n)}, p_j^{(n)} - p_k^{(n)}\rangle = \gamma \cdot \langle \theta_j^{(n)} - \theta_k^{(n)}, p_j^{(n)} - p_k^{(n)}\rangle \ge 0 \\ \Rightarrow \quad \langle \theta_i^{(n)} - \theta_k^{(n)}, p_k^{(n)} \rangle \ge \langle \theta_i^{(n)} - \theta_k^{(n)}, p_j^{(n)} \rangle.
\end{multline*}

Thus, by inequality~\eqref{eq:irreducible_1}, we have
\[
u_i^{(n)} - u_k^{(n)} \ge \langle \theta_i^{(n)} - \theta_k^{(n)}, p_k^{(n)} \rangle \ge \langle \theta_i^{(n)} - \theta_k^{(n)}, p_j^{(n)} \rangle.
\]
Summing it up with~\eqref{eq:irreducible_2}, we conclude that
\(
u_i^{(n)} - u_j^{(n)} \ge \langle \theta_i^{(n)} - \theta_j^{(n)}, p_j^{(n)} \rangle
\). This contradiction proves the statement.
\end{proof}

\begin{remark} We state without a proof that the total number of irreducible couples of points is asymptotically equal to $\zeta({I})^{-1} \cdot n^{2{I}}$, where $\zeta$ is the Riemann zeta function. Informally, this can be show by the following argument. Rescale the points $\theta_j^{(n)}$ in such a way that the lattice $\{\theta_j^{(n)}\}_j$ coincides with the uniform integer lattice $\{1, 2, \dots, n\}^{\mathcal{I}}$. Then the couple of points $\big(\theta_i^{(n)}, \theta_j^{(n)}\big)$ is irreducible if and only if
\[
\mathrm{gcd}\big(\theta_{i, 1}^{(n)} - \theta_{j, 1}^{(n)}, \theta_{i, 2}^{(n)} - \theta_{j, 2}^{(n)}, \dots, \theta_{i, {I}}^{(n)} - \theta_{j, {I}}^{(n)}\big) = 1,
\]
where $\mathrm{gcd}$ denotes the greatest common divider.

For each prime $p$, consider the event
\begin{equation}\label{eq:p_not_divide_gcd_event}
 \mathrm{gcd}\big(\theta_{i, 1}^{(n)} - \theta_{j, 1}^{(n)}, \theta_{i, 2}^{(n)} - \theta_{j, 2}^{(n)}, \dots, \theta_{i, {I}}^{(n)} - \theta_{j, {I}}^{(n)}\big)  \quad\mbox{is not divisible by}\quad p.
\end{equation}
For each $k$, the number $p$ divides $\theta_{i, k}^{(n)} - \theta_{j, k}^{(n)}$ with the probability approximately equal to $\frac{1}{p}$. The events corresponding to different coordinate numbers $k$ are mutually independent; therefore, the probability of the event~\eqref{eq:p_not_divide_gcd_event} is approximately equal to $1 - p^{-{I}}$.

For a finite set of distinct prime numbers $p$, the events~\eqref{eq:p_not_divide_gcd_event} can be considered as approximately mutually independent; therefore, the probability of the irreducibility of the couple $\big(\theta_i^{(n)}, \theta_j^{(n)}\big)$ is approximately equal to
\[
\prod_{p \text{ is prime}} (1 - p^{-{I}}) = \zeta({I})^{-1}.
\]

In the case two items (${I} = 2$), the total number of irreducible couples is approximately equal to $\frac{6}{\pi^2} n^{4} \approx 0.61 n^4$; so, the heuristic described in Proposition~\ref{prop:irreducibility_heuristic} removes approximately 39\% of all the \textbf{(ic)}-constraints.
\end{remark}

Another heuristics that can be useful in practice is the following: we replace the global \textbf{(ic)}-constraints with the following local ones:
\begin{multline*}
\textbf{(ic-local)}\quad u_i^{(n)} - u_j^{(n)} \ge \langle \theta_i^{(n)} - \theta_j^{(n)}, p_j^{(n)} \rangle\\
\text{for all irreducible } (\theta_i^{(n)}, \theta_j^{(n)}) \colon ||\theta_i^{(n)} - \theta_j^{(n)}|| \le \frac{c}{n},
\end{multline*}
where $c$ is a small constant.

This definition is motivated by the notion of the directional convexity considered in \cite{oberman2013numerical}. Let $V$ be a set of vectors $v$ with integer coordinates such that $||v|| \le c$. Then $||\theta_i^{(n)} - \theta_j^{(n)}|| \le \frac{c}{n}$ if and only if $\theta_i^{(n)} - \theta_j^{(n)}$ is proportional to $v$ for some $v \in V$.

Assume for simplicity that the function $u$ defined on ${X}$ is twice-differentiable. We say that $u$ is \textit{directionally convex} with respect to the set of direction vectors $V$ if
\[
\frac{\partial^2u}{\partial v^2} \ge 0 \quad \text{for all } v \in V.
\]
In other words, the function $u$ is directionally convex if and only if the function $t \to u(x + tv)$ is convex for all $x \in {X}$ and all $v \in V$.

The constraint \textbf{(ic-local)} can be considered as a discrete version of the directional convexity; we state without the proof that if the sequence $\left\{\big(u^{(n)}, p^{(n)}\big)\right\}_n$ satisfy the \textbf{(ic-local)} constraint, then the sequence $\{u^{(n)}\}_n$ contains a subsequence that converges weakly to the directionally convex function $u$.

\cite{oberman2013numerical} proved that a directionally convex function is ``nearly'' convex.
\begin{proposition*}[{{\cite{oberman2013numerical}, Proposition 3.1}}]
Denote by $d\theta$ the directional resolution of the set $V$:
\[
d\theta = \max_{||w|| = 1}\min_{v \in V} \arccos \left(\frac{w \cdot v}{||v||}\right).
\]
Assume that $d\theta \le \frac{\pi}{4}$. Then every directionally convex function $u$ is nearly convex, in the sense that
\[
\frac{\lambda_1}{\lambda_{I}} \ge -\tan^2(d\theta),
\]
where $\lambda_1 \le \lambda_2 \le \dots \le \lambda_{I}$ are the eigenvalues of the Hessian matrix at the point~$x$.
\end{proposition*}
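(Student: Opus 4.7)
The plan is to reduce the bound to a simple two-dimensional pinching argument in the eigenbasis of the Hessian $H = D^2 u(x)$. Let $\lambda_1 \le \cdots \le \lambda_I$ be the eigenvalues of $H$ with orthonormal eigenvectors $w_1,\ldots,w_I$. Directional convexity gives $v^T H v \ge 0$ for every $v \in V$, so the crux is to exploit the definition of $d\theta$: for the eigenvector $w_1$ associated with the smallest eigenvalue $\lambda_1$, there exists $v \in V$ whose angle with $w_1$ is at most $d\theta$.

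First I would dispose of the trivial and degenerate cases. If $\lambda_1 \ge 0$, then $\lambda_1/\lambda_I \ge 0 \ge -\tan^2(d\theta)$ and the inequality is immediate (the case $\lambda_I = \lambda_1 = 0$ is interpreted as $0 \ge -\tan^2(d\theta)$). So assume $\lambda_1 < 0$; I will also verify along the way that this forces $\lambda_I > 0$, so that dividing by $\lambda_I$ preserves the direction of inequalities.

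The core step is the two-dimensional decomposition. Choose $v \in V$ with $\|v\| = 1$ (homogeneity of the quadratic form lets us normalize) such that the angle $\theta$ between $v$ and $w_1$ satisfies $\theta \le d\theta$. Write
\[
v = \cos(\theta)\, w_1 + \sin(\theta)\, w_\perp,
\]
where $w_\perp$ is a unit vector orthogonal to $w_1$. Then
\[
v^T H v = \cos^2(\theta)\,\lambda_1 + \sin^2(\theta)\, w_\perp^T H w_\perp \le \cos^2(\theta)\,\lambda_1 + \sin^2(\theta)\,\lambda_I,
\]
since $w_\perp^T H w_\perp \le \lambda_I$. Directional convexity yields $v^T H v \ge 0$, hence $-\cos^2(\theta)\,\lambda_1 \le \sin^2(\theta)\,\lambda_I$. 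Dividing by $\cos^2(\theta)$ (which is positive, using $\theta \le d\theta \le \pi/4$), this becomes $-\lambda_1 \le \tan^2(\theta)\,\lambda_I \le \tan^2(d\theta)\,\lambda_I$, where the last step uses monotonicity of $\tan^2$ on $[0,\pi/4]$. In particular $\lambda_I \ge -\lambda_1 / \tan^2(d\theta) > 0$, confirming the non-degeneracy needed to divide, and rearranging gives $\lambda_1/\lambda_I \ge -\tan^2(d\theta)$.

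The only delicate point is making sure the angle bound from the definition of $d\theta$ is applied to the correct unit vector: $d\theta$ controls the distance from \emph{any} direction to the normalized cone of $V$, so applying it to $w = w_1$ produces the required $v \in V$ (after normalization, which does not affect the sign of $v^T H v$). The rest is a one-line use of the spectral decomposition, so I do not expect any real obstacle beyond carefully handling the edge cases $\lambda_1 = 0$, $\lambda_I = 0$, and the normalization of $v$.
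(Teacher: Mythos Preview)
The paper does not give its own proof of this proposition; it is quoted verbatim from \cite{oberman2013numerical} and used as a black box to motivate the \textbf{(ic-local)} heuristic. Your argument is correct and is essentially the standard eigenvector pinching proof one would find in that source: pick $v\in V$ within angle $d\theta$ of the bottom eigenvector $w_1$, bound the orthogonal component of $v^T H v$ by $\lambda_I$ via the Rayleigh quotient, and rearrange the directional-convexity inequality $v^T H v\ge 0$.
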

If $c \to \infty$, then $d\theta \to 0$, and the nearly convex function $u$ becomes a convex one. This justifies the convenience of the suggested approach. In practice, an iterative Algorithm~\ref{alg:iterative} can be used: we start with a small number of initial \textbf{(ic)}-constraints, and, at each step, we add all the violated constraints to the linear program.

\begin{algorithm}
\caption{Iterative scheme with the local IC-constraints}\label{alg:iterative}
Define the problem $\mathcal{D}^{(0)}_{n, M}$, where \textbf{(ic)}-constraints are replaced with \textbf{(ic-local)}\;

\For{$\mathrm{s} = 0, 1, \dots$}{
find an optimal solution $\big(\overline{u}^{(n)}, \overline{p}^{(n)}\big)$ of the problem $\mathcal{D}^{(s)}_{n, M}$\;
define the set $V^{(s)}$ of all the violated constraints:
\[
V^{(s)} = \left\{(i, j) \colon \overline{u}_i^{(n)} - \overline{u}_j^{(n)} < \langle \overline{p}_j, \theta_i^{(n)} - \theta_j^{(n)}\rangle\right\}\;
\]
\eIf{$V^{(s)}$ is not empty}{
define $\mathcal{D}^{(s + 1)}_{n, M} = \mathcal{D}^{(s)}_{n, M}$\;
add all the violated constraints to the problem $\mathcal{D}^{(s + 1)}_{n, M}$:
\[
u_i^{(n)} - u_j^{(n)} \ge \langle p_j, \theta_i^{(n)} - \theta_j^{(n)}\rangle \quad\text{for all }(i, j) \in V^{(s)}\;
\]
}{
stop with the solution $\big(\overline{u}^{(n)}, \overline{p}^{(n)}\big)$ of the problem $\mathcal{D}^{(s)}_{n, M}$\;
}
}
\end{algorithm}

Finally, we improve the convergence rate of the majorization constraint approximation obtained in Theorem~\ref{thm:stochastic_dominance_convergence_rate} for the case of bounded optimal solution $\overline{c}_{j, k}^{(n)}$ of the dual problem $\mathcal{D}^*_{n, M}$.

\begin{proposition}\label{prop:nonunifrom_partition_estimation} Assume that the partition $0 = q_0 < q_1 < \dots < q_M = 1$ satisfies the following property: for each function $\varphi \colon [0, 1] \to \mathbb{R}$ which is constant on the intervals $[0, t_1]$, $[t_M, 1]$ and linear on each interval $[t_m, t_{m + 1}]$ we have
\[
\int_0^1 \varphi(t)\,{\dd}\eta(t) = \sum_{m = 1}^M \varphi(t_m) \cdot w_m.
\]
Let $\big(\overline{\varphi}^{(n)}_{m, k}, \overline{\psi}^{(n)}_{j, k}, \overline{c}^{(n)}_{j, k}\big)$ be an optimal solution to $\mathcal{D}^*_{n, M}$. Then
\[
\max \mathcal{D}_n \le \max \mathcal{D}_{n, M} + \int_{t_M}^1(t - t_M)\,{\dd}\eta(t) \cdot \sum_{k\in\mathcal{I}} \max_{j} \overline{c}_{j, k}^{(n)}.
\]
\end{proposition}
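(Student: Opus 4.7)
The plan is to construct a feasible triple $\bigl(\widehat{\varphi}_k,\widehat{\psi}^{(n)}_{j,k},\overline{c}^{(n)}_{j,k}\bigr)$ for the weak-duality bound of Proposition~\ref{prop:weak_discrete_duality} from the optimal discrete dual solution, keeping $\widehat{\psi}^{(n)}_{j,k}=\overline{\psi}^{(n)}_{j,k}$ and extending the discrete values $\overline{\varphi}^{(n)}_{m,k}$ to a continuous function on $[0,1]$ that can be integrated exactly against $\eta$ via the postulated quadrature rule. Concretely, I would define
$$
\widehat{\varphi}_k(t)=\tilde{\varphi}_k(t)+\mathbbm{1}_{[t_M,1]}(t)\cdot (t-t_M)\cdot\max_j\overline{c}^{(n)}_{j,k},
$$
where $\tilde{\varphi}_k$ is equal to $\overline{\varphi}^{(n)}_{1,k}$ on $[0,t_1]$, equal to $\overline{\varphi}^{(n)}_{M,k}$ on $[t_M,1]$, and the linear interpolation between $\overline{\varphi}^{(n)}_{m,k}$ and $\overline{\varphi}^{(n)}_{m+1,k}$ on each interval $[t_m,t_{m+1}]$.

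The next step is to verify the \textbf{(lt)}-type constraint $\widehat{\varphi}_k(t)+\overline{\psi}^{(n)}_{j,k}\ge t\cdot\overline{c}^{(n)}_{j,k}$ for every $t\in[0,1]$. On each interval $[t_m,t_{m+1}]$, writing $t=(1-\lambda)t_m+\lambda t_{m+1}$, the value $\tilde{\varphi}_k(t)$ is the corresponding convex combination of $\overline{\varphi}^{(n)}_{m,k}$ and $\overline{\varphi}^{(n)}_{m+1,k}$, and the bound follows by convex-combining the two original \textbf{(lt)}-constraints of $\mathcal{D}^*_{n,M}$. On $[0,t_1]$, the inequality is a consequence of the $m=1$ \textbf{(lt)}-constraint together with $t\le t_1$ and $\overline{c}^{(n)}_{j,k}\ge 0$. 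On $[t_M,1]$, the added linear tail $(t-t_M)\max_j\overline{c}^{(n)}_{j,k}$ exactly compensates the gap $(t-t_M)\overline{c}^{(n)}_{j,k}$, so the $m=M$ \textbf{(lt)}-constraint upgrades to the continuous bound.

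It then remains to compute $\int_0^1\widehat{\varphi}_k(t)\,{\dd}\eta(t)$ and assemble the final estimate. By construction, $\tilde{\varphi}_k$ is constant on $[0,t_1]$ and $[t_M,1]$ and linear on every $[t_m,t_{m+1}]$, so the hypothesis on the partition gives the exact quadrature $\int_0^1\tilde{\varphi}_k(t)\,{\dd}\eta(t)=\sum_{m=1}^M\overline{\varphi}^{(n)}_{m,k}w_m$, while the tail contributes $\int_{t_M}^1(t-t_M)\,{\dd}\eta(t)\cdot\max_j\overline{c}^{(n)}_{j,k}$. Plugging $(\widehat{\varphi}_k,\overline{\psi}^{(n)}_{j,k},\overline{c}^{(n)}_{j,k})$ into Proposition~\ref{prop:weak_discrete_duality} yields
$$
\max\mathcal{D}_n\le\sum_{k\in\mathcal{I}}\Bigl(\sum_{m=1}^M\overline{\varphi}^{(n)}_{m,k}w_m+\sum_{j=1}^{n^{I}}\overline{\psi}^{(n)}_{j,k}\mu^{(n)}_j\Bigr)+\int_{t_M}^1(t-t_M)\,{\dd}\eta(t)\cdot\sum_{k\in\mathcal{I}}\max_j\overline{c}^{(n)}_{j,k},
$$
and, by strong duality $\max\mathcal{D}'_{n,M}=\min\mathcal{D}^*_{n,M}$ from Proposition~\ref{prop:dual_D'_nM} together with $\mathcal{D}_{n,M}\equiv\mathcal{D}'_{n,M}$, the first sum is exactly $\max\mathcal{D}_{n,M}$, which concludes the proof.

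There is no serious obstacle here; the only subtlety is that the quadrature property in the hypothesis is tailored precisely to a function constant on the two end intervals and linear in between, so one must peel off the tail $(t-t_M)\max_j\overline{c}^{(n)}_{j,k}$ before invoking it. Picking up $\max_j\overline{c}^{(n)}_{j,k}$ (rather than $\overline{c}^{(n)}_{j,k}$ itself) in the tail correction is what allows $\widehat{\psi}$ to remain unchanged and thereby keeps the $\psi$-contribution in the objective identical to the optimal discrete one, which is the mechanism that produces the tighter-than-$\varepsilon C$ error bound.
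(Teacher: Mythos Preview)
Your proof is correct and follows essentially the same route as the paper's: build the piecewise-linear interpolant $\tilde{\varphi}_k$ through the nodes $\overline{\varphi}^{(n)}_{m,k}$, add the linear tail $(t-t_M)\max_j\overline{c}^{(n)}_{j,k}$ on $[t_M,1]$, verify the continuous \textbf{(lt)}-constraint, invoke the exact quadrature hypothesis on the interpolant, and close with Proposition~\ref{prop:weak_discrete_duality} and strong duality. The one stylistic difference is in the verification of \textbf{(lt)}: the paper introduces the auxiliary convex function $\overline{\varphi}_k(t)=\max\{0,\max_j(t\,\overline{c}^{(n)}_{j,k}-\overline{\psi}^{(n)}_{j,k})\}$, shows that the interpolant dominates it on $[0,t_M]$ via Jensen's inequality, and bounds the overshoot on $[t_M,1]$ by the Lipschitz constant $\max_j\overline{c}^{(n)}_{j,k}$; you instead verify \textbf{(lt)} directly by convex-combining the two adjacent discrete constraints on each $[t_m,t_{m+1}]$. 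Your route is slightly more elementary and avoids the intermediate function, but both arguments are equivalent in substance.
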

\begin{remark}
For the case of $B = 2$ bidders, the distribution $\eta$ is uniform on the interval $[0, 1]$, and the uniform partition $q_m = m / M$ satisfies the restriction of Proposition~\ref{prop:nonunifrom_partition_estimation}. In the case of $B > 2$, such a partition can be efficiently found numerically. We checked numerically that for such a partition we have $t_M = 1 - O(M^{-1})$; therefore,
\[
\int_{t_M}^1(t - t_M)\,{\dd}\eta(t) = O(M^{-2}).
\]
So, if $\overline{c}_{j, k}^{(n)}$ is uniformly bounded on $M$, the constructed partition provides a quadratic convergence rate (compared with the linear convergence rate obtained in Theorem~\ref{thm:stochastic_dominance_convergence_rate}).
\end{remark}
\begin{proof}[Proof of Proposition~\ref{prop:nonunifrom_partition_estimation}]
The proof is based on the same ideas as the proof of Theorem~\ref{thm:stochastic_dominance_convergence_rate}. Define
\[
\overline{\varphi}_k(t) = \max\left\{0,\, \max_j\big(t \cdot \overline{c}_{j, k}^{(n)} - \overline{\psi}^{(n)}_{j, k}\big)\right\},\qquad t \in [0, 1].
\]
The function $\overline{\varphi}_k(t)$ is convex, non-decreasing, and non-negative. In addition,
\[
\overline{\varphi}_t(t_m) = \overline{\varphi}_{m, k}^{(n)} \quad\text{for all } 1 \le m \le M.
\]
Besides, by the construction, this function satisfies the \textbf{(lt)}-constraint:
\[
\overline{\varphi}_k(t) + \overline{\psi}_{j, k}^{(n)} \ge t \cdot \overline{c}_{j, k}^{(n)}.
\]

For each $k$, let $\widehat{\varphi}_k(x)$ be a unique function that is constant on the intervals $[0, t_1]$, $[t_{M}, 1]$, linear on each interval $[t_m, t_{m + 1}]$, and equal to $\overline{\varphi}_{m, k}^{(n)}$ at the point $t_m$ for all $1 \le m \le M$. We claim that $\widehat{\varphi}_k(t) \ge \overline{\varphi}_k(t)$ for all $t \in [0, t_M]$. Indeed, for each $m$, the function $\widehat{\varphi}_k(t)$ is linear on the interval $[t_m, t_{m + 1}]$, the function $\widehat{\varphi}_k(t)$ is convex on the same interval, and 
\[
\overline{\varphi}_k(t_m) = \widehat{\varphi}_k(t_m), \quad \overline{\varphi}_k(t_{m + 1}) = \widehat{\varphi}_k(t_{m + 1}).
\]
Thus it follows from Jensen's inequality that $\widehat{\varphi}_k(t) \ge \overline{\varphi}_k(t)$ for all $t \in [t_m, t_{m + 1}]$. Finally, the function $\overline{\varphi}_k(t)$ is non-decreasing on the interval $[0, t_1]$, the function $\widehat{\varphi}_k(t)$ is constant on $[0, t_1]$, and $\widehat{\varphi}_k(t_1) = \overline{\varphi}_k(t_1)$; therefore, $\widehat{\varphi}_k(t) \ge \overline{\varphi}_k(t)$ for all $t \in [0, t_1]$.

By the construction, the derivative of $\overline{\varphi}_k(t)$ cannot exceed  the maximal slope in the family linear functions $t \cdot \overline{c}_{j, k}^{(n)} - \overline{\psi}^{(n)}_{j, k}$. Denoting by $C_k = \max_j \overline{c}^{(n)}_{j, k}$, we conclude that for all $t \in [t_M, 1]$:
\[
\overline{\varphi}_k'(t) \le C_k \quad \Rightarrow \quad \overline{\varphi}_k(t) \le \overline{\varphi}_k(t_M) + C_k \cdot (t - t_M) = \widehat{\varphi}_k(t) + C_k \cdot (t - t_M).
\]
Thus $\widehat{\varphi}_k(t) + C_k \cdot [t - t_M]_+ \ge \overline{\varphi}_k(t)$ for all $t \in [0, 1]$.

Finally, it follows from Proposition~\ref{prop:weak_discrete_duality} that
\begin{align*}
\max\mathcal{D}_n &\le \sum_{k\in\mathcal{I}}\left(\int_0^1 \left(\widehat{\varphi}_k(t) + C_k \cdot [t - t_M]_+\right)\,{\dd}\eta(t) + \sum_{j = 1}^{n^{I}} \overline{\psi}^{(n)}_{j, k} \cdot \mu_j^{(n)}\right)=\\
&=\sum_{k\in\mathcal{I}}\left(\sum_{m = 1}^M\overline{\varphi}^{(n)}_{m, k} \cdot w_m + C_k\int_{t_M}^1(t - t_M)\,{\dd}\eta(t) + \sum_{j = 1}^{n^{I}} \overline{\psi}^{(n)}_{j, k} \cdot \mu_j^{(n)}\right)=\\
&= \max \mathcal{D}_{n, M} + \sum_{k\in\mathcal{I}} C_k \cdot \int_{t_M}^1(t - t_M)\,{\dd}\eta(t). 
\end{align*}

\end{proof}

\section{Beckmann's problem, congested transport, and dynamic viewpoint} \label{app_Beckmann}

Beckmann's problem is equivalent to a Monge-Kantorovich-type problem called ``congested optimal transport'';
see \citep{santambrogio2015optimal} for the detailed presentation and references. Let us describe the equivalence informally for Beckmann's problem with the weight $\rho\equiv 1$. 
Given a domain $\Omega$ of a Euclidean space, an absolutely continuous supply-demand imbalance measure $\pi$ on $\Omega$ satisfying $\pi(\Omega)=0$, and a convex function $\Phi$, the following identity holds:
\begin{equation}\label{B-COT}
\inf_{c:\, {\div}[c]+\pi=0} \int \Phi(c) {\dd} x = \inf_{Q:\, Q_1-Q_0+\pi=0}  \int_{\Omega} \Phi(i_Q) {\dd} x.
\end{equation}
Here $Q$ ranges over to the set of all probability measures on ``curves'', i.e., continuous mappings 
$\gamma\colon  [0,1]\to\Omega$, and $Q_t$ denotes the probability measure on $\Omega$ obtained as the image of $Q$ under the
map $\gamma \to \gamma(t) \in \Omega$.
The object $i_Q$ is the so-called traffic intensity function which is defined so that the following identity holds for any test function $\varphi$: 
$$
\int_{\Omega} \varphi(x) i_Q(x) {\dd} x = 
\int\left( \int_0^1 \varphi(\gamma(t)) |\gamma'(t)| {\dd} t \right) {\dd} Q(\gamma)
$$

Formula (\ref{B-COT}) can be seen as a Lagrangian formulation of  Beckmann's problem taking a form of a ``problem for measures on curves.'' It is a quasi-dynamical formulation, a version of which is well known for the Monge-Kantorovich transportation problem; see \citep{villani2009optimal}.

The problem (\ref{B-COT}) is equivalent to a version of the Monge-Kantorovich problem with the cost function depending on optimal $Q$ (or $c$); see \cite[Theorem 4.33]{santambrogio2015optimal}. This form justifies the term ``congested optimal transport.''

Finally, let us mention that the construction of $Q$ relies on the so-called Dacorogna--Moser \citep{santambrogio2015optimal} interpolation of probability measures, which is a solution to the following transport equation:
$$
\frac{\partial}{\partial t}{\rho} + {\div} \Bigl[ \frac{c}{(1-t) f_+ + t f_{-}} \rho_t\Bigr]=0, \ \rho_0 = f_+,
$$
where $f_+$ and $f_-$ are the densities of the positive and the negative components of $\pi=\pi_c-\pi_p$ with respect to the Lebesgue measure.

On the other hand, to the best of our knowledge, there is no natural variational/dynamical interpretation of congested optimal transport in the spirit of the Benamou--Brenier formula (``problem for curves of measures''; see \cite{villani2009optimal,santambrogio2015optimal}); see also remarks in Section 4.5 of \citep{Carlier}.

\end{document}